\tikzset{
	mystyle/.style={line width = 1.5pt, color = red!70!black}
}
  \def\\{}%
  \def\texttt#1{<#1>}%
  \def\textsf#1{<#1>}%
  \def\mathsf#1{<#1>}%
  \def\ensuremath#1{#1}%
  \def\xspace{}%
  \def\Cref#1{<Label:#1>}%
  \def\eqref#1{<Eq.:#1>}%
\Crefname{claim}{Claim}{Claims}
\newtcolorbox{mybox}[1][]{common,#1}
\newcommand{\hoono}[1]{\boxed{#1}}
\newcommand{\OLD}[1]{}
\newtheorem{theorem}{Theorem}[section]
\newtheorem{definition}[theorem]{Definition}
\newtheorem{lemma}[theorem]{Lemma}
\newtheorem{claim}[theorem]{Claim}
\newtheorem{corollary}[theorem]{Corollary}
\numberwithin{algorithm}{section}
\newcommand{\junk}[1]{}
\newcommand{\ignore}[1]{}
\newcommand{\N}[0]{{\ensuremath{\mathbb{N}}}}
\newcommand{\calX}{{\ensuremath{\mathcal{X}}}}
\newcommand{\calC}{{\ensuremath{\mathcal{C}}}}
\newcommand{\calF}{{\ensuremath{\mathcal{F}}}}
\newcommand{\haff}{{\ensuremath{\nicefrac12}}}
\def\ceil#1{\lceil #1 \rceil}
\DeclarePairedDelimiterX{\infdivx}[2]{(}{)}{%
  #1\;\delimsize\|\;#2%
}
\newcommand{\poly}{\operatorname{poly}}
\newcommand{\argmin}{\operatorname{argmin}}
\newcommand{\sse}{\subseteq}
\newcommand{\E}{{\mathbb{E}}}
\newcommand{\e}{\varepsilon}
\newcommand{\eps}{\varepsilon}
\newcommand{\ts}{\textstyle}
\renewcommand{\theequation}{\thesection.\arabic{equation}}
\newcommand{\alert}[1]{{\color{red} #1}}
\newcounter{note}[section]
\newcommand{\initOneLiners}{%
    \setlength{\itemsep}{0pt}
    \setlength{\parsep }{0pt}
    \setlength{\topsep }{0pt}
      \usecounter{myLISTctr}
}
\newenvironment{OneLiners}[1][\ensuremath{\bullet}]
    {\begin{list}
        {#1}
        {\initOneLiners}}
    {\end{list}}
\newcommand{\squishlist}{
 \begin{list}{$\bullet$}
  { \setlength{\itemsep}{0pt}
     \setlength{\parsep}{3pt}
     \setlength{\topsep}{3pt}
     \setlength{\partopsep}{0pt}
     \setlength{\leftmargin}{1.5em}
     \setlength{\labelwidth}{1em}
     \setlength{\labelsep}{0.5em} } }
\newcommand{\squishend}{
  \end{list}  }
\newcounter{asidecounter}
\newcommand{\Opt}{\ensuremath{\mathsf{opt}\xspace}}
\newcommand{\cost}{\ensuremath{\mathsf{cost}\xspace}}
\newcommand{\cA}{\mathcal{A}} 
\newcommand{\cC}{\mathcal{C}} 
\newcommand{\cD}{\mathcal{D}} 
\newcommand{\cF}{\mathcal{F}} 
\newcommand{\cP}{\mathcal{P}} 
\newcommand{\cS}{\mathcal{S}} 
\newcommand{\cT}{\mathcal{T}} 
\newcommand{\cX}{\mathcal{X}}
\newcommand{\kmed}{\mathsf{kmed}}
\newcommand{\cSimple}{\cS}
\newcommand{\xA}{\ensuremath{\mathsf{A}}\xspace}
\newcommand{\xB}{\ensuremath{\mathsf{B}}\xspace}
\newcommand{\xC}{\ensuremath{\mathsf{C}}\xspace}
\newcommand{\xD}{\ensuremath{\mathsf{D}}\xspace}
\newcommand{\xE}{\ensuremath{\mathsf{E}}\xspace}
\newcommand{\xAF}{\ensuremath{\mathsf{AF}}\xspace}
\newcommand{\dSwapHa}{\delta_{\cSimple_1\cap\cA}(c)}
\newcommand{\dSwapHb}{\delta_{\cSimple_2\cap\cA}(c)}
\newcommand{\dSwapTa}{\delta_{\cT_1\cap\cA}(c)}
\newcommand{\dSwapTb}{\delta_{\cT_2\cap\cA}(c)}
\newcommand{\bE}{\ensuremath{\mathbb E}} 
\newcommand{\cE}{\ensuremath{\mathcal E}} 
\newcommand{\ind}{\ensuremath{\mathds{1}}} 
\newcommand{\dPQ}[1]{\delta_{(P,Q)}(#1)}
\newcommand{\thd}{\ensuremath{t_{\mathsf{d}}}} 
\newcommand{\thh}{\ensuremath{t_{\mathsf{h}}}} 
\renewcommand{\a}{\alpha}
\renewcommand{\b}{\beta}
\newcommand{\lAngle}{\langle\!\langle}
\newcommand{\rAngle}{\rangle\!\rangle}
\newcommand{\move}[1]{{\lAngle #1 \rAngle}}
\newcommand{\epsd}{\ensuremath{O(\varepsilon)(d^* + d_1)}} 
\newcommand{\WClb}{\ensuremath{-10d_1}} 
\newcommand{\extra}{r(\eps)}
\newcommand{\extras}{r}
\newcommand{\clientT}{\mathcal{ST}} 
\newcommand{\inlineeqnum}{\refstepcounter{equation}~~\mbox{(\theequation)}}
\newcommand{\Gsum}[1]{\Delta_\cA({#1})}
\title{An Improved Local Search Algorithm for $k$-Median}
\author{Vincent Cohen-Addad\thanks{Google Research, Zurich\ and Sorbonne Universit\'e,
     Paris.} \and 
  Anupam Gupta\thanks{Carnegie Mellon University, Pittsburgh PA
    15217.} \and 
  Lunjia Hu\thanks{Stanford University.} \and
  Hoon Oh$^\dagger$ \and
  David Saulpic\thanks{Sorbonne Universit\'e, Paris.}
}
\date{}
\begin{document}

\maketitle

\begin{abstract}
  We present a new local-search algorithm for the $k$-median
  clustering problem. We show that local optima for this algorithm
  give a $(2.836+\epsilon)$-approximation; our result improves upon
  the $(3+\epsilon)$-approximate local-search algorithm of Arya et
  al.~\cite{Arya2001LocalSearch}. Moreover, a computer-aided analysis
  of a natural extension suggests that this approach may lead to an
  improvement over the best-known approximation guarantee for the
  problem. 

  The new ingredient in our algorithm is the use of a potential
  function based on both the closest and second-closest facilities to
  each client. Specifically, the potential is the sum over all
  clients, of the distance of the client to its closest facility, plus
  (a small constant times) the truncated distance to its
  second-closest facility.  We move from one solution to another only
  if the latter can be obtained by swapping a constant number of
  facilities, and has a smaller potential than the former. This
  refined potential allows us to avoid the bad local optima given by
  Arya et al.\ for the local-search algorithm based only on the cost
  of the solution. 
\end{abstract}

\vfill

\thispagestyle{empty}

\pagebreak 

\setcounter{page}{1}

\section{Introduction}
\label{sec:introduction}
The \emph{$k$-median} problem is a classic optimization problem
for metric spaces, and has been widely studied by the algorithm-design
community 
with a two-fold motivation:
on the one hand getting good algorithms for the $k$-median problem immediately
yields important practical implications in operations research, bioinformatics,
or data analysis. On the other hand, 
the study of the approximability of $k$-median has given us
a deeper understanding of key algorithmic ideas like primal-dual techniques and
Lagrangian-multiplier preserving algorithms, sophisticated dependent
LP roundings, local search, iterative rounding, and algorithmic
notions of stability.

Concretely, given a finite
metric space $(\calX, d)$, where the point set $\calX$ is partitioned
into \emph{client} locations $\cC$ and possible \emph{facility}
locations $\cF$,
with $\cX := \cC \cup \cF$, and a parameter $k$, the \emph{$k$-median}
problem asks to pick
$k$ ``medians'' $F \sse \cF$ 
to
minimize
\begin{gather}
  \kmed(F) := \sum_{c \in \calC} d(c,F). \label{eq:1}
\end{gather}
Throughout the paper, given a set $S \sse \calX$, and point
$x \in \calX$ we let $d(x,S)$ denote $\min_{s \in S} d(x,s)$.

An interesting perspective on the $k$-median problem is to 
view it as a ``metric set cover'' problem, where one needs to find $k$
medians (seen as ``sets'') to cover the clients (seen as the universe)
-- with the relaxation
that each client pays a cost that is a function of how well it is covered
and this cost function is a metric.
This perspective has long been known (see e.g.
~\cite{GK98,Jain2002LowerBound}), 
but although the
complexity of the classic set cover problem is well-understood since the
90s, the approximability of this metric variant is still quite open.

The current-best result is the
$2.675$-approximation of Byrka et
al.~\cite{Byrka2015ImprovedApproximation}, improving on a breakthrough
$2.732$-factor of Li and Svensson~\cite{LS16}.
These papers use the clever idea of finding pseudo-approximations
(i.e., solutions with good cost but opening a few extra facilities) by
first giving bi-point solutions (i.e., a feasible fractional solution
that is the convex combination of two integer solutions) using
the primal-dual framework, and then rounding these bi-point solutions
carefully into integer solutions. Nevertheless, the gap between these
results and the
current best
hardness bound of $1+2/e$ remains large.
While various techniques can give good approximations for
$k$-median in specific metrics, the current arsenal for getting a
better approximation bound for the general
case is not very rich. E.g., a significant improvement 
using the bi-point rounding approach seems challenging, since it requires 
either improving the quality of the bi-point solution
computed (on which no progress has been made over the last 20 years), 
or improving on the rounding scheme.
Other techniques to obtain $O(1)$-approximations
are primal-dual, or greedy-plus-pruning, but the best
 bounds using these techniques do not even give a 3-approximation.
Finally, the best result before~\cite{LS16}
was an analysis of the $p$-swap local-search algorithm that tries to
improve the current solution by closing some $p$ facilities and
opening $p$ others.  Arya et al.~\cite{Arya2001LocalSearch} showed
that any local optimum was a $(3+2/p)$-approximation. However, they
also showed instances with a matching ``locality gap'' for this
algorithm (see \S\ref{sec:3-2b-example} for a simple example showing a gap
arbitrarily close to $3$). 
In summary, the only known way to do better than a factor of 3 remains
bi-point
rounding.

In this paper, we draw on parallels with set cover and submodular
optimization problems and propose an extension of the simple
local-search paradigm that  has
the potential to improve the current best-known approximation factor.
While our current
analysis does not improve the best approximation it provides the first
alternative to bi-point solutions to go below a 3-approximation---namely,
to 2.836---and offers the possibility of better results.
The new idea is to perform the local search with respect to some other
``surrogate'' potential $\Phi(F)$ instead of the $k$-median objective
function. This allows us to avoid the bad local minima present in the
standard local search. Of course, this $\Phi$ needs to be easily
computable, and also to be close enough to the original objective function
so that finding a local-optimum with respect to $\Phi$ implies a good
approximation for $k$-median objective as well. Such local-search
procedures are called \emph{non-oblivious} in the literature, and have
been successful in several
settings~\cite{Alimonti94,Khanna1998OnSyntaticversusComputationalViews,Filmus2012MaxCoverageLocalSearch,FilmusW14,GGKMSSV18}.

\subsection{Our Approach and Results}

Let $d_i(c,F)$ be the distance between the client $c$ and the facility
in $F$ that is $i^{th}$-closest to it, so that $d_1(c,F) = d(c,F)$ as
defined above. Define the potential function
\begin{gather}
  \Phi(F) := \sum_{c \in \calC} \Big( \underbrace{d_1(c,F)}_{\text{closest}} + \b \min \big\{
  \underbrace{d_2(c,F) , \alpha\,
  d_1(c,F)}_{\text{truncated second-closest}} \big\} \Big). \label{eq:2}
\end{gather}
For almost all of the paper, we choose $\alpha = 3$ and
$\beta = 1/5$. While
  we motivate the potential in detail in
  \S\ref{sec:our-techniques}, consider two
  clients whose closest facilities are both at distance $D$: one with
  its second-closest facility at the same distance $D$ pays
  $(1 + \beta)D \approx 1.2\,D$, whereas another whose second-closest
  facility is much farther away pays $(1+\a\b)D \approx 1.6\,D$. Hence a
  lower potential prefers solutions with good ``backup'' facilities,
  so that local moves can then explore a richer space. Our main
result is the following:
\begin{theorem}[Pseudo-approximation]
  \label{thm:ub}
  Let $\a = 3, \b = 1/5$, and let $p(\e),\extra$ be sufficiently large constants that depend only on $\e$. If $F$ is a local minimum of our non-oblivious local-search
  procedure with $|F| = k$ facilities and swap size $p(\e)$, then 
  \[ \kmed(F) \leq (2.836+\eps)\cdot \kmed(F^*) \]
  for any solution $F^*$
  with $k - \extra$ facilities. 
\end{theorem}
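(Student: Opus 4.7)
The plan is to follow the local-search template of Arya et al., but adapted to the non-oblivious potential $\Phi$. Since $F$ is a local minimum, for every $p$-swap $(S,O)$ with $|S| = |O| \le p(\eps)$ we have $\Phi((F\setminus S)\cup O) \geq \Phi(F)$. The goal is to design a carefully chosen collection of test swaps, sum these $|S|=|O|$ inequalities with appropriate multiplicities, and deduce a linear inequality relating $\Phi(F)$ to $\kmed(F^*)$, which then upper-bounds $\kmed(F) \leq \Phi(F)$ to give the claimed factor.

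For the swap construction, I would define $\eta: F \to F^*$ sending each $s \in F$ to its nearest facility in $F^*$, and split $F$ into ``captured'' fibers $\eta^{-1}(o)$. Following the Arya et al. combinatorial template, build a family of tuples $(S,O)$ in which each $o \in F^*$ appears on the open side exactly once and each $s \in F$ appears on the close side with total weight $1 - r(\eps)/k$. The $r(\eps)$ extra facilities provide exactly the slack needed to break up ``large fibers'' into swaps of size at most $p(\eps)$ without violating the $\Phi$-optimality inequalities; the $r/k$ deficit will contribute only a lower-order $O(\eps)$-term. Each swap $(S,O)$ essentially proposes the move ``replace $S$ by $O$'' suggested by $F^*$.

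To bound $\Phi((F\setminus S)\cup O) - \Phi(F)$ for a single test swap, for each client $c$ I would exhibit candidate first- and second-closest facilities $\sigma_1(c), \sigma_2(c) \in (F\setminus S)\cup O$ whose distances upper bound the true $d_1, d_2$ in the new solution. The natural choice is: if $c$'s current closest $f_1(c) \notin S$, keep $\sigma_1(c) = f_1(c)$; otherwise, route $c$ via its optimal-facility $o^*(c) \in F^*$ when $o^*(c) \in O$, using the triangle inequality $d(c, o^*(c)) \leq d(c, f_1(c)) + d(f_1(c), o^*(c)) \leq 2\,d_1(c,F) + d(c, F^*)$ in the familiar ``reroute through the closed facility'' argument. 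For $\sigma_2(c)$ one picks the best surviving candidate among $f_1(c), f_2(c), o^*(c)$, using $\alpha\, d_1$ as a convenient cap whenever the natural second-closest in the new solution is far away.

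The main obstacle will be controlling the second-closest term $\beta\min\{d_2(c,F), \alpha\, d_1(c,F)\}$, because after a swap both the value $d_2$ and the truncation threshold $\alpha\, d_1$ can change simultaneously, and they interact. I expect the analysis to split clients into cases depending on which of $\{f_1(c), f_2(c)\}$ lies in $S$, whether the truncation is active in $F$ and in the new solution, and whether $o^*(c)\in O$; each case contributes a linear combination of $d(c,F^*)$, $d_1(c,F)$, and the truncated second-closest term. After summing the test-swap inequalities and using $|F^*|=k-r(\eps)$, I would obtain an inequality of the form $\Phi(F) \leq A \cdot \kmed(F^*) + B \cdot \Phi(F) - C\cdot \kmed(F) + O(\eps)\cdot(\kmed(F)+\kmed(F^*))$ for explicit constants $A,B,C$ that depend on $\alpha, \beta, p, r$. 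Substituting $\alpha = 3$, $\beta = 1/5$ and solving the resulting scalar linear program in the ratio $\kmed(F)/\kmed(F^*)$ should yield the bound $2.836 + \eps$, where the $\eps$ absorbs the $1/p(\eps)$ and $r(\eps)/k$ losses from the swap construction.
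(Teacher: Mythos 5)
Your outline captures the generic local-search machinery—sum non-improving-swap inequalities, reroute clients through closed facilities via the triangle inequality, split into cases by whether $f_1(c),f_2(c)$ are closed and whether the truncation is active—but this is precisely the template that the paper shows is \emph{not} strong enough for the potential $\Phi$. You propose to "follow the Arya et al. combinatorial template" using a single deterministic map from $F$ to $F^*$, but \Cref{sec:fig-examples} gives explicit instances (Figures~\ref{fig:badcasesingleGupta} and \ref{fig:badcasesingleArya}) where the Gupta–Tangwongsan and Arya-et-al.\ swap structures generate inequalities that are consistent with a local optimum at ratio $3$, even though the non-oblivious potential would in fact make progress on those instances. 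The inequalities your test swaps produce would therefore still admit ratio $3$; no amount of case analysis on the truncation downstream can recover the $2.836$ constant.

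The missing ingredients are the two genuinely new structural ideas in the paper's swap construction. First, the map is randomized: for each $f^*$ the paper picks $\tau(f^*)$ at random between the closest and second-closest local facility $\eta_1(f^*),\eta_2(f^*)$, with probabilities that depend on $\rho(f^*)=d(f^*,\eta_1)/d(f^*,\eta_2)$; this averaging is needed to simultaneously satisfy the competing demands of far clients (who want $\tau(f^*)$ close to $f^*$) and close clients (who risk the $(1+\alpha\beta)d_1$ penalty when $f_1$ is closed but $f_2$ survives). Second, the paper uses \emph{two} distinct swap families—simple swaps built from the map $F^*\to F$, and tree swaps built from the $1$-forest that couples $\tau:F^*\to F$ with $\pi:F\to F^*$—and takes a convex combination of the resulting inequalities; the tree swaps are what allow closing $f_1$ and $f_2$ together with the opening of $f^*$ for the problematic close clients (the tension illustrated in \Cref{fig:singlevsmulti}). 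Your single-map deterministic construction gives neither mechanism, so the plan stalls at a $3+\varepsilon$ bound rather than $2.836+\varepsilon$.

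A secondary issue: even after adopting the right swap families one cannot sum swap-by-swap as you describe, because the second-closest term makes a single client's contribution depend on \emph{which pair} of local facilities is closed together. The paper therefore switches to a client-by-client expectation over the random swap set (Lemmas~\ref{lem:combo} and \ref{lem:combo2}), handling a small "defiant" probability event by a crude $O(d^*+d_1)$ bound. Your aggregate linear inequality $\Phi(F)\le A\,\kmed(F^*)+B\,\Phi(F)-C\,\kmed(F)+O(\varepsilon)(\cdot)$ is too coarse to encode the per-client case analysis (types $\mathsf{A}$–$\mathsf{E}$, far/close, and the value of $\rho(f^*)$) that the final constant depends on.
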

We can convert this pseudo-approximation into a regular approximation
using ideas from~\cite{LS16,AwasthiBlum2010Stability}. Indeed,
if the original instance is ``stable'' (i.e., if reducing the number
of facilities by $\extra)$ causes the optimal cost to increase by more
than $(1+\e)$), we can get a PTAS~\cite{AwasthiBlum2010Stability} in time $\poly(|\cX|^{\extra})$. Hence, this
reduction of the number of facilities does not change the optimal cost
much, and then the pseudo-approximation of Theorem~\ref{thm:ub} is also a
true approximation. 

We are yet to understand the limitations of this specific potential
function, and of this general approach. The best lower bound for this
potential function we currently know is the following:
\begin{theorem}[Lower Bound for $\Phi$]
  \label{thm:lb}
  There exists $\eps > 0$ and an infinite family of instances on which
  the local-minimum $F$ of our non-oblivious local-search function
  with constant-sized swaps satisfies
  \[ \kmed(F) \geq \min\{\max\{(3-2\beta-\eps, 1+4\beta-\eps)\}, \max\{2,\a-\eps\}\} \cdot\kmed(F^*). \]
  Balancing the two terms gives us a locality gap lower bound of 
  $2\cdot\kmed(F^*)$ for all values of $\a,\b$.
\end{theorem}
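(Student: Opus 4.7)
The plan is to exhibit two families of instances, each realising one of the two ``$\max$'' terms in the stated bound; the theorem then follows by taking the minimum over the two families, together with the elementary computation
\[ \min_{\a,\b}\min\bigl\{\max\{3-2\b,\,1+4\b\},\,\max\{2,\a\}\bigr\}=2, \]
whose inner left quantity is minimised at $\b=1/3$ (giving $7/3$) and whose inner right is minimised at $\a=2$ (giving $2$), so that the outer minimum always exceeds $2$ and is attained when $\a=2$.

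\textbf{First family (ratio $\max\{3-2\b-\e,\,1+4\b-\e\}$).} I would begin with the classical Arya et al.\ locality-gap instance for standard $k$-median local search, which produces a pair $(F,F^*)$ of $k$-facility sets with $\kmed(F)/\kmed(F^*)\to 3$ and such that every $p$-swap changes $\kmed$ by at most $O(1/p)$. I would then rescale distances so that in $F$ each client's second-closest facility lies exactly at distance $\a$ times its closest (making the truncation $\min\{d_2,\a d_1\}$ tight in $F$), while in $F^*$ the second-closest is very close to the first. Under this calibration the per-client potential in $F$ equals $(1+\a\b)d_1$, and in $F^*$ is approximately $(1+\b)d_1^\ast$. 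Re-running the Arya swap accounting on $\Phi$ (instead of $\kmed$) yields $\Phi(F)-\Phi(F^*)\le O(1/p)\cdot\Phi(F^*)$, which translates back to $\kmed(F)/\kmed(F^*)\ge 3-2\b-\e$. A ``mirror'' version of the same instance, in which the truncation fires in $F^*$ rather than $F$, concentrates the $\Phi$-improvement entirely on the $d_2$-side and delivers the companion ratio $1+4\b-\e$.

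\textbf{Second family (ratio $\max\{2,\a-\e\}$).} This construction exploits the truncation ratio directly: each client is placed so that its closest facility in $F$ is at distance $1$, its second-closest in $F$ is at distance $\a$, and its closest in $F^*$ is at distance $1/\a$. This immediately gives $\kmed(F)/\kmed(F^*)=\a$. Local optimality is easy when $\a$ is large, because closing any $f\in F$ forces its clients to hop to a far second-closest at cost $\a$, outweighing any single-swap savings; when $\a$ is small, a symmetric ``all-second-closests-are-close'' variant collapses $\Phi$ to roughly $(1+\b)d_1$, and the standard Arya $3$-gap construction (with a mild perturbation to kill the tiny $\b$-correction) gives the floor of $2$.

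\textbf{Main obstacle.} The principal technical difficulty is that a $p$-swap alters not only $d_1$ but also $d_2$, including for clients whose swapped-out facility happened to be their \emph{second}-closest (even though it never served them), and the truncation $\min\{d_2,\a d_1\}$ can switch branches mid-swap, making $\Phi$ piecewise linear in the swap variables. I would sidestep this by choosing instances with strong symmetry -- bipartite incidence structures in which every client sees a uniform distance profile and the sets $\{c:f\text{ is second-closest in }F\text{ to }c\}$ are also uniform across $f\in F$ -- so that the $\Phi$-change under any $p$-swap collapses to a closed-form linear expression in the swap parameters. The extremal ratios $3-2\b$, $1+4\b$, $2$, $\a$ should then fall out of a small LP analogous to the one implicit in Arya et al., but bookkeeping the truncated-second-closest term throughout.
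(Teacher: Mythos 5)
Your high-level architecture --- two symmetric families of instances, one for each ``$\max$'' term, combined by taking the minimum --- matches the paper, and your observation that the truncation $\min\{d_2,\a d_1\}$ can switch branches mid-swap is the right technical concern. The paper handles it exactly the way you suggest: by using highly uniform bipartite instances (what they call \emph{bi-cliques} and \emph{double-bi-cliques}) where every client sees the same distance profile and the per-swap change in $\Phi$ is a closed-form expression.

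However, there is a genuine gap in your second family, and a more minor mismatch in the first.

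For the second family, your construction places each client at distance $1$ from its closest facility in $F$, at distance $\a$ from its second-closest in $F$, and at distance $1/\a$ from its closest in $F^*$, and then asserts that ``closing any $f\in F$ forces its clients to hop to a far second-closest at cost $\a$, outweighing any single-swap savings.'' That reasoning fails because the relevant swap is \emph{not} closing $f$ in isolation; it is closing $f$ while opening the optimal facility at distance $1/\a$ (or several optimal facilities) for the same set of clients. Those clients then do not pay $\a$ at all, so your inequality is pointed the wrong way, and you have not ruled out improving swaps. The paper avoids this by choosing the bi-clique's geometry so that any $p$-swap that opens $p$ optimal facilities and closes $p$ local facilities benefits only $O(pk)$ client--facility pairs out of $\Theta(k^2)$, while a roughly equal number of pairs are hurt, and then verifying the balance explicitly via a four-way client classification ($\cC_o,\cC_1,\cC_2,\cC_3$ in \Cref{sec:lb}). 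This is the part of the argument your sketch does not contain: a correct \emph{combinatorial accounting of which clients are helped and which are hurt} under an arbitrary $p$-swap, and a verification that the net change in $\Phi$ is nonnegative for the chosen $d$. Without that, local optimality is not established.

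For the first family, the calibration you describe (``in $F$ each client's second-closest lies exactly at distance $\a$ times its closest, so the per-client potential in $F$ equals $(1+\a\b)d_1$'') is not what the paper does for the $3-2\b$ bound: in their bi-clique with $d = 3-2\b-\e$, the second-closest in $F$ is at distance $2+d$, and for $\b \le 1/2$ and $\a > 2$ this is \emph{less than} $\a d$, so the truncation does not fire and the potential is $d_1+\b d_2$, which is exactly what produces the $3-2\b$ coefficient. Your ``mirror'' construction for $1+4\b$ is also different from the paper's double-bi-clique, in which every client has \emph{two} local facilities at distance $d$ and one optimal at distance $1$; this second local facility is what makes both the close and reassignment distances short, driving the ratio down to $1+4\b$. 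Your proposal gestures toward the right inequalities but would need the geometry of the double-bi-clique (or an equivalent) to make them come out.

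In short: right strategy, right obstacle identified, but the second-family construction as you state it does not yield a local optimum, and the first-family calibration is not the one that produces the $3-2\b$ and $1+4\b$ constants. Both are repairable with the bi-clique/double-bi-clique instances and the careful per-client-type swap accounting the paper carries out.
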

This lower bound holds even if $F$ is allowed to have more facilities
than $F^*$. 
The gap between the two
results above suggests that local-search with respect to $\Phi$ still
has the possibility of beating the current-best approximation
bounds.

\paragraph*{Extending our Potential Function.} We consider
extending this non-oblivious approach using more expressive
potentials. E.g., we can look at the $q=3$ closest facilities, as
follows: (we use $d_i$ as shorthand for $d_i(c,F)$, and
$(a \land b) := \min(a,b)$)
\[
\Phi_3(F) = \sum_{c\in \cC} \Big( d_1 + \b_2 \underbrace{(\a_2 d_1 \land d_2)}_{\text{truncated second-closest}} + \b_3
\underbrace{(\a_3 d_1 \land d_3)}_{\text{truncated third-closest}}\Big).
\]
Again $\alpha_i, \beta_i$ are constants, discussed in \S\ref{sec:LP}.
A preliminary implementation of this LP discussed in that section
suggests that we can get an approximation ratio of $2.69$. However,
these are based on experiments, and since we do not have a formal
proof, computer-assisted or otherwise, these should just be considered
circumstantial evidence and promising
first steps. We hope that we (or others) will be able to extend these
to a formal proof.

\subsection{Our Techniques}
\label{sec:our-techniques}
Since the algorithm is just the $p$-swap local search algorithm, 
all the work is in the
analysis of the local optima.

\medskip\textbf{The choice of the objective function.}
Our potential function 
is inspired by the work of Filmus and
Ward~\cite{Filmus2012MaxCoverageLocalSearch,FilmusW14}, who improved
the local-search algorithm for submodular maximization from a
$\nicefrac12$-factor to the optimal $(1-1/e)$-factor. We describe
their idea in the context of max-$k$-coverage: the potential gets a bonus if it covers elements multiple times. I.e.,
for each element, we get a value of $1$ if we cover it once, a small
bonus $\beta_2$ if we cover it at least twice, a smaller additional
bonus $\beta_3$ if we cover it at least thrice, etc.  The total overall bonus is small compared to the gain in covering
it once (so that the potential remains close to the true objective), but
enough to evade the bad local minima. Indeed, if an element is covered twice, the algorithm has more flexibility in
choosing local-search steps, since any single-set swap will leave this
element still covered.

The $k$-median problem is a minimization problem, so the natural
objective is $\sum_c d_1(c) + \sum_{i\geq 2} \beta_i d_i(c)$, where
$d_i(c)$ is the distance from $c$ to its $i^{th}$-closest facility:
this penalty term can incentivize each facility to have ``backup''
facilities close to it. Indeed, just using $d_1 + \beta_2 d_2$ (for
small constant $\b_2 > 0$) side-steps the standard bad examples with
respect to the objective function $d_1$. However, this potential
penalizes us too heavily for not having backups. So if the
instance has $k$ widely-separated clusters, the penalty term
overwhelms the original cost. This suggests the potential~(\ref{eq:2})
we eventually use:
\[ \sum_c d_1(c) \, \bigg[ 1 + \text{(small constant)} \times \min\bigg(1,
  \frac{d_2(c)}{\text{(large constant)} \times \,
    d_1(c)}\bigg) \bigg]. \] However, the introduction of the minimum in the
objective function makes the analysis more involved, since it forces a
case distinction between clients which pay the truncated and
untruncated values.

\newcommand{\Floc}{F}
\newcommand{\Fnew}{F_{\text{new}}}

\medskip\textbf{Important Swaps.} The standard approach to
analyze the quality of local optima for clustering problems is to
define a subset of swaps we call \emph{important}. Since all
swaps are non-improving, these important ones are too. This
non-improvement gives linear inequalities that relate the cost of the
solution $\Fnew$ after the swap to the cost of the local
optimum $\Floc$. To relate $\Fnew$ to the optimal solution
$F^*$, we define important swaps to be ones that replace a constant
number of local facilities $P \sse \Floc$ with the same number of
optimal facilities $Q \sse F^*$. Hence, the cost of $\Fnew$ is the sum
of the costs for (1) ``happy'' clients that are now served optimally
(or even better) in $\Fnew$ because their optimal facility is in $Q$,
(2) the ``sad'' clients which were previously assigned to the
facilities in $P$ that were swapped out, but which are not happy and
hence require \emph{reassignment}, and (3) the remaining
``indifferent'' clients. The art in these proofs is to define the
important swaps to control the reassignment cost for the sad clients.

For example, we can pair each optimal facility with its closest local
facility (assume for now this is a bijection), and form the important
swaps by swapping some constant-sized subset of these pairs. This ensures:
\[\sum_{c \text{ happy}} d(c,F^*) + \sum_{c \text{ sad}} \left(
  d(c,\Floc)+2d(c,F^*)\right) + \sum_{c \text{ indifferent}} d(c,\Floc)  
\ge \cost(\Fnew) \ge \cost(\Floc) = \sum_{c} d(c,\Floc).\]
(see~\cite{Gupta2008SimplerLS} for details). Simplifying gives
\[\sum_{c \text{ happy}} d(c,F^*) + \sum_{c \text{ sad}} 2d(c,F^*) \ge
  \sum_{c \text{ happy}} d(c,\Floc).\] Summing over important swaps
(one per local facility) means each client appears on the left at most
twice (once when happy, and once when sad) and on the right exactly
once, which means $ALG \leq 3 OPT$. 
Handling the non-bijective case loses another $\e$ factor, so the local optimum is at most
$(3+\e)$ times the global optimum.  The important lessons are that
(a)~important swaps need to be ``rich'' enough to infer the
small locality gap, and (b)~``simple'' enough to be able to reason
about.

However, the important swaps used in past
works~\cite{Arya2001LocalSearch,Gupta2008SimplerLS} do not work with
the new potential: \Cref{fig:badcasesingleGupta,fig:badcasesingleArya}
in \Cref{sec:fig-examples} show instances and local solutions that
cost three times the optimum but are not locally optimal with respect
to the new objective function. Yet previously-used important swaps are
not rich/expressive enough to deduce non-local-optimality, and only
prove a $3$-approximation.

\medskip\textbf{New Swaps.}  
Given a local solution $\Floc$, 
we
distinguish the \emph{far} clients $c$ with
$d_2(c, \Floc) \geq \alpha d(c, \Floc)$ from the \emph{close} ones with
$d_2(c,\Floc) < \alpha d(c, \Floc)$.
The type of a client determines which value attains the minimum in the potential function \eqref{eq:2}:
a far client $c$ pays $(1 + \a \b)d(c, F)$ while a close one pays $d(c, F) + \b d_2(c,F)$.
The two types of clients require different analysis.

\emph{Far Clients.} 
Consider a facility $\ell_2$ of $\Floc$ closest to the optimal
facility $f^*$ for far client $c$. If $\ell_2$ is also the local
facility that is closest to $c$, and if we pair it with $f^*$, client
$c$ is a happy client (as described above) and we get a good bound on
the cost of client $c$ (so we should always associate $f^*$ with
$\ell_2$). Else if $\ell_2$ is not a facility that is the closest to
$c$, then a simple argument using the triangle-inequality shows there
exists a second facility in the local solution at distance
$2d(c,f^*) + d(c,\Floc)$ to $c$.  But $c$ is a far client, so this
facility cannot be too close:
$2d(c,f^*) + d(c,\Floc) \geq \alpha d(c,\Floc)$, and so
$d(c,\Floc) \leq \frac{2}{\a -1} d(c,f^*)$, which is an excellent bound.

\emph{Close Clients.} On the other hand, the close clients,
may now be sad both when
their closest facility closes, and also \emph{when their
second-closest closes}. E.g., consider a client whose closest optimal
facility is far from the rest of the instance, but which has two local
facilities at the same distance to it (with $d_1\approx d_2$). 
(See \Cref{fig:singlevsmulti}.)  In
this case, moving from two facilities to one in the local solution
without opening the optimal facility incurs a large reassignment
cost. Hence, such  clients want the swap which opens the optimal
facility to also close both local facilities close
to them. If not, closing any one of these close local facilities would mean
 reassigning them to the other, and suffering a cost of
$(1+\a\b) d_1$. These woud be \emph{very sad} clients. 
So we would like to close both the
facilities for the close clients at the same time.  
Else the potential that was helping the far clients now hurts these close
ones when they become very sad. 

Our approach mitigates the risks: we define \emph{two different swap
  structures} and take a linear combination of the inequalities
obtained from these.  Since the local-search algorithm tries
all possible swaps, the  resulting inequalities remain valid.
The two swaps structures can be viewed as follows. One of them,
referred to as \emph{simple swaps}, is similar to the one described
by~\cite{Gupta2008SimplerLS}, where each facility of $F^*$ is mapped
to its closest facility in $\Floc$. The other one, which resolves the
``bad example'' described in \Cref{fig:badcasesingleGupta} for single
swaps, is to also consider the reverse map: i.e., to map each facility of
$\Floc$ to its closest one in $F^*$. These two maps induce a directed
graph $G$ where the vertices are $F^* \cup \Floc$, with an arc from
$f_1$ to $f_2$ if $f_1$ is mapped to $f_2$ in the appropriate
map. This graph $G$ has outdegree-1 and hence has a nice structure. We
show how to break it into pieces of bounded size; these define 
\emph{tree swaps}. We then work with all the inequalities coming from
these two families of swaps.

A final ingredient is randomization: instead of always mapping each
facility $f$ in one of the solutions to its closest facility $f'$ in
the other solution, we randomize these maps---we  map $f$ to its
second-closest facility in the other solution with some probability
that depends on their relative distances.
This allows us to again
mitigate bad and good scenarios for different types of clients that
are in tension.

In summary, here's what we do: we flip a coin to either consider
simple swaps or tree swaps. In either case, we randomly map some
facilities to the closest or second-closest facilities in the other
solution, and use this to build a set of important swaps. Since all
these are non-improving, this gives us linear inequalities that relate
the local cost to the optimum. Finally, we deduce the approximation
ratio from these linear inequalities.

\begin{figure}
\centering
\includegraphics[scale=0.5]{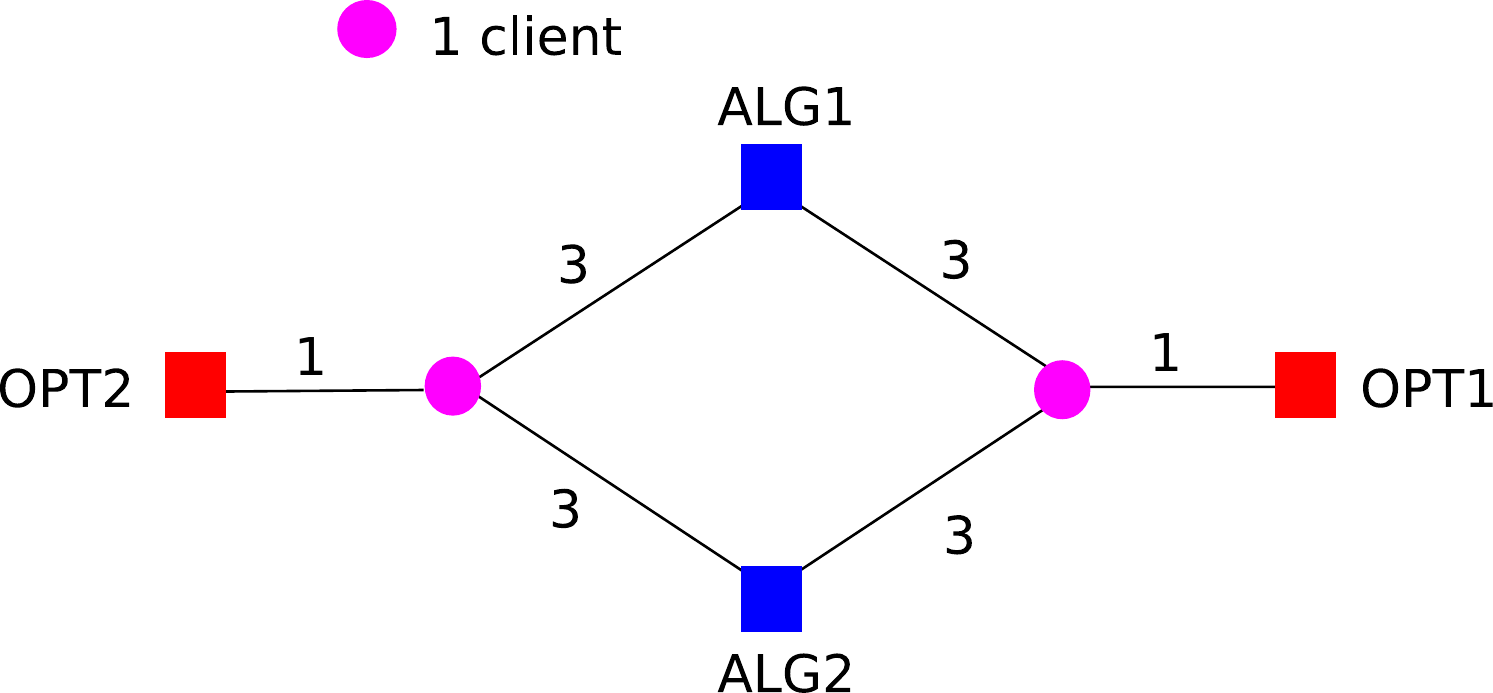}
\caption{Illustration of the tension between clients for defining the swap structure.
	In order to get a good bound for the right client, we need to
        open OPT1 and close both local facilities ALG1 and
        ALG2. However, closing both facilities and opening OPT1
        increases potential value of the left client to $(1+\a\b)7$
        from $(1+\b)3$.
         }          
\label{fig:singlevsmulti}
\end{figure}


\subsection{Related Work}
\label{sec:related-work}

The first $O(1)$-approximation for the $k$-median problem
was given by Charikar et al.~\cite{Charikar1999Constant-Factor}. After
many developments using, e.g., the primal-dual
schema~\cite{JainV01,CharikarG99}, greedy algorithms (and dual fitting)~\cite{JainMMSV03}, improved LP
rounding~\cite{CharikarL12}, local-search~\cite{Arya2001LocalSearch}, and pseudo-approximations~\cite{LS16}, the current best approximation
guarantee is 2.675~\cite{Byrka2015ImprovedApproximation}. The best
hardness result is
$(1+2/e)$~\cite{GK98,Jain2002LowerBound}.
Local-search algorithms have been widely used for clustering
problems. Despite their simplicity, they often give good theoretical
guarantee: the $(3+\eps)$-approximation result
of~\cite{Arya2001LocalSearch} was the best factor for some time; a
simplified proof is given in~\cite{Gupta2008SimplerLS}.
The best results for the closely related $k$-means problem are by
Ahmadian et al.~\cite{AhmadianNSW17}, who give a
$6.35+\eps$-approximation for Euclidean metrics and $9+\eps$ for
general metrics, both using the primal-dual method: these improve on
results of Kanungo et al.~\cite{Kanungo2002LSk-Means} who show that
the simple local-search with respect to the objective function gives a
$(9+\eps)$-approximation for Euclidean $k$-means.

Ahmadian et al.~\cite{AhmadianFS13} give a local-search algorithm for
mobile $k$-median, where they also construct a $1$-tree using the
optimal and algorithm's centers (and the original centers, which play
a role for that problem), and consider swaps based on its
subtrees. However, the details of the analysis seem to be different
from ours, since the concerns in the two problems are quite different.


The use of an alternate potential function instead of the objective
function in local-search was termed non-oblivious
by~\cite{Alimonti94,Khanna1998OnSyntaticversusComputationalViews}.
Filmus and Ward~\cite{Filmus2012MaxCoverageLocalSearch,FilmusW14} used
non-oblivious local-search for the \emph{maximum coverage} and
\emph{submodular maximization} problems, getting
$1-1/e$-approximations in both cases. (A further simplification of the
submodular algorithm/analysis appears in~\cite{FFSW17}.)


\subsection{Paper Outline}

We formally define the algorithm in \S\ref{sec:algorithm}, and the set
of important swaps in \S\ref{sec:analysis}. We classify the clients
into types in \S\ref{subsec:ClientType}, and bound the expected change
in potential for each client type in \S\ref{sec:bounds}; combining them
proves Theorem~\ref{thm:ub}. In \S\ref{sec:LP}, we present how to
construct a linear program that mimics our analysis. In
\Cref{sec:lb}, we prove the lower bound from
Theorem~\ref{thm:lb}.
 Details of calculations, as well as  deferred proofs, appear in the appendix.





\section{The Local Search Algorithm}
\label{sec:algorithm}


The algorithm performs swaps of constant size $p = p(\eps) > \nicefrac{1}{\eps}
  $: given any
solution $F$ (initially arbitrarily chosen) of $k$ facilities from
$\calF$, it tries to find an \emph{improving valid swap}. Here, a swap
$(P,Q) \in \binom{\calF}{\leq p} \times \binom{\calF}{\leq p}$ is
\emph{valid} if $P\subseteq F$, $Q\subseteq \cF\setminus F$, and
$|P| = |Q|$, so that we close as many facilities as we
open. A valid swap is \emph{improving} if
\[ \Phi((F\setminus P)\cup Q) < \Phi(F), \] where $\Phi$ is as defined
in~(\ref{eq:2}). If the algorithm finds an improving valid swap
$(P,Q)$, it sets $F \gets (F\setminus P)\cup Q$, and continues; if
there are no such swaps it returns the local optimum $F$.

This algorithm can be made to run in polynomial time by only
considering swaps that improve the potential by
$(1+\delta n^{-O(p)})$-factor; standard techniques (presented e.g. in Arya et al.~\cite{Arya2001LocalSearch}) show that this
changes the approximation factor by at most $(1+\delta)$, since there are $n^{O(p)}$ many different swaps. Observe that
checking whether we are at a (near)-local optimum, or finding an
improving valid swap can be done in $n^{O(p)}$ time. In the rest of
the paper we show the pseudo-approximation claimed in
Theorem~\ref{thm:ub}, i.e., the cost of a local optimum is comparable
to the cost of any solution $F^*$ with $k - \extra$
facilities, where $\extra$ is the number of extra local facilities.

Throughout the paper, we choose the swap size $p(\varepsilon)$ to be $M(\lceil \nicefrac 1\varepsilon\rceil + 1)^{4\lceil \nicefrac 1\varepsilon\rceil^{\lceil \nicefrac 1\varepsilon\rceil}}$, and choose the number of extra local facilities to be $\extra = M(\lceil \nicefrac 1\varepsilon\rceil + 1)^{1 + 16\lceil \nicefrac 1\varepsilon\rceil^{\lceil \nicefrac 1\varepsilon\rceil}}$ for a sufficiently large absolute constant $M$.

\subsection{Proof Strategy}
\label{sec:proof-strategy}

Let us fix some notation: fix a local optimum $F$ of size $k$ and a global optimum
$F^*$ of size $k - \extra$; we call the former the \emph{local} and the latter the 
\emph{optimal} facilities. For a client $c$, let
\begin{OneLiners}
\item $d^*(c) := d(c,F^*)$ be its cost and $f^*$ its closest facility
  in the optimal solution $F^*$,
\item $d_1(c)$ and $d_2(c)$ be its distances to the closest
  and second-closest facilities, and $f_1$ and $f_2$ be these facilities
  in $F$, and
\item $\Phi^c := d_1(c) + \b \min(d_2(c), \a d_1(c))$ be client $c$'s
  contribution to the potential. From now on, we fix  $\a = 3$ and
  $\b = 1/5$.
\end{OneLiners}

Our proof of Theorem~\ref{thm:ub} is based on the fact that 
at the local optimum $F$, the potential change induced by a valid swap $(P,Q)$ is non-negative, i.e., $\Phi ((F\setminus P)\cup Q) - \Phi (F)\geq 0$. 
Defining the \emph{potential change} of client $c$ on swap $(P,Q)$ to be
\begin{gather}
  \dPQ{c}:= \Phi^c ((F\setminus P)\cup Q) - \Phi^c (F), \label{eq:Delta}
\end{gather}
we have
\begin{equation*}
0 \leq \sum_{c\in\cC}\dPQ{c}.
\end{equation*}
This inequality holds for all valid swaps $(P,Q)$; it remains true even if we extend the definition of valid swaps to allow $Q$ to intersect $F$ and/or to have a size smaller than $P$, because doing so never decreases the potential change. We can thus take linear combinations of the inequality over all valid swaps $(P,Q)$. In particular, for any random set $\cP$ of valid swaps, 
\begin{equation*}
0 \leq \bE_\cP\Big[\sum_{(P,Q)\in \cP}\sum_{c\in \cC}\dPQ{c}\Big] =  \sum_{c\in \cC}\E_\cP\Big[\sum_{(P,Q)\in \cP}\dPQ{c}\Big].
\end{equation*}
Theorem~\ref{thm:ub} is thus implied by the following lemma (and
observing that $\frac{2.5203}{0.8888} \leq 2.836$):
\begin{lemma}
\label{lm:main}
There is a distribution over sets $\cP$ of valid swaps such that for all clients $c\in\cC$, 
\begin{equation*}
\bE\Big[\sum_{(P,Q)\in\cP}\dPQ{c}\Big]\leq 2.5203\,d^*(c) - 0.8888\,d_1(c) + O(\varepsilon)\,(d^*(c) + d_1(c)).
\end{equation*}
\end{lemma}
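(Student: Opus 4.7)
The plan is to construct the random distribution $\cP$ as a mixture of two swap families---\emph{simple swaps} in the style of \cite{Gupta2008SimplerLS} and \emph{tree swaps} built from a pseudo-forest on $F\cup F^*$---and to prove the per-client inequality by a case analysis on the type of $c$ (far vs.\ close) and on the local configuration around $f^*(c)$. First I would define two randomized maps: $\pi^*:F^*\to F$ sending each optimal facility to its closest or second-closest facility in $F$ with a calibrated probability depending on the ratio $d(f,F)/d_2(f,F)$, and symmetrically $\pi:F\to F^*$. These randomized maps encode the competing associations that make the analysis tight: the second-closest branch is precisely what lets us cash in on the ``backup'' term $\beta\min(d_2,\alpha d_1)$ in $\Phi$, and calibrating the branching probabilities is what mediates the tension between close and far clients described in Section~\ref{sec:our-techniques}.

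Next I would build the two swap families. For \emph{simple swaps}, partition $F^*$ by its image under $\pi^*$ and form one swap per block by closing that block's image in $F$ and opening the block (the $\extra$ extra local facilities absorb the count mismatch, as in \cite{Gupta2008SimplerLS}). For \emph{tree swaps}, form the directed graph $G$ on $F\cup F^*$ with an arc $f\to\pi(f)$ for every $f\in F$ and $f^*\to\pi^*(f^*)$ for every $f^*\in F^*$. Every vertex has out-degree exactly one, so each connected component is a cycle with trees hanging off it. Using the $\extra$ extra local facilities as cut slack, I would decompose $G$ into rooted pieces of size at most $p(\e)$ that cover every vertex, with total boundary overhead $O(\e)$ times the relevant costs; each piece $S$ produces a tree swap that closes $S\cap F$ and opens $S\cap F^*$ (again padded to equalize sizes). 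The distribution $\cP$ then picks the simple-swap family with some probability $\lambda$ and the tree-swap family with probability $1-\lambda$.

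With $\cP$ fixed, I would bound $\bE[\sum_{(P,Q)\in\cP}\dPQ{c}]$ for each client $c$. For far clients (those with $d_2(c,F)\geq\alpha d_1(c,F)$), the potential reduces to $(1+\alpha\beta)d_1(c)$, and the only dangerous event is closing $f_1(c)$ without a near substitute being opened; the triangle-inequality argument from Section~\ref{sec:our-techniques}, applied on the branch where $\pi^*(f^*(c))\neq f_1(c)$, forces $d_1(c)\leq \frac{2}{\alpha-1}d^*(c)$ and controls the reassignment cost. For close clients, the gain comes from tree swaps: whenever $f_1(c)$ and $f_2(c)$ share a $\pi$-image or lie in the same component of $G$, the decomposition places them in the same piece, so the ``very sad'' reassignment illustrated in Figure~\ref{fig:singlevsmulti} is avoided. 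The randomized choice between closest and second-closest in $\pi$ is precisely what ensures this co-location happens with a large-enough probability. The negative $-0.8888\,d_1(c)$ term comes from the swap that opens $f^*(c)$: in that swap $c$'s contribution to $\Phi$ drops by at least $d_1(c)-(1+\alpha\beta)d^*(c)$, and this gain is realized on a constant fraction of the random branches.

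The main obstacle is the bookkeeping. Every combination of (far vs.\ close; random branch of $\pi^*$ and $\pi$; and which of $f_1(c), f_2(c), f^*(c)$ land in which swap) produces its own linear inequality, and all of them must simultaneously be dominated by $2.5203\,d^*(c) - 0.8888\,d_1(c) + O(\e)(d^*(c)+d_1(c))$. I would encapsulate this by setting up a finite linear program whose variables are $\lambda$ together with the two randomization probabilities (keeping $\alpha=3,\beta=1/5$ fixed), whose constraints are the per-case upper bounds, and whose objective is the worst-case coefficient ratio; as outlined in Section~\ref{sec:LP}, solving this LP certifies the constants $2.5203$ and $0.8888$, after which the lemma follows. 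The $O(\e)$ slack absorbs both the loss from the pseudo-forest decomposition and the loss from the non-bijectivity of the greedy matching used in the simple swaps.
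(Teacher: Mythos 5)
Your high-level plan---mix a Gupta--Tangwongsan-style simple-swap family with a tree-swap family built from the pseudo-forest on $F\cup F^*$, randomize the $F^*\to F$ map between the closest and second-closest neighbour, and then bound the per-client expected potential change by case analysis---is indeed the paper's approach. But as written it has gaps that a literal execution would hit.

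First, ``decompose $G$ into rooted pieces of size at most $p(\e)$ with total boundary overhead $O(\e)$'' is not achievable without two preprocessing steps that you omit: degree reduction (a local or optimal facility can have arbitrarily high in-degree, so you must excise those ``heavy'' vertices, and then you need surrogates to keep the count of $F$-vertices and $F^*$-vertices in the graph unchanged), and a height-truncating edge deletion with a \emph{randomized} cut level so that any fixed edge is deleted with probability $O(\e)$. Without these, a single component can be huge, or a particular client's edges can be cut deterministically, and the per-client bound breaks. Relatedly, ``padded to equalize sizes'' glosses over the balancing procedure: you need a combinatorial argument (Claim~\ref{claim:balancedGroup}) that merges components into valid swaps while ensuring any two fixed components are merged with probability $O(\e)$; otherwise two swaps relevant to the same client could systematically coalesce and invalidate your per-swap inequalities.

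Second, and more importantly for the constants: you claim ``whenever $f_1(c)$ and $f_2(c)\ldots$ lie in the same component of $G$, the decomposition places them in the same piece.'' That is false---the edge-deletion step cuts inside components---and the split between the cases where $f_1, f_2$ stay together versus get separated in the tree decomposition is exactly where the paper's $\cT_{11},\cT_{12},\cT_{21},\cT_{22}$ partition and the averaging lemma (\Cref{lm:typeA}) come in. Treating co-location as automatic would give a bound that is wrong or, at best, far weaker than $2.5203\,d^*-0.8888\,d_1$. You also propose randomizing \emph{both} $\pi:F\to F^*$ and $\pi^*:F^*\to F$; the paper keeps $\pi$ deterministic (nearest optimal) and only randomizes $\tau=\pi^*$. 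Randomizing $\pi$ is not obviously wrong, but it changes the inequality you can derive from the $\move{\neg f_1}$ swap in the tree case (you lose the fixed $2d_1+d^*$ fallback to $\pi(f_1)$), and nothing in your sketch compensates for that. Finally, the per-swap bounds you write down are only valid on a ``good'' event; you need the explicit amenable/defiant dichotomy with a crude $O(d^*+d_1)$ fallback and a probability-$O(\e)$ bad event to absorb the cases where a relevant facility is a surrogate or a relevant edge is cut. Leaning on an LP to ``certify the constants'' does not obviate any of these: the LP is only sound if the inequalities fed into it are valid, and the inequalities are exactly where the missing structure matters.
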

In order to prove this lemma, we build a randomized procedure
generating the set $\cP$ of swaps (where we call elements of $\cP$ \emph{important
  swaps}), and divide our analysis into two cases: the \emph{amenable}
case and the \emph{defiant} case. In particular, given a client $c$,
we define a suitable amenable event $\cA$ and its complement
defiant event $\cD$, and show the following two lemmas, which immediately
imply \Cref{lm:main}.
\begin{restatable}[Defiant Case]{lemma}{DefiantSwaps}
  \label{lem:combo}
  There is a distribution over sets $\cP$ of valid swaps such that for all clients $c\in\cC$, 
  \begin{align}
    \bE\Big[\ind_\cD\sum_{(P,Q)\in\cP}\dPQ{c}\Big] &\leq 
                                                     O(\varepsilon)\,
                                                     (d^*(c) +
                                                     d_1(c)). \label{eq:strategy1}
  \end{align}
\end{restatable}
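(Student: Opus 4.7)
The plan is to construct the distribution over swap sets $\cP$ by tossing a fair coin to select between two swap families: \emph{simple swaps}, which pair each $f^*\in F^*$ with its nearest local facility in $F$, and \emph{tree swaps}, which additionally incorporate the reverse map from $F$ to $F^*$ and then decompose the resulting out-degree-one graph $G$ on $F^*\cup F$ into constant-sized pieces. In each family, I would randomize whether a facility in one solution is paired with its closest or second-closest counterpart in the other, with the mixing probabilities tuned later to balance the amenable- and defiant-case bounds.

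Next I would define the defiant event $\cD$ to capture precisely those randomness outcomes on which client $c$ is \emph{close} (so $d_2(c)<\alpha d_1(c)$) and some swap in $\cP$ containing either $f_1(c)$ or $f_2(c)$ forces a reassignment of $c$ to a facility at distance at least $(1+\alpha\beta)d_1(c)$, without simultaneously opening $f^*(c)$. This is exactly the ``very sad'' configuration anticipated in the introduction: it is the configuration that blows up $\dPQ{c}$ in a naive per-swap analysis, and the complement $\cA$ will be handled by the companion amenable-case lemma.

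With $\cD$ so defined, I would establish~\eqref{eq:strategy1} by two separate arguments matching the two swap families. For the tree-swap family, the decomposition of $G$ is arranged so that each tree is of constant size and so that for all but an $O(\eps)$-fraction of close clients, the facilities $f_1(c)$, $f_2(c)$ and the local facility matched to $f^*(c)$ all land inside a single tree; for these clients the tree swap opens $f^*(c)$ in the same step that it closes $f_1(c)$ or $f_2(c)$, so $\ind_\cD\cdot\dPQ{c}$ is non-positive (or at least benign). For the simple-swap family, the randomized mapping from $f^*(c)$ into $\{f_1(c),f_2(c)\}$ is tuned so that the defiant configuration arises only with probability $O(\eps)$, and since $c$ is close, the positive contribution to $\dPQ{c}$ on that event is a priori at most $O(1)\,(d^*(c)+d_1(c))$; combining these gives the desired $O(\eps)\,(d^*(c)+d_1(c))$ bound after taking expectations.

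The main obstacle is the tree-decomposition step. It is easy to cut the out-degree-one graph $G$ into pieces of size at most $p(\eps)$, but we must also ensure that only an $O(\eps)$-fraction of close clients are \emph{split}, i.e.\ have the facilities $f_1(c)$, $f_2(c)$, and the image of $f^*(c)$ straddling two trees. I expect to achieve this by picking a random cut depth from many geometrically spaced levels and arguing, via a Markov-style charging over levels, that any fixed client is split with probability $O(\eps)$. This averaging argument is what drives the very large choice $p(\eps)=M(\lceil 1/\eps\rceil+1)^{4\lceil 1/\eps\rceil^{\lceil 1/\eps\rceil}}$ built into the algorithm: enough levels are needed to make the per-client separation probability as small as $\eps$ while keeping the trees of bounded constant size. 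Once the defiant contribution is controlled for each of the two families, the lemma follows by summing over the constantly many swaps in $\cP$ that can affect $c$ and taking expectations over the coin flip selecting the family.
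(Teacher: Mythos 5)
Your high-level architecture matches the paper's: a fair coin between a simple-swap family and a tree-swap family, randomized choice between closest and second-closest counterparts, a random-depth edge deletion to cut the $1$-forest into constant-sized trees, and a crude $O(1)(d^*+d_1)$ upper bound on the aggregate potential change multiplied by an $O(\eps)$-probability bad event. The paper's proof of this lemma is exactly the product of two such claims: a worst-case bound $\sum_{(P,Q)\in\cP}\dPQ{c}\leq\gamma(d^*+d_1)$ that holds for \emph{every} realization of the randomness, and $\Pr[\cD]\leq O(\eps)$.

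However, your choice of $\cD$ and your argument for its small probability contain a real gap. You define $\cD$ observationally, as the event that the client is close and some swap closing $f_1$ or $f_2$ reassigns $c$ to distance $\geq(1+\a\b)d_1$ without opening $f^*$. This has two problems. First, restricting $\cD$ to close clients means that for far clients the amenable event is the whole space, so the companion amenable-case lemma would have to handle the structurally bad outcomes (e.g., $\tau(f^*)$ being used as a local/optimal surrogate, the edge $f^*\to\tau(f^*)$ being deleted, the component of $f_1$ getting merged with the component of $\eta_1$ in balancing) with full probability; but the amenable-case arithmetic for far clients uses exactly these structural guarantees, and they fail outside a structurally-defined good event. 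Second, and more fundamentally, the claim you lean on to control $\Pr[\cD]$ for tree swaps---that for all but an $O(\eps)$ fraction of the randomness, $f_1$, $f_2$, and the local facility matched to $f^*$ land in a single tree---is false. Whether $f_1$ and $f_2$ end up in the same connected component is largely determined by the fixed graph structure of $G_0$ (which facility each points to), not by the random cut depth; for a close client of type $\xE$, for instance, $f_1$ and $f_2$ may have out-edges to unrelated optimal facilities and will with constant probability land in different trees. The paper does not avoid this situation by making it rare; instead it handles the case ``$f_1$ and $f_2$ belong to different swaps'' explicitly in the amenable analysis and reserves $\cD$ only for genuinely low-probability structural accidents (surrogate selection, deleted edges incident to the client's immediate neighborhood, merged components in balancing), each of which is $O(\eps)$ by the degree-reduction thresholds, the geometrically spaced height thresholds, and the balancing guarantee. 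Replacing your observational $\cD$ with that structural one, and dropping the ``same-tree'' claim, is what makes the $\Pr[\cD]\leq O(\eps)$ union bound and the crude upper bound fit together.
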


\begin{restatable}[Amenable Case]{lemma}{AmenableSwaps}
  \label{lem:combo2}
  For the distribution over valid swap sets from
  \Cref{lem:combo}, for any $c\in\cC$, 
  \begin{align}
    \bE\Big[\ind_\cA\sum_{(P,Q)\in\cP}\dPQ{c}\Big] &\leq
                                                     2.5203\,d^*(c) - 0.8888\,d_1(c) + O(\varepsilon)\,(d^*(c) + d_1(c)).\label{eq:strategy2}
  \end{align}
\end{restatable}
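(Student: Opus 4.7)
\textbf{Proof proposal for Lemma \ref{lem:combo2}.}
The plan is to construct the random swap set $\cP$ as a convex combination of two families. With probability $\lambda\in[0,1]$ we draw a set of \emph{simple swaps}, built from a randomized map $\sigma_S\colon F^*\to F$ where each optimal facility $f$ is sent to its closest local facility with probability depending on the ratio $d(f,F)/d(f,F\setminus\{\sigma_S(f)\})$, and otherwise to its second-closest; with probability $1-\lambda$ we draw \emph{tree swaps}, obtained by also randomizing a reverse map $\sigma_T\colon F\to F^*$, forming the outdegree-$1$ digraph $G$ on $F^*\cup F$, and decomposing $G$ into bounded-size subtrees whose vertices in $F^*$ resp.\ $F$ define $Q$ resp.\ $P$ of a single swap. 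The amenable event $\cA$ for client $c$ should record that the relevant facilities $f^*(c),f_1(c),f_2(c)$ behave `regularly' under these random choices: $f^*$ lies in a single swap $(P^\star,Q^\star)$, the randomized second-closest options do not fire on $\sigma_S(f^*)$ or $\sigma_T(f_1)$, and the subtree containing $f^*$ in the tree-swap case does not have depth exceeding some constant depending on $\varepsilon$. Everything else is deferred to the defiant event $\cD$ and absorbed by Lemma \ref{lem:combo}.

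The next step is to decompose $\sum_{(P,Q)\in\cP}\dPQ{c}$ into contributions from (i)~the happy swap $(P^\star,Q^\star)$ that opens $f^*(c)$, (ii)~sad swaps that close $f_1(c)$ or $f_2(c)$ without opening $f^*(c)$, and (iii)~all remaining swaps, which leave $c$'s nearest two facilities intact and thus contribute $\dPQ{c}\leq 0$. For the happy swap I upper-bound the new potential $\Phi^c((F\setminus P^\star)\cup Q^\star)$ by splitting on the client type: if $c$ is \emph{far} the new potential is at most $(1+\alpha\beta)d^*(c)=\tfrac{8}{5}d^*(c)$, while if $c$ is \emph{close} I bound $d_2$ after the swap through a surviving local facility (using the triangle inequality via $f^*$), giving at most $d^*(c)+\beta\min(\alpha d^*(c),d_1(c)+d^*(c))$. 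In both subcases, $\Phi^c(F)$ is exactly $d_1(c)+\beta\min(d_2(c),\alpha d_1(c))$, so the contribution $\Phi^c_{\mathrm{new}}-\Phi^c$ is of the form $A\,d^*(c)-B\,d_1(c)$ for constants $A,B$ depending on the type; the probability of $\cA$ coupled with this drop supplies the $-0.8888\,d_1(c)$ term.

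For sad swaps of type (ii) I estimate the reassignment cost by routing through $f^*(c)$ to a surviving nearby local facility. For far clients, the bound $d_1(c)\leq\tfrac{2}{\alpha-1}d^*(c)=d^*(c)$ follows from the very definition of farness combined with triangle inequality, so the sad contribution is at most a constant times $d^*(c)$. For close clients the risk is that closing $f_1(c)$ while $f_2(c)$ survives forces $c$ to pay $(1+\alpha\beta)d_2(c)$, which could exceed $\Phi^c$; the tree-swap construction is designed to place $f_1(c)$ and $f_2(c)$ in the same subtree whenever both point to a common predecessor in $G$, so conditioned on $\cA$ they are closed together and reassignment goes through $f^*$ at cost $O(d^*(c))$. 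Summing the probability-weighted contributions over all swaps, using that each local or optimal facility belongs to at most one swap in either family, yields coefficients on $d^*(c)$ and $d_1(c)$ that can be minimized over $\lambda$ and the second-closest map probabilities; the resulting LP—essentially the one sketched in \S\ref{sec:LP}—has optimum achieving exactly $2.5203\,d^*(c)-0.8888\,d_1(c)$ modulo $O(\varepsilon)$ slack from truncating tree depth.

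The principal obstacle is the close-client analysis when several optimal facilities map to $f_1(c)$ or $f_2(c)$ under $\sigma_S$, so that the simple-swap family alone cannot clear both of $c$'s nearby local facilities in a single swap; the tree-swap family must take over, and one must enumerate the subtree topologies near $c$ (how many of $f_1,f_2$ lie in $P^\star$, whether $\sigma_T$ points them to $f^*$ or elsewhere) and match them against the random map choices, keeping the total probability of the exceptional configurations within the $O(\varepsilon)$ budget. Balancing the worst-case coefficient across these subcases for both far and close clients is what pins down the choices $\alpha=3$, $\beta=1/5$, and the precise constants in the statement; the gain ratio $2.5203/0.8888\leq 2.836$ is exactly the approximation factor reported in Theorem \ref{thm:ub}.
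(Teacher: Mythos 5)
Your high-level scaffolding matches the paper's: flip a coin between simple and tree swaps, randomize the $\tau$ maps according to the distance ratio $\rho$, decompose $\sum_{(P,Q)\in\cP}\dPQ{c}$ into the ``happy'' swap opening $f^*$, the ``sad'' swaps closing $f_1$ or $f_2$, and indifferent swaps with non-positive change, and then bound each via triangle inequality and combine with probability weights. The diagnosis of the close-client difficulty (needing $f_1,f_2$ closed together to avoid paying $(1+\alpha\beta)d_2$) is exactly the tension the tree swaps are designed to resolve.

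However, there are two concrete gaps. First, your far-client bound $d_1(c)\leq\frac{2}{\alpha-1}d^*(c)$ ``follows from the very definition of farness'' is only true for \emph{type E} far clients, i.e., those with $f_1\notin\{\eta_1(f^*),\eta_2(f^*)\}$: there the triangle inequality supplies a local facility $\eta_1\neq f_1$ at distance $\leq 2d^*+d_1$, and farness forces $2d^*+d_1\geq\alpha d_1$. For a type A far client ($f_1=\eta_1(f^*)$) there is no such auxiliary facility, and indeed the paper's bound for that subtype, $\Gsum{c}\leq 2.47\,d^*-1.13\,d_1$, does not come from $d_1\leq d^*$; it comes from a separate accounting on the events $\cS_1,\cS_2,\cT_1,\cT_2$. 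Your sad-swap argument for far clients therefore does not cover all far types.

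Second, the assertion that ``the resulting LP — essentially the one sketched in \S\ref{sec:LP} — has optimum achieving exactly $2.5203\,d^*(c)-0.8888\,d_1(c)$'' is wrong on the face of it: the LP of \S\ref{sec:LP} gives $2.7786$ for $\Phi_2$, strictly better than $2.5203/0.8888\leq 2.836$. The constants $2.5203$ and $0.8888$ arise from the paper's hand-verifiable case analysis (Lemmas \ref{lem:typez} and \ref{lem:boundsAll}) with the tight case being a close type D client at $\rho>\nicefrac34$; the LP is a separate, stronger-but-unverified computation. More broadly, the proposal defers the actual content of the lemma — the client-type-by-client-type bounds, which occupy \S\ref{sec:bounds} and the bulk of Appendix E — rather than carrying them out, so what is provided is a plan rather than a proof.
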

In \S\ref{sec:analysis}, we
define the distribution over sets $\cP$ of important swaps. In \S\ref{subsec:types} we classify the clients into types. We define the
amenable and defiant events for clients of each type and prove~\Cref{lem:combo} in \S\ref{sec:amenable-defiant-swaps}, and then 
prove \Cref{lem:combo2} in \S\ref{sec:bounds}.

\newcommand{\Psimple}{\cP^{\text{simple}}}
\newcommand{\Ptree}{\cP^{\text{tree}}}
\newcommand{\Eta}{\tau}
\newcommand{\bbT}{\alert{\pmb{\tau}}}
\newcommand{\heavy}{heavy\xspace}

\section{Generating the Important Swaps}
\label{sec:analysis}
In this section, we describe our randomized procedure generating
$\cP$, the set of important swaps, that proves
\Cref{lem:combo,lem:combo2}. 
$\cP$ contains valid swaps $(P,Q)$, where
$P\subseteq F$ has size at most $p(\eps)$, and $Q$ is an arbitrary set of
facilities with size at most $|P|$. Every swap we generate has $Q$
being a subset of $F^*$, the set of optimal facilities. We say swap
$(P,Q)$ \emph{closes} the local facilities in $P$, and \emph{opens}
the optimal facilities in $Q$. (By duplicating points in the metric
space, we assume $F$ and $F^*$ are disjoint, and so are $P, Q$.)
Sometimes we say the swap \emph{contains} the local facilities in $P$
and the optimal facilities in $Q$. 

In order to prove \Cref{lem:combo,lem:combo2}, we want to minimize the
potential change of every client by always opening a ``nearby'' optimal
facility whenever we close a local facility. Roughly, we generate both
simple and tree swap sets by constructing a directed graph $G$ over
the vertex set $F\cup F^*$, where every edge connects ``nearby'' local
and optimal facilities. We perform some surgery on this graph if
needed: we remove vertices in $F$, duplicate vertices in both $F$ and
$F^*$, and remove some edges, so that every connected
component of the resulting graph has a small size. Finally, we combine
these connected components of $G$ into small-sized groups so that the
number of local facilities in each group is no smaller than the
optimal facilities in it. The swap set $\cP$ consists of the swap
defined by each of these groups, closing/opening all the local/optimal
facilities in it. In the following subsections, we describe in detail
our procedures generating the simple and tree swap sets. (Again,
recall this is all in the analysis, since the algorithm is just the
$p$-swap local search that attempts to improve the potential.)

\subsection{Generating the Important Simple Swaps}\label{subsec:simpleswap}
\label{sec:simple}

We start by constructing a random directed graph $G_0$ over vertices
$F\cup F^*$. The graph is defined 
by a random function $\tau:F^*\rightarrow F$ that maps each optimal facility to
a local facility: this gives a bipartite graph with $F^*$ vertices
have out-degree one, and $F$ vertices having no out-degree.
In previous
  analyses, $\tau(f^*)$ was defined as the closest local facility to
  $f^*$, but in our analysis, we choose $\tau(f^*)$ randomly from the
  two closest local facilities to $f^*$ in order to cover a larger neighborhood with good balance.
Indeed, independently for every optimal
facility $f^*$, we choose $\tau(f^*)$ from $\eta_1$ and $\eta_2$, where
$\eta_1 = \eta_1(f^*)$ and $\eta_2 = \eta_2(f^*)\in F$ are the first
and second closest local facilities to $f^*$. The probability of
choosing $\eta_i$ depends on the value of
$\rho = \rho(f^*):=\frac{d(f^*,\eta_1)}{d(f^*,\eta_2)}\in[0,1]$. When
$\rho(f^*)\leq\nicefrac 34$, we choose $\tau(f^*) = \eta_1$ with
probability 1; when $\rho(f^*) > \nicefrac 34$, we choose
$\tau(f^*) = \eta_1$ with probability $(\nicefrac 52 - 2\rho)$ and
$\tau(f^*) = \eta_2$ with the remaining probability
$(2\rho - \nicefrac 32)$. 

Intuitively, $\tau(f^*)$ is the facility used as
a fallback to serve clients of $f^*$'s cluster when their closest local
facility is swapped out. More precisely, we design the swaps such that
either $f^*$ or $\tau(f^*)$ is open. To bound the reassignment cost to $\tau(f^*)$,
we therefore must ensure that $\tau(f^*)$ is as close as possible to $f^*$. 
When  $\rho(f^*)$ is small, there is therefore a huge incentive 
in choosing $\tau(f^*) = \eta_1$. However, 
when  $\rho(f^*)$ is close to $1$, there is no difference between $\eta_1$ or $\eta_2$. 
Our probability distribution is chosen such as to implement that intuition.
It has been tuned 
experimentally: using our LP formulation, we were able to look for a choice of
of $\tau$ that gives a good approximation guarantee while being simple enough to prove that guarantee.

This defines the graph $G_0$. 
We wish to generate swaps according to the connected components of
$G_0$, i.e., every swap closes all the local facilities in a connected
component and opens all the optimal facilities in the same connected
component. However, such swaps may not be valid because 1) the size of
a connected component may be much larger than $p$, and 2) there may be
more optimal facilities in a connected component than local facilities
(since every connected component of $G_0$ contains exactly one local facility). We solve these issues by two procedures: \emph{degree reduction} and \emph{balancing}.
%
%
%
\paragraph{Degree reduction.} 
The size of a connected component of $G_0$ being too large is
 caused by local facilities with high in-degree. We solve the problem by removing all local
facilities that could potentially have high in-degree from the
graph. We call these the \emph{\heavy} local facilities. To keep the number of local facilities in the graph unchanged, we duplicate other local facilities, which we call \emph{local surrogates}. We formally define \heavy local facilities and local surrogates as follows.
We first define $N(f^*)\subseteq \{\eta_1,\eta_2\}$ and call it the set of local
neighbors of $f^*$. 
If $\rho(f^*)\leq \nicefrac 23$, we define $N(f^*) = \{\eta_1\}$;
otherwise, we define $N(f^*) = \{\eta_1, \eta_2\}$. We choose $\thd = \lceil\nicefrac 1\varepsilon\rceil$ as the \emph{degree threshold}. Now the \heavy
local facilities are as follows:
\begin{definition}[\heavy local facility]
A local facility $f \in F$ is \heavy if it is a local neighbor of more than $\thd+1$ optimal facilities.
\end{definition}
Note that
$\tau(f^*)$ must be a local neighbor of $f^*$
because $\nicefrac 34 > \nicefrac 23$.
Therefore,
only \heavy local facilities can have in-degree more than $\thd+1$ in $G_0$. For every \heavy local facility, 
we choose a local surrogate 
uniformly at random from the \emph{local candidates} defined as follows:
\begin{definition}[local candidate]
A local facility $f\in F$ is a \emph{local candidate} if it is not \heavy and every optimal facility in $\tau^{-1}(f)$ has a \heavy local neighbor.
\end{definition}
Note that, unlike our definition of \heavy local facilities, the
definition of local candidates depends on the random function
$\tau$. The following claim (proved in
\Cref{sec:proof-local-candidate}) shows that there are enough local candidates from which the \heavy local facilities can choose:
\begin{restatable}{claim}{lightclients}
  \label{clm:light-simple}
  The number of local candidates is at least $\nicefrac \thd2$ times the number of \heavy local facilities.
\end{restatable}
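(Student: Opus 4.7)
The plan is a direct double-counting argument: lower-bound the set $S^*$ of optimal facilities that \emph{do} have a heavy local neighbor, then observe that the only non-heavy $f\in F$ that fail to be local candidates lie in the image under $\tau$ of the complement $F^*\setminus S^*$, which is small.

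First I would count incidences between $H$, the set of heavy local facilities, and $F^*$. By definition each $h\in H$ is a local neighbor of at least $\thd+2$ optimal facilities, so
\[
\#\big\{(h,f^*)\in H\times F^* : h\in N(f^*)\big\}\;\ge\;|H|(\thd+2).
\]
Since $|N(f^*)|\le 2$ for every $f^*$, if we set
\[
S^* := \{f^*\in F^* : N(f^*)\cap H\neq\emptyset\},
\]
then each $f^*\in S^*$ accounts for at most $2$ pairs above, which yields $|S^*|\ge |H|(\thd+2)/2$.

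Next I would identify the non-candidates inside the non-heavy part $T:=F\setminus H$. By definition $f\in T$ fails to be a candidate iff there is some $f^*\in\tau^{-1}(f)$ with no heavy local neighbor, i.e.\ some $f^*\in F^*\setminus S^*$ with $\tau(f^*)=f$. Conversely, if $f^*\notin S^*$ then $N(f^*)\cap H=\emptyset$, and since $\tau(f^*)\in N(f^*)$ we have $\tau(f^*)\in T$. Therefore
\[
\{\text{non-candidates in }T\}\;=\;\tau(F^*\setminus S^*),
\]
and this set has size at most $|F^*\setminus S^*|$ because $\tau$ is a function.

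Finally I would just combine the two bounds. Using $|T|=k-|H|$ and $|F^*|=k-\extras$,
\[
\#\text{candidates}\;\ge\;|T|-|F^*\setminus S^*|\;\ge\;(k-|H|)-\bigl((k-\extras)-|H|(\thd+2)/2\bigr)\;=\;\extras+|H|\,\thd/2\;\ge\;|H|\,\thd/2,
\]
which is exactly the claim. The only mildly subtle point is handling the asymmetry that $\tau(f^*)$ could a priori equal $\eta_2$ rather than $\eta_1$; this is precisely why we must work with the neighborhood $N(f^*)\supseteq\{\tau(f^*)\}$ (rather than with $\tau$ alone) when showing that the image of $F^*\setminus S^*$ lies in $T$, and it is also why the factor $|N(f^*)|\le 2$ appears in the lower bound on $|S^*|$. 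Beyond that bookkeeping, the proof is purely combinatorial and requires no use of the metric or of the potential $\Phi$.
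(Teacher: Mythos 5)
Your proof is correct and follows essentially the same double-counting argument as the paper: lower-bounding the number of optimal facilities with a heavy local neighbor via the incidence count (each heavy $f$ contributes $\ge \thd+2$ incidences, each $f^*$ absorbs at most $2$), then observing that non-heavy non-candidates are exactly the $\tau$-images of optimal facilities without a heavy neighbor. The only cosmetic difference is that you plug in $|F|=k$ and $|F^*|=k-\extras$ explicitly and get the slightly stronger bound $\extras + |H|\thd/2$, whereas the paper just uses $|F|\ge|F^*|$; the decomposition into heavy/pointed-to/candidate facilities in the paper is the same partition you implicitly use.
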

We are ready to describe our degree reduction procedure:
\begin{OneLiners}
\item[1.] Remove all the edges incident to \heavy local facilities;
\item[2.] Replace each \heavy local facility $f$ by its local
  surrogate $s$, chosen uniformly at random \emph{without replacement}
  from the local candidates. Hence, in the graph the vertex labeled
  $f$ (and now having no in-edges due to step 1) is replaced by one
  labeled $s$. So a local surrogate appears twice now: the original
  copy of $s$, and a single isolated vertex as a surrogate for
  $f$. 
\end{OneLiners}
Let $G_1$ denote the graph after degree reduction. Clearly, every
local facility has degree at most $\thd+1$ in $G_1$, and thus every
connected component has size at most $\thd+2$. The next claim follows directly from \Cref{clm:light-simple}:

\begin{claim}
\label{clm:simple-degree-reduction}
The constructed graph $G_1$ satisfies following properties:
\begin{OneLiners}
\item[i.] Heavy local facilities do not appear in $G_1$. 
\item[ii.] Local facilities chosen as local surrogates appear twice: once as the original copy and once as an isolated vertex. 
\item[iii.] Other local facilities and all optimal facilities appear once. 
\item[vi.] Every optimal facility $f^*$ points to the original copy of $\tau(f^*)$ unless $\tau(f^*)$ is heavy.
\item[v.] Any local facility is chosen as a local surrogate with
  probability at most $\nicefrac 2\thd$, and only when it is a local candidate.
\end{OneLiners}
\end{claim}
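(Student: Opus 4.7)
My plan is to verify the five properties one by one, essentially by unpacking what the two degree-reduction steps do to $G_0$; only property (v) needs substantive input, namely \Cref{clm:light-simple}.

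First, I would handle (i)--(iv) by direct inspection of Steps 1 and 2. Property (i) is immediate because Step 2 relabels every vertex originally labeled by a heavy facility, so no heavy label remains. For (ii), the key observation is that a surrogate $s$ is, by definition, a local candidate and hence not heavy; therefore $s$'s original vertex survives both steps with its incident edges intact, while $s$ additionally appears as the relabeled copy of the heavy vertex it replaces, which is isolated because Step 1 has already removed that vertex's incident edges. Property (iii) just says that non-heavy, non-surrogate local facilities and all optimal facilities are untouched, which follows because Step 1 only removes edges incident to heavy facilities and Step 2 only relabels heavy vertices. For (iv), any edge $(f^*, \tau(f^*))$ in $G_0$ with $\tau(f^*)$ non-heavy has a non-heavy endpoint, so it survives Step 1; and neither endpoint is relabeled in Step 2, so the edge still points to the (unique) original copy of $\tau(f^*)$.

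For property (v), I would condition on any realization of the random map $\tau$, which determines both the set of heavy local facilities (say of size $H$) and the set of local candidates (say of size $L$). The surrogates are then drawn uniformly at random without replacement from the $L$ candidates, so by symmetry each fixed local candidate is selected with probability exactly $H/L$. \Cref{clm:light-simple} gives $L \geq (\thd/2)\, H$, which yields $H/L \leq 2/\thd$. A non-candidate is never picked by construction, giving the ``only when a local candidate'' clause. Since the bound holds pointwise in $\tau$, it is preserved after averaging over $\tau$.

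The only subtlety is the interaction between the randomness of $\tau$ and the independent randomness used to pick surrogates, since the set of local candidates is itself $\tau$-dependent. Handling everything conditionally on $\tau$ sidesteps any correlation issue and keeps the bound clean. I do not foresee any genuine obstacle beyond bookkeeping: the claim is essentially a consequence of the definitions, with \Cref{clm:light-simple} doing the only real work in the probability calculation of (v).
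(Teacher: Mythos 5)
Your proof is correct and takes essentially the same approach as the paper: the paper simply asserts that the claim ``follows directly from \Cref{clm:light-simple},'' and your proposal supplies exactly the intended bookkeeping, with properties (i)--(iv) read off from Steps 1 and 2 of degree reduction and (v) obtained by conditioning on $\tau$ and applying the $\nicefrac{\thd}{2}$ ratio from \Cref{clm:light-simple} to bound the inclusion probability $H/L$ by $\nicefrac{2}{\thd}$. (One minor note: the set of heavy local facilities is actually deterministic, since the local-neighbor sets $N(f^*)$ depend only on $\rho(f^*)$ and not on the random choice $\tau$; only the candidate set is $\tau$-dependent, but this does not affect your conditioning argument.)
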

%
%
%
%
%
%

\paragraph{Balancing.} Since a connected component of $G_1$ may contain
more optimal facilities than local ones we combine connected
components together to form groups with at least as many local
facilities as optimal ones, using the following claim
(proved in \Cref{sec:proof-balancedGroup}):

\begin{restatable}[Balancing Procedure]{claim}{balancedGroup}
  \label{claim:balancedGroup}
  Consider a universe $U = R \cup G$ of red points $R$ and green
  points $G$, with $|G| = |R|+\extras$. Let the collection of sets
  $S_1,\ldots,S_N$ partition $U$, and let $|S_i| \leq x$ for all
  $i$. Moreover, let $H$ be a graph on the vertices $[N]$ with maximum
  degree at most $\theta \leq \extras$.
  Lastly, $\extras \geq \Omega\big(\frac{x^5\theta^3}{\eps}\big)$ for some $0\leq \eps\leq 1$.  Then we can merge
  these sets together into new sets $T_0,\ldots, T_M$ such that
  \begin{OneLiners}
  \item[(i)] each $T_j$ has size $|T_j| \leq O(x^2)$,
  \item[(ii)]
    $|T_j \cap R| \le |T_j \cap G|$,
  \item[(iii)] if there is an edge $\{i,j\}$ for $i, j \in [N]$, then
    $S_i$ is not merged with $S_j$, and 
  \item[(iv)] for all $i \neq j$, $S_i$ is merged with $S_j$ with
    probability at most $\eps$.
  \end{OneLiners}  
\end{restatable}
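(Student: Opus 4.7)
For each set define its \emph{balance} $b_i := |S_i \cap R| - |S_i \cap G| \in [-x,x]$ and observe $\sum_i b_i = |R|-|G| = -\extras$. Partition the indices into \emph{demanders} $I_+ := \{i : b_i > 0\}$, \emph{suppliers} $I_- := \{i : b_i < 0\}$, and \emph{neutrals} $I_0 := \{i : b_i = 0\}$. The plan is to form one group $T_k$ per demander $k \in I_+$ by augmenting $\{k\}$ with enough supplier sets to drive the group balance to $\leq 0$, and to leave every neutral set and every leftover supplier as its own singleton group; singletons have $b_i \leq 0$ and so trivially satisfy (ii), while (iii) is vacuous for singletons.

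\textbf{Construction.} Fix an arbitrary enumeration of $I_+$ and draw a uniformly random permutation $\pi$ of $I_-$. Process demanders in that order: for $k \in I_+$, initialize $T_k := \{k\}$ and walk through $\pi$, adding the current supplier $j$ to $T_k$ iff (a) $j$ has not yet been placed in a previous $T_{k'}$, (b) $\{k,j\}\notin E(H)$, and (c) $\{j,j'\}\notin E(H)$ for every $j' \in T_k$; close $T_k$ as soon as its balance becomes $\leq 0$. All still-unplaced suppliers and all neutrals become their own singleton groups. Since each added supplier cancels $\geq 1$ unit of balance, $T_k$ contains at most $b_k+1 \leq x+1$ sets and so at most $x(x+1)=O(x^2)$ elements, proving~(i); the closing rule gives~(ii); and~(iii) is enforced directly by (b) and (c).

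\textbf{Feasibility and probability bound.} Each supplier contributes at most $x$ to the total supply $Q = \sum_{j \in I_-}|b_j|$, and the balance paid off by the time demander $k$ is processed is at most $P := \sum_{k'\in I_+} b_{k'}$. Hence the set $U_k$ of unplaced suppliers at that point has $|b|$-mass $\geq Q-P = \extras$, so $|U_k| \geq \extras/x$; at any moment rules (b),(c) forbid at most $(x+1)\theta$ of them, so under $\extras \geq \Omega(x^5\theta^3/\eps)$ a compatible supplier always exists. For~(iv), the only nontrivial cases are pairs $\{i,j\}$ with $i \in I_+, j \in I_-$, or both in $I_-$; in each case the pair is merged only if the relevant suppliers appear among the first $b_k + (b_k+1)\theta = O(x\theta)$ positions of $\pi$ restricted to $U_k$ for some common demander $k$. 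A deferred-decisions coupling shows the relative order on $U_k$ is uniform up to a perturbation of order $(x+1)\theta/|U_k|$, bounding this event by $O(x^2\theta/\extras)$, at most $\eps$ under the hypothesis.

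\textbf{Main obstacle.} Properties (i)--(iii) are essentially bookkeeping; the delicate step is making the window argument for~(iv) rigorous, because the adjacency-skipping rule and the sequential demander processing break the most obvious permutation symmetry on $I_-$: a supplier skipped under (b) or (c) is effectively ``pushed back'' in the conditional order on $U_k$, so the restriction of a uniform $\pi$ to $U_k$ is \emph{not} uniformly distributed after conditioning. Controlling this distortion reduces to showing that the number of such demotions per demander is bounded by $(x+1)\theta$, negligible compared with $|U_k| \geq \extras/x$. The generous slack $\extras = \Omega(x^5\theta^3/\eps)$ then absorbs both this perturbation and all polynomial factors of $x$ and $\theta$, delivering the $\eps$-bound in (iv) alongside the $O(x^2)$ size bound in (i).
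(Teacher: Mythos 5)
Your construction has a genuine gap in the feasibility argument, and the error propagates to property (ii). You claim that when demander $k$ is processed, the unplaced suppliers $U_k$ have $|b|$-mass at least $Q-P=\extras$. The deduction runs: ``the balance paid off so far is at most $P$, hence the mass of unplaced suppliers is at least $Q-P$.'' But the $|b|$-mass of \emph{placed} suppliers is not bounded by the balance paid off --- it is bounded \emph{below} by it. Your closing rule stops a group when its balance becomes $\leq 0$, which permits over-cancellation: a demander with $b_k=1$ that grabs a supplier with $b_j=-x$ consumes $x$ units of supply to cancel one unit of demand. Summed over all demanders, the consumed supply can reach $P + |I_+|(x-1)$, which can exceed $Q$ when $|I_+|$ is large. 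Concretely, take $x=2$, $\theta=0$, $\eps=1$, $r$ as required by the hypothesis, $a$ sets that are a single red ($b_i=1$), and $c$ sets that are two greens ($b_j=-2$), with $a = 2c - r$ and $c$ huge. Your greedy pairs one supplier with one demander, exhausting suppliers after $c$ demanders while $a-c = c-r > 0$ demanders remain as singletons with $b_k = 1 > 0$, violating (ii). The inequality ``$|b|$-mass of $U_k \geq Q-P$'' is simply false here.

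The paper avoids this by never marrying one demander to its own private group: it buckets sets by discrepancy $|S\cap G| - |S\cap R|$, first pairs up large opposite-sign buckets, then exploits the pigeonhole fact that some positive bucket $D_j$ has $\geq \extras/x$ sets, forms ``positive groups'' of $x$ mutually non-adjacent sets from $D_j$ (each group thus carries a green surplus $\geq x$), and assigns each negative set --- of which there are provably few once the large-bucket pairing is done --- to a distinct positive group chosen at random. This route is structurally different from your greedy and sidesteps over-cancellation entirely, because the groups are built from the surplus side first. Your approach could be repaired (e.g.\ by letting multiple demanders share a group, or by stopping a group only when adding the next supplier would over-shoot and instead routing that supplier to the next demander), but as written the feasibility step does not hold, and the ``generous slack'' in $\extras$ cannot save it since the shortfall $|I_+|(x-1)$ grows with $N$, not with $\extras$.
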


Recall that our degree reduction step did not change the total number of local and optimal facilities, so there are still $\extra$ more local facilities than optimal facilities. 
We identify $F^*,F$ with $R,G$ in \Cref{claim:balancedGroup}
respectively,  and define every $S_i$ as the set of facilities in
every connected component of $G_1$. Note that $|S_i| \leq \thd +
2$. $S_i$ and $S_j$ are connected by an edge in $H$ 
if and only if they contain
two copies of the same local facility: one contains the original copy
of a local facility and the other contains a new copy created as a
local surrogate. The maximum degree of $H$ is at most $1$ due to
\Cref{clm:simple-degree-reduction} and the fact that there is at most
one local facility in each connected component. Since $\extra \geq
\Omega((\thd + 2)^5/\varepsilon)$, we use \Cref{claim:balancedGroup} to combine components of $G_1$ into balanced groups, where every group contains at most $O((\thd + 2)^2) \leq p(\eps)$ facilities.
Every group thus defines a valid swap, and we define $\cP$ as the set of these swaps.
\Cref{fig:simpleGroup} shows an example of the simple swap set $\cP$ we generate.

\begin{figure}[H]
  \center
\begin{tikzpicture}[scale=0.6,
		opt/.style={shape=regular polygon,regular polygon sides = 3,draw=black,minimum size=0.45cm,inner sep = 0pt},
		local/.style={shape=rectangle, draw=black,minimum size= 0.35cm},
		swap/.style={fill=gray!20!white,dotted}
]
\pgfset{
  foreach/parallel foreach/.style args={#1in#2via#3}{evaluate=#3 as #1 using {{#2}[#3-1]}},
}
	\def \n {4}
	
	\draw[swap] (0.6,-0.5) rectangle (2.4,3);
	\draw[swap] (2.6,-0.5) rectangle (4.4,3);	
	\draw[swap] (4.6,-0.5) rectangle (6.4,3);	
	\draw[swap] (6.6,-0.5) rectangle (7.4,3);
	\draw[swap] (7.6,-0.5) rectangle (9.4,3);	
	\draw[swap] (9.6,-0.5) rectangle (11.4,3);		

	\def \cc {black!20}
	\def \candc {blue!80!white} 
	\def \srgtc {black}   
	\def \srgtoc {yellow!80} 
	
	\def \colorlistf{"\cc","\candc","\candc","\candc","\srgtc","\candc","\candc","\cc","\candc","\candc","\candc"}	
	\def \colorlisto{"\cc","\srgtoc","\srgtoc","\srgtoc","\srgtoc","\srgtoc","\srgtoc","\cc","\srgtoc","\srgtoc"}

	\foreach \i [parallel foreach=\c in \colorlisto via \i] in {1,...,10}{
		\node [opt,fill=\c] (o\i) at (\i,0) {};
	}
	\foreach \i [parallel foreach=\c in \colorlistf via \i] in {1,...,6,8,9,10,11}{
		\node [local,fill=\c] (f\i) at (\i,2.5) {};
	}
	\node [local,fill=\candc,draw = red,very thick] (f7) at (7,2.5) {};

	
	
	\draw (o1) edge[->,>=latex,gray,thick] (f1);
	\draw (o2) edge[->,>=latex,gray,thick,dashed] (f5);
	\draw (o3) edge[->,>=latex,gray,thick,dashed] (f5);
	\draw (o4) edge[->,>=latex,gray,thick,dashed] (f5);
	\draw (o5) edge[->,>=latex,gray,thick,dashed] (f5);
	\draw (o6) edge[->,>=latex,gray,thick,dashed] (f5);
	\draw (o7) edge[->,>=latex,gray,thick] (f7);
	\draw (o8) edge[->,>=latex,gray,thick] (f8);
	\draw (o9) edge[->,>=latex,gray,thick] (f8);
	\draw (o10) edge[->,>=latex,gray,thick] (f10);


	
	


	\node at (0,0) {$F^*$};
	\node at (0,2.5) {$F$};	

\end{tikzpicture}

\caption{An example of a simple swap set $\cP$. Edges correspond to $\tau(f^*)$'s. Dashed edges are removed. The black facility is a local surrogate replacing a heavy local facility. The original copy of the black surrogate is the facility with red boundary, chosen randomly from the local candidates (blue), assuming every yellow optimal facility has a heavy local neighbor. Gray boxes correspond to the swaps in $\cP$.}
\label{fig:simpleGroup}
\end{figure}

\subsection{Generating the Important Tree Swaps}\label{subsec:treeswap}
\label{sec:tree}

Again, we start by constructing a directed graph $G_0$. Unlike simple
swaps where only optimal facilities have out-edges, tree swaps require
every local facility to also have an out-edge to an optimal facility in $G_0$. In particular, every local facility $f$ has an out-edge to $\pi(f)$, the optimal facility closest to it. Every optimal facility still has an out-edge to $\tau(f^*)\in\{\eta_1,\eta_2\}$, but we pick $\tau(f^*)$ from a different distribution: if
$\rho(f^*) \leq \nicefrac 23$, then $\tau(f^*) = \eta_1$ with probability 1;
else $\tau(f^*) = \eta_1$ with probability $\nicefrac 12$ and
$\tau(f^*) = \eta_2$ otherwise. 

\begin{figure}[H]
  \center
  \begin{tikzpicture}[scale=0.8,
		opt/.style={shape=regular polygon,regular polygon sides = 3,draw=black,minimum size=0.5cm,inner sep = 0pt},
		sopt/.style={shape=regular polygon,regular polygon sides = 3,draw=black,minimum size=0.4cm,inner sep = 0pt},
		local/.style={shape=rectangle, draw=black,minimum size= 0.7cm},
		main/.style={shape=rectangle, draw=black,minimum size= 0.3cm},
]
\pgfset{
  foreach/parallel foreach/.style args={#1in#2via#3}{evaluate=#3 as #1 using {{#2}[#3-1]}},
}
	\def \n {8}
	\pgfmathsetmacro{\nm}{\n-1}
	\def \margin {10}
	\def \radius {2cm}

	\def \RED {red!30}
	\def \BLUE {blue!30}
	\def \GREEN { green!30}
	
	\def \colorlist{"\BLUE","\BLUE","\RED","\RED","\RED","\GREEN","\GREEN","\GREEN"}
	\def \namelist{"f^*_8","f_7","f^*_6","f_5","f^*_4","f_3","f^*_2","f_1"}
	\def \edgelist{"solid","solid","solid","solid","solid","solid","solid","solid"}
	\foreach \s [parallel foreach=\isDash in \edgelist via \s]  in {1,...,\n}
        {    
            \draw[->, >=latex,\isDash] ({360/\n * (\s - 1)+\margin}:\radius) 
                arc ({360/\n * (\s - 1)+\margin}:{360/\n * (\s)-\margin}:\radius);
        }
        
	\foreach \s [count=\x,
			parallel foreach=\c in \colorlist via \x,
			parallel foreach=\name in \namelist via \x]
			in {sopt,main,sopt,main,sopt,main,sopt,main}{
		\node[\s] (c\x) at ({360/\n*(\x-1)}:\radius) {};	
	}

	\def \c {\RED}
	\node[main] (rf1) at (-3,0.75) {};
	\node[main] (rf2) at (-3,0.25) {};
	\node[main] (rf3) at (-3,-0.25) {};
	\def \c {red!10}	
	\node[main] (rf4) at (-3,-0.75) {};
	
	\foreach \x in {1,...,4}
		\draw (rf\x) edge[->,>=latex] (c5);

	\def \c {\RED}
	\node[sopt] (luf1) at (-2.6,1.8) {};
	\node[sopt] (luf2) at (-2.2,2.2) {};	
	\node[sopt] (luf3) at (-1.8,2.6) {};	
	
	\foreach \x in {1,...,3}
		\draw (luf\x) edge[->,>=latex] (c4);

	\def \c {\BLUE}
	\node[main] (lf1) at (3,0.5) {};
	\node[main] (lf2) at (3,0) {};	
	\node[main] (lf3) at (3,-0.5) {};
	
	\foreach \x in {1,...,3}
		\draw (lf\x) edge[->,>=latex] (c1);

	\def \c {violet!30}
	
	\node[sopt](lf20) at (4,-0.5) {};
	\draw (lf20) edge[->,>=latex] (lf3);
	
	\foreach \s [count=\z] in {main,sopt}{
		\foreach \x in {1,...,3}{
			\pgfmathsetmacro{\y}{-1.5+0.5*\x}
			\pgfmathsetmacro{\h}{4+\z}
			\pgfmathtruncatemacro{\zm}{\z-1}				
			\node[\s] (lf\x\z) at (\h,\y) {};
			\draw (lf\x\z) edge[->,>=latex] (lf2\zm);
		}	
	}

	\node[sopt] (lfx) at (4,0) {};
	\node[main] (lfxx) at (5,1) {};
	\draw (lfxx) edge [->,>=latex] (lfx) {};
	\node[sopt] (lfy) at (4,-1) {};
	\node[main] (lfyy) at (5,-2) {};
	\draw (lfyy) edge [->,>=latex] (lfy) {};	
	\draw (lfx) edge[->,>=latex] (lf3) {};
	\draw (lfy) edge[->,>=latex] (lf3) {};


	\def \c {\GREEN}
	\node[main] (cloc) at (0,-3) {};
	\draw (cloc) edge[->,>=latex] (c7);

	\def \cLBL {lime!30}
	\def \cLBR {olive!70}
	
	\def \colorlist{"\cLBL","lime!90","lime!90!black","olive!30","\cLBR"}
	\foreach \x [parallel foreach=\c in \colorlist via \x] in {1,...,5}{
		\pgfmathsetmacro{\y}{-1.5+0.5*\x}
		\node[sopt] (f\x) at (\y,-3.75) {};
		\draw (f\x) edge[->,>=latex] (cloc);
	}

	\foreach \x in {1,...,3}{
		\pgfmathsetmacro{\y}{-2+0.5*\x}
		\node[main] (f1\x) at (\y,-4.5) {};
		\draw (f1\x) edge[->,>=latex] (f1);
	}
	
	\foreach \x in {1,...,3}{
		\pgfmathsetmacro{\y}{0+0.5*\x}
		\node[main] (f2\x) at (\y,-4.5) {};
		\draw (f2\x) edge[->,>=latex] (f5);
	}
	
	
\end{tikzpicture}

\caption{Every connected component of $G_0$ is a 1-tree. Local facilities are represented by squares, while optimal facilities are represented by triangles.}\label{fig:1tree}
\end{figure}

Since every vertex of $G_0$ has out-degree one, $G_0$ is a
\emph{1-forest}, with every connected component being a \emph{1-tree},
i.e., a directed tree with a directed cycle as its root (see
\Cref{fig:1tree}), hence the name \emph{tree swaps}. Having
constructed $G_0$, we generate the tree swap set $\cP$ by three
procedures: degree reduction, edge deletion, and balancing. The
balancing step remains essentially the same as in simple swaps, but
the degree reduction step requires a new ingredient to deal with
optimal facilities with high in-degree, which did not exist in the
simple swaps case. The edge deletion step is also unique to tree
swaps. Next, we describe these three steps in detail.

\paragraph{Degree reduction.} We first modify $G_0$ so that every vertex has in-degree bounded by $\thd+1$. In the same way as simple swaps, we can remove local facilities with high in-degree by removing \heavy local facilities, but we need an additional procedure to deal with \emph{heavy optimal facilities} with high in-degree. Specifically, we say $f^*$ is a heavy optimal facility if it has in-degree more than $\thd$ after \heavy local facilities are removed, in other words, $|\pi^{-1}(f^*)\backslash\{\textup{\heavy local facilities}\}|>\thd$. For such a heavy optimal facility $f^*$ with in-degree $s$, we partition its children into $\lceil s/\thd \rceil$ groups. Every group, except sometimes the last one, contains exactly $\thd$ children. We make sure that the first group contains the $\thd$ closest children to $f^*$. 
For each group other than the first one, we create a new copy of $f^*$ and change the out-edges from the children in the group to point to the new copy of $f^*$. The new copy of $f^*$ has an out-edge pointing to a new copy of a local facility $f$ chosen uniformly at random from the previous group. We call the new copy of $f$ an \emph{optimal surrogate}. They are needed to keep the difference between the number of local and optimal facilities unchanged. We also add an out-edge from $f$ pointing back to the new copy of $f^*$, as illustrated in \Cref{fig:heavyOptDecomp}.

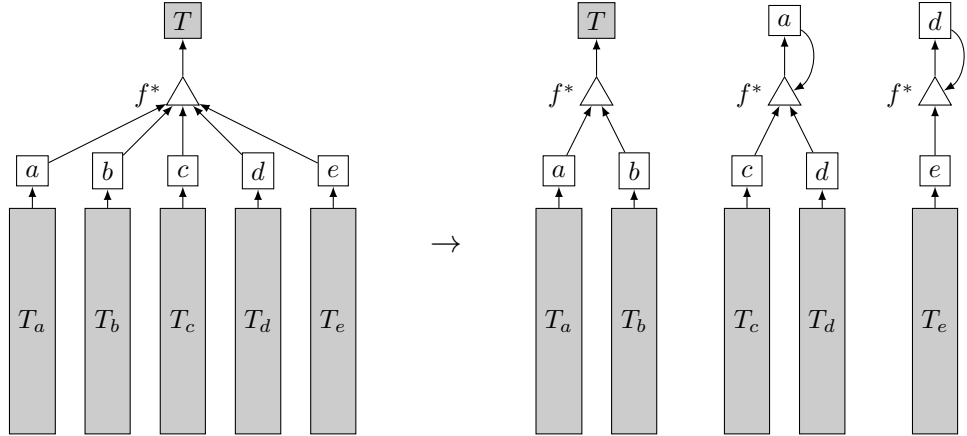
\begin{figure}[tbp]
\center
\begin{tikzpicture}[scale=1,
tree/.style = {shape=rectangle, draw = black, minimum height = 3cm, minimum width = 0.6cm,fill=black!20!white},
		opt/.style={shape=regular polygon,regular polygon sides = 3,draw=black,minimum size=0.5cm,inner sep = 0pt},
		local/.style={shape=rectangle, draw=black,minimum size= 0.4cm},
		client/.style={shape=circle, draw=black,minimum size=0.25cm,inner sep = 0pt,fill=black}]
]
    	
    \node[local,fill=black!20!white]
         (f) at (0,2) {$T$};
         
    \node[opt,label=west:$f^*$] 
    	 (f*) at (0,1) {};
         
    \node[local] 
         (a) at (-2,0) {$a$};
    \node[local] 
         (b) at (-1,0) {$b$};
    \node[local] 
         (c) at (0,0) {$c$};
    \node[local] 
         (d) at (1,0) {$d$};
    \node[local] 
         (e) at (2,0) {$e$};

    \node[tree]  (Ta) at (-2,-2) {$T_a$};
    \node[tree]  (Tb) at (-1,-2) {$T_b$};
    \node[tree]  (Tc) at (0,-2) {$T_c$};
    \node[tree]  (Td) at (1,-2) {$T_d$};
    \node[tree]  (Te) at (2,-2) {$T_e$};

    \path [<-,>=latex](f) edge node {} (f*);
    \path [<-,>=latex](f*) edge node {} (a);
    \path [<-,>=latex](f*) edge node {} (b);
    \path [<-,>=latex](f*) edge node {} (c);
    \path [<-,>=latex](f*) edge node {} (d);
    \path [<-,>=latex](f*) edge node {} (e);  
    \path [->,>=latex](Ta) edge node {} (a);
    \path [->,>=latex](Tb) edge node {} (b);
    \path [->,>=latex](Tc) edge node {} (c);
    \path [->,>=latex](Td) edge node {} (d);
    \path [->,>=latex](Te) edge node {} (e);

	\node (right) at (3.5,-1) {\large$\rightarrow$};

    \node[local,fill=black!20!white]
         (f) at (5.5,2) {$T$};
    \node[local]
         (aa) at (8,2) {$a$};
         
    \node[local]
         (dd) at (10,2) {$d$};

    \node[opt,label=west:$f^*$] 
    	 (f*1) at (5.5,1) {};
    \node[opt,label=west:$f^*$] 
    	 (f*2) at (8,1) {};
    \node[opt,label=west:$f^*$] 
    	 (f*3) at (10,1) {};
         
    \node[local] 
         (a) at (5,0) {$a$};
    \node[local] 
         (b) at (6,0) {$b$};
    \node[local] 
         (c) at (7.5,0) {$c$};
    \node[local] 
         (d) at (8.5,0) {$d$};
    \node[local] 
         (e) at (10,0) {$e$};

    \node[tree]  (Ta) at (5,-2) {$T_a$};
    \node[tree]  (Tb) at (6,-2) {$T_b$};
    \node[tree]  (Tc) at (7.5,-2) {$T_c$};
    \node[tree]  (Td) at (8.5,-2) {$T_d$};
    \node[tree]  (Te) at (10,-2) {$T_e$};

    \path [<-,>=latex](f) edge node {} (f*1);
    \path [<-,>=latex](f*1) edge node {} (a);
    \path [<-,>=latex](f*1) edge node {} (b);
    \path [<-,>=latex](f*2) edge node {} (c);
    \path [<-,>=latex](f*2) edge node {} (d);
    \path [<-,>=latex](f*3) edge node {} (e);  
    \path [->,>=latex](Ta) edge node {} (a);
    \path [->,>=latex](Tb) edge node {} (b);
    \path [->,>=latex](Tc) edge node {} (c);
    \path [->,>=latex](Td) edge node {} (d);
    \path [->,>=latex](Te) edge node {} (e);  
    \path [<-,>=latex](aa) edge node {} (f*2);   
    \path [<-,>=latex](dd) edge node {} (f*3);    
    \path[->,>=latex](aa) edge[in=30,out=330] node {} (f*2);
    \path [->,>=latex](dd) edge[in=30,out=330] node {}  (f*3);   
\end{tikzpicture}

\caption{The figure shows the decomposition of high in-degree optimal facility $f^*$ for 
$\thd=2$. Shaded rectangular boxes correspond to part of the original tree that does not change. Since the degree of $f^*$ is 5 $\geq \thd$, 
we create $\ceil{5/\thd}$ trees. The first tree stays in the original tree. 
Each remaining tree gets a $f^*$'s child chosen uniformly at random from the previous tree. $f^*$ gets open 2 extra times, 
but we also close $a$ and $d$ to balance the number of opening and closure.
In this example $a$ and $d$ are chosen as optimal surrogates. And $a$ and $b$ are two closest children to $f^*$ among $\{a,b,c,d,e\}$.}
\label{fig:heavyOptDecomp}
\end{figure}

In summary, the degree reduction procedure for tree swaps consists of the following steps:
\begin{OneLiners}
\item[1.] Remove edges incident to all \heavy local facilities;
\item[2.] Replace every \heavy local facility by its local surrogate, chosen uniformly at random without replacement from the local candidates;
\item[3.] Deal with heavy optimal facilities as above;
\item[4.] Add self-loops to vertices with no out-edge (due to step 1) to retain the 1-forest structure (this facilitates a cleaner presentation of our next procedure: edge deletion).
\end{OneLiners}
Let $G_1$ denote the graph after degree reduction. $G_1$ is still a 1-forest, and every vertex in $G_1$ now has in-degree at most $\thd+1$. Moreover, the following claim is apparent (by observing that \Cref{clm:light-simple} still holds in the tree swaps case because its proof is completely independent of the distribution of $\tau(f^*)$):
\begin{claim}
\label{clm:tree-degree-reduction}
Constructed graph $G_1$ follows following properties:
\begin{OneLiners}
\item[i.] Every optimal facility appears in $G_1$ at least once. 
\item[ii.] Every local facility appears in $G_1$ at most three times: once as the original copy, once as a local surrogate, and once as an optimal surrogate. 
\item[iii.] Heavy local facilities do not appear in $G_1$. 
\item[iv.] No two copies of the same facility appear in the same connected component.
\item[v.]  The original copy of any optimal facility $f^*$ points to the original copy of $\tau(f^*)$, unless $\tau(f^*)$ is \heavy.
\item[vi.] The original copy of any local facility $f$ points to $\pi(f)$, although it might be a new copy of $\pi(f)$. 
\item[vii.] Any local facility is chosen as a local surrogate with probability at most $\nicefrac 2\thd$, and as an optimal surrogate with probability at most $\nicefrac 1\thd$. 
\item[viii.] Every local surrogate is a local candidate.
\end{OneLiners}
\end{claim}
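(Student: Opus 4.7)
The plan is to verify each of the eight properties by tracing through the four-step degree-reduction procedure applied to $G_0$. Most items are immediate from the construction, and I will only flag the two where a short argument is needed.

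Properties (i), (iii), (v), (vi), and (viii) are direct from inspection. Optimal facilities are never deleted (step 3 only duplicates heavy ones), giving (i); heavy local facilities are explicitly removed in step 2, giving (iii); original out-edges of non-heavy optimal and local facilities are preserved, modulo being retargeted to the appropriate new copy of a heavy optimal parent in step 3, giving (v) and (vi); and local surrogates are drawn from the local candidate set by definition, giving (viii).

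For (ii) and (vii) I would argue as follows. A local facility $f$ acquires at most one new copy as a local surrogate since step 2 draws uniformly \emph{without replacement}, and at most one new copy as an optimal surrogate, because $f$ belongs to at most one group of children (namely a group of $\pi(f)$, when $\pi(f)$ is heavy optimal) and is picked at most once within that group. This gives (ii). For (vii), \Cref{clm:light-simple} (whose proof is independent of the distribution of $\tau$) guarantees at least $N\thd/2$ local candidates whenever there are $N$ heavy local facilities, so a uniform-without-replacement choice of $N$ local surrogates selects each candidate with probability at most $N/(N\thd/2)=\nicefrac 2\thd$; for the optimal surrogate bound, the surrogate is chosen uniformly from the $\thd$ members of the preceding group, yielding probability at most $\nicefrac 1\thd$.

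The only property requiring careful bookkeeping is (iv). Three kinds of duplication can create multiple copies of a facility. A local surrogate replaces a heavy local facility whose incident edges were deleted in step 1 and receives a self-loop in step 4; hence it sits in a singleton component disjoint from that of its original copy. When a heavy optimal facility $f^*$ is split in step 3 into groups $G_1,\ldots,G_s$, each non-first group $G_i$ is attached to its own fresh copy $f^*_i$ together with a single optimal surrogate drawn from $G_{i-1}$; so the original of that surrogate lies in the component of $f^*_{i-1}$ (or of the original $f^*$ when $i-1=1$), while the surrogate itself lies in the disjoint component of $f^*_i$. Combining this with the observation that distinct groups end up in distinct components rules out two copies of a single facility coexisting in the same component. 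The mild obstacle here is checking that the interaction of steps 1--4 does not secretly reconnect these duplicated copies; the self-loops introduced in step 4 are exactly what isolates vertices whose in-edges were removed, preventing any such reconnection.
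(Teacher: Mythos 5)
The proposal is largely correct and fills in verification that the paper itself leaves implicit (the paper merely asserts the claim is ``apparent''). Items (i), (ii), (iii), (v), (vi), (vii), (viii) are argued correctly, and your probability calculations for (vii) match what \Cref{clm:light-simple} and the group structure deliver.

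However, there is a real gap in your treatment of (iv). You assert ``the observation that distinct groups end up in distinct components'' and then attribute the absence of ``secret reconnection'' to the self-loops added in step 4. Neither of these is quite right. Self-loops are added to vertices whose \emph{out-edge} was removed in step 1; they do not isolate a vertex from anything, since the vertex can still have in-edges. The actual danger of reconnection is the following: the original copy of the heavy optimal facility $f^*$ still carries its out-edge to $\tau(f^*)$. If $\tau(f^*)$ happens to be a child of $f^*$ (i.e.\ $\pi(\tau(f^*))=f^*$), and if $\tau(f^*)$ were placed in some non-first group $G_i$ with $i\geq 2$, then after step 3 the edge $\tau(f^*)\to f^*_i$ together with $f^*\to\tau(f^*)$ would put the original $f^*$ and the new copy $f^*_i$ in the same connected component, violating (iv). What prevents this is the clause in the procedure ``we make sure that the first group contains the $\thd$ closest children to $f^*$'': since $\tau(f^*)\in\{\eta_1(f^*),\eta_2(f^*)\}$ is one of the two closest local facilities, it is one of the two closest children if it is a child at all, and $\thd\geq 2$, so $\tau(f^*)$ necessarily lands in group 1. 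Your proof never invokes this ordering of the groups, and without it the claimed component-disjointness for the split copies of $f^*$ does not follow.
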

The degree-reduction step ensures that vertices in $G_1$ have bounded
in-degree, but a connected component of $G_1$ could still have large
size (it could have large height or contain a long cycle). We deal with this problem in our next procedure: \emph{edge deletion}.

\paragraph{Edge deletion.}
Next, we remove edges from $G_1$ to ensure that every connected component in the resulting graph is a tree of height at most $\thh - 1$, where we choose the height threshold $\thh$ uniformly at random from $2\lceil \nicefrac 1\varepsilon\rceil, 2\lceil \nicefrac 1\varepsilon\rceil^2,\cdots, 2\lceil \nicefrac 1\varepsilon\rceil^{\lceil \nicefrac 1\varepsilon\rceil}$.
%
%
%
%
%
%
Specifically, for each connected component $T$ of $G_1$, if the
root cycle has length less than $\thh$, we insert dummy vertices into the cycle to make
the length exactly $\thh$. Then we pick a vertex $r$ in the root cycle
uniformly at random, and delete the out-edge from $r$. This makes $T$ a directed tree rooted at $r$. We then delete edges on the $a\cdot \thh$-th levels for all $a\in \N$. See \Cref{fig:treeswapbounded} for an example.

 \begin{figure}[htb]
   \center
\begin{tikzpicture}[scale = 0.8,
		opt/.style={shape=regular polygon,regular polygon sides = 3,draw=black,minimum size=0.5cm,inner sep = 0pt},
		sopt/.style={shape=regular polygon,regular polygon sides = 3,draw=black,minimum size=0.4cm,inner sep = 0pt},
		local/.style={shape=rectangle, draw=black,minimum size= 0.4cm},
		main/.style={shape=rectangle, draw=black,minimum size= 0.3cm},
]
\pgfset{
  foreach/parallel foreach/.style args={#1in#2via#3}{evaluate=#3 as #1 using {{#2}[#3-1]}},
}
	\def \n {8}
	\pgfmathsetmacro{\nm}{\n-1}
	\def \margin {12}
	\def \radius {2cm}

	\def \RED {red!50}
	\def \BLUE {blue!30}
	\def \GREEN { green!30}
	
	\def \colorlist{"\BLUE","\BLUE","\BLUE","\RED","\RED","\RED","\RED","\BLUE"}
	\def \namelist{"","","r","","","","",""}	
	\def \edgelist{"solid","solid","dashed","solid","solid","solid","dashed","solid"}
	\foreach \s [parallel foreach=\isDash in \edgelist via \s]  in {1,...,\n}
        {    
            \draw[->, >=latex,\isDash] ({360/\n * (\s - 1)+\margin}:\radius) 
                arc ({360/\n * (\s - 1)+\margin}:{360/\n * (\s)-\margin}:\radius);
        }
        
	\foreach \s [count=\x,
			parallel foreach=\c in \colorlist via \x,
			parallel foreach=\name in \namelist via \x]
			in {opt,local,opt,local,opt,local,opt,local}{
		\node[\s,fill=\c] (c\x) at ({360/\n*(\x-1)}:\radius) {\small $\name$};	
	}

	\def \c {\RED}
	\node[main,fill=\c] (rf1) at (-3,0.75) {};
	\node[main,fill=\c] (rf2) at (-3,0.25) {};
	\node[main,fill=\c] (rf3) at (-3,-0.25) {};
	\node[main,fill=\c] (rf4) at (-3,-0.75) {};
	
	\foreach \x in {1,...,4}
		\draw (rf\x) edge[->,>=latex] (c5);

	\def \c {\RED}
	\node[sopt,fill=brown!70] (luf1) at (-2.6,1.8) {};
	\node[sopt,fill=gray!50] (luf2) at (-2.2,2.2) {};	
	\node[sopt,fill=violet!10] (luf3) at (-1.8,2.6) {};	
	
	\foreach \x in {1,...,3}
		\draw (luf\x) edge[->,>=latex,dashed] (c4);

	\def \c {\BLUE}
	\node[main,fill=\c] (lf1) at (3,0.5) {};
	\node[main,fill=\c] (lf2) at (3,0) {};	
	\node[main,fill=\c] (lf3) at (3,-0.5) {};
	
	\foreach \x in {1,...,3}
		\draw (lf\x) edge[->,>=latex] (c1);


	\def \c {\GREEN}
	
	\node[sopt,fill=\c](lf20) at (4,-0.5) {};
	\draw (lf20) edge[->,>=latex,dashed] (lf3);
	
	\foreach \s [count=\z] in {main,sopt}{
		\foreach \x in {1,...,3}{
			\pgfmathsetmacro{\y}{-1.5+0.5*\x}
			\pgfmathsetmacro{\h}{4+\z}
			\pgfmathtruncatemacro{\zm}{\z-1}				
			\node[\s,fill=\c] (lf\x\z) at (\h,\y) {};
			\draw (lf\x\z) edge[->,>=latex] (lf2\zm);
		}	
	}

	\node[sopt,fill=lime!30] (lfx) at (4,0) {};
	\node[main,fill=lime!30] (lfxx) at (5,1) {};
	\draw (lfxx) edge [->,>=latex] (lfx) {};
	\node[sopt,fill=olive!70] (lfy) at (4,-1) {};
	\node[main,fill=olive!70] (lfyy) at (5,-2) {};
	\draw (lfyy) edge [->,>=latex] (lfy) {};	
	\draw (lfx) edge[->,>=latex,dashed] (lf3) {};
	\draw (lfy) edge[->,>=latex,dashed] (lf3) {};


	\def \c {\RED}
	\node[main,fill=\c] (cloc) at (0,-3) {};
	\draw (cloc) edge[->,>=latex] (c7);

	\def \cLBL {lime!30}
	\def \cLBR {olive!70}

	\foreach \x in {1,...,5}{
		\pgfmathsetmacro{\y}{-1.5+0.5*\x}
		\node[sopt,fill=\c] (f\x) at (\y,-4) {};
		\draw (f\x) edge[->,>=latex] (cloc);
	}
		
	
	\foreach \x in {1,...,3}{
		\pgfmathsetmacro{\y}{-2+0.5*\x}
		\node[main,fill=\c] (f1\x) at (\y,-4.75) {};
		\draw (f1\x) edge[->,>=latex] (f1);
	}
	
	\foreach \x in {1,...,3}{
		\pgfmathsetmacro{\y}{0+0.5*\x}
		\node[main,fill=\c] (f2\x) at (\y,-4.75) {};
		\draw (f2\x) edge[->,>=latex] (f5);
	}
	
	
\end{tikzpicture}

 \caption{Example for $\thh=4$. Nodes with the same color
 correspond to nodes in the same connected component after edge deletion. We start from $r$ (randomly chosen), 
 and repeatedly cut edges on $a\cdot\thh$  steps away from $r$ (dashed edges).
 }\label{fig:treeswapbounded}
 \end{figure}

Let $G_2$ be the graph after the edge deletion step. It is clear that every connected component of $G_2$ is a directed tree with height at most $\thh - 1$, possibly containing some dummy vertices. Moreover, every vertex $v$ has in-degree at most $\thd+1$ due to the degree reduction procedure. Therefore, the number of vertices in every connected component of $G_2$ is at most $(\thd+1)^{\thh}$. Moreover, we have the following claim for every connected component $T$ of $G_1$, which is apparent from our edge deletion procedure:

\begin{claim}
\label{claim:edge-deletion}
After dummy vertices are added into $T$, the edge out of vertex $v\in T$ is deleted if and only if the (unique) simple path from $v$ to $r$ has length divisible by $\thh$.
\end{claim}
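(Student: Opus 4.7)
The plan is to prove this claim by direct unpacking of the three-step edge-deletion procedure, since the claim is essentially a structural translation of what step 3 does.

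First I would verify that after step 1 (padding the root cycle with dummy vertices to length $\thh$ if necessary) and step 2 (picking $r$ uniformly at random from the root cycle and deleting its out-edge), the component $T$ together with its inserted dummy vertices becomes a directed tree rooted at $r$ with all edges oriented toward $r$. This follows from the fact that every vertex in $G_1$ has out-degree exactly one and each connected component is a 1-tree (a rooted directed cycle together with subtrees whose edges point toward the cycle). Deleting the out-edge of $r$ breaks the unique cycle, so the resulting graph is acyclic while every vertex other than $r$ retains out-degree one, hence every vertex has a unique simple directed path terminating at $r$.

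Next, I would define the level (depth) $\ell(v)$ of a vertex $v$ in this rooted tree to be the length of that unique simple path from $v$ to $r$. This is the standard depth used to describe step 3 of the procedure. By step 3, the out-edge of $v$ is deleted precisely when $\ell(v) = a\cdot \thh$ for some positive integer $a$. The only remaining edge deletion is the one performed in step 2, namely the out-edge of $r$; since $\ell(r) = 0 = 0\cdot \thh$, this case is also captured by the condition $\thh \mid \ell(v)$.

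Combining these two observations, the out-edge of a vertex $v$ is deleted (in either step 2 or step 3) if and only if $\ell(v)$ is a non-negative integer multiple of $\thh$, i.e., the length of the unique simple directed path from $v$ to $r$ is divisible by $\thh$, which is exactly the statement of the claim. I do not anticipate any real obstacle: the only point that requires even a moment of care is confirming that the structure obtained from step 2 really is a rooted tree (so that levels are well-defined and unique paths to $r$ exist), and checking that dummy vertices play no special role beyond contributing to path lengths. Both are immediate from the 1-forest structure of $G_1$ guaranteed by Claim~\ref{clm:tree-degree-reduction}.
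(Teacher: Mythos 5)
Your proof is correct and does exactly what the paper intends: the paper offers no proof, stating only that the claim ``is apparent from our edge deletion procedure,'' and your argument is the natural unpacking of that procedure — noting that after steps~1--2 the (padded) component becomes a tree rooted at $r$ with a well-defined level $\ell(v)$, that step~3 deletes out-edges at levels $a\cdot\thh$, and that step~2's deletion of $r$'s out-edge covers $\ell(r)=0$, giving the clean ``$\thh\mid\ell(v)$'' characterization.
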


If the cycle length of $T$ is at most $\thh$, vertices on the cycle are still connected after edge deletion. Indeed, we delete only one edge in the cycle in this case. Therefore, after edge deletion, we ignore all the dummy vertices and still consider all the edges on the original cycle as not deleted by convention. This doesn't change the (non-dummy) vertices in every connected component of $G_2$, and thus doesn't change $\cP$ we eventually generate. With this convention, we have the following corollary of \Cref{claim:edge-deletion}:

\begin{corollary}
\label{cor:edge-deletion}
Any edge in $G_1$ is deleted with probability at most $\nicefrac 2\thh$. Moreover, if the cycle length is at most $\thh$, edges on the cycle are never deleted.
\end{corollary}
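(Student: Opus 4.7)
The plan is to reduce the corollary to a counting argument via Claim~\ref{claim:edge-deletion}, which says that the out-edge of a vertex $v$ in $G_1$ (after any dummy-vertex padding) is deleted if and only if the length of the simple path from $v$ to the randomly chosen root $r$ is divisible by $\thh$. So for any edge $e = (v,v')$ of $G_1$, we must bound the probability, over the uniform choice of $r$ in the padded root cycle, that $\mathrm{dist}(v,r) \equiv 0 \pmod{\thh}$. I would condition on the realized value of $\thh$ throughout and split into cases according to whether $e$ lies on the root cycle of its component $T$ and according to the relationship between $\thh$ and the cycle length $\ell$ of $T$.

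First I would handle the case when $e$ is not a cycle edge. Let $N := \max(\ell, \thh)$ be the length of the root cycle after padding, and let $w$ be the unique cycle vertex reachable from $v$, with $L$ denoting the length of the path from $v$ to $w$. As $r$ ranges uniformly over the $N$ cycle vertices, $\mathrm{dist}(w,r)$ is uniform on $\{0,1,\ldots,N-1\}$, so $\mathrm{dist}(v,r) = L + \mathrm{dist}(w,r)$ is uniform on an interval of $N$ consecutive integers. In any such interval the number of multiples of $\thh$ is at most $\lceil N/\thh\rceil \le N/\thh + 1$, which gives a deletion probability of at most $\tfrac{1}{\thh} + \tfrac{1}{N} \le \tfrac{2}{\thh}$, using $N \ge \thh$.

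Next I would treat the case when $e$ is a cycle edge. If $\ell \le \thh$ then the cycle is padded to length exactly $\thh$; the convention stated immediately before the corollary declares all edges of the original cycle as not deleted (dummy vertices are ignored, and the only edge the procedure could formally delete is the out-edge of $r$, which is re-interpreted as part of the intact cycle). Thus the deletion probability is $0$, which already establishes the second part of the statement. If $\ell > \thh$, no padding is applied, $r$ is uniform over the $\ell$ cycle vertices, and $\mathrm{dist}(v,r)$ is uniform on $\{0,1,\ldots,\ell-1\}$. The number of multiples of $\thh$ in that set is at most $\lceil \ell/\thh\rceil \le \ell/\thh + 1$, yielding probability at most $\tfrac{1}{\thh} + \tfrac{1}{\ell} \le \tfrac{2}{\thh}$ since $\ell > \thh$.

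Combining the three cases proves the $\tfrac{2}{\thh}$ bound for every edge of $G_1$, and the short-cycle sub-case gives the second half of the corollary. The argument is essentially a uniform-residue count modulo $\thh$, so I don't anticipate a real obstacle; the only subtlety worth flagging is the convention for short cycles, which is precisely what lets us claim that cycle edges are \emph{never} deleted when $\ell \le \thh$ even though the underlying procedure would nominally delete the out-edge of $r$.
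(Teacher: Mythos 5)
Your proof is correct and follows essentially the same route as the paper's: both rely on Claim~\ref{claim:edge-deletion} to reduce edge deletion to the event that the $v$-to-$r$ path length is divisible by $\thh$, and then bound this by a uniform-residue count over the (padded) cycle length. Your write-up is just slightly more explicit in separating the non-cycle-edge case from the long-cycle cycle-edge case and in spelling out why $\mathrm{dist}(v,r)$ ranges uniformly over $N$ consecutive integers, whereas the paper compresses these into a single ``at most $u+1$ choices of $r$'' count.
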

\begin{proof}
The second part is assumed by our convention. We thus assume henceforth that the edge is not on the cycle, or the cycle length is more than $\thh$. Suppose the edge is the out-edge of vertex $v$. By \Cref{claim:edge-deletion}, the edge is deleted if and only if the simple path $p^*$ from $v$ to $r$ has length divisible by $\thh$. Suppose the cycle length after dummy vertices are added to it is $\ell \geq \thh$, and let $\ell = u\thh + w$ for $u,w\in\mathbb Z$ with $0\leq w < \thh$. There are at most $u + 1$ choices of $r$ such that $p^*$ has length divisible by $\thh$. Therefore, the edge is deleted with probability at most $(u + 1)/\ell = u/\ell + 1/\ell \leq 1/\thh + 1/\thh = 2/\thh$.
\end{proof}

After edge deletion, each connected component of $G_2$ contains at most $(\thd + 1)^{\thh} \leq p(\varepsilon)$ vertices. However, the number of local
and optimal facilities in the component may not match (e.g., the blue
tree containing $r$ in Figure~\ref{fig:treeswapbounded} has three extra local facilities,
whereas the rightmost tree has one extra optimal facility). We fix this in the same way as in the simple swaps case using the balancing procedure.
\paragraph{Balancing.} The balancing procedure is essentially the same
as in the simple swaps case, based on \Cref{claim:balancedGroup}
again. The only difference is that the size of every connected
component is now much larger ($(\thd + 1)^{\thh}$), and the maximum degree of $H$ is also
much larger. Since optimal facilities may now have new copies, we may combine two connected components each containing a copy of the same optimal facility in the balancing step; this is fine because it only decreases the number of optimal facilities in a swap. However, we still need to make sure that no two copies of the same local facility are combined together, again by adding edges into $H$ between connected components containing copies of the same local facility.
Since a local facility can have at most 3 copies by \Cref{clm:tree-degree-reduction}, the maximum degree of $H$ is at most $2(\thd+1)^{\thh}$. Since we kept the number of extra local facilities unchanged, it's still $\extra\geq \Omega(((\thd+1)^{\thh})^5(2(\thd+1)^{\thh})^3/\varepsilon)$, so \Cref{claim:balancedGroup} gives balanced groups each containing at most $O((\thd+1)^{2\thh}) \leq p(\eps)$ facilities.
Every group thus defines a valid swap, and we define $\cP$ as the
set of these tree swaps.

\section{Client Types}
\label{subsec:types}
\label{subsec:ClientType}

\begin{wrapfigure}{r}{0.35\textwidth}
\center
\begin{tikzpicture}[scale=1,
		opt/.style={shape=regular polygon,regular polygon sides = 3,draw=black,minimum size=0.5cm,inner sep = 0pt},
		local/.style={shape=rectangle, draw=black,minimum size= 0.4cm},
		client/.style={shape=circle, draw=black,minimum size=0.25cm,inner sep = 0pt,fill=black}]
		
    \node[opt,label=west:$f^*$]  (f*) at (0,0) {};
    	
    \node[local,label=west:$\eta_2(f^*)$]   (eta2) at (0,2) {};
    \node[local,label=west:{$\eta_1(f^*)$}] (eta) at (0,-1.3) {};
         
    \node[client,label=north:$c$] (c) at (1.5,0) {};

    \node[local,label=east:{$f_1$}]     	 (f1) at (2.5,-1.3) {};
    \node[local,label=east:{$f_2$}]       (f2) at (2.5,2) {} ;
    \path [-,line width=1.5pt](f*) edge[above] node {$d^*$} (c);

    \path [-](c) edge[right] node {$d_2$} (f2);
    \path [-,mystyle](c) edge[right] node {$d_1$} (f1);
    \path [->,>=latex,mystyle](f*) edge[left] node {$d(f^*,\eta_1)$}(eta);
    \path [->,>=latex](f*) edge[left] node {$d(f^*,\eta_2)$}(eta2);

\end{tikzpicture}
\caption{The squares are local facilities, triangles are optimal
  facilities, circle are clients. The thick red edge out of $c$ goes
  to its closest local facility $f_1$; the thick red edge out of $f^*$
  goes to its closest local facility $\eta_1(f^*)$.}
\label{fig:general}
\end{wrapfigure}
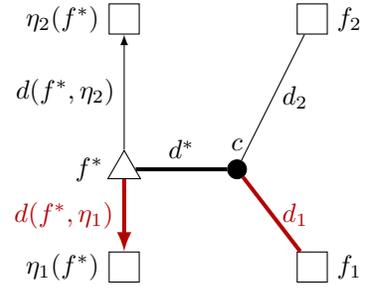

We now classify the clients into a small number of types
(based on how the client connects to facilities in the local and
global solutions).
The classification allows us to give a \emph{client-by-client}
analysis instead of a \emph{swap-by-swap} analysis used in prior
works. We make this change in perspective because the potential $\Phi$
depends on the two closest facilities, and so we need a better handle
on the local neighborhood of a client to bound the reassignment costs
when closing one of the close facilities.

For a client $c$, recall that $f_1(c)$ and $f_2(c)$ are the closest
and second-closest local facilities; we say $f_1$ and $f_2$ when there
is no ambiguity.  Figure~\ref{fig:general} shows a picture of a
generic client $c$ and its related facilities.



We partition the set of clients into types based on the relationships
between their local and optimal facilities, as follows.  The
\emph{far} clients are those for which $d_2 \geq \a d_1$, and hence
the potential just depends on the closest facility ($f_1$); the other kinds of
clients are called \emph{close}, for which both $f_1$ and $f_2$ are relevant.

\begin{itemize}
\item \textbf{Far case} (where $d_2 \geq \a d_1$). Note that $f_2$
  does not play any role in the far case, so the clients are
  classified according to how $f_1$ and $f^*$ are related.
\begin{itemize}
\item Type \xA: $\eta_1(f^*) = f_1$.
\item Type \xB: $\eta_2(f^*) = f_1$. 
\item Type \xE: $f_1 \not\in \{\eta_1(f^*), \eta_2(f^*)\}$.  
\end{itemize}

\begin{figure}[H]
\center
\subfigure[Type \xA]{
\begin{tikzpicture}[scale=1,
		opt/.style={shape=regular polygon,regular polygon sides = 3,draw=black,minimum size=0.5cm,inner sep = 0pt},
		local/.style={shape=rectangle, draw=black,minimum size= 0.4cm},
		client/.style={shape=circle, draw=black,minimum size=0.25cm,inner sep = 0pt,fill=black}]
		
    \node[opt,label=west:$f^*$] 
    	 (f*) at (0,0) {};
    	
    \node[local,label=west:{$\eta_2$}]
         (eta2) at (0,2) {};
    \node[local,label=west:{$\eta_1=f_1$}] 
         (eta) at (0,-1.3) {};
         
    \node[client,label=north:$c$] 
         (c) at (1.5,0) {};


    \path [-](f*) edge[above] node {} (c);

    \path [-,mystyle](c) edge[above] node {} (eta);
    \path [->,>=latex,mystyle](f*) edge[right] node {}(eta);
    \path [->,>=latex](f*) edge[right] node {}(eta2);

\end{tikzpicture}
}
\subfigure[Type \xB]{
\begin{tikzpicture}[scale=1,
		opt/.style={shape=regular polygon,regular polygon sides = 3,draw=black,minimum size=0.5cm,inner sep = 0pt},
		local/.style={shape=rectangle, draw=black,minimum size= 0.4cm},
		client/.style={shape=circle, draw=black,minimum size=0.25cm,inner sep = 0pt,fill=black}]
    \node[opt,label=west:$f^*$] 
    	 (f*) at (0,0) {};
    	
    \node[local,label=west:{$\eta_2=f_1$}]
         (eta2) at (0,2) {};
    \node[local,label=west:{$\eta_1$}] 
         (eta) at (0,-1.3) {};
         
    \node[client,label=north:$c$] 
         (c) at (1.5,0.5) {};


    \path [-](f*) edge[above] node {} (c);

    \path [-,mystyle](c) edge[above] node {} (eta2);
    \path [->,>=latex,mystyle](f*) edge[right] node {}(eta);
    \path [->,>=latex](f*) edge[right] node {}(eta2);

\end{tikzpicture}
}
\subfigure[Type \xE]{
\begin{tikzpicture}[scale=1,
		opt/.style={shape=regular polygon,regular polygon sides = 3,draw=black,minimum size=0.5cm,inner sep = 0pt},
		local/.style={shape=rectangle, draw=black,minimum size= 0.4cm},
		client/.style={shape=circle, draw=black,minimum size=0.25cm,inner sep = 0pt,fill=black}]
    \node[opt,label=west:$f^*$] 
    	 (f*) at (0,0) {};
    	
    \node[local,label=west:{$\eta_2$}]
         (eta2) at (0,2) {};
    \node[local,label=west:{$\eta_1$}] 
         (eta) at (0,-1.3) {};
         
    \node[client,label=north:$c$] 
         (c) at (1.5,0) {};

    \node[local,label=east:{$f_1$}] 
    	 (f1) at (2.5,0) {};

    \path [-](f*) edge[above] node {} (c);

    \path [-,mystyle](c) edge[above] node {} (f1);
    \path [->,>=latex,mystyle](f*) edge[right] node {}(eta);
    \path [->,>=latex](f*) edge[right] node {}(eta2);

\end{tikzpicture}
}

\caption{Far Case}
\end{figure}

\item \emph{Close case} (where $d_2 \leq \a d_1$); now clients are
  classified according to how $f_1, f_2$ and $f^*$ are related.
\begin{itemize}
\item Type \xA: $\eta_1(f^*) = f_1$ and  $\eta_2(f^*) \neq f_2$.
\item Type \xB: $\eta_1(f^*) \neq f_2$ and  $ \eta_2(f^*) = f_1. $ 
\item Type \xC: $\eta_1(f^*) = f_1$ and $ \eta_2(f^*) = f_2$. 
\item Type \xD: $\eta_1(f^*) = f_2$ and  $ \eta_2(f^*) = f_1$.  
\item Type \xE: $f_1 \not\in \{\eta_1(f^*), \eta_2(f^*)\}$. 
\end{itemize}
\end{itemize}

\begin{figure}[H]
\center
\subfigure[Type \xA]{
\begin{tikzpicture}[scale=1,
		opt/.style={shape=regular polygon,regular polygon sides = 3,draw=black,minimum size=0.5cm,inner sep = 0pt},
		local/.style={shape=rectangle, draw=black,minimum size= 0.4cm},
		client/.style={shape=circle, draw=black,minimum size=0.25cm,inner sep = 0pt,fill=black}]
    \node[opt,label=west:$f^*$] 
    	 (f*) at (0,0) {};
    	
    \node[local,label=west:$\eta_2$]
         (eta2) at (0,2) {};
    \node[local,label=west:{$\eta_1=f_1$}] 
         (eta) at (0,-1.3) {};
         
    \node[client,label=north:$c$] 
         (c) at (1.5,0) {};

    \node[local,label=east:{$f_2$}] 
    	 (f2) at (2.5,2) {} ;

    \path [-](f*) edge[above] node {} (c);
    \path [-](c) edge[above] node {} (f2);
    \path [-,mystyle](c) edge[above] node {} (eta);
    \path [->,>=latex,mystyle](f*) edge[right] node {}(eta);
    \path [->,>=latex](f*) edge[right] node {}(eta2);

\end{tikzpicture}

}
\subfigure[Type \xB]{
\begin{tikzpicture}[scale=1,
		opt/.style={shape=regular polygon,regular polygon sides = 3,draw=black,minimum size=0.5cm,inner sep = 0pt},
		local/.style={shape=rectangle, draw=black,minimum size= 0.4cm},
		client/.style={shape=circle, draw=black,minimum size=0.25cm,inner sep = 0pt,fill=black}]
    \node[opt,label=west:$f^*$] 
    	 (f*) at (0,0) {};
    	
    \node[local,label=west:{$\eta_2=f_1$}]
         (eta2) at (0,2) {};
    \node[local,label=west:{$\eta_1$}] 
         (eta) at (0,-1.3) {};
         
    \node[client,label=north:$c$] 
         (c) at (1.5,0.3) {};

    \node[local,label=east:{$f_2$}] 
    	 (f2) at (2.5,2) {} ;

    \path [-](f*) edge[above] node {} (c);

    \path [-](c) edge[above] node {} (f2);
    \path [-,mystyle](c) edge[above] node {} (eta2);
    \path [->,>=latex,mystyle](f*) edge[right] node {}(eta);
    \path [->,>=latex](f*) edge[right] node {}(eta2);

\end{tikzpicture}
}

\subfigure[Type \xC]{
\begin{tikzpicture}[scale=1,
		opt/.style={shape=regular polygon,regular polygon sides = 3,draw=black,minimum size=0.5cm,inner sep = 0pt},
		local/.style={shape=rectangle, draw=black,minimum size= 0.4cm},
		client/.style={shape=circle, draw=black,minimum size=0.25cm,inner sep = 0pt,fill=black}]
    \node[opt,label=west:$f^*$] 
    	 (f*) at (0,0) {};
    	
    \node[local,label=west:{$\eta_2=f_2$}]
         (eta2) at (0,2) {};
    \node[local,label=west:{$\eta_1=f_1$}] 
         (eta) at (0,-1.3) {};
         
    \node[client,label=north:$c$] 
         (c) at (1.5,0) {};


    \path [-](f*) edge[above] node {} (c);

    \path [-,mystyle](c) edge[above] node {} (eta);
    \path [-](c) edge[above] node {} (eta2);
    \path [->,>=latex,mystyle](f*) edge[right] node {}(eta);
    \path [->,>=latex](f*) edge[right] node {}(eta2);

\end{tikzpicture}
}
\subfigure[Type \xD]{
\begin{tikzpicture}[scale=1,
		opt/.style={shape=regular polygon,regular polygon sides = 3,draw=black,minimum size=0.5cm,inner sep = 0pt},
		local/.style={shape=rectangle, draw=black,minimum size= 0.4cm},
		client/.style={shape=circle, draw=black,minimum size=0.25cm,inner sep = 0pt,fill=black}]
    \node[opt,label=west:$f^*$] 
    	 (f*) at (0,0) {};
    	
    \node[local,label=west:{$\eta_2=f_1$}]
         (eta2) at (0,2) {};
    \node[local,label=west:{$\eta_1=f_2$}] 
         (eta) at (0,-1.3) {};
         
    \node[client,label=north:$c$] 
         (c) at (1.5,0.5) {};


    \path [-](f*) edge[above] node {} (c);

    \path [-,mystyle](c) edge[above] node {} (eta2);
    \path [-](c) edge[above] node {} (eta);
    \path [->,>=latex,mystyle](f*) edge[right] node {}(eta);
    \path [->,>=latex](f*) edge[right] node {}(eta2);

\end{tikzpicture}
}
\subfigure[Type \xE]{
\begin{tikzpicture}[scale=1,
		opt/.style={shape=regular polygon,regular polygon sides = 3,draw=black,minimum size=0.5cm,inner sep = 0pt},
		local/.style={shape=rectangle, draw=black,minimum size= 0.4cm},
		client/.style={shape=circle, draw=black,minimum size=0.25cm,inner sep = 0pt,fill=black}]
    \node[opt,label=west:$f^*$] 
    	 (f*) at (0,0) {};
    	
    \node[local,label=west:{$\eta_2$}]
         (eta2) at (0,2) {};
    \node[local,label=west:{$\eta_1$}] 
         (eta) at (0,-1.3) {};
         
    \node[client,label=north:$c$] 
         (c) at (1.5,0) {};

    \node[local,label=east:{$f_1$}] 
    	 (f1) at (2.5,-1.3) {};
    \node[local] 
    	 (f2) at (2.5,2) {} ;

    \path [-](f*) edge[above] node {} (c);

    \path [-,mystyle](c) edge[above] node {} (f1);
    \path [->,>=latex,mystyle](f*) edge[right] node {}(eta);
    \path [->,>=latex](f*) edge[right] node {}(eta2);
    \path [-,dashed](c) edge[above] node {} (f2);
    \path [-,dashed](c) edge[above] node {} (eta);
    \path [-,dashed](c) edge[above] node {} (eta2);

\end{tikzpicture}
}

\caption{Close Case. For Type \xE, the client's $f_2$ can be any one of dashed edges.}
\end{figure}
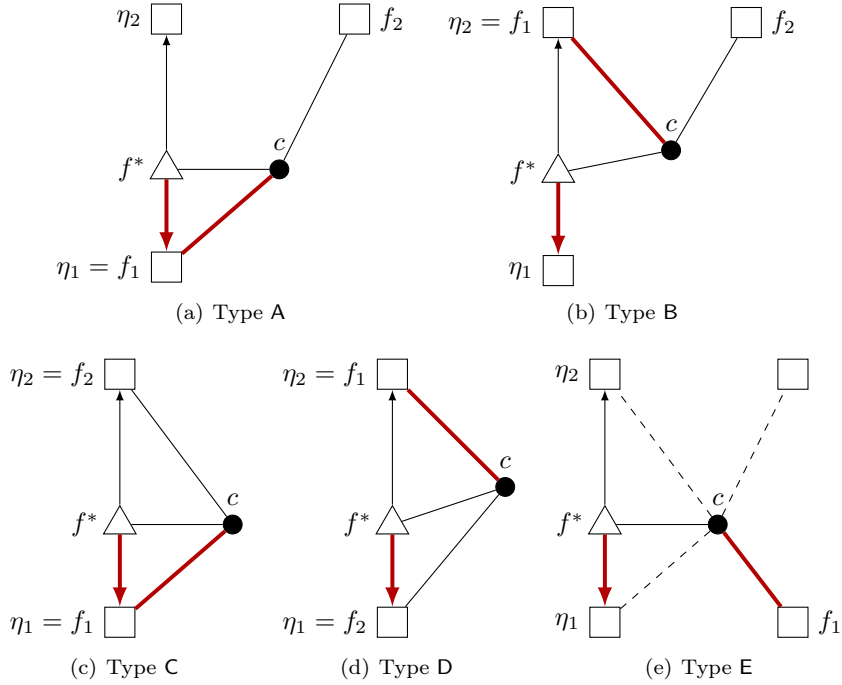

\section{Amenable and Defiant Events}
\label{sec:amenable-defiant-swaps}
%
%
Not all swaps are easy to argue about. Having fixed a client $c$, we
define the \emph{amenable event} and \emph{defiant event} for this
client---the former captures the case where the swaps in $\cP$ are easy to reason
about, and the latter the case where we throw up our hands and use a
crude bound on the potential change. Thankfully, the latter happens
very rarely, so the loss is small.

Recall that $f_1(c), f_2(c)$ are the two closest local facilities to
$c$. Let $f^* = f^*(c)$ be the optimal facility that $c$ is assigned to; then
$\eta_1(f^*), \eta_2(f^*)$ are the two closest local facilties to
$f^*$. We define the amenable and defiant events as follows:
\begin{restatable}[Amenable/Defiant]{definition}{amenableDefiantSwaps}
  \label{def:amenable-defiant-swaps}
  The \emph{defiant event} $\cD$ for a client $c$ of type \xA, \xB or \xE is the union of the following events:
  \begin{OneLiners}
  \item[(i)] $f_1$, $f_2$, or $\tau(f^*)$ is chosen as a local or optimal surrogate in the degree reduction step;
  \item[(ii)] $\cP$ is a tree swaps set, and the out-edge from the original copy of $f^*$, $f_1$ or $f_2$ is deleted in the edge deletion step.
  \item[(iii)] $\cP$ is a simple swaps set, and two connected components each containing a facility in $\{f^*\}\cup \{f_1,f_2\}\cup\{\eta_1,\eta_2\}$ are grouped together in the balancing step.
  \end{OneLiners}
The \emph{amenable event} $\cA$ is the complement of $\cD$.
\end{restatable}

For type \xC and \xD clients, we enlarge the defiant event slightly to include $g^*:=\pi(f_1)$ and 
\[
g:=\argmin_{h\in F\backslash\{f_1,f_2\}}d(h,g^*)
\]
as follows:
\begin{definition}[Amenable/Defiant for type \xC and \xD]
\label{def:amenable-defiant-swaps-CD}
  The \emph{defiant event} $\cD$ for a client $c$ of type \xC or \xD is the union of the events (i), (ii), (iii) in \Cref{def:amenable-defiant-swaps} and the following events:
   \begin{OneLiners}
  \item[(i')] $\tau(g^*)$ is chosen as a local or optimal surrogate in the degree reduction step;
  \item[(ii')] $\cP$ is a tree swaps set, and the out-edge from the original copy of $g^*$ is deleted in the edge deletion step.
  \item[(iii')] $\cP$ is a simple swaps set, and two connected components each containing a facility in $\{f_1,g\}$ are grouped together in the balancing step.
  \end{OneLiners} 
The \emph{amenable event} $\cA$ is the complement of $\cD$.
\end{definition}

The events $\cA$ and $\cD$ depend on the client $c$, but we choose to
omit $c$ in our notation because we will always focus on a fixed
client $c$ in our proof. We now turn to proving \Cref{lem:combo} on
the potential change due to defiant events. The approach is simple: we
first
show a crude upper bound that holds for all swap sets $\cP$ that we
generate, and then show that the probability of the defiant event is
small enough so that we can afford to apply this crude upper bound. 
\begin{restatable}{claim}{Crude}
  \label{clm:crude}
  There is an absolute constant $\gamma>0$ such that for any client $c$, and any swap set $\cP$ that  we generate, we have 
  $\sum_{(P,Q)\in\cP}\dPQ{c}\leq \gamma(d^*(c) + d_1(c))$. 
\end{restatable}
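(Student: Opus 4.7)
The plan is to upper bound $\sum_{(P,Q)\in\cP}\dPQ{c}$ by identifying which swaps can strictly increase $\Phi^c$, showing there are only $O(1)$ of them, and bounding each contribution by $O(d^*(c)+d_1(c))$.

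First, I observe that $\Phi^c(S) = d_1(c,S)+\beta\min(d_2(c,S),\alpha d_1(c,S))$ is jointly non-decreasing in the pair $(d_1(c,S),d_2(c,S))$. Hence if a swap $(P,Q)\in\cP$ satisfies $P\cap\{f_1(c),f_2(c)\}=\emptyset$, then $f_1(c)$ and $f_2(c)$ still belong to $S:=(F\setminus P)\cup Q$, so $d_1(c,S)\le d_1(c)$ and $d_2(c,S)\le d_2(c)$, yielding $\dPQ{c}\le 0$. It therefore suffices to bound the sum restricted to swaps containing $f_1$ or $f_2$ in $P$. By \Cref{clm:simple-degree-reduction} and \Cref{clm:tree-degree-reduction}, each local facility has at most $3$ copies across $G_1$ (or $G_2$ after edge deletion), so it appears in at most $3$ swaps of $\cP$ after balancing, giving a total of at most $6$ swaps to analyze.

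For each such swap $(P,Q)$, I would use $\dPQ{c}\le\Phi^c(S)\le(1+\alpha\beta)\,d_1(c,S)$ and exhibit some $f'\in S$ with $d(c,f')=O(d^*+d_1)$. The two main candidates are $f^*:=f^*(c)$ and $\tau(f^*)$. In the simple-swap case, $f^*$ and $\tau(f^*)$ lie in the same connected component of $G_1$ whenever $\tau(f^*)$ is not heavy, so either $f^*\in T_j$ (giving $f^*\in Q$ with $d(c,f^*)=d^*$), or $f^*\notin T_j$ forces $\tau(f^*)\in F\setminus P$. In the latter case, $d(c,\tau(f^*))\le d^*+d(f^*,\tau(f^*))$: if $\tau(f^*)=\eta_1(f^*)$ then $d(f^*,\eta_1(f^*))\le d(f^*,f_1)\le d^*+d_1$ by triangle inequality; if $\tau(f^*)=\eta_2(f^*)$, then the construction of $\tau$ forces $\rho(f^*)>2/3$, so $d(f^*,\eta_2(f^*))<\tfrac{3}{2}d(f^*,\eta_1(f^*))=O(d^*+d_1)$. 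If $\tau(f^*)$ is heavy it does not appear in $G_1$ at all, so $\tau(f^*)\in F\setminus P$ automatically and the same triangle bound applies. In every case $d(c,f')=O(d^*+d_1)$, hence $\Phi^c(S)=O(d^*+d_1)$.

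The main obstacle is the tree-swap case, where the edge-deletion step of \Cref{sec:tree} can sever the edge $f^*\to\tau(f^*)$ and place $f^*$ and $\tau(f^*)$ in different components of $G_2$, creating the scenario $f^*\notin T_j$ but $\tau(f^*)\in T_j$. In this subcase $\tau(f^*)$ lies in $P$ and cannot be used; I would fall back on $\pi(f_1)\in F^*$, which satisfies $d(f_1,\pi(f_1))\le d(f_1,f^*)\le d^*+d_1$, and argue that either the edge $f_1\to\pi(f_1)$ of $G_0$ survives (placing $\pi(f_1)\in T_j\cap F^*\subseteq Q$), or one exploits the bounded size of $T_j$ together with the triangle inequality $d(c,\pi(f_1))\le d_1+d^*+d_1=O(d^*+d_1)$ to locate some surrogate or nearby $\eta_i(f^*)$ still surviving in $S$. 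Multiplying the resulting per-swap bound $O(d^*+d_1)$ by the constant number of relevant swaps yields the claim with an absolute constant $\gamma$.
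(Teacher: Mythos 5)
Your framework matches the paper's: you correctly observe that only swaps closing $f_1$ or $f_2$ can increase $\Phi^c$, that there are at most $6$ such swaps (each local facility appears at most $3$ times in $G_1$, and balancing never merges two copies of the same local facility), and that it suffices to exhibit one open facility at distance $O(d^*+d_1)$ from $c$ after each such swap. Your handling of the simple-swap case is also correct, including the fallback to $\tau(f^*)$ when $f^*\notin T_j$ and the observation that a heavy $\tau(f^*)$ is never closed.

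The gap is in the tree-swap case, and it is genuine: you identify the problematic scenario ($\tau(f^*)\in P$, $f^*\notin Q$), but your fallback facility $\pi(f_1)$ and the accompanying argument do not close it. Whether $\pi(f_1)$ is opened depends on whether the out-edge $f_1\to\pi(f_1)$ survives edge deletion, and you have no mechanism guaranteeing it does; the ``or'' branch (``locate some surrogate or nearby $\eta_i(f^*)$ still surviving in $S$'') is not a proof. The paper's key insight, which you are missing, is to use the two-edge path $f^*\to\tau(f^*)\to\pi(\tau(f^*))$ and observe that these two consecutive edges can never \emph{both} be deleted: the out-edge of $v$ is deleted precisely when the distance from $v$ to the root $r$ is $0$ modulo $\thh$, and because $\thh$ is always even (all candidate values $2\lceil\nicefrac 1\eps\rceil^i$ are even) while $G_1$ is bipartite between $F$ and $F^*$, the distances from $f^*$ and from $\tau(f^*)$ to $r$ have opposite parities, so at most one is divisible by $\thh$. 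Hence any tree swap closing the original copy of $\tau(f^*)$ must also contain $f^*$ or $\pi(\tau(f^*))$, and both are within $O(d^*+d_1)$ of $c$. You also miss the case where $\tau(f^*)$ is closed as an \emph{optimal surrogate}, which the paper handles separately via \Cref{cor:edge-deletion} (edges on short cycles are never deleted, so the paired new copy of $\pi(\tau(f^*))$ is opened). Without the parity argument, your tree-swap bound does not go through.
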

\begin{restatable}{claim}{CrudeTwo}
  \label{clm:crude2}
  $\Pr[\cD]\leq O(\varepsilon)$ for all clients $c$.
\end{restatable}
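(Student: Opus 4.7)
The plan is to decompose $\cD$ into its constituent events and union-bound, showing each contributes only $O(\varepsilon)$. Every ingredient we need is already built into the construction: the bound $\thd = \lceil 1/\varepsilon\rceil$ on the heavy-facility threshold, the fact that $\thh \ge 2\lceil 1/\varepsilon\rceil$ on the smallest height threshold, and the merging-probability guarantee of the balancing procedure. The main subtlety is that some of the facilities appearing in the event descriptions (such as $\tau(f^*)$) are themselves random, so some care is needed before the union bound can be applied over a deterministic constant-size set.

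For the surrogate events (i) and (i'), Claim \ref{clm:simple-degree-reduction}(v) and Claim \ref{clm:tree-degree-reduction}(vii) tell us that any fixed local facility is chosen as a local or optimal surrogate with probability at most $2/\thd = O(\varepsilon)$. Since $\tau(f^*)\in\{\eta_1(f^*),\eta_2(f^*)\}$ with probability one, event (i) is contained in the event that some facility in the deterministic constant-size set $\{f_1,f_2,\eta_1(f^*),\eta_2(f^*)\}$ (augmented by $\{\eta_1(g^*),\eta_2(g^*)\}$ for types \xC and \xD) is chosen as a surrogate; a union bound gives $O(\varepsilon)$.

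For the edge-deletion events (ii) and (ii'), these only occur when $\cP$ is the tree swap set. By Corollary \ref{cor:edge-deletion}, any fixed edge of $G_1$ is deleted with probability at most $2/\thh$, and because every value of $\thh$ drawn by our procedure satisfies $\thh \ge 2\lceil 1/\varepsilon\rceil$, taking expectation over $\thh$ yields a deletion probability of at most $\varepsilon$ for each such edge. A union bound over the out-edges of $f^*,f_1,f_2$ (and of $g^*$ for types \xC and \xD) contributes another $O(\varepsilon)$. Similarly, for the balancing events (iii) and (iii'), which only occur when $\cP$ is the simple swap set, Claim \ref{claim:balancedGroup}(iv) guarantees that any two fixed connected components of $G_1$ are grouped together with probability at most $\varepsilon$; since the facilities named in these events lie in at most constantly many components, there are only constantly many pairs to consider, again yielding $O(\varepsilon)$.

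Combining these bounds via a final union bound (and noting that the simple/tree coin flip contributes only a factor of $1/2$ in front of the tree-only and simple-only events) gives $\Pr[\cD]\le O(\varepsilon)$, as claimed. The entire argument is essentially bookkeeping, and the only genuine care point is the one already flagged: tracking which facilities are random and passing to an enclosing deterministic set of constant size before union-bounding.
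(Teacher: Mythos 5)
Your proof is correct and follows essentially the same route as the paper's own (very brief) argument, which says the claim follows from Claims \ref{clm:simple-degree-reduction}, \ref{claim:balancedGroup}, \ref{clm:tree-degree-reduction}, \Cref{cor:edge-deletion}, and a union bound. The one genuine care point you flag — that $\tau(f^*)$ and $\tau(g^*)$ are random, so one should pass to the deterministic enclosing sets $\{\eta_1(f^*),\eta_2(f^*)\}$ and $\{\eta_1(g^*),\eta_2(g^*)\}$ before union-bounding — is exactly the right observation, and your containment argument handles it cleanly.
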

The proof of \Cref{clm:crude2} follows from
\Cref{clm:simple-degree-reduction,claim:balancedGroup,clm:tree-degree-reduction},
\Cref{cor:edge-deletion},
and a trivial union bound. We defer the proof of \Cref{clm:crude} to
Appendix~\ref{sec:defiant}.
The two claims above imply~\Cref{lem:combo}, and hence control the
effect of the defiant events.
We focus next on the amenable
events and the proof of \Cref{lem:combo2}.

\newcommand{\change}{{\Delta}}
\newcommand{\WCchange}{{\delta}}
\newcommand{\trtype}{\gamma}

\section{The Potential Change due to Amenable Events}
\label{sec:bounds}


Having bounded the potential change due to defiant events, we now turn
to bounding the potential change due to amenable events. Let us recall
the claim we want to 
prove:
\AmenableSwaps*

%
This section gives an
explicit proof that can be verified by hand.  In \S\ref{sec:LP} we
show how to generate a much larger set of valid inequalities. Solving
the resulting linear program gives improved bounds, but these are more
tedious to verify manually.


\subsection{Implications of Amenability}
\label{sec:amenability}
\begin{claim}[Implications of amenability]
\label{clm:implication-amenability}
For any client, swap sets $\cP$ generated on the amenable event $\cA$ have the following properties:
\begin{OneLiners}
\item[(i)] Any local facility $f\in\{f_1,f_2\}$ is closed in at most one swap in $\cP$;
\item[(ii)] Any swap in $\cP$ closing $\tau(f^*)$ must open the original copy of $f^*$;
\item[(Tii)] If $\cP$ is a tree swap set, any swap in $\cP$ closing $f\in\{f_1,f_2\}$ must open $\pi(f)$;
\item[(Siii)] If $\cP$ is a simple swap set, no swap in $\cP$ closes two local facilities in $\{f_1,f_2\}\cup \{\eta_1,\eta_2\}$ simultaneously;
\item[(Siv)] If $\cP$ is a simple swap set, any swap in $\cP$ closing a local facility in $\{f_1,f_2\}\backslash\{\tau(f^*)\}$ does not open $f^*$.
\end{OneLiners}
%
%
For clients of type \xC or \xD, we additionally have the following:
(recall that we defined $g^*$ as $\pi(f_1)$, and $g$ as the local facility closest to $g^*$ other than $f_1$ and $f_2$):
\begin{OneLiners}
\item[(ii')]Any swap in $\cP$ closing $\tau(g^*)$ must open the original copy of $g^*$;
\item[(Siii')]If $\cP$ is a simple swap set, no swap in $\cP$ closes both $f_1$ and $g$.
\end{OneLiners}
\end{claim}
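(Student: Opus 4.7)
The plan is to verify each of the seven properties by tracing the fate of the facilities $f^*, f_1, f_2, \tau(f^*)$ (and $g^*, \tau(g^*), g$ in the type \xC/\xD case) through the stages of the construction of $\cP$—namely the random graph $G_0$, the degree reduction to $G_1$, the edge deletion to $G_2$ (tree swaps only), and the balancing step—invoking the relevant clause of the amenable event at each stage.

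I would first record a uniform observation: amenability clause~(i) guarantees that none of $f_1, f_2, \tau(f^*)$ is chosen as a local or optimal surrogate, so by \Cref{clm:simple-degree-reduction} and~\Cref{clm:tree-degree-reduction} each of these facilities appears in $G_1$ either as its unique original copy (if it is not heavy) or not at all (if it is heavy). Property~(i) is then immediate: the original copy of $f_i$ sits in exactly one connected component and hence is closed in at most one swap after balancing; if $f_i$ is heavy, it is never closed. Property~(ii) follows because $f^*$ has an out-edge to $\tau(f^*)$ in $G_0$, preserved in $G_1$ by \Cref{clm:simple-degree-reduction}(iv) resp.~\Cref{clm:tree-degree-reduction}(v); in the tree case amenability~(ii) forbids this edge's deletion, in the simple case no edges are deleted at all, and balancing only merges components, so the original copy of $f^*$ lies in any swap that closes $\tau(f^*)$. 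Property~(Tii) is analogous, using \Cref{clm:tree-degree-reduction}(vi) to get the edge from the original copy of $f\in\{f_1,f_2\}$ to $\pi(f)$, and amenability~(ii) to retain it.

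For the simple-swap properties, the key structural observation is that in $G_0$ every local facility is the unique root of its own connected component (since only optimal facilities have out-edges in the simple case). Hence distinct facilities among $\{f^*\}\cup\{f_1,f_2\}\cup\{\eta_1,\eta_2\}$ lie in distinct components of $G_0$, and those that survive degree reduction still lie in distinct components of $G_1$. Property~(Siii) is then immediate from amenability clause~(iii), which forbids merging any two components each containing a facility from this list. Property~(Siv) reduces to the same fact: $f\in\{f_1,f_2\}\setminus\{\tau(f^*)\}$ is its own component's root, whereas $f^*$ lives in the component rooted at $\tau(f^*)\ne f$, and these two distinct components are kept apart by amenability~(iii).

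The additional properties (ii') and (Siii') for type \xC/\xD clients are proved by duplicating the above arguments with $(g^*,\tau(g^*),g)$ in place of $(f^*,\tau(f^*),\eta_i)$, invoking the new amenability clauses (i'), (ii'), (iii') of \Cref{def:amenable-defiant-swaps-CD}. The main obstacle I anticipate is keeping the case analysis for ``heavy'' vs.\ ``surrogate copy'' vs.\ ``original copy'' transparent, but once one notes that a heavy facility simply disappears from $G_1$ (making every relevant implication vacuous) and that amenability~(i)/(i') rules out surrogate copies of the listed facilities, the remaining bookkeeping is routine.
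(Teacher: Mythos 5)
Your proof follows essentially the same route as the paper's: use amenability clause~(i) to rule out surrogate copies of the listed facilities, clause~(ii) to preserve the relevant out-edges under edge deletion, and clause~(iii) to keep the relevant components apart in the balancing step, with the type-\xC/\xD cases handled by replaying the argument through the primed clauses of \Cref{def:amenable-defiant-swaps-CD}. The argument is correct.

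One small imprecision worth tightening: you write that ``distinct facilities among $\{f^*\}\cup\{f_1,f_2\}\cup\{\eta_1,\eta_2\}$ lie in distinct components of $G_0$.'' That is not literally true, since the optimal facility $f^*$ points to $\tau(f^*)\in\{\eta_1,\eta_2\}$ in $G_0$ and therefore shares a component with it. The correct statement — which is all you need — is that distinct \emph{local} facilities among $\{f_1,f_2\}\cup\{\eta_1,\eta_2\}$ lie in distinct components (each is the unique local facility, hence the root, of its own component), and $f^*$ lies in the component rooted at $\tau(f^*)$. You in fact use exactly this sharper description when arguing (Siv), so it is a wording slip rather than a gap; but as written the earlier sentence over-claims. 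Everything else, including the reduction of the heavy case to vacuity and the use of \Cref{clm:tree-degree-reduction}(vi) for (Tii), matches the paper.
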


\begin{proof}[Proof of \Cref{clm:implication-amenability}]
Recall that the amenable event $\cA$ is the complement of the defiant event $\cD$, defined in \Cref{def:amenable-defiant-swaps}.

Implication (i) follows from item (i) of \Cref{def:amenable-defiant-swaps} directly. 

Implication (ii) follows from items (i) and (ii) of \Cref{def:amenable-defiant-swaps}. Without loss of generality, we assume $\tau(f^*)$ is not \heavy, since \heavy local facilities are never closed. On the amenable event, $\tau(f^*)$ is closed only as its original copy, by item (i) of \Cref{def:amenable-defiant-swaps}. The edge to $\tau(f^*)$ from the original copy of $f^*$ is never deleted by item (ii) of \Cref{def:amenable-defiant-swaps}, so the original copies of $f^*$ and $\tau(f^*)$ must be in the same swap.

Implication (Tii) also follows from items (i) and (ii) of \Cref{def:amenable-defiant-swaps}, for a similar reason. Again, assume without loss of generality that neither $f_1$ nor $f_2$ is heavy. On the amenable event, $f_1$ and $f_2$ are closed only as their original copies by item (i) of \Cref{def:amenable-defiant-swaps}, and the edges $f_i\rightarrow \pi(f_i)$ are never deleted by item (ii).

Implications (Siii) and (Siv) both follow from item (iii) of \Cref{def:amenable-defiant-swaps}. When we generate the simple swap set, every connected component of the graph $G_1$ contains at most one local facility, and  thus different facilities in $\{f_1,f_2\}\cup\{\eta_1,\eta_2\}$ must be in different connected components, which are not combined in the balancing step due to item (iii) of \Cref{def:amenable-defiant-swaps}. This proves implication (Siii). Moreover, the connected component of $f^*$ doesn't contain any local facility other than $\tau(f^*)\in\{\eta_1,\eta_2\}$. This proves implication (Siv).

(ii') and (Siii') can be proved in the same way as (ii) and (Siii) using \Cref{def:amenable-defiant-swaps-CD}.
\end{proof}
\subsection{Notation and Useful Inequalities}
\label{sec:bounds-notation}
\label{sec:usefulineq}


Let $\change_\cE(c)$ denote the expected potential change on client $c$
restricted to some generic event $\cE$:
\begin{equation*}
\change_\cE(c) := \bE \Big[ \ind_{\cE} \sum_{(P,Q)\in\cP}\dPQ{c} \Big].
\end{equation*}
Our goal in \Cref{lem:combo2} is thus to upper bound $\change_\cA(c)$
for the amenable event $\cA$. In our proof, we consider sub-events
$\cE$ of $\cA$, and prove worst-case upper-bounds for the potential change
restricted to each sub-event $\cE$. Formally, given a suitable
partition $\cA = \cE_1\cup\cdots\cup\cE_t$, we define
$\WCchange_\cE(c) := \sum_{(P,Q)\in\cP}\dPQ{c}$ to be the worst-case
(maximum) value for each event $\cE$, and then use:
%
%
\begin{equation}
  \label{eq:partition}
  \change_\cA(c) = \sum_{i=1}^t\change_{\cE_i}(c)
\leq \sum_{i=1}^t \Pr[\cE_i]\; \WCchange_{\cE_i}(c).
\end{equation}

For technical reasons, it is more convenient to assume $\WCchange_\cE(c)$ is no smaller than, say, $\WClb(c)$. We thus re-define $\WCchange_\cE(c)$ as $\WClb(c)$ when $\WCchange_\cE(c) < \WClb(c)$. This doesn't affect our analysis, as all our upper bounds for $\WCchange_\cE(c)$ are larger than $\WClb(c)$. Also, \Cref{clm:crude} implies that $\WCchange_\cE(c)\leq O(d^* + d_1)$.

%
%

To apply (\ref{eq:partition}), we need to understand $\Pr[\cE]$ and
$\WCchange_{\cE}(c)$ for the following events (and
their intersections): the \emph{amenable event} $\cA$ and its
complement \emph{defiant event} $\cD$, the \emph{simple event} $\cS$
and its complement \emph{tree event} $\cT$. The simple event $\cS$ is
further partitioned into $\cS_1$ and $\cS_2$, and the tree event $\cT$
is partitioned into $\cT_1$ and $\cT_2$, representing whether
$f^*$ points to $\eta_1$ or $\eta_2$. These events are defined for
a fixed client $c$, and we omit $c$ in our notations for brevity.


Recall that $f^*$ is the optimal facility closest to $c$, and
$\rho = \rho(f^*) := \frac{d(f^*, \eta_1(f^*))}{d(f^*, \eta_2(f^*))}$.  To
generate the set $\cP$ of important swaps, we choose $\tau(f^*)$
from
different distributions depending on the value of $\rho(f^*)$, and
thus the probability of the events $\cS_1,\cS_2,\cT_1,\cT_2$ depends
on $\rho(f^*)$ as follows:

\begin{table}[h]
\begin{center}
\begin{tabular}{| c | c | c| c| c|}
\hline
    \rowcolor{black!90} \textcolor{white}{Ratio-types} 
    & \textcolor{white}{$\Pr[\cS_1]$} 
    & \textcolor{white}{$\Pr[\cS_2]$}
    & \textcolor{white}{$\Pr[\cT_1]$}
    & \textcolor{white}{$\Pr[\cT_2]$}\\ 
    \hline
    $0 \leq \rho(f^*) \leq \nicefrac{2}{3}$ & $\nicefrac{1}{2}$ & $\cdot$ & $\nicefrac{1}{2}$ & $\cdot$\\
    \hline
    $\nicefrac{2}{3} < \rho(f^*) \leq \nicefrac{3}{4}$ & $\nicefrac{1}{2}$ & $\cdot$ & $\nicefrac{1}{4}$
    		 & $\nicefrac{1}{4}$\\
    \hline
    $\nicefrac{3}{4}<\rho(f^*) \leq 1 $ & $\nicefrac{5}{4}-\rho$ & $\rho-\nicefrac{3}{4}$
    	 & $\nicefrac{1}{4}$ & $\nicefrac{1}{4}$\\
    \hline
\end{tabular}
\end{center}
\caption{Probability distribution for each ratio-type.}\label{tab:prob}
\end{table}

Since $\Pr[\cD] = O(\e)$ due to \Cref{clm:crude2}, the probability of
any event $\cE \cap \cA$ is at least $\Pr[\cE] - O(\e)$.

\medskip\textbf{Bounding the worst-case change $\WCchange_{\cE}(c)$.}
We fix an arbitrary swap set $\cP$ generated under event $\cE$, and
analyze the effect of each swap in $\cP$. Let $\move{f^*}$ denote the
swap in $\cP$ that opens $f^*$; such a swap always exists. There may
be multiple such swaps in $\cP$ when we perform tree swaps, in which case we let
$\move{f^*}$ be the swap that opens the original copy of $f^*$. For a
local facility $f\in\{f_1,f_2\}$, let $\move{\neg f}$ denote the swap
in $\cP$ that closes $f$. By implication~(i) of amenability, there is at most one such
swap as long as $\cE$ is a sub-event of the amenable event $\cA$. When there
is no swap closing $f$ (which happens when $f$ is a heavy facility), we are often in a better situation because our
bound for $\WCchange_{\move{\neg f}}(c)$ is often non-negative, so we
will mostly focus on the case where $\move{\neg f}$ does exist.

Before we begin giving bounds for the various client types, let us
record in 
Table~\ref{tab:ineqs} some inequalities we will frequently
use. Recall that $\eta_1(f^*)$ and $\eta_2(f^*)$ are the closest and
second-closest local facilities to $f^*$, and $\pi(f)$ is the closest
optimal facility to $f$.  These inequalities are proven in
\Cref{ap:proofUseful}.

\begin{table}[h]
\begin{center}
\begin{tabular}{| c l| c | }
\hline
    \rowcolor{black!90} \textcolor{white}{Bound}& &
                                                    \textcolor{white}{Conditions
                                                    (if any)} \\ 
    \hline
	$d_2 \leq 2d^* + d_1$ & $ \inlineeqnum\label{eqn:ub5}$ & $\eta_1(f^*) \neq f_1$\\
     \hline
    $d(c,\pi(f_1))\leq 2d_1+d^*$ & $\inlineeqnum\label{eqn:ubpartner}$&\\
     \hline
    $\max\{d(c,\eta_1(f^*)), d(c,\eta_2(f^*))\} \leq 2d^* + d_1$ & $\inlineeqnum\label{eqn:ubeta}$ & $\eta_1(f^*) \neq f_1$\\
      \hline
    $\min(d^*,d_1)+\b\max(d^*,d_1) \leq (1-\b)\, d^* + 2\b\, d_1$ & $\inlineeqnum\label{eqn:bestBalance}$  &\\
     \hline
\end{tabular}
\end{center}
\caption{Useful Inequalities}\label{tab:ineqs}
\end{table}


%
%

\subsection{Bounds for Clients of Type \xE}
\label{sec:type-z}

We now give an upper bound for the expected potential change $\change_\cA(c)$
for any client $c$ of type \xE. We give the entire proofs here; for
clients of other types we will defer the proofs to the appendices.

\begin{lemma}\label{lem:typez}
  For any client $c$ of type \xE, we have
   \[ \change_\cA(c) \leq 2.5\, d^*(c)  - 0.9\,
  d_1(c) + O(\varepsilon)(d^* + d_1). \]
\end{lemma}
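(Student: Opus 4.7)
The plan is to apply the decomposition~(\ref{eq:partition}) by splitting the amenable event $\cA$ into the four sub-events $\cA\cap \cS_1,\cA\cap \cS_2,\cA\cap \cT_1,\cA\cap \cT_2$, indexed by whether $\cP$ is a simple or tree swap set and whether $\tau(f^*)=\eta_1(f^*)$ or $\eta_2(f^*)$; their probabilities are read from Table~\ref{tab:prob}. On each sub-event I would identify the at most three swaps in $\cP$ that actually touch the client $c$, namely $\move{f^*}$, $\move{\neg f_1}$, and $\move{\neg f_2}$, and bound $\WCchange_\cE(c)$ as a linear function of $d^*(c)$ and $d_1(c)$. The gap between the full expectation and its restriction to $\cA$ is absorbed into the $O(\eps)(d^*+d_1)$ slack by combining \Cref{clm:crude} with $\Pr[\cD]=O(\eps)$ from \Cref{clm:crude2}.

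The analysis exploits the defining feature of Type~\xE, namely $f_1\notin\{\eta_1(f^*),\eta_2(f^*)\}$, in two distinct ways. First, on the amenable event the connected component of $f_1$ is disjoint from that of $f^*$ (by implication~(Siii) of \Cref{clm:implication-amenability} in the simple case and by the tree-surgery structure in the tree case), so $\move{f^*}$ never closes $f_1$. Opening $f^*$ while keeping $f_1$ can only decrease $\Phi^c$; a short case split on the ordering of $d^*$ and $d_1$, combined with~(\ref{eqn:bestBalance}), shows that the worst-case contribution of $\move{f^*}$ is at most roughly $\min(d^*,d_1)+\beta\max(d^*,d_1)-(1+\alpha\beta)d_1$, which is non-positive whenever $d^*\ge d_1$ and otherwise of order $d^*-1.4\,d_1$. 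Second, inequality~(\ref{eqn:ubeta}) guarantees $d(c,\eta_i(f^*))\le 2d^*+d_1$ for both $i\in\{1,2\}$, so whenever $f_1$ is removed, $c$ can be cheaply re-routed through either $\eta_i(f^*)$.

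For the swap $\move{\neg f_1}$ I would use precisely this re-routing. In the simple case the newly opened facilities in $\tau^{-1}(f_1)$ need not be close to $c$, but both $\eta_1$ and $\eta_2$ remain open under amenability and lie within $2d^*+d_1$ of $c$, so the new first and second closest distances to $c$ are both bounded by $2d^*+d_1$, giving a contribution at most $(1+\beta)(2d^*+d_1)-(1+\alpha\beta)d_1$. In the tree case the swap $\move{\neg f_1}$ opens $\pi(f_1)$ by implication~(Tii), and $d(c,\pi(f_1))\le 2d_1+d^*$ by~(\ref{eqn:ubpartner}), giving a comparable bound. The third swap $\move{\neg f_2}$, when distinct from the other two, only alters the truncated second-closest term of $\Phi^c$ and is controlled using~(\ref{eqn:ub5}) and a similar fallback argument; in the far case its effect is negligible since $\Phi^c$ already pays the saturated value $(1+\alpha\beta)d_1$.

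Combining these per-swap bounds with the probability weights from Table~\ref{tab:prob} via~(\ref{eq:partition}) should give $2.5\,d^* - 0.9\,d_1 + O(\eps)(d^*+d_1)$. The main obstacle will be bookkeeping around coincidences between swaps: in particular when $f_2\in\{\eta_1,\eta_2\}$ we have $\move{\neg f_2}=\move{f^*}$ and must not double-count, and when $f_1$ or $f_2$ is heavy the corresponding swap $\move{\neg f_i}$ simply does not exist and its contribution vanishes. I expect to handle these by further partitioning each of $\cS_1,\cS_2,\cT_1,\cT_2$ according to the relationship between $f_2$ and $\{\eta_1,\eta_2\}$ and to whether $f_1$ is heavy, and then tuning the final arithmetic by applying~(\ref{eqn:bestBalance}) to rewrite $\min(d^*,d_1)+\beta\max(d^*,d_1)$ as $(1-\beta)d^*+2\beta d_1$, which is the linear form compatible with the target coefficients $(2.5,-0.9)$.
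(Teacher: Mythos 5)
Your high-level strategy is the paper's strategy: split $\cA$ between simple and tree swaps, identify the at most three relevant swaps $\move{f^*}$, $\move{\neg f_1}$, $\move{\neg f_2}$, bound each per-swap change via the inequalities in Table~\ref{tab:ineqs}, and recombine. Two remarks. The finer partition into $\cS_1,\cS_2,\cT_1,\cT_2$ is harmless but unnecessary here: for type~\xE one has $f_1\notin\{\eta_1(f^*),\eta_2(f^*)\}$, so \eqref{eqn:ubeta} gives $d(c,\tau(f^*))\leq 2d^*+d_1$ regardless of which of $\eta_1,\eta_2$ the random map $\tau$ picks; the per-swap bounds are identical on $\cS_1$ and $\cS_2$ (respectively on $\cT_1$ and $\cT_2$), which is why the paper works with the coarser split $\cA=(\cS\cap\cA)\cup(\cT\cap\cA)$ and weight $\nicefrac12$ each.

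The claim that ``$\move{f^*}$ never closes $f_1$'' is, however, a genuine error. It holds for simple swaps by implication~(Siv) of \Cref{clm:implication-amenability} (since $f_1\neq\tau(f^*)$ for type~\xE), but it fails for tree swaps: $f_1$ has an out-edge to $\pi(f_1)$, which can equal $f^*$ or feed into $f^*$'s 1-tree, and no amenability condition keeps $f_1$ out of $\move{f^*}$. The paper handles this by separately bounding the coincident sub-case $\move{\neg f_1}=\move{f^*}$, obtaining $(1+\a\b)(d^*-d_1)$ as in \eqref{eq:E-far-tree1}, and observing it is dominated by the non-coincident bound \eqref{eq:E-far-tree2}. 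Since the coincident case happens to be the easier one, your final numbers would survive, but the argument as sketched silently drops a case that must be shown rather than asserted away. For close clients of type~\xE the analogous extra bookkeeping is a sub-case split on whether $f_1$ and $f_2$ are closed by the same tree swap; that is the distinction the paper actually carries through, rather than the ``$f_2\in\{\eta_1,\eta_2\}$'' coincidence you mention (which forces $\move{\neg f_2}=\move{f^*}$ only when $f_2=\tau(f^*)$, not whenever $f_2\in\{\eta_1,\eta_2\}$).
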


In our proof, we partition the amenable event
$\cA = (\cS\cap\cA) \cup (\cT\cap\cA)$ depending on whether we have a
simple swap or a tree swap, and then bound $\change_\cA(c)$ by
\begin{align}
\change_\cA(c)\leq & \Pr[\cS\cap\cA]\cdot \WCchange_{\cS\cap\cA}(c) + \Pr[\cT\cap\cA]\cdot \WCchange_{\cT\cap\cA}(c)\nonumber \\
\leq & \Pr[\cS]\cdot\WCchange_{\cS\cap\cA}(c) + \Pr[\cT]\cdot\WCchange_{\cT\cap\cA}(c) + \epsd.\label{eq:E1}
\end{align}
The second inequality is implied by \Cref{clm:crude2} and our assumption that $\WCchange_\cE(c)\geq \WClb$.
To use~(\ref{eq:E1}) we give upper bounds for
$\WCchange_{\cS\cap\cA}(c)$ and $\WCchange_{\cT\cap\cA}(c)$ for
clients of both subtypes (close and far) in the next subsections. In
other words, we pick an arbitrary swap set $\cP$ generated under these
events, and bound the potential change for client $c$ due to the swaps
in $\cP$.


\subsubsection{Far Clients of Type \xE: \texorpdfstring{$d_2(c) \geq
    \a\,d_1(c)$}{[d2(c) >= alpha d1(c)]}}

\textbf{Simple Swaps.}
We fix a ``far'' client $c$ and an arbitrary swap set $\cP$ generated
conditioned on the event $\cS\cap\cA$ for this client, and bound the sum
$\sum_{(P,Q)\in\cP}\dPQ{c}$.

\begin{itemize}
\item Given the swap $\move{f^*} \in \cP$ (which is not $\move{\neg
    f_1}$ by implication (Siv) of amenability), 
   $c$ has an additional option of going to $f^*$,
  giving
  \[ \WCchange_{\move{f^*}}(c)\leq (d_1 + \b\, d^*) - (1+\a\b)\, d_1. \]
\item Next, by implication (i) of amenability, the set $\cP$ contains at most one
  swap $\move{\neg f_1}$. If $\move{\neg f_1}$ does
  exist, 
  both $\eta_1$ and $\eta_2$ are open (by implication (Siii) of amenability), and both at distance
  $\leq 2d^* + d_1$ from $c$. Therefore,
  \[ \WCchange_{\move{\neg f_1}}(c) \leq (1+\b)(2d^*+d_1) - (1+\a\b)\,
    d_1. \]
  This quantity is non-negative: since $c$ has type \xE, $\eta_1 \neq f$ and
also $d(c,\eta_1) \leq 2d^*+d_1$. But $c$ is a far client, then
$d(c,\eta_1) \geq \a d_1$. Putting the two together:
\begin{equation*}
(1+\b)(2d^*+d_1) - (1+\a\b)\, d_1\geq d_1 + \b (2d^* + d_1) - (1+\a\b)\, d_1\geq 0.
\end{equation*}
\item Finally, all other swaps in $\cP$ leave $f_1$ open, and thus
  they cannot increase the potential for $c$.
\end{itemize}
Combining these, when the swap $\move{\lnot f_1}$ exists,
\begin{equation}
\label{eq:E-far-simple1}
\sum_{(P,Q)\in\cP}\WCchange_{(P,Q)}(c) \leq \WCchange_{\move{f^*}}(c) + \WCchange_{\move{\neg f_1}}(c) \leq (2+3\b)\, d^* - (2\a\b-\b)\, d_1.
\end{equation}

In case $\move{\neg{f_1}}$ does not exist, (\ref{eq:E-far-simple1})
still holds since our bound for $\WCchange_{\move{\neg f_1}}$ is
non-negative. 
Since $\cP$ was a generic swap set conditioned on being amenable,
\begin{equation*}
\WCchange_{\cS\cap\cA}(c)\leq \hoono{(2+3\b)\, d^* - (2\a\b-\b)\, d_1}.
\end{equation*}

\medskip
\textbf{Tree Swaps.}
We now turn to \emph{tree swaps}, and fix an arbitrary swap set $\cP$
generated on the event $\cT\cap\cA$. Again, $\cP$ contains at most one swap $\move{\neg f_1}$
that closes $f_1$, by amenability. 
We first consider the case where $\move{\neg f_1}$ exists and is the
same as $\move{f^*}$. In this case, all other swaps in $\cP$ have non-positive potential changes, so
\begin{equation}
\label{eq:E-far-tree1}
\sum_{(P,Q)\in\cP}\WCchange_{(P,Q)}(c)\leq \WCchange_{\move{\neg f_1}}(c)\leq (1+\a\b)d^* - (1+\a\b)d_1.
\end{equation}
Next, consider the case where $\move{\neg f_1} \neq \move{f^*}$. On
swap $\move{f^*}$, client $c$ can go to both $d^*$ and $d_1$. On swap
$\move{\neg f_1}$, $c$ can go to $\pi(f_1)$ at distance $\leq 2d_1 + d^*$, and
also to $\tau(f^*) \in \{\eta_1,\eta_2\}$ at distance $\leq 2d^* +
d_1$. Both these facilities $\pi(f_1)$ and $\tau(f^*)$ must be open
after the swap $\move{\neg f_1}$ due to implications (ii) and (Sii) of amenabilityx. All other swaps in $\cP$ have non-positive potential changes, so 
\begin{align}
\sum_{(P,Q)\in\cP}\WCchange_{(P,Q)}(c) \leq {} & \WCchange_{\move{f^*}}(c) + \WCchange_{\move{\neg f_1}}(c)\nonumber \\
\leq {} & d_1 + \b\, d^* - (1+\a\b)\, d_1 \tag{$\WCchange_{\move{f^*}}$}\\
& +   (2d^*+d_1) + \b(2d_1+d^*) - (1+\a\b)\, d_1 \tag{$\WCchange_{\move{\neg f_1}}$}\\
= {} & (2+2\b)\, d^* - (2\a\b-2\b)\, d_1\label{eq:E-far-tree2}.
\end{align}
In the case where $\move{\neg f_1}$ doesn't exist,
(\ref{eq:E-far-tree2}) still holds, because our bound for
$\WCchange_{\move{\neg f_1}}(c)$ is non-negative. By our choice of $\a
= 3$
and $\b = \nicefrac15$, (\ref{eq:E-far-tree1}) is dominated by
(\ref{eq:E-far-tree2}). Since $\cP$ is a generic swap set,
\begin{equation*}
\WCchange_{\cT\cap\cA}(c)\leq \hoono{(2+2\b)\, d^* - (2\a\b-2\b)\, d_1}.
\end{equation*}
Summarizing the simple swaps case and the tree swaps case, we have

\begin{mybox}
\begin{align*}
\WCchange_{\cS\cap\cA}(c) \leq  & (2+3\b)\, d^* - (2\a\b-\b)\, d_1
	 \leq 2.6\, d^* - d_1,\\
\WCchange_{\cT\cap\cA}(c) \leq  & (2+2\b)\, d^* - (2\a\b-2\b)\, d_1
	 \leq 2.4\, d^* - 0.8\, d_1.
\end{align*}
\end{mybox}

Now substituting into (\ref{eq:E1}), we get a bound for all type \xE far clients $c$:
\begin{align}
\change_\cA(c) &\leq\haff \cdot \WCchange_{\cS\cap\cA}(c) + \haff
                 \cdot \WCchange_{\cT\cap\cA}(c) + \epsd \notag\\
 &\leq \hoono{2.5\, d^*  - 0.9\, d_1} + \epsd. \label{eq:Zfar}
\end{align}
This proves \Cref{lem:typez} for far clients of type \xE. 
The proof for all other types of clients will have a similar
structure: we will identify which swaps affect client $c$, then we sum
up the inequalities with the right probabilities. In some cases we
will need to look at cases depending on $\rho$.

\subsubsection{Close Clients of Type \xE: \texorpdfstring{$d_2(c) \geq
    \a\,d_1(c)$}{[d2(c) >= alpha d1(c)]}}

\textbf{Simple swaps.} Now we consider the case of close clients $c$. We fix an arbitrary swap set $\cP$, and
focus on $\move{f^*}$, $\move{\neg f_1}$, and $\move{\neg f_2}$ (All
other swaps cause a non-positive potential change).  Suppose these
three swaps are different.  When $f^*$ opens, the client $c$ can be
served by both $f^*$ and $f_1$.  When $f_1$ closes, $c$ can be served
by $f_2$ and $\eta_1$, and when $f_2$ closes, $c$ can be served by
$f_1$ and $\eta_1$: in both these cases, we use implication (Siii) of amenability
to ensure that both the corresponding facilities are open. We know that
$d_2 \leq d(c,\eta_1)$ because $c$ has type \xE; by \eqref{eqn:ubeta}
we get $d(c, \eta_1) \leq 2d^* + d_1$. Putting everything together,
the three swaps yield:
\begin{align*}
\WCchange_{\cS\cap\cA}(c) &\leq 
 d^* + \b\, d_1 - d_1 - \b\, d_2 \tag{$\WCchange_{\move{f^*} }$}\\
& + d_2 + \b(2d^*+d_1) - d_1 - \b\, d_2 \tag{$\WCchange_{\move{\neg f_1}}$}\\
& + d_1 + \b(2d^* + d_1) - d_1 - \b\, d_2 \tag{$\WCchange_{\move{\neg f_2}}$}\\
&= \hoono{(1+4\b)\, d^* - (2-3\b)\, d_1 + (1-3\b)\, d_2}.
\end{align*}

We address the assumption that the three swaps are different. As
argued above, condition (Siv) of amenability for type~\xE clients means that for simple swaps,
$\move{\neg f_1} \neq \move{f^*}$. However, $f_2$ could be $\tau(f^*)$, so it may happen that 
$\move{\neg f_2} = \move{f^*}$, and hence that 
$\WCchange_{\cS\cap\cA}(c) \leq \WCchange_{\move{f^*}} +
\WCchange_{\move{\neg f_1}}$. Moreover, $\move{\lnot f_1}$ may not
exist, in which case $\WCchange_{\cS\cap\cA}(c) \leq \WCchange_{\move{f^*}} +
\WCchange_{\move{\neg f_2}}$ or even $\WCchange_{\cS\cap\cA}(c) \leq \WCchange_{\move{f^*}}$. But since our bounds above
for both $\WCchange_{\move{\neg f_1}}$  and $\WCchange_{\move{\neg f_2}}$
are non-negative, we infer that 
the boxed upper bound
remains
valid in all these cases.

\medskip \textbf{Tree swaps.} We now consider tree swaps. Fix an
arbitrary swap set $\cP$ generated on the event $\cT\cap \cA$.  For a
client $c$ in the close case, there are three swaps that are relevant
to $c$---those containing $f^*$, $f_1$, and $f_2$---although some of these
swaps may coincide. (Also, no other swaps can increase the potential.)

\paragraph*{When $f_1$ and $f_2$ belong to the same swap.}
First suppose that $f_1$ and $f_2$ belong to the same swap in $\cP$. We start from the case where $\move{f^*}\neq \move{\neg f_1,\neg f_2}$. For the swap $\move{f^*}$, the client $c$ can be served by both $f^*$
and $f_1$. And when $f_1$ and $f_2$ are both closed, $c$ can be served
by $\tau(f^*)$ (which is either $\eta_1$ or $\eta_2$) and
$\pi(f_1)$. By \eqref{eqn:ubeta} we get that $d(c,\tau(f^*))$ is at most
$2d^* +d_1$, and by \eqref{eqn:ubpartner} we get $d(c,\pi(f_1))\leq
2d_1+d^*$. Hence,
\begin{align*}
\sum_{(P,Q)\in\cP}\WCchange_{(P,Q)}(c) &\leq  
 \WCchange_{\move{f^*}}  + \WCchange_{\move{\neg f_1, \neg f_2}}\\
&\leq  (d^* + \b\, d_1) - (d_1 + \b\, d_2) \tag{$\WCchange_{\move{f^*}}$}\\
& ~~~~ +  (2d^*+d_1) + \b(2d_1 + d^*) - (d_1 + \b\, d_2) \tag{$\WCchange_{\move{\neg f_1,\neg f_2}}$}\\
&= \hoono{(3+\b)\, d^* - (1-3\b)\, d_1 - 2\b\, d_2}. \nonumber
\end{align*}
On the other hand, if $f^*, f_1$, and $f_2$ all belong to the same
swap, we can assign $c$ to $f^*$
\begin{align*}
\sum_{(P,Q)\in\cP}\WCchange_{(P,Q)}(c) \leq {} &  \WCchange_{\move{f^*,
                   \neg f_1, \neg f_2}} \\ 
\leq {} &  (1+\a\b)\, d^* - d_1 - \b\, d_2 \tag{$\WCchange_{\move{f^*,\neg f_1, \neg f_2}}$}\\
& + \b(2d^*+d_1 - d_2)\tag{since $2d^*+d_1 \geq d_2$}\\
= {} & (1+\a\b+2\b)\, d^* - (1-\b)\, d_1 -2\b\, d_2\\
\leq {} & 3.2\, d^* - 0.4\, d_1 - 0.4\, d_2.
\end{align*}
These two bounds are identical for our choices of $\a = 3$ and
$\b = \nicefrac15$.

\paragraph*{When $f_1$ and $f_2$ belong to different swaps.}
Next, consider the case when $f_1$ and $f_2$ belong to
different swaps in $\cP$. Let us first assume $\move{f^*}$ is neither $\move{\neg f_1}$ nor $\move{f_2}$. In the swap $\move{f^*}$ the client can served by
$f^*$ and $f_1$. When one of $f_1$ or $f_2$ is closed, the client $c$
can be served by the other facility, and by $\tau(f^*)$, which is at
distance at most $2d^*+d_1$ from $c$ (by \eqref{eqn:ubpartner}). Hence,
\begin{align*}
\sum_{(P,Q) \in \cP} \dPQ{c} &\leq (1-\beta)\, d^* + 2\b\, d_1 - d_1 - \b\, d_2 \tag{$\WCchange_{\move{f^*}}$ with \eqref{eqn:bestBalance}}\\
& ~~~+ d_2 + \b(2d^*+d_1) - d_1 - \b\, d_2 \tag{$\WCchange_{\move{\neg f_1}}$}\\
& ~~~+ d_1 + \b(2d^* + d_1) - d_1 - \b\, d_2 \tag{$\WCchange_{\move{\neg f_2}}$}\\
  &= \hoono{(1+3\b)\, d^* - (2-4\b)\, d_1 + (1-3\b)\, d_2}.
\end{align*}

Our bound for $\WCchange_\move{f^*}$ does not require $f_2$ to remain open after the swap, and our bound for $\WCchange_\move{\neg f_2}$ is non-negative. Therefore, the above bound also holds when $\move{\neg f_2} = \move{f^*}$. When $\move{\neg f_1} = \move{f^*}$, we still have the above bound:
\begin{align*}
\sum_{(P,Q) \in \cP} \dPQ{c} &\leq d^* + \b\, d_2 - d_1 - \b\, d_2 \tag{$\WCchange_{\move{f^*,\neg f_1}}$}\\
& ~~~+ d_1 + \b(2d^* + d_1) - d_1 - \b\, d_2 \tag{$\WCchange_{\move{\neg f_2}}$}\\
& ~~~+ \b\, d^* + \b\, d_1 + (1-2\b)(d_2 - d_1) \tag{non-negative terms}\\
  &= \hoono{(1+3\b)\, d^* - (2-4\b)\, d_1 + (1-3\b)\, d_2}.
\end{align*}

Summarizing all these
bounds (using that $\a = 3$ and $\b = 0.2$), 
\begin{mybox}
\begin{align*}
\WCchange_{\cS \cap \cA}(c)  &\leq (1+4\b)\, d^* - (2-3\b)\, d_1 + (1-3\b)\, d_2
	= 1.8\, d^* - 1.4\, d_1 + 0.4\, d_2\\
\WCchange_{\cT \cap \cA}(c) &\leq \max\{3.2\, d^* - 0.4\, d_1 - 0.4\,
  d_2, 1.6\, d^* - 1.2\, d_1 + 0.4\, d_2 \}.
\end{align*}
\end{mybox}
Combining and using \eqref{eqn:ub5} to get $d_2 \leq 2d^* +
d_1$ if the $d_2$ terms do not cancel out, we get for close clients $c$:
\begin{align}
\change_{\cA}(c) &\leq  \haff \cdot \WCchange_{\cS \cap \cA} +
                   \haff\cdot\WCchange_{\cT \cap \cA} + \epsd \notag \\ 
&\leq \hoono{2.5\, d^* - 0.9\, d_1} +\epsd. \label{eq:Zclose}
\end{align}


\Cref{lem:typez} follows from the bound in \eqref{eq:Zfar} for the far clients and the
one from \eqref{eq:Zclose} for the close clients.

\subsection{All Other Client Types}
\label{sec:all-other-client}

Similarly, we can bound $\Gsum{c}$ for every other client type
\xA--\xD. We summarize this in the following theorem: the calculations
behind the expressions can be found in \Cref{apx:bounds}.

\begin{restatable}[]{lemma}{boundsAll}
  \label{lem:boundsAll}
  For any far client $c$ of type \xA or \xB, we have
  \begin{align}
    \Gsum{c}\leq 2.47\, d^*(c) - 1.13\, d_1(c) + \epsd\label{eqn:bndFar}
  \end{align}
  
  For any close  client $c_i$ of type $i\in\{\xA,\xB,\xC,\xD\}$, we have
  \begin{align}
    \Gsum{c_{\xA}} {} & \leq 2.375\, d^*(c_\xA) - 0.9\, d_1(c_\xA)  + \epsd \label{eqn:bndA}\\
    \Gsum{c_{\xB}} {} & \leq 2.4\, d^*(c_\xB) - 0.9\, d_1(c_\xB)  + \epsd\label{eqn:bndB}\\    
    \Gsum{c_{\xC}} {} & \leq 2.2\, d^*(c_\xC) - 0.8888\, d_1(c_\xC)  + \epsd\label{eqn:bndC}\\    
    \Gsum{c_{\xD}} {} & \leq 2.5203\, d^*(c_\xD) - 0.8888\, d_1(c_\xD)  + \epsd\label{eqn:bndD}
  \end{align}
\end{restatable}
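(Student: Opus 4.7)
The plan is to follow the template established for type \xE clients in \Cref{lem:typez}, applied separately to each of the four remaining types. For each type I fix an arbitrary amenable swap set $\cP$, identify the swaps that can strictly increase $c$'s potential contribution (namely $\move{f^*}$, $\move{\neg f_1}$, and in the close case also $\move{\neg f_2}$; all others leave $f_1$ open and hence cannot increase $\Phi^c$), and bound $\WCchange_{(P,Q)}(c)$ for each such swap. I then partition the amenable event as $\cA = (\cS_1 \cup \cS_2 \cup \cT_1 \cup \cT_2)\cap \cA$ according to the simple/tree regime and whether $\tau(f^*) = \eta_1$ or $\eta_2$, and combine the per-event worst-case bounds using the probabilities of Table~\ref{tab:prob}, exactly as done in~\eqref{eq:E1}.

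The key ingredient in each case is \Cref{clm:implication-amenability}. Implication (ii) says that whenever $\tau(f^*)$ is closed by some swap in $\cP$, that swap must also open $f^*$ and hence coincides with $\move{f^*}$; implications (Siii) and (Siv) say that in the simple regime the remaining closures occur in separate swaps and leave suitable nearby facilities ($\eta_1, \eta_2,$ or the surviving $f_j$) open as backups; implication (Tii) guarantees that in the tree regime any swap closing $f_i$ opens $\pi(f_i)$, which sits within $2 d_i + d^*$ of $c$ by~\eqref{eqn:ubpartner}. Combining these facts with the distance bounds of Table~\ref{tab:ineqs} reduces each case to a short linear inequality in $d^*, d_1, d_2$.

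For the far cases of \xA and \xB the analysis is strictly simpler than for \xE, because the coincidence $\move{f^*} = \move{\neg f_1}$ occurs whenever $\tau(f^*) = f_1$ (by (ii)), saving the cost of a separate reassignment; splitting on $\rho(f^*)$ with Table~\ref{tab:prob} and balancing min/max terms through~\eqref{eqn:bestBalance} yields~\eqref{eqn:bndFar}. For close \xA and \xB the only extra piece is accounting for $\move{\neg f_2}$, where the backups are bounded by~\eqref{eqn:ubeta}. For close \xC and \xD, both $f_1$ and $f_2$ are the two nearest local facilities of $f^*$, which creates the tension illustrated in \Cref{fig:singlevsmulti}: closing just one of $f_1, f_2$ while $f^*$ remains shut is costly since the surviving facility may itself be rather far. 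To control this, I invoke the additional amenability implications (ii') and (Siii') applied to $g^* = \pi(f_1)$ and its fallback $\tau(g^*)$; the latter remains open whenever the offending swap does not itself open $f^*$, and its proximity to $f_1$ (inherited from the definition of $g^*$ as the optimal facility closest to $f_1$) lets us charge the reassignment cheaply.

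The main obstacle is computational rather than conceptual: keeping track of all coincidence sub-cases (which of $\move{f^*}, \move{\neg f_1}, \move{\neg f_2}$ collapse, which of the relations $\eta_i = f_j$ holds for the given type, and which of $\cS_1, \cS_2, \cT_1, \cT_2$ we are in), and verifying that the worst-case sub-case bounds, weighted by Table~\ref{tab:prob} under $\a = 3$ and $\b = \nicefrac15$, yield exactly the advertised coefficients. The tightest estimate~\eqref{eqn:bndD} for type \xD requires the full slackness granted by the randomized choice of $\tau$ over $\{\eta_1, \eta_2\}$, so the balancing across the three $\rho(f^*)$-regimes of Table~\ref{tab:prob} is precisely where the constants $2.5203$ and $0.8888 \approx \nicefrac{8}{9}$ are determined; the detailed case arithmetic is relegated to \Cref{apx:bounds}.
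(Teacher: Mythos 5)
Your high-level template is the right one and matches the paper's opening moves: fix an amenable swap set, identify the relevant swaps $\move{f^*}$, $\move{\neg f_1}$, $\move{\neg f_2}$, use the implications of \Cref{clm:implication-amenability} plus the distance bounds of \Cref{tab:ineqs}, and combine over the $\cS_1,\cS_2,\cT_1,\cT_2$ partition with the probabilities of \Cref{tab:prob}. For the \emph{far} cases of types~\xA{} and~\xB{} this is in fact exactly what the paper does, and your sketch would succeed there.

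However, for the \emph{close} cases of types~\xA{} and~\xB{}, the partition $\cT = \cT_1\cup\cT_2$ is not fine enough, and the calculation does not close with the claimed constants if you only take $\Pr[\cT_1]\cdot\max\{\cdot\}+\Pr[\cT_2]\cdot\max\{\cdot\}$. The paper further splits $\cT_1 = \cT_{11}\cup\cT_{12}$ and $\cT_2=\cT_{21}\cup\cT_{22}$ according to whether $\move{f^*}$ alone closes both of $\{f_1,f_2\}$, resp.\ whether some swap other than $\move{f^*}$ closes both. The naive product-of-maxima bound then gives $\nicefrac14$ times the maximum over \emph{all four} pairings of $\WCchange_{\cT_{1j}}$ with $\WCchange_{\cT_{2j'}}$; the bad pairing $\WCchange_{\cT_{12}}+\WCchange_{\cT_{21}}$ is exactly the one that ruins the constant. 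The paper's \Cref{lm:typeA,lm:typeB} rule out that pairing, and they rest on \Cref{lm:averaging} which in turn needs a nontrivial probabilistic comparison $\Pr[\cT_{21}]\leq \Pr[\cT_{11}]+O(\eps)$, proved via a careful analysis of the edge-deletion survival probabilities (\Cref{lm:survive-upper,lm:survive-lower}). Your sketch makes no mention of this averaging device, and describing the remaining work as ``computational rather than conceptual'' understates what is needed: without the averaging lemma the case $\cT_{12}\cap\cA$ followed by $\cT_{21}\cap\cA$ yields a worse bound than \eqref{eqn:bndA}, \eqref{eqn:bndB}.

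For types~\xC{} and~\xD{}, invoking the extra amenability conditions (ii') and (Siii') on $g^*=\pi(f_1)$ is the right instinct, but it is not by itself a proof strategy. The actual argument first establishes the exhaustive structural dichotomy \Cref{claim:type3Swapcases} (resp.\ \Cref{claim:type4Swapcases}): either one of $f_1, f_2$ is heavy, or a nearby backup $h$ survives $\move{\neg f_1}$ in the simple regime, or in a suitable tree sub-event the swap $\move{f^*}$ closes both $f_1$ and $f_2$ simultaneously. Each alternative is analyzed separately, and several of them require conditioning on yet another independent coin, namely the events $\cS'_b,\cT'_b$ determined by where $g^*$ (not $f^*$) points. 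Without that subtype classification the analysis does not even enumerate the right set of worst-case sub-events, so the constants $2.2$ and $2.5203$ (which are achieved by the bottleneck sub-cases after optimizing over $\rho$) would not emerge. In short, your plan is on the right track for \eqref{eqn:bndFar} but is missing two genuine ideas: the $\cT_{ij}$ refinement with the averaging lemma for close \xA/\xB, and the six-way subtype dichotomy plus the $g^*$-indexed sub-events for \xC/\xD.
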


\Cref{lem:typez,lem:boundsAll} imply that every client $c$ satisfies
\[ \Gsum{c} \leq 2.5203\, d^*(c) - 0.8888\, d_1(c)  +\epsd. \] This proves 
\Cref{lem:combo2}, and hence \Cref{lm:main} and Theorem~\ref{thm:ub}.




 \newcommand{\cR}{\mathcal{R}}

\section{A Computer-Aided Analysis using Linear Programming}\label{sec:LP}

In this section we show how to generate a set of valid inequalities,
then solve the resulting linear program to find an upper bound on our
approximation ratio. We describe the ideas for the potential $\Phi_2$
that only takes the second-closest facility into account, and indicate
how to extend it to $\Phi_q$ for higher values of $q \geq 2$. Of course, the
size of the LP increases exponentially as $q$ increases. 

To recall, our proof strategy in the previous section was to consider
a local optimum, and then:
\begin{enumerate}[noitemsep]
\item define a (randomized) collection of important swaps that are contained
  within our actual set of swaps;
\item for every client type, write constraints that apply to all clients of that type;
\item carefully combine those constraints to have only a few remaining
  constraints; and
\item manually check these remaining contraints.
\end{enumerate}

An automated proof could avoid the last two steps by directly checking
the entire set of constraints. Since every constraint we derive is a
linear inequality on the distances, a linear program can be used for
this automated proof.
Put differently, our goal is to write a linear program that constructs
a ``worst-case example'' for our potential function. Specifically, the
program seeks values of the distances $d_1, d_2$, and $d^*$ for each
client type, so as to maximize the ratio between the costs of the
optimum and local solutions, while respecting the set of constraints.
\footnote{In fact, it does not come up with a concrete example, since we do not
maintain all the triangle inequalities between the clients, but only
the triangle inequalities in some local neighborhood around each
client. It is conceivable that using more triangle inequalities would
lead to an even better result, but that increases the complexity even further.}

\paragraph*{Variables and constraints of the LP.}
Let us focus on simple swaps, the constraints for tree swaps are
similar.  We want to express the fact that simple swaps at a local
optimum do not decrease the potential. We first classify facilities into
types according to their ratio $\rho$; we consider only a fine net of
values for $\rho$, and use continuity of the potential to control the
loss due to this discretization. All facilities with a given ratio are
treated the same way in the proof: our LP considers that all facilities
of the same type are swapped at the same time. Specifically, we have
a variable $s_\rho$ corresponding to the difference in the potential
function after applying simple swaps for all facilities with ratio
$\rho$. The constraint saying that simple swaps do not decrease the
potential is therefore $\sum_\rho s_\rho \geq 0$.

The value of the variable $s_\rho$ is controlled by the clients
connected to facilities having ratio $\rho$: each client type $i$ has a
contribution to it. In an $S_j$-swap (where $j \in \{1, 2\}$), let
$\delta_{S_j}(i; x, y^1, y^2)$ be the potential change due to all clients of
type-$i$ connected to facilities with ratio $\rho$, in function of $x, y^1$ and $y^2$, respectively the total distance from those clients to the optimal solution, their closest and their second closest facility of the local solution. This difference of
potential is described in Section~\ref{sec:bounds}: we illustrate it
with clients of type \xA, in the far case. We denote \xAF those
clients.  As presented in \ref{subsec:farA}, the $S_1$ and
$S_2$ swaps for those clients show
\begin{align*}
\dSwapHa \leq  &(1+\a\b)\, d^* - (1+\a\b)\, d_1\\
\dSwapHb \leq  &((1+\nicefrac{1}{\rho})(1+\a\b)+\b)\, d^* -((1-\nicefrac{1}{\rho})(1+\a\b)+\a\b)\, d_1
\end{align*}

For bounding $\dSwapHb$, we upper bounded the potential value
of the swap by $d^*+\b d_1$ when $f^*$ is opened. However, we could be more precise:  
it could be the case that $d^* \leq d_1$ or $d_1 \leq \a d^*$. Therefore, this lead to 3 other possible upperbounds, namely $(1+\a\b)d^*$, $d^* + \b d_1$, and $(1+\a\b)d_1$.

This translates to three other inequalities, one for each of those cases:
\begin{align*}
\dSwapHb & \leq
 ((2+\nicefrac{1}{\rho})(1+\a\b))\, d^* -((2-\nicefrac{1}{\rho})(1+\a\b))\, d_1 \tag{when we choose $(1+\a\b)d^*$}\\
\dSwapHb &  \leq
((1+\nicefrac{1}{\rho})(1+\a\b)+1)\, d^* -((2-\nicefrac{1}{\rho})(1+\a\b)-\b)\, d_1 \tag{when we choose $d^* + \b d_1$}\\
\dSwapHb &\leq 
((1+\nicefrac{1}{\rho})(1+\a\b))\, d^* -((1-\nicefrac{1}{\rho})(1+\a\b))\, d_1
\tag{when we choose $(1+\a\b)d_1$}.
\end{align*}
More generally, the LP encodes all possible combinations of variables giving valid bound on the potential after a swap. Note that then number of such inequalities grows exponentially with $q$, because each term $\min(\alpha_j d_1(c), d_j(c))$ doubles the number of valid inequalities. 

Going back to type $\xA$, this gives rise to the constraints 
\begin{align}
\delta_{S_1}(\xAF; x_{\xAF, \rho}, y_{\xAF, \rho}^1, y_{\xAF, \rho}^2)
  &\leq (1+\a\b)x_{\xAF, \rho} - (1+\a\b)y_{\xAF, \rho}^1 \label{ex:uppot1}\\
  \delta_{S_2}(\xAF; x_{\xAF, \rho}, y_{\xAF, \rho}^1, y_{\xAF, \rho}^2) &\leq ((1+\nicefrac{1}{\rho})(1+\a\b)+\b)\, x_{\xAF, \rho} -((1-\nicefrac{1}{\rho})(1+\a\b)+\a\b)\, y_{\xAF, \rho}^1\\
  \delta_{S_2}(\xAF; x_{\xAF, \rho}, y_{\xAF, \rho}^1, y_{\xAF, \rho}^2) &\leq ((2+\nicefrac{1}{\rho})(1+\a\b))\, x_{\xAF, \rho} -((2-\nicefrac{1}{\rho})(1+\a\b))\, y_{\xAF, \rho}^1\\
  \delta_{S_2}(\xAF; x_{\xAF, \rho}, y_{\xAF, \rho}^1, y_{\xAF, \rho}^2) &
  \leq ((1+\nicefrac{1}{\rho})(1+\a\b)+1)\, x_{\xAF, \rho} -((2-\nicefrac{1}{\rho})(1+\a\b)-\b)\, y_{\xAF, \rho}^1\\
  \delta_{S_2}(\xAF; x_{\xAF, \rho}, y_{\xAF, \rho}^1, y_{\xAF, \rho}^2) 
  &\leq ((1+\nicefrac{1}{\rho})(1+\a\b))\, x_{\xAF, \rho} -((1-\nicefrac{1}{\rho})(1+\a\b))\, y_{\xAF, \rho}^1, \label{ex:uppot2}
\end{align}

where the variables $x_{i, \rho}$ denote the total cost of clients of
type $i$ connected to facilities with ratio $\rho$ in the optimal
solution, and $y_{i, \rho}^j$ denote the total distance from those
clients to their $j$-th closest facility in the local solution. 
This definition of $\delta_{S_j}(i)$ yields the following constraint on $s_\rho$ : for $j = 1, 2$, 
\[ \sum_{i\in \clientT}\delta_{S_j}(i; x_{i, \rho},y_{i, \rho}^1,
  y_{i, \rho}^2) \geq s_{\rho}, \] where $\clientT$ is the set of
client types.

Moreover, the triangle inequality gives constraints on the variables
$x_{i, \rho},y_{i, \rho}^1, y_{i, \rho}^2$: for instance, for type
\xAF, we would have
\begin{align}
 y_{\xAF, \rho}^2 \leq x_{\xAF, \rho} +( \nicefrac{1}{\rho}) (\leq x_{\xAF, \rho}+y_{\xAF, \rho}^1). \label{ex:trangleEx}
\end{align}

The constraints due to tree swaps are defined analogously, with a
variable $t_\rho$ being the potential change after applying tree swaps
for all facilities of type $\rho$, and $\delta_{T_j}(i)$ being the
potential change due to all clients of type-$i$ connected to facilities
with ratio $\rho$. 
For $q>2$, we need to consider more than one ratio, so we let $\rho$ be
the vector of size $q-1$ that describes ratio of all two consecutive $\eta_j$ and $\eta_{j+1}$ for all $j\in \{1,...,q-1\}$. 
Let $\cR$ be the set of values of $\rho$ after
discretization: we use $\cR := \{\frac{i}{100} \mid i \in \{0, \ldots,
100\} \}^{q-1}$. In that case, all clients with ratio in $[i \cdot 10^{-2}, (i+1)\cdot 10^{-2})$ are considered to have $\rho = i \cdot 10^{-2}$ for each index.
 This means that our bounds for $\delta$ are  slightly relaxed to cover an interval instead of a precise $\rho$.  
 The $j^{th}$ index of a ratio correspond to $\frac{d(f^*(c),\eta_{j}(f^*(c)))}{d(f^*(c),\eta_{j-1}(f^*(c)))}$.
Let $\cC$ be the set of client types. For $q=2$, each $i\in \cC$ contains
client type $(\xA, \xB, \xC, \xD, \xE)$, ratio $\rho$, underlying form of tree-graph (e.g. $f_1$ and $f_2$ belong to same tree in $\eta_1$-swap),
and whether $d_2 \leq \a d_1$ or not.
For $j\in \{2,...,q\}$, let $\cC_j$ be the set of clients with $d_j \leq \a_j d_1$.
For $q=2$, $\cC_2$ is the set of close clients (i.e., $d_2 \leq \a d_1$).
Let $C^{\rho}$ denote set of clients with ratio $\rho$.
The general structure of the LP is the following:

\begin{align}
\max & \sum_{i\in \cC} y^1_i\\
\text{s.t. }& \sum_{i\in \cC} x_i = 1\\
& y^j_i \leq \a_j y^{1}_i & \forall j\in\{2,\ldots,q\}, i\in\cC_{j} \label{cons:close}\\
& y^j_i \geq \a_j y^{1}_i & \forall j\in\{2,\ldots,q\}, i\not\in\cC_{j} \label{cons:far}\\
& \sum_{i\in\cC^\rho}\delta_{S_j}(i;x_i,y_i^1,\ldots,y_i^q) \geq s_{\rho} & \forall \rho\in \cR, j\in \{1,\ldots,q\} \label{cons:simplecons}\\
& \sum_{i\in\cC^\rho}\delta_{T_j}(i;x_i,y_i^1,\ldots,y_i^q) \geq t_{\rho} & \forall \rho\in \cR, j\in \{1,\ldots,q\} \label{cons:treecons}\\
& \sum_{\rho\in \cR} s_\rho \geq 0 \label{cons:simpleSum}\\
& \sum_{\rho\in \cR} t_\rho \geq 0 \label{cons:treeSum}\\
& \text{Triangle-inequalities} \label{cons:triangle}\\
& \delta_{S_j}(i;x_i,y_i^1,\ldots,y_i^q)  \leq \text{enumerated-upperbounds} & \forall i\in \cC, j\in \{2,...,q\}\label{cons:simpleBestof}\\
& \delta_{T_j}(i;x_i,y_i^1,\ldots,y_i^q)  \leq \text{enumerated-upperbounds} & \forall i\in \cC, j\in \{2,...,q\}\label{cons:treeBestof}\\
& y^j_i \geq 0,\quad x_i \geq 0 & \forall i\in \cC, j\in \{1,\ldots,q\}
\end{align}

Note that $\sum_{i\in \cC} y^1_i / \sum_{i\in\cC}x_i = \sum_{i\in\cC} y^1_i$ is the locality gap.
Constraints \eqref{cons:close} and \eqref{cons:far} restrict each distance based on whether they are `far' client or `close' client.
Constraints \eqref{cons:simplecons} can be seen as the following: for each ratio $\rho$ 
we pick $j\in \{1,...,q\}$ that minimizes the sum of potential difference 
after performing $S_j$ swap, then make $\tau(f^*) = \eta_j$ for all $f^*$ with ratio $\rho$.  Similarly, \eqref{cons:treecons} chooses $\tau(\cdot)$ 
for tree swaps. Then \eqref{cons:simpleSum} and \eqref{cons:treeSum} ensure
the potential difference is non-negative after performing simple swap and tree swap respectively.
We also add triangle inequalities (e.g., \eqref{ex:trangleEx}). Lastly, we add upperbounds
for each potential difference in
\eqref{cons:simpleBestof} and \eqref{cons:treeBestof} (e.g., \eqref{ex:uppot1} - \eqref{ex:uppot2}).

Implementing this approach, and then
solving the resulting LP for for potential $\Phi_2$ and $\Phi_3$ gives
us the following numbers:
\begin{center}
\begin{tabular}{ | c | c | }
\hline
    \rowcolor{black!90} \textcolor{white}{Potential} & \textcolor{white}{Bound} \\ 
    \hline
     $\Phi_2$ & $2.7786$\\
     \hline
     $\Phi_3$ & $2.6861$\\
     \hline
\end{tabular}
\end{center}
For $\Phi_2$, the LP finds that taking $\a=3, \b=0.2$ yields the best result, whereas for
$\Phi_3$ we set manually 
$\a=2.5, \b=0.3, \b_2 = \b\cdot 0.34$.
As always, we get an additive $\eps$ term because of the defiant
swaps. However, let us emphasize that these implementations should be considered
preliminary, since they have not been formally verified. We hope that 
formal proofs of these results can be given in the near future.


\appendix

\newcommand{\tinygap}{\e'}
\section{Locality Gap for Potential \texorpdfstring{$\Phi_2$}{[Phi2]}}
\label{sec:lb}
\label{sec:phi2lb}

In this section, we give lower bounds on the locality gap, and prove
Theorem~\ref{thm:lb}. We show locality gap examples of $\max\{2,\a\}$,
$3-2\b$, and $1+4\b$ for the potential function $\Phi_2$. Putting these
together, the locality gap is
$\min_{\b \in [0,1],\a\in[1,\infty)} \max\{3-2\b, 1+4\b\},\max\{2,\a\}$. 
Note $\max\{3-2\b,1+4\b\}$ is $2\frac13$, when
we set $\b = \nicefrac{1}{3}$.  Therefore we show a locality gap of $2$.


In this section, we show a locality gap of $2$ for $\Phi_2$.
We divide the cases into three main cases:
\begin{OneLiners}
\item When $\a \leq 2$
\item When $\a > 2$ and $\b \leq 1/3$
\item When $\a >2$ and $\b > 1/3$
\end{OneLiners}

We mainly use two types of example that we call
``bi-clique'' and ``double-bi-clique'' described in \Cref{fig:LB3-2b} \Cref{fig:LB1+4b} respectively.
In bi-clique we have $k+\extras$ local facilities on the right, where  $\extras=O(1)$ is the number of extra local facilities, and
$k$ optimal facilities are on the left.  There is a client between
every (local, optimal) facility pair, at unit distance from the
optimal facility, and at distance $d$ from the local
facility. 

\begin{figure}[H]
\center
\begin{tikzpicture}[scale=0.7, 
		opt/.style={shape=regular polygon,regular polygon sides = 3,draw=black,minimum size=0.5cm,inner sep = 0pt},
		local/.style={shape=rectangle,draw=black,minimum size=0.4cm},
		client/.style={shape=circle,draw=black,minimum size=0.1cm,inner sep = 0pt,fill=black}]
		
	\pgfmathtruncatemacro{\N}{2}
	\pgfmathtruncatemacro{\Nm}{\N-1}	
	
	\pgfmathsetmacro{\LEFT}{0}
	\pgfmathsetmacro{\RIGHT}{\LEFT+5}
	\pgfmathsetmacro{\CLIENT}{\LEFT+1.25}

	\foreach \x in {0,...,\N}{
		\ifthenelse{\x=1 \OR \x=2}{
		\node [opt]  (opt\x) at (\LEFT,\x) {};
		\node[local] (loc\x) at (\RIGHT,\x) {};
		}{
		\node [opt]  (opt\x) at (\LEFT,\x) {};
		\node[local] (loc\x) at (\RIGHT,\x) {};
		}
		\draw (loc\x) edge[-] (opt\x);
		\node[client] (c\x\x) at (\CLIENT,\x){};
	}

	\foreach \x in {0,...,\Nm}
		{\pgfmathtruncatemacro{\label}{\x+1}
		\foreach \y in {\label,...,\N}{
			\draw (loc\x) edge[-]  (opt\y);	
			\draw (loc\y) edge[-]  (opt\x);
			\pgfmathsetmacro{\f}{\x+0.25*(\y-\x)}
			\node[client] (c\x\y) at(\CLIENT,\f) {};
			\pgfmathsetmacro{\g}{\y+0.25*(\x-\y)}
			\node[client] (c\y\x) at (\CLIENT,\g) {};
		}	
	}
	
	\node[opt] (optZ) at (\LEFT,-2) {};
	\node[local] (locZ) at (\RIGHT,-2) {};	
	\draw (locZ) edge[] (optZ) ;
	\node[client] (cZZ) at (\CLIENT,-2){};
	
	\foreach \x in {0,...,\N}{
		\draw (locZ) edge[-] (opt\x);
		\draw (loc\x) edge[-]  (optZ);
		\pgfmathsetmacro{\f}{-1.5+0.25*\x}
		\node[client] (cZ\x) at (\CLIENT,\f){};
		\pgfmathsetmacro{\g}{\x-0.25*(\x+2)}
		\node[client] (cY\x) at (\CLIENT,\g){};
	}

	\node () at (\LEFT,-1) {$\vdots$};
	\node () at (\RIGHT,-1) {$\vdots$};

	\pgfmathsetmacro{\lCLIENT}{\CLIENT-0.02}
	\pgfmathsetmacro{\rCLIENT}{\CLIENT+0.02}
		
	\draw[decoration={brace,mirror,raise=5pt},decorate] 
		(\LEFT,-2.1) -- node[below=7pt] {$1$} (\lCLIENT,-2.1);
	\draw[decoration={brace,mirror,raise=5pt},decorate] 
		(\rCLIENT,-2.1) -- node[below=7pt] {$d$} (\RIGHT,-2.1);
\end{tikzpicture}

\caption{An illustration of the bi-clique example with $\extras=0$.} 
\label{fig:LB3-2b}
\end{figure}

In double-bi-clique, we have two back-to-back bi-cliques as in \Cref{fig:LB1+4b}.
Each bi-clique is constructed the same way
as Figure~\ref{fig:LB3-2b} except the number of facilities are halved.
Consider a client $c$ with an edge going into $f_i$,
create an edge at distance $d$ between $c$ and $i^{th}$ local facility in the 
other bi-clique. 
Now every client has an optimal facility at distance 1,
and two local facilities at distance $d$.

\begin{figure}[H]
\center
\begin{tikzpicture}[scale=1.2, 
		opt/.style={shape=regular polygon,regular polygon sides = 3,draw=black,minimum size=0.5cm,inner sep = 0pt},
		local/.style={shape=rectangle,draw=black,minimum size=0.4cm},
		client/.style={shape=circle,draw=black,minimum size=0.2cm,inner sep = 0pt,fill=black}]
		\pgfmathtruncatemacro{\N}{1}
	\pgfmathtruncatemacro{\Nm}{\N-1}	
	
	\pgfmathsetmacro{\OLEFT}{1}
	\pgfmathsetmacro{\ORIGHT}{\OLEFT+3}
	\pgfmathsetmacro{\OCLIENT}{\OLEFT+(\ORIGHT-\OLEFT)/4}

	\foreach \x in {0,...,\N}{
		\node [opt]  (opt\x) at (\OLEFT,\x) {};
		\node[local] (loc\x) at (\ORIGHT,\x) {};
		\draw (loc\x) edge[-] (opt\x);
		\node[client] (c\x\x) at (\OCLIENT,\x){};
	}

	\foreach \x in {0,...,\Nm}
		{\pgfmathtruncatemacro{\label}{\x+1}
		\foreach \y in {\label,...,\N}{
			\draw (loc\x) edge[-]  (opt\y);	
			\draw (loc\y) edge[-]  (opt\x);
			\pgfmathsetmacro{\f}{\x+0.25*(\y-\x)}
			\node[client] (c\x\y) at(\OCLIENT,\f) {};
			\pgfmathsetmacro{\g}{\y+0.25*(\x-\y)}
			\node[client] (c\y\x) at (\OCLIENT,\g) {};
		}	
	}
	
	


	\pgfmathsetmacro{\lCLIENT}{\OCLIENT-0.02}
	\pgfmathsetmacro{\rCLIENT}{\OCLIENT+0.02}
	\pgfmathsetmacro{\BRACEHEIGHT}{-0.1}
		
	\draw[decoration={brace,mirror,raise=5pt},decorate] 
		(\OLEFT,\BRACEHEIGHT) -- node[below=7pt] {$1$} (\lCLIENT,\BRACEHEIGHT);
	\draw[decoration={brace,mirror,raise=5pt},decorate] 
		(\rCLIENT,\BRACEHEIGHT) -- node[below=7pt] {$d$} (\ORIGHT,\BRACEHEIGHT);
	
	
	\pgfmathtruncatemacro{\N}{1}
	\pgfmathtruncatemacro{\Nm}{\N-1}	
	
	\pgfmathsetmacro{\LEFT}{4.2}
	\pgfmathsetmacro{\RIGHT}{\LEFT+3}
	\pgfmathsetmacro{\CLIENT}{\RIGHT-(\RIGHT-\LEFT)/4}
	\pgfmathsetmacro{\ABOVE}{0.5}
	
	\foreach \tx in {0,...,\N}{
		\pgfmathsetmacro{\x}{\tx+\ABOVE}
		\node [opt]  (2opt\tx) at (\RIGHT,\x) {};
		\node[local] (2loc\tx) at (\LEFT,\x) {};
		\draw (2loc\tx) edge[-] (2opt\tx);
	}
	\foreach \tx in {0,...,\N}
		{\pgfmathsetmacro{\x}{\tx+\ABOVE}
		\node[client] (2c\tx\tx) at (\CLIENT,\x){};
	}
	
	\foreach \tx in {0,...,\Nm}
		{\pgfmathsetmacro{\x}{\tx+\ABOVE}
		\pgfmathtruncatemacro{\label}{\tx+1}
		\foreach \ty in {\label,...,\N}{
			\pgfmathsetmacro{\y}{\ty+\ABOVE}
			\draw (2loc\tx) edge[-]  (2opt\ty);	
			\draw (2loc\ty) edge[-]  (2opt\tx);
			\pgfmathsetmacro{\f}{\x+0.25*(\y-\x)}
			\node[client] (2c\tx\ty) at(\CLIENT,\f) {};
			\pgfmathsetmacro{\g}{\y+0.25*(\x-\y)}
			\node[client] (2c\ty\tx) at (\CLIENT,\g) {};
		}	
	}
	


	\pgfmathsetmacro{\HEIGHT}{-1+\ABOVE}
	
	
	\pgfmathsetmacro{\lCLIENT}{\CLIENT-0.02}
	\pgfmathsetmacro{\rCLIENT}{\CLIENT+0.02}
	\pgfmathsetmacro{\HEIGHT}{\N+\ABOVE+0.1}
	\pgfmathsetmacro{\HEIGHTT}{\N+0.1}	
	\pgfmathsetmacro{\LEFTT}{\LEFT-0.05}		
	\pgfmathsetmacro{\ORIGHTT}{\ORIGHT+0.05}		
	\pgfmathsetmacro{\BRACEHEIGHTT}{\BRACEHEIGHT+\ABOVE}

	\draw[decoration={brace,raise=5pt},decorate] 
		(\LEFT,\HEIGHT) -- node[above=7pt] {$d$} (\lCLIENT,\HEIGHT);
	\draw[decoration={brace,raise=5pt},decorate] 
		(\rCLIENT,\HEIGHT) -- node[above=7pt] {$1$} (\RIGHT,\HEIGHT);

	\draw[decoration={brace,raise=5pt},decorate] 
		(\OCLIENT,\HEIGHTT) -- node[above=7pt] {$d$} (\LEFTT,\HEIGHT);
		
	\draw[decoration={brace,raise=5pt,mirror},decorate] 
		(\ORIGHTT,\BRACEHEIGHT) -- node[below=9pt] {$d$} (\CLIENT,\BRACEHEIGHTT);

	\foreach \x in {0,...,\N}{
		\foreach \y in {0,...,\N}{
			\draw (c\x\y) edge[dashed,gray] (2loc\y);	
			\draw (2c\x\y) edge[dashed,gray] (loc\y);	
		}
	}
\end{tikzpicture}
\caption{The double-bi-clique example with $k=4$ and $\extras=0$.}
\label{fig:LB1+4b}
\end{figure}

For all cases we calculate the potential difference
after performing a swap of size $p$.
There are mainly 4 different types of clients.
\begin{OneLiners}
\item $\cC_o$: the set of clients with their $f^*$ opened.
\item $\cC_1$: Clients with their $f_1$ closed and  $f^*$ not opened 
\item $\cC_2$: Clients with its $f_2$ closed and  $f^*$ not opened 
\item $\cC_3$: Clients with its $f^*$ closed, $f_1$ opened, and $f_2$ opened.
\end{OneLiners}

We use $c_o$, $c_1$, $c_2$, $c_3$ to denote generic
client for sets $\cC_o$, $\cC_1$, $\cC_2$, and $\cC_3$
respectively. We first calculate potential difference
for each client type, then sum them over.

We assume there is no client with their $f_1$ and $f_2$ both closed:
those clients can only hurt the quality of the
solution, and given a swap that closes $f_1$ and $f_2$ of some clients it is easy to construct a strictly better set of swaps with no such client. 

\subsection{When \texorpdfstring{$\a \leq 2$}{[alpha <= 2]}}
\label{sec:smalla-example}
We first give lower bound examples when $\a \leq 2$. 
We divide the case further into two cases:
when $\a \leq 4/3 + \nicefrac{1}{(3\b)}$ and when $\a > 4/3 + \nicefrac{1}{(3\b)}$.

\paragraph*{Subcase I: $\a \leq 4/3 + \nicefrac{1}{(3\b)}$.}
We create a bi-clique presented in \Cref{fig:LB3-2b} with $d=
2-\tinygap$, where $\tinygap \approx O(1/k)$ is a small quantity to be specified later.
Note that every client has $k+\extras-1$ local facilities at distance $2+d$,
thus the second closest facility is never closed for any client.
Then for each client we get the potential differences:
\begin{align*}
\Delta\Phi^{c_o} & =  1+\a\b - (2-\tinygap) - (2-\tinygap)\a\b = -1-\a\b+\tinygap + \tinygap\a\b\\
\Delta\Phi^{c_1} & = (4-\tinygap)+(4-\tinygap)\b - (2-\tinygap) - (2-\tinygap)\a\b \geq 2+4\b - 2\a\b\\
\Delta\Phi^{c_2} & = 0\\
\Delta\Phi^{c_3} & = 0
\end{align*}

Note $|\cC_o| = p(k+\extras)$ and $|\cC_1| = p(k-p)$.

Summing up gives
\begin{align*}
\sum_{c\in \cC} \Delta \Phi^c & \geq p(k+\extras)(-1-\a\b+\tinygap+\tinygap\a\b) + p(k-p)(2+4\b-2\a\b)\\
& \geq pk(1+4\b - 3\a\b) + pk\tinygap + \extras p(-1-\a\b) -p^2(2+4\b-2\a\b)\\
& \geq pk(1+4\b - 3\a\b) \geq 0 \tag{for $\a\leq 4/3+\nicefrac{1}{(3\b)}$}
\end{align*}

The second inequality holds for any $\tinygap \geq \frac{p^2(2+4\b-2\a\b)+\extras p(1+\a\b)}{pk} = O(1/k)$. Hence, this example shows a locality gap of $2 - o(1)$ when $\a \leq \min(2, 4/3 + \nicefrac{1}{(3\beta)})$.

\paragraph*{Subcase II: $\a > 4/3+\nicefrac{1}{(3\b)}$. }
Since we focus on $\a \leq 2$ and $\b \leq 1$, this subcase implies that $\b > 1/2$ and $\a > 5/3$.
To deal with it, we create a double-bi-clique presented in \Cref{fig:LB1+4b}  with $d = 2$.
Note that every client has two local facilities at distance $2$, and $k+\extras-2$ facilities at distance
$4$.
Then for each client, we get the following potential differences:
\begin{align*}
\Delta\Phi^{c_o} & =  1+\a\b - 2- 2\b = -1+\a\b-2\b \\
\Delta\Phi^{c_1} & = 2+2\a\b - 2 - 2\b \geq 2\a\b-2\b\\
\Delta\Phi^{c_2} & = 2+2\a\b - 2 - 2\b \geq  2\a\b-2\b\\
\Delta\Phi^{c_3} & = 0
\end{align*}

Let $p_1$ and $p_2$ be the number of optimal facilities in the first clique that belong to the swap.
Let $p_2$ be the number of optimal facilities in the second clique that belong to the swap.
Then we have $|\cC_o | = \frac{k+r}{2}p_1 + \frac{k+r}{2}p_2 = \frac{k+r}{2}p$.
Similarly let $p'_1$ and $p'_2$ be the number of local facilities in the first and second clique that belong to the swap. Then we have $|\cC_1| = p'_1(\frac{k}{2}-p_1) + p'_2(\frac{k}{2}-p_2)\geq
(\frac{k}{2}-p)p$. Also note that $|\cC_2| \geq (\frac{k}{2}-p)p$.
Summing up gives
\begin{align*}
\sum_{c\in \cC} \Delta \Phi^c & \geq \nicefrac{p}{2}(k+\extras)(-1+\a\b-2\b) +2 p(\nicefrac{k}{2}-p)(2\a\b-2\b)\\
& \geq p\nicefrac{k}{2}(-1+5\a\b-6\b) - \nicefrac{\extras p}{2} (1+2\b-\a\b)- 2p^2(2\a\b-2\b)\\
& \geq pk(1/12)  - \nicefrac{\extras}{2} p(1+2\b-\a\b)- 2p^2(2\a\b-2\b) \geq 0 \tag{for $\a> 5/3$ and $\b > 1/2$.}
\end{align*}

The last inequality holds for $\extras \leq \frac{k}{24(1+2\b-\a\b)}$ and $p \leq \frac{k}{48(2\a\b-2\b)}$. 
Since $\extras$ and $p$ are absolute constant (i.e, $o(k)$), the inequality is valid for big enough $k$. Hence, this example shows a locality gap of $2$ when $4/3 + \nicefrac{1}{3\beta} \leq \a \leq 2$, in particular when $\b > \nicefrac12$ and $2 \geq \a \geq \nicefrac53$ This concludes therefore the case $\a \leq 2$.

\subsection{When \texorpdfstring{$\b \leq 1/2$ and $\a >2$}{[beta <=
    1/2 and alpha > 2]}}

\label{sec:3-2b-example}

In this section we give a bi-clique example showing a locality
gap when $\b\leq 1/3$ and $\a>2$. for constant-sized swap.
Consider the bi-clique graph in \Cref{fig:LB3-2b}
with distance $d = \min\{3-2\b-\tinygap, \a\}$.
We divide the case into two subcases. 
In first case we consider when $3-2\b \leq \a$.
Then we consider when $3-2\b > \a$.


\paragraph*{Subcase I: $3-2\b \leq \a$.}
We will first consider the case when $3-2\b \leq \a$. In that case, the current potential value of a client in the local solution is
$\Phi^c(F) =  (3-2\b-\tinygap) + \b(5-2\b-\tinygap)$ (since $\a > 2 \geq \frac{5-2\b}{3-2\b}$ for $\b \leq \nicefrac{1}{2}$).

%

For the $p(k+\extras)$ clients in $\cC_o$, where $k+\extras$ is the number of local facilities, if $3-2\b-\tinygap \leq \a$ we get the following potential difference:
\[
	\Delta\Phi^{c_o} \geq 1+\b(3-2\b-\tinygap)  - (3-2\b-\tinygap) - \b(5-2\b-\tinygap) = -2+\tinygap
\]
Note that if a client's $f^*$ is opened but its $f_1$ is closed, the client contributes $1+\b(5-2\b-\tinygap) \geq 1+\b(3-2\b-\tinygap)$, and hence the above inequality is still valid for those clients.
 
There are $p(k-p)$ clients in $\cC_1$, and they induce the following potential difference:
\[
	\Delta\Phi^{c_1} = (5-2\b-\tinygap) + \b(5-2\b-\tinygap) - (3-2\b-\tinygap) - \b(5-2\b-\tinygap) = 2
\]
Finally, clients in $\cC_3$  and $\cC_4$ do not induce a change in the potential value. 

%
The sum over all clients yields
\begin{align*}
	\sum_{c\in \cC}  \Delta\Phi^c & \geq p(k+\extras)(\tinygap-2) + 2p(k-p) \\
	& = pk(0) + p(k+\extras)(\tinygap) - 2(p\extras) -2p^2 \geq 0.
\end{align*}
The last inequality holds for any $\tinygap \geq \frac{2pr+2p^2}{p(k+\extras)} = O(1/k)$.

Hence, in the case where $\a \geq 2,~\b \leq \nicefrac{1}{2}$ and $3 - 2\b \leq \a$,  this example shows a locality gap of $3-2\b - o(1) \geq 2$.

\paragraph*{Case II: $2 \leq \a \leq 3-2\b$.} When $2\leq \a \leq 3-2\b$, client's closest distance is now $\a$.
Thus the potential value of a client before any swap is $\Phi^c(F) = \a + (2+\a)\b$. Note that $2+\a  \leq \a^2$ for $\a \geq 2$. 
We have the same number of clients in each set. Furthermore, we get the potential differences:
\begin{align*}
\Delta\Phi^{c_o} & \geq 1+\a\b - \a - (2+\a)\b \geq  1-\a -2\b\\
\Delta\Phi^{c_1} & = (2+\a) + (2+\a)\b - \a - (2+\a)\b = 2\\
\Delta\Phi^{c_2} & = 0\\
\Delta\Phi^{c_3} & = 0
\end{align*}
The sum over all clients yields
\begin{align*}
	\sum_{c\in \cC}\Delta\Phi^c & \geq  p(k+\extras)(1-\a-2\b) + p(k-p)(2)\\
	& \geq pk(3-2\b-\a) + p\extras (1-\a-2\b) -2p^2 \\
	& \geq pk\tinygap -p\extras(\a+2\b-1) -2p^2 \geq 0.\tag{$\a \leq 3-2\b-\tinygap$}
\end{align*}
The last inequality holds for any $\tinygap \geq \frac{p\extras (\a-1-2\b)+2p^2}{pk} = O(1/k)$.

Hence, in the case where $\a \leq 3 - 2\b$, $\a \geq 2$ and $\b \leq \nicefrac{1}{3}$, this example shows a locality gap of $\a \geq 2$.

\subsection{When \texorpdfstring{$\b > \haff$ and $\a > 2$}{[beta >
    1/2 and alpha > 2]}}

Finally we give lower bound examples when $\b > \nicefrac12$ and $\a >2$.
We use double-bi-clique in described \Cref{fig:LB1+4b} with $d=\min\{1+4\b-\tinygap, \a\}$.

\paragraph*{Subcase I: $1+4\b \leq \a$.}
Here, the current potential function value for a client is $\Phi^c(F) =
(1+4\b-\tinygap) + \b(1+4\b-\tinygap)$.


There are $p(\frac{k+\extras}{2})$ clients in $\cC_o$, and the potential difference for a client $c_o\in \cC_o$ is
\[
\Delta\Phi^{c_o} = 1+\b(1+4\b-\tinygap) - (1+\b)(1+4\b-\tinygap) = -4\b + \tinygap.
\]

There are at least $p(\frac{k}{2}-p)$ clients in $\cC_1$. Recall $\a > 2 \geq \frac{3+4\b-\tinygap}{1+4\b-\tinygap}$ for $\b > \nicefrac12$. The potential difference for $c_1 \in \cC_1$ is
\[
	\Delta\Phi^{c_1} = (1+4\b-\tinygap)+\b(3+4\b-\tinygap) - (1+\b)(1+4\b-\tinygap) = 2\b
\]

%
%

There are $p(\frac{k}{2}-p)$ clients in $\cC_2$, and they get the same swap value as clients in $\cC_1$.
Clients in $\cC_3$ do not induce any change in the potential.

Then sum over all clients yields
\begin{align*}
 \sum_{c\in \cC}\Delta\Phi^c \geq -4\b p\left(\frac{k+\extras}{2}\right) + \tinygap p\left(\frac{k+\extras}{2}\right) + 4\b\left(\frac{pk}{2}-p^2\right) = \tinygap\left(\frac{p(k+\extras)}{2}\right)- 4\b p\left(p+\nicefrac{\extras}{2}\right) \geq 0
\end{align*}
The difference in potential function is therefore positive for all $\tinygap >
\frac{4\b\extras+8\b p}{(k+\extras)} = O\left(\frac{1}{k}\right)$.  Hence, this example shows a locality gap of $1+4\b -o(1) \geq 2 - o(1)$ when $\a \geq \max(2, 1+4\b)$ and $\b \geq \nicefrac 12$.

\paragraph*{Case II: $2\leq \a \leq 1+4\b$.} When $\a \leq 1+4\b$, clients'  closest and the second closest local facilities are both at distance $\a$.
Thus the current potential value for a client  is $\Phi^c(F) = \a + \a\b$.
We have the same number of clients in each set. We get the following potential differences:
\begin{align*}
\Delta\Phi^{c_o} & = 1+\a\b -\a -\a\b \geq 1-\a \\
\Delta\Phi^{c_1} & = \a + (2+\a)\b - \a - \a\b = 2\b \tag{Note $2\leq \a$ implies $(2+\a) \leq \a^2$.}\\
\Delta\Phi^{c_2} & = \a + (2+\a)\b - \a - \a\b = 2\b\\
\Delta\Phi^{c_3} & = 0
\end{align*}
The sum over all clients yields
\begin{align*}
	\sum_{c\in \cC}\Delta\Phi^c  & \geq p\left(\frac{k+\extras}{2}\right )(1-\a) +  2(\nicefrac{pk}{2}-p^2)(2\b)\\
	& \geq \frac{pk}{2}(1-\a+4\b)  - \frac{p\extras}{2} (\a-1)-2p^2(2\b )\\
	& \geq \frac{pk}{2}(\tinygap)   - \frac{p\extras}{2} (\a-1)-2p^2(2\b ) \geq 0  \tag{$\a \leq 1+4\b -\tinygap$}
\end{align*}
The last inequality holds for any $\tinygap \geq \frac{p\extras (\a-1)+8p^2\b}{pk} = O(1/k)$.
Hence, this example shows a locality gap of $\a -o(1) \geq 2 - o(1)$ when $2 \leq \a \leq 1+4\b$ and $\b \geq \nicefrac 12$.
\newcommand{\lgap}{x}


\section{Motivating our Swaps}
\label{sec:fig-examples}

In this section we present examples that motivate our choice of
potential function and our swaps. In particular they show that the
swap structures defined in previous works are not powerful enough to
prove our results. 

\begin{figure}[H]
\centering
\includegraphics[scale=0.5]{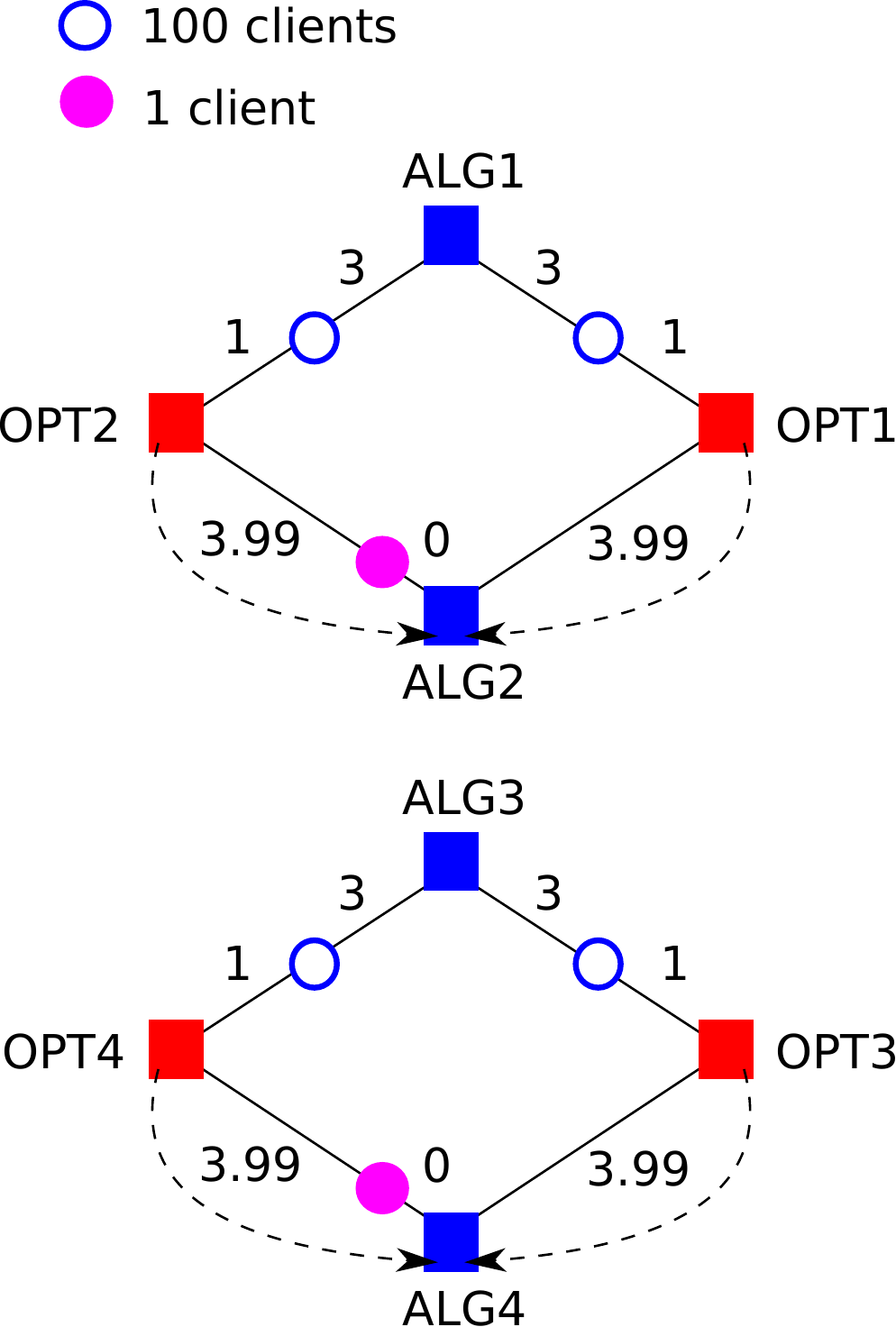}
\caption{A bad scenario for the swap structures defined
  by~\cite{Gupta2008SimplerLS}. 
}
\label{fig:badcasesingleGupta}
\end{figure}

The analysis in~\cite{Gupta2008SimplerLS} matches each optimal facility to its
closest local facility. So it matches both OPT1, OPT2 to ALG2 and both OPT3, OPT4 to
ALG4, leaving ALG1 and ALG3 with no facility of $\Opt$ matched to
them. Hence, two swaps are defined: (1) swapping in OPT1 and OPT2 and
removing ALG2 and ALG$j$ for some $j \in \{1,3\}$ that remains
unspecified in their analysis, and (2) swapping in OPT3 and OPT4 and
removing ALG4 and ALG(4-$j$).  Now, if we consider the swaps defined
by choosing $j = 3$, the set of equations obtained does not allow us
to deduce that the solution is not a local optimum, as long as
$\alpha > 5/3$ and for any $\beta < 1$.

\begin{figure}[H]
\centering
\includegraphics[scale=0.5]{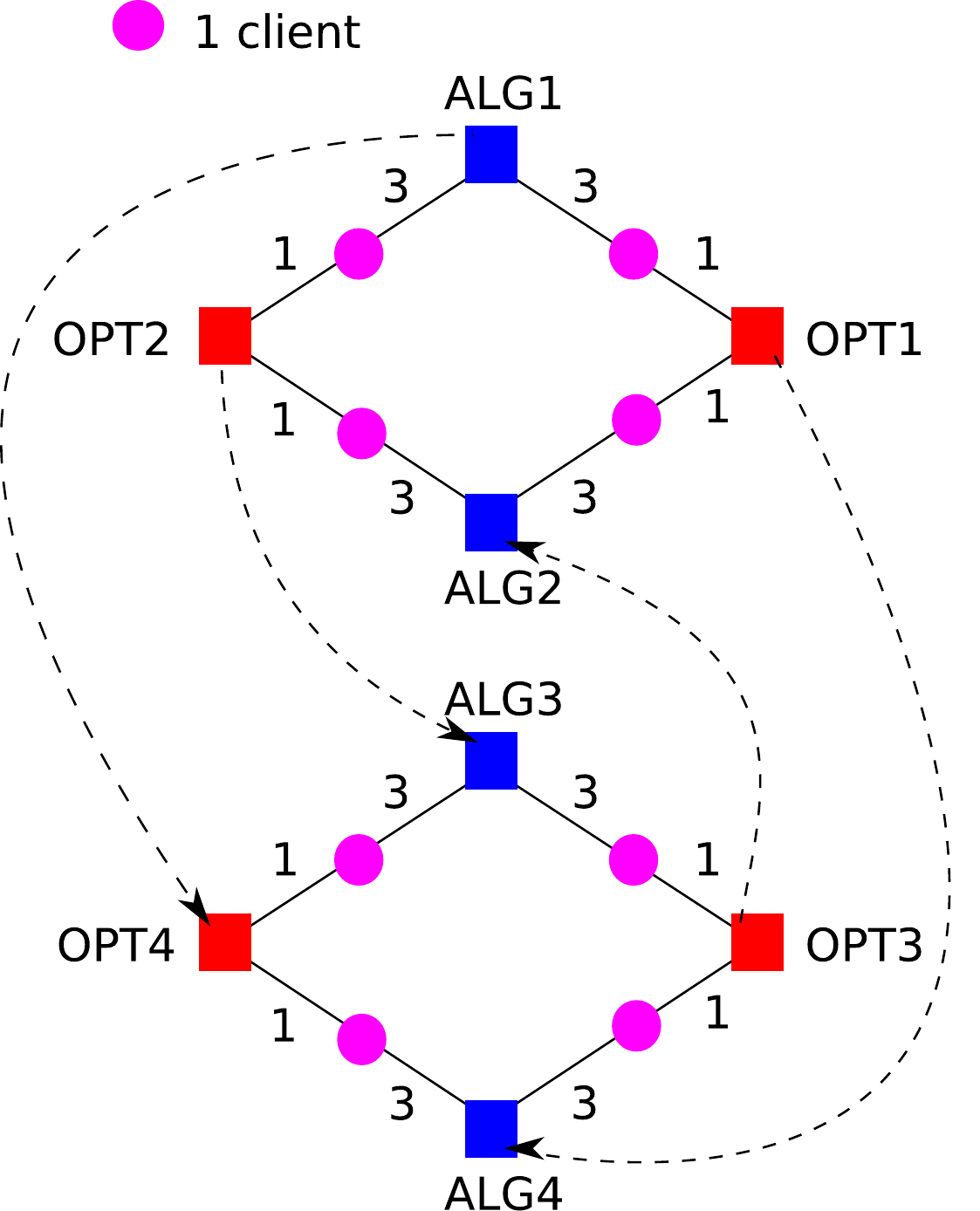}
\caption{A bad scenario for the swap structures defined by Arya et al.~\cite{Arya2001LocalSearch}.}          
\label{fig:badcasesingleArya}
\end{figure}

The definition of the swap structure in~\cite{Arya2001LocalSearch}
does not uniquely identify which local facility is matched to which
optimal facility. Hence, if the analysis matches ALG1 with OPT4, ALG2
with OPT3, ALG3 with OPT2 and ALG4 with OPT1, the set of linear
equations obtained does not allow us to deduce that the instance is not a
local optimum.


%

\section{Useful Inequalities}\label{ap:proofUseful}

In this section, we prove the inequalities in Table~\ref{tab:ineqs}. 
We also give some more inequalities in Table~\ref{tab:apx:ineqs};
these will be used in \S\ref{apx:bounds}.

{\small
\begin{table}[h]
\begin{center}
\begin{tabular}{| c l| c | }
\hline
    \rowcolor{black!90} \textcolor{white}{Bound}& & \textcolor{white}{Coundition} \\ 
     \hline
    $d_2 \geq (\nicefrac{1}{\rho})\, d_1 - (1+\nicefrac{1}{\rho})\, d^*$ &$\inlineeqnum\label{eqn:apx:lb3} $ & $\eta_1(f^*) = f_1$\\
     \hline
  	$d_2 \leq d^* + (\nicefrac{1}{\rho})(d_1+d^*)$ & $\inlineeqnum\label{eqn:apx:ub3}$ & $\eta_1(f^*) = f_1$\\
     \hline
    $d(c,\eta_2(f^*)) \leq d^* + \nicefrac{1}{\rho}(d^*+d_1)$ & $\inlineeqnum\label{eqn:apx:eta2rho}$ & $\eta_1(f^*) = f_1$\\
     \hline
     $d_2 \leq d^* + \rho(d_1+d^*)$ & $\inlineeqnum\label{eqn:apx:ub4}$  & $\eta_1(f^*) \neq f_1$\\
     \hline
    $d(c,\eta_1(f^*))\leq d^*+\rho(d^*+d_1)\leq 2d^*+d_1$ & $\inlineeqnum\label{eqn:apx:eta1}$  & $\eta_1(f^*)\neq f_1$\\
    \hline
   $d(c,\eta_2(f^*)) \leq 2d^*+d_1$ & $\inlineeqnum\label{eqn:apx:eta2}$  & $\eta_1(f^*) \neq f_1$\\
     \hline
    $d(c,\pi(f_2)) \leq 2d_2+d^*$ & $\inlineeqnum\label{eqn:pif2}$ & $\eta_1(f^*) \neq f_1$\\
     \hline
\end{tabular}
\end{center}
\caption{More useful inequalities}\label{tab:apx:ineqs}
\end{table}
}


For clients $c$ with $\eta_1(f^*) = f_1$:
\begin{flalign*}
&& d_2  & \geq \ts  d(f^*,\eta_2) - d^*
 = \frac{1}{\rho}d(f^*,\eta_1) - d^*
  \geq (\nicefrac{1}{\rho})\, d_1 - (1+\nicefrac{1}{\rho})\, d^*, 
       && \text{(proving \eqref{eqn:apx:lb3})} \\
&& d_2  & \ts \leq  d^* + d(f^*,\eta_2) = d^* +
\frac{1}{\rho}d(f^*,\eta_1) \leq d^* + \frac{1}{\rho}(d_1 + d^*). 
&& \text{(proving \eqref{eqn:apx:ub3})} \\
&& d(c,\eta_2) &\leq d(c,f^*) + d(f^*,\eta_2) = d^* +
(\nicefrac{1}{\rho})
d(f^*,f_1)\leq d^* +\nicefrac{1}{\rho}(d^*+ d_1).        && \text{(proving \eqref{eqn:apx:eta2rho})}
     \end{flalign*}
Else when $\eta_1(f^*) \neq f_1$:
\begin{flalign*}
&& d(c,\eta_1) &\leq d(c,f^*) + d(f^*,\eta_1) = d(c,f^*) +
\rho\, d(f^*,\eta_2) \leq d^* + \rho (d^* + d_1) && \text{(proving \eqref{eqn:apx:eta1})}\\
&& d(c,\eta_2) &\leq d(c,f^*) + d(f^*,\eta_2)\leq d^* + (d^*+d_1). &&
\text{(proving \eqref{eqn:apx:eta2})} \\
&& d_2  & \leq d(c,\eta_1) \stackrel{\eqref{eqn:apx:eta1}}{\leq}
d^* + \rho(d_1 + d^*), && \text{(proving 
  \eqref{eqn:apx:ub4} and \eqref{eqn:ub5})}
\end{flalign*}
Combining \eqref{eqn:apx:eta1} and \eqref{eqn:apx:eta2} gives \eqref{eqn:ubeta}.


Recalling that $\pi(f)$ is the closest optimal facility to $f$, we get
for any client $c$,
\begin{flalign*}
&& d(c,\pi(f_1))  & \leq d(c,f_1) + d(f_1, \pi(f_1)) \leq d_1 + d(f_1,
f^*) \leq 2d_1 + d^* && \text{(proving \eqref{eqn:ubpartner})} \\
&& d(c,\pi(f_2))  & \leq d(c,f_2) + d(f_2, \pi(f_2)) \leq d_2 + d(f_2,
f^*) \leq 2d_2 + d^*. && \text{(proving \eqref{eqn:pif2})}
\end{flalign*}

To prove \eqref{eqn:bestBalance}, we use that for any $a,b \geq 0$ and
$\beta \leq 1$, the expression
$\min(a,b) + \beta\, \max(a,b) = \min(a+\b b, b + \b a)$ is smaller
than any convex combination
$(1-\lambda)(a + \b b) + \lambda (b + \b a)$ with $\lambda \in [0,1]$. Setting
$\lambda = \frac{\b}{1-\b}$ and simplifying gives $(1-\b)a + 2\b
b$. Using $a = d^*$ and $b = d_1$ completes the proof.

\section{Proof of \Cref{lem:boundsAll}}
\label{apx:bounds}

We now present the proof of \Cref{lem:boundsAll}, giving bounds for
all the client types other than type \xE. The idea is the same for
each one: First we fix a client $c$ of some type. We partition the
amenable event into some sub-events, and look on some sub-event $\cE$.
We consider a generic swap set $\cP$ generated under that event, and
give an upper bound for the maximum potential change for client $c$
due to these swaps. Combining over all sub-events (with the correct
probability values) gives the expected potential change. The largest
such change for each client type is then shown to be the one recorded
in \Cref{lem:boundsAll}.

When we prove upper bounds for the potential change caused by a  swap set $\cP$, 
we assume that both $\move{\neg f_1}$ and $\move{\neg f_2}$ exist in $\cP$
(if $f$ is heavy $\move{\neg f}$ does not exist). 
As we mentioned in \Cref{sec:bounds-notation}, our bounds also hold in 
cases where either of them does not exist, because  our upper bounds for 
$\WCchange_{\move{\neg f}}$ is non-negative  as long as $\move{f^*} \neq \move{\neg f}$ for 
any $f \in \{f_1, f_2\}$.

%
%
%
%
%
%
%
%

In the rest of this section, we prove each 
inequality from \Cref{lem:boundsAll}.


\boundsAll*

\subsection{Proof of \eqref{eqn:bndFar}: Far Clients of Type \xA and \xB}

In this section, we show that for any far case client $c$ of type \xA or \xB, we have
\[
\Gsum{c} \leq 2.467\, d^*(c) - 1.13085\, d_1(c) + \epsd.
\]
We give different analysis depending on whether $f^*$ points to $\eta_1$ or $\eta_2$; this is different from our type \xE analysis, where our bounds are the same in both cases. Formally, we partition the amenable event $\cA$ as the union of $\cS_1\cap\cA$, $\cS_2\cap\cA$, $\cT_1\cap\cA$, and $\cT_2\cap\cA$. We upper-bound $\change_\cA(c)$ by
\begin{align}
\change_\cA(c) & \leq \Pr[\cS_1\cap\cA]\WCchange_{\cS_1\cap\cA}(c) + \Pr[\cS_2\cap\cA]\WCchange_{\cS_2\cap\cA}(c) + \Pr[\cT_1\cap\cA]\WCchange_{\cT_1\cap\cA}(c) + \Pr[\cT_2\cap\cA]\WCchange_{\cT_2\cap\cA}(c)\nonumber \\
& \leq  \Pr[\cS_1]\WCchange_{\cS_1\cap\cA}(c) + \Pr[\cS_2]\WCchange_{\cS_2\cap\cA}(c) + \Pr[\cT_1]\WCchange_{\cT_1\cap\cA}(c) + \Pr[\cT_2]\WCchange_{\cT_2\cap\cA}(c) + \epsd.\label{eq:typeAfar}
\end{align}
The probabilities $\Pr[\cS_1],\Pr[\cS_2],\Pr[\cT_1],\Pr[\cT_2]$ are given in \Cref{tab:prob}. We proceed by showing upper-bounds for the $\WCchange$ values, the potential changes of client $c$ on the worst-case swap set $\cP$, for far clients of type \xA and \xB in the following subsections.
%

\subsubsection{Far clients of type \xA: \texorpdfstring{$f_1 = \eta_1$}{[f1 = eta1]}}
\label{subsec:farA}
\paragraph{Simple swaps with $\tau(f^*) = \eta_1$} Type \xA clients have $f_1 = \eta_1$, which is the same as $\tau(f^*)$, so we have $\move{f^*} = \move{\neg f_1}$ in $\cP$ by implication (ii) of amenability. On that swap, the client can be served by $f^*$. Therefore,
\begin{align*}
  \dSwapHa\leq \hoono{(1+\a\b)\, d^* - (1+\a\b)\, d_1}. \tag{$\WCchange_{\move{f^*, \neg f_1}}$}
\end{align*}

\paragraph{Simple swaps with $\tau(f^*) = \eta_2$} Since $\tau(f^*) = \eta_2\neq f_1$, we know $\move{f^*}\neq \move{\neg f_1}$ by implication (Siv) of amenability. On swap $\move{f^*}$, $c$ can be served by both $f^*$ and $f_1$. On swap $\move{\neg f_1}$, $c$ can be served by $\eta_2$ (by implication (Siii) of amenability). 
Note that $d(c,\eta_2) \leq  d^* + \nicefrac 1\rho\cdot (d^* + d_1)$ by \eqref{eqn:apx:eta2rho}. Therefore,
\begin{align*}
  \dSwapHb & \leq  d_1 + \b\, d^* - (1+\a\b)\, d_1 \tag{$\WCchange_{\move{f^*}}$}\\
  & ~~~ + (1+\a\b)(d^*+\nicefrac 1\rho\cdot (d^* + d_1)) - (1+\a\b)\, d_1 \tag{$\WCchange_{\move{\neg f_1}}$}\\
  & = \hoono{((1+\nicefrac{1}{\rho})(1+\a\b)+\b)\, d^* 
    -((1-\nicefrac{1}{\rho})(1+\a\b)+\a\b)\, d_1}.
\end{align*}

\paragraph{Tree swaps with $\tau(f^*) = \eta_1$} We have $\move{f^*} = \move{\neg f_1}$ by implication (ii) of amenability. On that swap, the client can be served by $\move{f^*}$. Therefore,
\begin{align*}
  \dSwapTa\leq \hoono{(1+\a\b)\, d^* - (1+\a\b)\, d_1}. \tag{$\WCchange_{\move{f^*, \neg f_1}}$}
\end{align*}

\paragraph{Tree swaps with $\tau(f^*) = \eta_2$} If $\move{f^*} = \move{\neg f_1}$, then we have the same bound as above:
\begin{align}
  \sum_{(P,Q)\in\cP}\WCchange_{(P,Q)}(c) \leq (1+\a\b)\, d^* - (1+\a\b)\, d_1. \label{eq:A-far-eta2-tree1}
\end{align}
If $\move{f^*}\neq\move{\neg f_1}$, then on swap $\move{\neg f_1}$, $c$ can be served by $\eta_2$ and $\pi(f_1)$, by implications (ii) and  (Tii) of amenability. We already showed $d(c,\eta_2)\leq d^* + \nicefrac 1\rho\cdot (d^* + d_1)$. We also have $d(c,\pi(f_1))\leq 2d_1 + d^*$ by \eqref{eqn:ubpartner}. Therefore,
\begin{align}
  \sum_{(P,Q)\in\cP}\WCchange_{(P,Q)}(c) & \leq d_1 + \b\, d^* - (1+\a\b)\, d_1 \tag{$\WCchange_{\move{f^*}}$}\nonumber \\
  & ~~~ +  (d^*+\nicefrac 1\rho\cdot (d^* + d_1)) + \b(2d_1 + d^*) - (1+\a\b)\, d_1 \tag{$\WCchange_{\move{\neg f_1}}$}\nonumber \\
  & = (1+\nicefrac{1}{\rho} + 2\b)\, d^* - (1+2\a\b -2\b - \nicefrac{1}{\rho} )\, d_1. \label{eq:A-far-eta2-tree2}
\end{align}
For our choice of $\a,\b$, \eqref{eq:A-far-eta2-tree2} is larger than \eqref{eq:A-far-eta2-tree1}, so we have
\begin{equation*}
\dSwapTb \leq \hoono{(1+\nicefrac{1}{\rho} + 2\b)\, d^* - (1+2\a\b -2\b - \nicefrac{1}{\rho} )\, d_1}.
\end{equation*}
Summarizing, we have

\begin{mybox}
\begin{align*}
\dSwapHa \leq  &(1+\a\b)\, d^* - (1+\a\b)\, d_1
&	=  1.6\, d^* - 1.6\, d_1\\
\dSwapHb \leq  &((1+\nicefrac{1}{\rho})(1+\a\b)+\b)\, d^* -((1-\nicefrac{1}{\rho})(1+\a\b)+\a\b)\, d_1\span\\
&	&\quad  =  (1.8+1.6/\rho)\, d^* - (2.2-1.6/\rho)\, d_1\\
\dSwapTa\leq  &(1+\a\b)\, d^* - (1+\a\b)\, d_1
&	= 1.6\, d^* - 1.6\, d_1 \\
\dSwapTb\leq  &(1+\nicefrac{1}{\rho} + 2\b)\, d^* - (1+2\a\b -2\b - \nicefrac{1}{\rho} )\, d_1
& 	 = (1.4 + {1}/{\rho})\, d^* - (1.8-{1}/{\rho})\, d_1
\end{align*}
\end{mybox}

We now combine these inequalities using \eqref{eq:typeAfar}. If $\rho(f^*)\leq\nicefrac 23$, we have $\Pr[\cS_1] = \Pr[\cT_1] = \nicefrac 12$ and $\Pr[\cS_2] = \Pr[\cT_2] = 0$. Therefore,
\begin{align*}
\Gsum{c} & \leq  \nicefrac 12\cdot \WCchange_{\cS_1\cap\cA}(c) + \nicefrac 12\cdot \WCchange_{\cT_1\cap\cA}(c) + \epsd\\
&\leq \hoono{1.6\, d^* - 1.6\, d_1} + \epsd.
\end{align*}
If $\nicefrac 23 < \rho(f^*)\leq \nicefrac 34$, we have $\Pr[\cS_1] = \nicefrac 12,\Pr[\cS_2] = 0,\Pr[\cT_1] = \nicefrac 14,\Pr[\cT_2] = \nicefrac 14$. Therefore,
\begin{align*}
\Gsum{c} & \leq \nicefrac 12\cdot\WCchange_{\cS_1\cap\cA}(c) + \nicefrac 14\cdot\WCchange_{\cT_1\cap\cA}(c) + \nicefrac 14\cdot\WCchange_{\cT_2\cap\cA}(c) + \epsd\\
& \leq \nicefrac 34\cdot(1.6\, d^* - 1.6\, d_1) + \nicefrac 14 \cdot ((1.4 + \nicefrac 32)d^* - (1.8 - \nicefrac 32)d_1) + \epsd\\
& = \hoono{1.925\, d^* - 1.275\, d_1} + \epsd.
\end{align*}
If $\rho(f^*) > \nicefrac 34$, we have $\Pr[\cS_1] = \nicefrac 54 - \rho,\Pr[\cS_2] = \rho - \nicefrac 34, \Pr[\cT_1] = \Pr[\cT_2] = \nicefrac 14$. Therefore,
\begin{align*}
\Gsum{c} & \leq (\nicefrac 54 - \rho)\cdot\WCchange_{\cS_1\cap\cA}(c) + (\rho - \nicefrac 34)\cdot \WCchange_{\cS_2\cap\cA}(c) + \nicefrac 14\cdot\WCchange_{\cT_1\cap\cA}(c) + \nicefrac 14\cdot\WCchange_{\cT_2\cap\cA}(c) + \epsd\\
& \leq (3 + 0.2 \rho - 0.95 / \rho)\,d^* - (0.6\rho + 0.95/\rho - 0.4)\,d_1 + \epsd\\
& \leq (3 + 0.2 - 0.95)\, d^* - (0.6 + 0.95 - 0.4)\,d_1 + \epsd\\
& = \hoono{2.25\, d^* - 1.15\, d_1} + \epsd.
\end{align*}

\subsubsection{Far clients of type \xB: \texorpdfstring{$f_1 = \eta_2$}{[f1 = eta2]}}
When $c$ is a far client of type \xB, we have $\move{f^*} = \move{\neg f_1}$ on $\cS_2\cap\cA$ and $\cT_2\cap\cA$. This is exactly the situation for type \xA clients on $\cS_1\cap\cA$ and $\cT_1\cap\cA$. Therefore, we have the same bound for all of these cases:
\begin{align*}
  \dSwapHb &\leq \hoono{(1+\a\b)\, d^* - (1+\a\b)\, d_1}, \tag{$\WCchange_{\move{f^*, \neg f_1}}$}\\
  \dSwapTb &\leq \hoono{(1+\a\b)\, d^* - (1+\a\b)\, d_1}. \tag{$\WCchange_{\move{f^*, \neg f_1}}$}
\end{align*}
We continue to bound $\WCchange_{\cS_1\cap\cA}(c)$ and $\WCchange_{\cT_1\cap\cA}(c)$.

\paragraph{Simple swaps with $\tau(f^*) = \eta_1$}By implication (Siv) of amenability, we have $\move{f^*}\neq \move{\neg f_1}$. 
On swap $\move{f^*}$, the client can be served by $f^*$ and $f_1$. On swap $\move{\neg f_1}$, the client can be served by $\eta_1$ (by implication (Siii) of amenability). Also, $d(c,\eta_1)\leq d^* + \rho (d^* + d_1)$ by \eqref{eqn:apx:eta1}. Therefore,
\begin{align*}
  \dSwapHa & \leq  d_1 + \b\, d^* - (1+\a\b)\, d_1 \tag{$\WCchange_{\move{f^*}}$}\\
  & ~~~ + (1+\a\b)(d^*+\rho(d^* + d_1)) - (1+\a\b)\, d_1 \tag{$\WCchange_{\move{\neg f_1}}$}\\
  & = \hoono{((1+\rho)(1+\a\b)+\b)\, d^* -((1-\rho)(1+\a\b)+\a\b)\, d_1}.
\end{align*}

\paragraph{Tree swaps with $\tau(f^*) = \eta_1$} If $\move{f^*} = \move{\neg f_1}$, then we have
\begin{align}
  \sum_{(P,Q)\in\cP}\WCchange_{(P,Q)}(c) \leq (1+\a\b)\, d^* - (1+\a\b)\, d_1. \label{eq:B-far-eta2-tree1}
\end{align}
If $\move{f^*}\neq\move{\neg f_1}$, then on swap $\move{\neg f_1}$, $c$ can be served by $\eta_1$ and $\pi(f_1)$ by implications (ii) and (Tii) of amenability. We  showed $d(c,\eta_1)\leq d^* + \rho(d^* + d_1)$. We also have $d(c,\pi(f_1)) \leq 2d_1 + d^*$ by \eqref{eqn:ubpartner}. Therefore,
\begin{align}
  \sum_{(P,Q)\in\cP}\WCchange_{(P,Q)}(c) & \leq d_1 + \b\, d^* - (1+\a\b)\, d_1 \tag{$\WCchange_{\move{f^*}}$}\nonumber \\
  & ~~~ +  (d^*+\rho (d^* + d_1)) + \b(2d_1 + d^*) - (1+\a\b)\, d_1 \tag{$\WCchange_{\move{\neg f_1}}$}\nonumber \\
  & = (1+\rho + 2\b)\, d^* - (1+2\a\b -2\b - \rho )\, d_1. \label{eq:B-far-eta2-tree2}
\end{align}
Taking the maximum of \eqref{eq:B-far-eta2-tree1} and \eqref{eq:B-far-eta2-tree2} using $\alpha = 3,\beta = 0.2$, we have
\begin{equation*}
\dSwapTa \leq \hoono{(1.4+ \max\{\rho,0.2\})\, d^* - (1.8 - \max\{\rho,0.2\} )\, d_1}.
\end{equation*}

Summarizing, we have

\begin{mybox}
\begin{align*}
\dSwapHa & \leq  ((1+\rho)(1+\a\b)+\b)\, d^* -((1-\rho)(1+\a\b)+\a\b)\, d_1 \span\\
	& &  = (1.8+1.6\rho)\, d^* - (2.2-1.6\rho)\, d_1\\
\dSwapHb & \leq   (1+\a\b)\, d^* - (1+\a\b)\, d_1
	& =  1.6\, d^* - 1.6\, d_1\\
\dSwapTa  &  
	&  \leq  (1.4+\max\{\rho,0.2\})\, d^* - (1.8-\max\{\rho,0.2\})\, d_1\\
\dSwapTb & \leq  (1+\a\b)\, d^* - (1+\a\b)\, d_1
	& = 1.6\, d^* - 1.6\, d_1
\end{align*}
\end{mybox}
We now combine these inequalities using \eqref{eq:typeAfar}. If $\rho(f^*)\leq\nicefrac 23$, we have $\Pr[\cS_1] = \Pr[\cT_1] = \nicefrac 12$ and $\Pr[\cS_2] = \Pr[\cT_2] = 0$. Therefore,
\begin{align*}
\change_{\cA}(c) & \leq \nicefrac 12\cdot\WCchange_{\cS_1\cap\cA}(c) + \nicefrac 12\cdot\WCchange_{\cT_1\cap\cA}(c) + \epsd\\
& \leq \nicefrac 12 \cdot ((1.8 + 1.6 \times \nicefrac 23)d^* - (2.2 - 1.6 \times \nicefrac 23)d_1)\\
& ~~~ + \nicefrac 12\cdot ((1.4 + \nicefrac 23)d^*  - (1.8 - \nicefrac 23)d_1)\\
& ~~~ + \epsd\\
& \leq \hoono{2.46667 d^* - 1.13333 d_1} + \epsd.
\end{align*}
If $\nicefrac 23 < \rho(f^*)\leq \nicefrac 34$, we have $\Pr[\cS_1] = \nicefrac 12,\Pr[\cS_2] = 0,\Pr[\cT_1] = \nicefrac 14,\Pr[\cT_2] = \nicefrac 14$. Therefore,
\begin{align*}
\change_{\cA}(c) & \leq \nicefrac 12\cdot\WCchange_{\cS_1\cap\cA}(c) + \nicefrac 14\cdot\WCchange_{\cT_1\cap\cA}(c) + \nicefrac 14\cdot \WCchange_{\cT_2\cap\cA}(c) + \epsd\\
& \leq \nicefrac 12 \cdot ((1.8 + 1.6 \times \nicefrac 34)d^* - (2.2 - 1.6 \times \nicefrac 34)d_1)\\
& ~~~ + \nicefrac 14\cdot ((1.4 + \nicefrac 34)d^*  - (1.8 - \nicefrac 34)d_1)\\
& ~~~ + \nicefrac 14\cdot (1.6 d^* - 1.6d_1)\\
& ~~~ + \epsd\\
& = \hoono{2.4375 d^* - 1.1625 d_1} + \epsd.
\end{align*}
If $\rho(f^*) > \nicefrac 34$, we have $\Pr[\cS_1] = \nicefrac 54 - \rho,\Pr[\cS_2] = \rho - \nicefrac 34, \Pr[\cT_1] = \Pr[\cT_2] = \nicefrac 14$. Therefore,
\begin{align*}
\Gsum{c} & \leq (\nicefrac 54 - \rho)\cdot\WCchange_{\cS_1\cap\cA}(c) + (\rho - \nicefrac 34)\cdot \WCchange_{\cS_2\cap\cA}(c) + \nicefrac 14\cdot\WCchange_{\cT_1\cap\cA}(c) + \nicefrac 14\cdot\WCchange_{\cT_2\cap\cA}(c) + \epsd\\
& \leq (1.8 + 2.05\rho - 1.6\rho^2)d^* - (2.4 - 2.85\rho + 1.6\rho^2)d_1 + \epsd\\
& \leq (1.8 + 2.05 \cdot \nicefrac 34 - 1.6 \cdot (\nicefrac 34)^2) d^* - (2.4 - 2.85 \cdot \nicefrac{2.85}{3.2} + 1.6 \cdot (\nicefrac{2.85}{3.2})^2)d_1 + \epsd\\
& \leq \hoono{2.4375 d^* - 1.13085 d_1} + \epsd.
\end{align*}


\subsection{Proof of \eqref{eqn:bndA}: Close Clients of Type \xA}

In this section, we show that for any close case client $c$ with type \xA, we have
\[
\Gsum{c} \leq 2.375\, d^*(c) - 0.9\, d_1(c).
\]


%
%
%
%
%

\subsubsection{Clients with \texorpdfstring{$\rho(f^*)\leq\nicefrac
    23$}{[rho(f*) <= 2/3]}} 
\label{sec:typeAlow}

We first consider the case where $\rho(f^*)\leq \nicefrac 23$. Our analysis for this case is very simple: we directly use $\change_{\cA}(c)\leq \Pr[\cA]\WCchange_{\cA}(c) \leq \WCchange_{\cA}(c) + \epsd$ without considering sub-events of $\cA$. 

Let us fix a generic swap set $\cP$ generated on the amenable event $\cA$.
$\rho(f^*) \leq \nicefrac{2}{3}$ implies that $\tau(f^*)$ always equals to $\eta_1 = f_1$. By implication (ii) of amenability, we have $\move{f^*} = \move{\neg f_1}$ in $\cP$. Therefore,
\begin{align*}
\WCchange_{\cA}(c) \leq {} & (1+\a\b)\, d^* - d_1 - \b\, d_2 \tag{$\WCchange_{\move{f^*, \neg f_1}}$}\\
&  + (1+\a\b)\, d_1 - d_1 - \b\, d_2 \tag{$\WCchange_{\move{\neg f_2}}$}\\
= {} & (1 + \a\b)\, d^* - (1 - \a\b)\, d_1 - 2\b\, d_2.
\end{align*}

Note that this bound also holds when $\move{\neg f_2} = \move{f^*,\neg f_1}$, because our bound for $\move{f^*,\neg f_1}$ does not require $f_2$ to remain open after the swap and $\WCchange_{\move{\neg f_2}}$ is non-negative.

If $d_1 \leq d^*$, then we have
\begin{align*}
\WCchange_{\cA}(c) \leq {} & (1+\a\b+\b)\, d^* - (1- \a\b + \b)\, d_1 - 2\b\, d_2 \tag{$d_1 \leq d^*$}\\
\leq {} & (1+\a\b+\b)\, d^* - (1- \a\b + 3\b)\, d_1  \tag{$d_2 \geq d_1$}\\
= {} & \hoono{1.8\, d^* - d_1}.
\end{align*}

If $d_1 \geq d^*$, we have
\begin{align*}
\WCchange_{\cA}(c) \leq {}& (1+\a\b)\, d^* - (1-\a\b)\, d_1 - 2\b\, d_2\\
\leq {} & (1+\a\b)\, d^* - (1-\a\b)\, d_1 -
          2\b\left(\frac{1+\nicefrac{1}{\rho}}{2}d_1 -
          \frac{1+\nicefrac{1}{\rho}}{2}d^*\right)\tag{averaging
          \eqref{eqn:apx:lb3} with $d_2\geq d_1$}\\
\leq {} & (1+\a\b)\, d^* - (1-\a\b)\, d_1 -2\b\left(\frac{5}{4}d_1 - \frac{5}{4}d^*\right) \tag{$\rho\leq \nicefrac 23$ and $d^* \leq d_1$}\\
= {} & (1+\a\b+2.5\b)\, d^* - (1-\a\b+2.5\b)\, d_1\\
= {} & \hoono{2.1\, d^* - 0.9\, d_1}.
\end{align*}

\subsubsection {Clients with \texorpdfstring{$\rho(f^*)>\nicefrac
    23$}{[rho(f*) > 2/3]}} 
Now we turn to close clients of type \xA with $\rho(f^*)>\nicefrac 23$. Our analysis for simple swaps adopts the usual strategy:
\begin{align*}
\change_{\cS\cap\cA}(c) & \leq \Pr[\cS_1\cap\cA]\WCchange_{\cS_1\cap\cA}(c) + \Pr[\cS_2\cap\cA]\WCchange_{\cS_2\cap\cA}(c)\\
& \leq \Pr[\cS_1]\WCchange_{\cS_1\cap\cA}(c) + \Pr[\cS_2]\WCchange_{\cS_2\cap\cA}(c) + \epsd.
\end{align*}
However, we will be a little more careful in our tree swaps analysis. We further partition the tree events $\cT_1$ and $\cT_2$ as $\cT_1 = \cT_{11}\cup\cT_{12}$ and $\cT_2 = \cT_{21}\cup\cT_{22}$ in the following way. 
$\cT_{11}$ is defined as the intersection of $\cT_1$ and the event that $\move{f^*}$ is the only
 swap closing any facility in $\{f_1, f_2\}$.
 $\cT_{21}$ is defined as the intersection of $\cT_2$ 
 and the event that there is a swap which closes both $f_1$ and $f_2$ but does not open the original copy of $f^*$. $\cT_{12}$ and $\cT_{22}$ are defined accordingly: $\cT_{12} = \cT_1\backslash\cT_{11}$ and $\cT_{22} = \cT_2\backslash\cT_{21}$.


Recall that $\rho(f^*)>\nicefrac 23$ implies $\Pr[\cT_1] = \Pr[\cT_2] = \nicefrac 14$. The naive way to bound $\change_{\cT\cap\cA}(c)$ is by the following:
\begin{align*}
\change_{\cT\cap\cA}(c) & \leq \Pr[\cT_1]\WCchange_{\cT_1\cap\cA}(c) + \Pr[\cT_2]\WCchange_{\cT_2\cap\cA}(c) + \epsd\\
& = \nicefrac 14 \cdot \max\{\WCchange_{\cT_{11}\cap\cA}(c), \WCchange_{\cT_{12}\cap\cA}(c)\} + \nicefrac 14 \cdot \max\{\WCchange_{\cT_{21}\cap\cA}(c), \WCchange_{\cT_{22}\cap\cA}(c)\} + \epsd.
\end{align*}

If we ignore the $\epsd$ term, the above bound is equal to $\nicefrac 14$ times the maximum of all four sums: $\WCchange_{\cT_{11}\cap\cA}(c) + \WCchange_{\cT_{21}\cap\cA}(c), \WCchange_{\cT_{11}\cap\cA}(c) + \WCchange_{\cT_{22}\cap\cA}(c), \WCchange_{\cT_{12}\cap\cA}(c) + \WCchange_{\cT_{21}\cap\cA}(c), \WCchange_{\cT_{12}\cap\cA}(c) + \WCchange_{\cT_{22}\cap\cA}(c)$. However, by relating the probabilities of $\cT_{11}$ and $\cT_{21}$, we have the following lemma (proved in \Cref{subsec:typeAaveraging}),
which gives an improved bound by not taking $\WCchange_{\cT_{12}\cap\cA}(c) + \WCchange_{\cT_{21}\cap\cA}(c)$ into the maximum.

\begin{restatable}[Type \xA averaging]{lemma}{typeA}
\label{lm:typeA}
For a close client of type \xA with $\rho(f^*)>\nicefrac 23$, we have
\begin{equation*}
\change_{\cT\cap\cA}(c) \leq \nicefrac 14 \cdot \max\{\WCchange_{\cT_{11}\cap\cA} + \WCchange_{\cT_{21}\cap\cA}, \WCchange_{\cT_{11}\cap\cA} + \WCchange_{\cT_{22}\cap\cA}, \WCchange_{\cT_{12}\cap\cA} + \WCchange_{\cT_{22}\cap\cA}\} + \epsd.
\end{equation*}
\end{restatable}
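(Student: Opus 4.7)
The plan is to reduce the lemma to a single probabilistic inequality comparing $\Pr[\cT_{11}]$ and $\Pr[\cT_{21}]$, and then to establish that inequality via a coupling. Starting from the usual decomposition
\[
\change_{\cT\cap\cA}(c) \leq \sum_{i,j\in\{1,2\}} \Pr[\cT_{ij}]\,\WCchange_{\cT_{ij}\cap\cA}(c) + O(\varepsilon)(d^*+d_1),
\]
where the error term absorbs $\Pr[\cA^c]\leq O(\varepsilon)$ against the crude bound from \Cref{clm:crude}, I would look for non-negative coefficients $\lambda_1,\lambda_2,\lambda_3$ summing to $\nicefrac14$ such that $\lambda_1 (\WCchange_{\cT_{11}}+\WCchange_{\cT_{21}}) + \lambda_2 (\WCchange_{\cT_{11}}+\WCchange_{\cT_{22}}) + \lambda_3(\WCchange_{\cT_{12}}+\WCchange_{\cT_{22}})$ reproduces this probability-weighted sum coefficient-by-coefficient. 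Solving the resulting linear system yields the unique assignment $\lambda_1=\Pr[\cT_{21}]$, $\lambda_3=\Pr[\cT_{12}]$, and $\lambda_2=\Pr[\cT_{11}]-\Pr[\cT_{21}]$, and one checks $\lambda_1+\lambda_2+\lambda_3 = \Pr[\cT_{11}]+\Pr[\cT_{12}] = \nicefrac14$. So the lemma boils down to showing
\[
\Pr[\cT_{11}] \geq \Pr[\cT_{21}],
\]
modulo an $O(\varepsilon)$ error that can be folded into $\epsd$.

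To prove this comparison I would use a coupling argument. Conditional on performing tree swaps, the randomness governing the $\cT_1$ versus $\cT_2$ split is precisely the uniform choice of $\tau(f^*)\in\{\eta_1,\eta_2\}$ (recall $\rho(f^*)>\nicefrac23$, so both outcomes have conditional probability $\nicefrac12$). I would couple two runs of the graph-construction procedure that agree on every coin flip except this one. For a type \xA close client we have $\eta_1(f^*) = f_1$, so this flip replaces the $\tau$-edge $f^*\to \eta_2$ by $f^*\to f_1$ while leaving the $\pi$-edges of $f_1,f_2$ and every other local facility untouched.

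The central claim of the coupling is: whenever $\cT_{21}$ holds in the $\tau(f^*)=\eta_2$ run, the coupled run with $\tau(f^*)=f_1$ satisfies $\cT_{11}$. Indeed, under $\cT_{21}$ the local facilities $f_1$ and $f_2$ lie in a common swap that excludes the original copy of $f^*$; adding the edge $f^*\to f_1$ can only merge the connected component containing $\{f_1,f_2\}$ with the component containing $f^*$, so after degree reduction, edge deletion, and balancing—each restricted to the amenable event, which ensures the edges out of $f^*,f_1,f_2$ survive—the three facilities must land in a single swap, forcing $\cT_{11}$. The injection of the witnesses for $\cT_{21}$ into those for $\cT_{11}$ gives the desired inequality.

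The main obstacle will be to justify this injection rigorously, because the two coupled graphs can differ globally in ways that interact with the random choices of $r$, $\thh$, and balancing merges. I expect to handle this by exploiting the fact that on $\cA$ the structural changes near $\{f^*,f_1,f_2\}$ are localized to the single $\tau$-edge out of $f^*$: all other edges and all relevant component sizes are controlled by the same randomness in both runs, so the only way the final swap assignments of these three facilities can differ is through the presence or absence of this edge, which by monotonicity of connectivity can only concentrate them further. Any remaining discrepancies—arising from rare events in degree reduction (e.g., $f^*$ becoming a heavy optimal facility) or balancing merges that span the relevant components—are precisely the kinds of events already captured in $\cD$, and so contribute at most $O(\varepsilon)$ to the probability gap, which the $\epsd$ slack in the statement absorbs.
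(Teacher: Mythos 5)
Your reduction of the lemma to the single inequality $\Pr[\cT_{11}] \geq \Pr[\cT_{21}] - O(\varepsilon)$ is exactly right and matches the paper's proof structure: the linear system you solve recovers the same decomposition the paper uses (with $p_\Delta := \Pr[\cT_{11}]-\Pr[\cT_{21}]$), and the constraint $\lambda_2\geq -O(\varepsilon)$ is precisely the paper's Lemma \ref{lm:averaging}. So far so good.

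The gap is in your proposed coupling argument for proving $\Pr[\cT_{11}]\geq\Pr[\cT_{21}]-O(\varepsilon)$. You claim that, holding all other coins fixed, if $\cT_{21}$ holds with $\tau(f^*)=\eta_2$ then $\cT_{11}$ holds with $\tau(f^*)=f_1$, and you justify this by ``monotonicity of connectivity.'' That monotonicity is real before edge deletion, but it fails afterwards. The problem is that redirecting the out-edge of $f^*$ changes the cycle structure of the 1-tree containing $f^*$ (a new cycle may be created through $f_1$ and its ancestor chain, or the cycle may disappear entirely if $f^*$ now hangs off a different component). The edge-deletion step then picks a uniformly random vertex $r$ \emph{on the cycle}, pads the cycle with dummy vertices, and cuts edges at distances divisible by $\thh$ from $r$. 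Since the cycle differs between the two coupled runs, there is no canonical way to say the ``same'' coin was flipped for $r$, and the set of edges that survive need not be nested between the two runs. A path from $f_1$ to $f_2$ through their least common ancestor can be severed in one run and survive in the other, and the direction of the change is not determined by connectivity of the underlying graph. So the injection you describe does not exist in general; the difficulty you wave off as ``rare events captured in $\cD$'' is in fact the main technical content.

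The paper handles this by \emph{avoiding} a coupling altogether. It conditions on everything except $\tau(f^*)$, observes that in the easy cases ($f^*,f_1,f_2$ not all in the same tree of $G_1^*$) the claim is immediate, and in the hard case (all three in one tree, whose cycle structure then depends on $\tau(f^*)$) it directly estimates the two conditional probabilities via Lemmas \ref{lm:survive-upper} and \ref{lm:survive-lower}, which bound the chance that a path of length $s$ survives edge deletion as roughly $\max\{(\thh-s)/\thh,0\}$ up to $(1\pm O(\thh/\ell))$ correction factors in terms of the cycle length $\ell$. The residual discrepancy between the upper and lower bounds is absorbed by extending the defiant event from $\cD$ to a larger $\cD'$ that also includes the event that the cycle length falls in the ``bad'' range $(\thh,\lceil\nicefrac{1}{\varepsilon}\rceil\cdot\thh)$ --- an event whose $O(\varepsilon)$ probability comes from the random choice of $\thh$ itself, not from anything in the original $\cD$. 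Your argument as written does not include either of these ingredients, so the $O(\varepsilon)$ slack you appeal to at the end is not justified.
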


 
We now proceed to show upper bounds for the worst-case potential change on each event.
\paragraph{Simple swaps with $\tau(f^*) = \eta_1$} We have $\move{f^*} = \move{\neg f_1}\neq\move{\neg f_2}$ by implications (ii) and (Siv) of amenability. On swap $\move{f^*}$, the client can be served by $f^*$, and on $\move{\neg f_2}$, the client can be served by $f_1$ (at distance $d_1$) and $\eta_2$ (at distance $\leq d^* + \nicefrac 1\rho \cdot(d^* + d_1)$), by implication (Siii) of amenability. Therefore,
\begin{align*}
  \dSwapHa & \leq (1+\a\b)\, d^* - d_1 - \b\, d_2 \tag{$\WCchange_{\move{f^*, \neg f_1}}$}\\
  & ~~~ +  d_1 + \b(d^* + \nicefrac 1\rho \cdot (d^* + d_1)) - d_1 - \b\, d_2 \tag{$\WCchange_{\move{\neg f_2}}$} \\
  & = \hoono{(1+\a\b + \b + \nicefrac{\b}{\rho} )\, d^*
    - (1- \nicefrac{\b}{\rho})\, d_1 - 2\b\, d_2}.
\end{align*}

\paragraph{Simple swaps with $\tau(f^*) = \eta_2$} By implications (Siii) and (Siv) of amenability, the three swaps $\move{f^*}$,$\move{\neg f_1}$, $\move{\neg f_2}$ are all different. On swap $\move{f^*}$, the client can be served by $f^*$ and $f_1$. On swap $\move{\neg f_1}$, the client can be served by $f_2$ and $\eta_2$. On swap $\move{\neg f_2}$, the client can be served by $f_1$ and $\eta_2$. Therefore,
\begin{align*}
  \dSwapHb & \leq d^* + \b\, d_1 - d_1 - \b\, d_2 \tag{$\WCchange_{\move{f^*}}$}\\
  & ~~~ + d_2 + \b(d^* + \nicefrac 1\rho\cdot(d^* + d_1)) - d_1 - \b\, d_2 \tag{$\WCchange_{\move{\neg f_1}}$}\\
  & ~~~ + d_1 + \b(d^* + \nicefrac 1\rho\cdot(d^* + d_1)) - d_1 - \b\, d_2 \tag{$\WCchange_{\move{\neg f_2}}$}\\
  & = \hoono{(1+2\b+\nicefrac{2\b}{\rho} )\, d^* 
    - (2-\b-\nicefrac{2\b}{\rho})\, d_1 +(1-3\b)\, d_2}.
\end{align*}

\paragraph{Tree swaps with $\tau(f^*) = \eta_1$} On $\cT_{11}\cap\cA$, $\move{f^*}$ is the only swap closing any facility in $\{f_1,f_2\}$ by the definition of $\cT_{11}$. In other words, both $\move{\neg f_1}$ and $\move{\neg f_2}$ coincide with $\move{f^*}$ as long as they exist. 
Therefore,
\begin{align*}
  \WCchange_{\cT_{11}\cap\cA}(c)\leq \hoono{(1+\a\b)\, d^* - d_1 - \b\, d_2}.
  \tag{$\WCchange_{\move{f^*, \neg f_1,\neg f_2}}$}
\end{align*}

On $\cT_{12}\cap\cA$, we have $\move{f^*} = \move{\neg f_1}\neq\move{\neg f_2}$ by implication (ii) of amenability. On swap $\move{f^*}$, the client can be served by $f^*$. On swap $\move{\neg f_2}$, the client can be served by $f_1$ and $\pi(f_2)$ by implication (Tii) of amenability. We have $d(c,\pi(f_2)) \leq  2d_2 + d^*$ by \eqref{eqn:pif2}. Therefore,
\begin{align*}
  \WCchange_{\cT_{12}\cap\cA}(c) & \leq (1+\a\b)\, d^* - d_1 - \b\, d_2 \tag{$\WCchange_{\move{f^*, \neg f_1}}$}\\
  & ~~~ + d_1 + \b(2\,d_2 + d^*) -d_1 - \b\, d_2 \tag{$\WCchange_{\move{\neg f_2}}$}\\
  & = \hoono{(1+\a\b+\b)\, d^* - d_1 }. 
\end{align*}

\paragraph{Tree swaps with $\tau(f^*) = \eta_2$} On $\cT_{21}\cap\cA$, we have $\move{\neg f_1} = \move{\neg f_2}\neq\move{f^*}$. On swap $\move{f^*}$, the client can be served by $f^*$ and $f_1$. On swap $\move{\neg f_1,\neg f_2}$, the client can be served by $\eta_2$ and $\pi(f_1)$ by implications (ii) and (Tii) of amenability. We have $d(c,\eta_2)\leq d^* + \nicefrac 1\rho\cdot(d^* + d_1)$ and $d(c,\pi(f_1))\leq 2d_1 + d^*$. Therefore,
\begin{align}
  \WCchange_{\cT_{21}\cap\cA}(c) 
  &
    \leq d^* + \b\, d_1 - d_1 - \b\, d_2 \tag{$\WCchange_{\move{f^*}}$}\\
  & ~~~ +  (d^* + \nicefrac{1}{\rho} (d^* +d_1)) + \b(2d_1 + d^*) - d_1 - \b\, d_2
    \tag{$\WCchange_{\move{\neg f_1,\neg f_2}}$}\\
  & = \hoono{(2+\b+\nicefrac{1}{\rho})\, d^* - (2-3\b-\nicefrac{1}{\rho})\, d_1 - 2\b\, d_2}. \nonumber
\end{align}

On $\cT_{22}\cap\cA$, we first consider the case where all three swaps $\move{f^*},\move{\neg f_1},\move{\neg f_2}$ are different.
 On swap $\move{f^*}$, the client can be served by $f^*$. On swap $\move{\neg f_1}$, the client can be served by $f_2$ and $\pi(f_1)$ (at distance $\leq 2d_1 + d^*$), by implication (Tii) of amenability. 
 On swap $\move{\neg f_2}$, the client can be served by $f_1$ and $\pi(f_2)$ (at distance $\leq 2d_2 + d^*$), again by implication (Tii) of amenability. Therefore,
\begin{align*}
  \WCchange_{\cT_{22}\cap\cA}(c) & \leq (1 + \a\b)\, d^* - d_1 - \b\, d_2 \tag{$\WCchange_{\move{f^*}}$}\\
  & ~~~ + d_2 + \b(2d_1 + d^*) - d_1 - \b\, d_2 \tag{$\WCchange_{\move{\neg f_1}}$}\\
  & ~~~ + d_1 + \b(2d_2 + d^*) - d_1 - \b\, d_2 \tag{$\WCchange_{\move{\neg f_2}}$}\\
  & \leq \hoono{(1+ \a\b + 2\b )\, d^* 
   - (2-2\b)\, d_1 +(1-\b)\, d_2}.
\end{align*}
Since our bound for $\WCchange_{\move{f^*}}(c)$ doesn't require either $f_1$ or $f_2$ to remain open after the swap,
and both $\WCchange_{\move{\neg f_1}}$ and $\WCchange_{\move{\neg f_2}}$ are non-negative,
 the above bound also holds when $\move{\neg f_1}$ and/or $\move{\neg f_2}$ coincides with $\move{f^*}$.

Summarizing, we have
\begin{mybox}
\begin{align*}
\dSwapHa\leq  & (1+\a\b + \b+\nicefrac{\b}{\rho})\, d^* - (1-\nicefrac{\b}{\rho})\, d_1 - 2\b\, d_2 \span\\
 & & = (1.8+0.2/\rho)\, d^* - (1-0.2/\rho)\, d_1 - 0.4\, d_2\\
\dSwapHb\leq  & (1+2\b+\nicefrac{2\b}{\rho})\, d^* - (2-\b-\nicefrac{2\b}{\rho})\, d_1 +(1-3\b)\, d_2\span \\
& &  = (1.4+0.4/\rho)\, d^* - (1.8-0.4/\rho)\, d_1 + 0.4\, d_2\\
\WCchange_{\cT_{11}\cap\cA}(c)\leq  & (1+\a\b)\, d^* - d_1 - \b\, d_2
& = 1.6\, d^* - d_1 - 0.2\, d_2\\
\WCchange_{\cT_{12}\cap\cA}(c)\leq  & (1+\a\b+\b)\, d^* - d_1 
 &  = 1.8\, d^* - d_1\\
\WCchange_{\cT_{21}\cap\cA}(c)\leq  & (2+ \b + \nicefrac{1}{\rho})\, d^* - (2-3\b-\nicefrac{1}{\rho})\, d_1 - 2\b\, d_2 \span \\
 & & = (2.2+\nicefrac{1}{\rho})\, d^* - (1.4-\nicefrac{1}{\rho})\, d_1 - 0.4\, d_2\\
\WCchange_{\cT_{22}\cap\cA}(c)\leq  & (1+\a\b + 2\b )\, d^* 
   - (2-2\b)\, d_1 +(1-\b)\, d_2 \span \\
& & = 2\, d^* - 1.6\, d_1 + 0.8\, d_2
 \end{align*}
\end{mybox}
We now combine these bounds to show an upper bound for $\change_\cA(c)$ using \Cref{lm:typeA}. Note that our bound for $\WCchange_{\cT_{11}\cap\cA}(c)$ is smaller than our bound for $\WCchange_{\cT_{12}\cap\cA}(c)$, so we only need to consider cases where the maximum in \Cref{lm:typeA} is attained at either $\WCchange_{\cT_{11}\cap\cA}(c) + \WCchange_{\cT_{21}\cap\cA}(c)$ or $\WCchange_{\cT_{12}\cap\cA}(c) + \WCchange_{\cT_{22}\cap\cA}(c)$.

When $\nicefrac 23 < \rho(f^*)\leq \nicefrac 34$, we have $\Pr[\cS_1] = \nicefrac 12$ and $\Pr[\cS_2] = 0$. Therefore, if the maximum in \Cref{lm:typeA} is attained at $\WCchange_{\cT_{11}\cap\cA}(c) + \WCchange_{\cT_{21}\cap\cA}(c)$, we have
\begin{align*}
\change_\cA(c) & \leq \nicefrac 12\cdot\WCchange_{\cS_1\cap\cA}(c) + \nicefrac 14\cdot(\WCchange_{\cT_{11}\cap\cA}(c) + \WCchange_{\cT_{21}\cap\cA}(c)) + \epsd\\
& \leq (1.85 + 0.35/\rho)d^* - (1.1 - 0.35/\rho)d_1 - 0.35 d_2+ \epsd\\
& \leq (1.85 + 0.35\times\nicefrac 32)d^* - (1.45 - 0.35 \times \nicefrac 32)d_1+ \epsd\tag{$d_2\geq d_1$}\\
& = \hoono{2.375d^* - 0.925d_1} + \epsd.
\end{align*}
If the maximum in \Cref{lm:typeA} is attained at $\WCchange_{\cT_{12}\cap\cA}(c) + \WCchange_{\cT_{22}\cap\cA}(c)$, we have
\begin{align*}
\change_\cA(c) & \leq \nicefrac 12\cdot\WCchange_{\cS_1\cap\cA}(c) + \nicefrac 14\cdot(\WCchange_{\cT_{12}\cap\cA}(c) + \WCchange_{\cT_{22}\cap\cA}(c)) + \epsd\\
& \leq (1.85 + 0.1 / \rho)d^* - (1.15 - 0.1 / \rho)d_1 + \epsd\\
& \leq (1.85 + 0.1\times\nicefrac 32)d^* - (1.15 - 0.1 \times \nicefrac 32)d_1+ \epsd\\
& = \hoono{2\, d^* - d_1} + \epsd.
\end{align*}
When $\rho(f^*)>\nicefrac 34$, we have $\Pr[\cS_1] = \nicefrac 54 - \rho$ and $\Pr[\cS_2] = \rho - \nicefrac 34$. Therefore, if the maximum in \Cref{lm:typeA} is attained at $\WCchange_{\cT_{11}\cap\cA}(c) + \WCchange_{\cT_{21}\cap\cA}(c)$, we have
\begin{align*}
\change_\cA(c) & \leq (\nicefrac 54 - \rho) \cdot\WCchange_{\cS_1\cap\cA}(c) + (\rho - \nicefrac 34) \cdot\WCchange_{\cS_2\cap\cA}(c) + \nicefrac 14\cdot(\WCchange_{\cT_{11}\cap\cA}(c) + \WCchange_{\cT_{21}\cap\cA}(c)) + \epsd\\
& \leq (2.35 + 0.2 / \rho - 0.4 \rho)d^* - (0.3 - 0.2 / \rho + 0.8 \rho)d_1 - (0.95 - 0.8\rho) d_2+ \epsd\\
& \leq (2.35 + 0.2 / \rho - 0.4 \rho)d^* - (1.25 - 0.2 / \rho)d_1 + \epsd\tag{$d_2\geq d_1$}\\
& \leq (2.35 + 0.2\times \nicefrac 43 - 0.4 \times\nicefrac 34)d^* - (1.25 - 0.2 \times \nicefrac 43)d_1+ \epsd\\
& \leq \hoono{2.31667\, d^* - 0.98333\, d_1} + \epsd.
\end{align*}
If the maximum in \Cref{lm:typeA} is attained at $\WCchange_{\cT_{12}\cap\cA}(c) + \WCchange_{\cT_{22}\cap\cA}(c)$, we have
\begin{align*}
\change_\cA(c) & \leq (\nicefrac 54 - \rho) \cdot\WCchange_{\cS_1\cap\cA}(c) + (\rho - \nicefrac 34) \cdot\WCchange_{\cS_2\cap\cA}(c) + \nicefrac 14\cdot(\WCchange_{\cT_{12}\cap\cA}(c) + \WCchange_{\cT_{22}\cap\cA}(c)) + \epsd\\
& \leq (2.35 - 0.05/\rho - 0.4 \rho)d^* - (0.35 + 0.05/\rho + 0.8 \rho)d_1 + (0.8\rho - 0.6) d_2+ \epsd\\
& \leq (2.55 - 0.65/\rho + 0.4\rho)d^* - (0.65/\rho + 0.8\rho - 0.45)d_1 + \epsd\tag{$d_2\leq d^* + \nicefrac 1\rho\cdot(d^* + d_1)$}\\
& \leq (2.55 - 0.65 + 0.4)d^* - (0.65\cdot \nicefrac{4}{\sqrt{13}} + 0.8\cdot \nicefrac{\sqrt{13}}{4} - 0.45)d_1+ \epsd\\
& \leq \hoono{2.3\, d^* - 0.99222\, d_1} + \epsd.
\end{align*}

\subsection{Proof of \eqref{eqn:bndB}: Close Clients of Type \xB}

In this section, we show that for any close case client $c$ with type \xB, we have
\[
\Gsum{c} \leq 2.4\, d^*(c) - 0.9\, d_1(c).
\]

In our type \xA analysis, we further partitioned the tree events $\cT_1$ and $\cT_2$ as $\cT_1 = \cT_{11}\cup\cT_{12}$ and $\cT_2 = \cT_{21}\cup\cT_{22}$. 
We require this partitioning also in our type \xB analysis, 
 with the roles of $\eta_1$ and $\eta_2$ flipped. 
Specifically, we define $\cT_{11}$ as the intersection of $\cT_1$ and the event that there is a swap which closes both $f_1$ and $f_2$ but does not open the original copy of $f^*$. We define $\cT_{21}$ as the intersection of $\cT_2$ and the event that $\move{f^*}$ is the only swap closing any facility in $\{f_1, f_2\}$. We define $\cT_{12}$ and $\cT_{22}$ accordingly as $\cT_{12} = \cT_1\backslash\cT_{11}$ and $\cT_{22} = \cT_2\backslash\cT_{21}$. Similar to \Cref{lm:typeA}, we have the following lemma for type \xB:

\begin{restatable}[Type \xB averaging]{lemma}{typeB}
\label{lm:typeB}
For a close client of type \xB with $\rho(f^*)>\nicefrac 23$, we have
\begin{equation*}
\change_{\cT\cap\cA}(c) \leq \nicefrac 14 \cdot \max\{\WCchange_{\cT_{11}\cap\cA} + \WCchange_{\cT_{21}\cap\cA}, \WCchange_{\cT_{12}\cap\cA} + \WCchange_{\cT_{21}\cap\cA}, \WCchange_{\cT_{12}\cap\cA} + \WCchange_{\cT_{22}\cap\cA}\} + \epsd.
\end{equation*}
\end{restatable}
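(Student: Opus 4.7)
The plan is to mirror the proof of Lemma (Type A averaging), with the roles of $\eta_1$ and $\eta_2$ (and correspondingly those of the sub-events $\cT_{i1}$ and $\cT_{i2}$) swapped, since the partition $\cT_1 = \cT_{11} \cup \cT_{12}$ and $\cT_2 = \cT_{21} \cup \cT_{22}$ for type \xB is defined by flipping the ``easy'' (all-in-one-swap) and ``hard'' (both $f_1,f_2$ closed outside the original $f^*$-swap) sub-events between $\cT_1$ and $\cT_2$.

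I will carry out the proof in two steps. First, let $p_{ij} := \Pr[\cT_{ij}]$ and write
\begin{equation*}
\change_{\cT \cap \cA}(c) \leq \sum_{i,j \in \{1,2\}} p_{ij} \WCchange_{\cT_{ij} \cap \cA}(c) + \epsd,
\end{equation*}
using $\Pr[\cD] = O(\varepsilon)$ and our lower bound $\WCchange_{\cE}(c) \ge \WClb(c)$. The task reduces to finding non-negative weights $a, b, c$ with $a + b + c \leq \nicefrac14$ such that the convex combination $a(W_{11}{+}W_{21}) + b(W_{12}{+}W_{21}) + c(W_{12}{+}W_{22})$ dominates the above sum term-by-term. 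Matching the coefficients of $W_{11}, W_{12}, W_{21}, W_{22}$ requires $a \geq p_{11}$, $c \geq p_{22}$, $a+b \geq p_{21}$, and $b+c \geq p_{12}$. The canonical choice $a = p_{11}$, $c = p_{22}$, $b = \nicefrac14 - p_{11} - p_{22}$ satisfies all constraints provided $b \geq 0$, which, since $p_{21} + p_{22} = \Pr[\cT_2] = \nicefrac14$, is equivalent to the probability inequality
\begin{equation*}
\Pr[\cT_{11}] \leq \Pr[\cT_{21}] + O(\varepsilon).
\end{equation*}

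The second step is to establish this probability inequality via a coupling on the randomness of $\tau(f^*)$, holding all other randomness fixed (the other $\tau(\cdot)$ values, the heavy-surrogate assignments, the height threshold $\thh$, the root picks in edge deletion, and the balancing step). For a type \xB client, $\eta_2(f^*) = f_1$ while $\eta_1(f^*) \neq f_1$; within $\cT$ and $\rho(f^*) > \nicefrac23$, the two choices $\tau(f^*) \in \{\eta_1, \eta_2\}$ are equiprobable. I will couple the ``$\tau(f^*) = \eta_1$'' run with the ``$\tau(f^*) = \eta_2$'' run; these differ only in the single out-edge of $f^*$ in $G_0$, so every downstream structural feature involving $f_1, f_2$ (in particular whether they land in the same connected component of $G_1$) is identical. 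Under $\cT_{11}$ in the first run, $f_1$ and $f_2$ occupy a common connected component of $G_1$ that does not contain $f^*$; in the second run, redirecting $f^* \to f_1 = \eta_2$ merges $f^*$ into this same component, forcing $\move{\neg f_1} = \move{f^*}$ and making the whole event $\cT_{21}$ hold, except possibly for the extra requirement that $f_2$ is closed only in $\move{f^*}$.

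The main obstacle is controlling the discrepancy between the two coupled runs introduced by edge deletion and balancing. Edge deletion in the second run may cut the new $f^* \to f_1$ edge with probability at most $\nicefrac{2}{\thh}$ (by \Cref{cor:edge-deletion}), which is $O(\eps)$ once we average over the random $\thh$; balancing may similarly split or merge components with probability $O(\eps)$ by \Cref{claim:balancedGroup}. These bad events are already part of the defiant event $\cD$ for client $c$, so they can be charged to the $\epsd$ slack. Once this coupling is made rigorous, the probability inequality holds and the lemma follows from Step~1.
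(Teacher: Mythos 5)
Your Step~1 reduction is correct and is precisely the transportation argument underlying the paper's proof of Lemma~\ref{lm:typeA}. Writing $p_{ij} := \Pr[\cT_{ij}]$ and using $p_{11}+p_{12}=p_{21}+p_{22}=\nicefrac14$, the choice $a = p_{11}$, $c = p_{22}$, $b = \nicefrac14 - p_{11} - p_{22}$ reproduces $\sum_{ij} p_{ij} W_{ij}$ \emph{exactly}, with $a+b+c=\nicefrac14$; the only obstruction is $b\ge 0$, which (up to the usual $\epsd$ slack, absorbed via $\WCchange_\cE \ge \WClb$ and $\WCchange_{\max}\le O(d^*+d_1)$ exactly as the paper absorbs a small-negative $p_\Delta$) is the inequality $\Pr[\cT_{11}]\le \Pr[\cT_{21}]+O(\eps)$. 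The paper never writes out a proof of Lemma~\ref{lm:typeB}, relying implicitly on the $\eta_1\leftrightarrow\eta_2$-symmetric version of Lemma~\ref{lm:typeA}; you have identified this correctly.

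Step~2 is the right idea but only a sketch, and two of its claims would need repair before the argument is rigorous. First, the two coupled runs do \emph{not} have ``every downstream structural feature involving $f_1,f_2$ \ldots identical'': changing $\tau(f^*)$ changes the local-candidate set (hence which facilities may be chosen as local surrogates), and on $\cT_1$ the facility $\eta_1(f^*)$ may itself be heavy, in which case $f^*$ becomes a self-loop. The paper's proof of Lemma~\ref{lm:averaging} sidesteps this by conditioning on all $\tau$'s except $\tau(f^*)$, working with $G_1^*$ (the graph $G_1$ with $f^*$'s out-edge removed), noting that ``$f_1$ or $f_2$ chosen as a surrogate'' is defiant, and then performing a five-way case analysis on where $f^*,f_1,f_2$ lie in $G_1^*$. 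Second, the bad events you wish to charge to $\epsd$ are not ``already part of the defiant event $\cD$'': the paper enlarges $\cD$ to $\cD'$, additionally including the event that the cycle of the 1-tree containing $f^*$ has length in $(\thh, \lceil\nicefrac1\eps\rceil\thh)$ and the event that two components containing $f_1$ or $f_2$ are merged in balancing, and it needs the survival-probability estimates of Lemmas~\ref{lm:survive-upper} and~\ref{lm:survive-lower}. This extra machinery is not optional: in the ``all three in one tree'' case the cycle length of the relevant 1-tree genuinely differs between the two branches of your coupling, so a direct structural-identity coupling breaks down there. The correct version of Step~2 is therefore not a one-line coupling but the full conditional-probability case analysis of Lemma~\ref{lm:averaging} with $\eta_1$ and $\eta_2$ swapped (and, correspondingly, $\cT_1\leftrightarrow\cT_2$), which does go through.
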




We now proceed to bound the worst-case potential changes in different events.

\paragraph{Simple swaps with $\tau(f^*) = \eta_1$} By implications (Siii) and (Siv), all three swaps $\move{f^*},\move{\neg f_1},\move{\neg f_2}$ are different. On swap $\move{f^*}$, the client can be served by $f^*$ and $f_1$. On swap $\move{\neg f_1}$, the client can be served by $f_2$ and $\eta_1$ by implication (Siii) of amenability. On swap $\move{\neg f_2}$, the client can be served by $f_1$ and $\eta_1$, again by implication (Siii) of amenability. Note that $d(c,\eta_1)\leq d^* + \rho(d^* + d_1)$ by \eqref{eqn:apx:eta1}. Therefore,
\begin{align*}
  \dSwapHa & \leq d^* + \b\, d_1 - d_1 - \b\, d_2 \tag{$\WCchange_{\move{f^*}}$}\\
  & ~~~ + d_2 + \b(d^* + \rho(d^* + d_1)) - d_1 - \b\, d_2 \tag{$\WCchange_{\move{\neg f_1}}$}\\
  & ~~~ + d_1 + \b(d^* + \rho(d^* + d_1)) - d_1 - \b\, d_2 \tag{$\WCchange_{\move{\neg f_2}}$}\\
  & = \hoono{(1+2\b+2\rho\b )\, d^* - (2-\b-2\rho\b)\, d_1 +(1-3\b)\, d_2}.
\end{align*}

\paragraph{Simple swaps with $\tau(f^*) = \eta_2$} By implications (ii) and (Siv), we have $\move{f^*} = \move{\neg f_1} \neq \move{\neg f_2}$. On swap $\move{f^*}$, the client can be served by $f^*$. On swap $\move{\neg f_2}$, the client can be served by $f_1$ and $\eta_1$, by implication (Siii) of amenability. Therefore,
\begin{align*}
  \dSwapHb & \leq (1+\a\b)\, d^* - d_1 - \b\, d_2 \tag{$\WCchange_{\move{f^*, \neg f_1}}$}\\
  & ~~~ +  d_1 + \b(d^* + \rho(d^* + d_1)) - d_1 - \b\, d_2 \tag{$\WCchange_{\move{\neg f_2}}$} \\
  & = \hoono{(1+\a\b + \b + \rho\b)\, d^* - (1- \rho\b)\, d_1 - 2\b\, d_2}. 
\end{align*}

\paragraph{Tree swaps with $\tau(f^*) = \eta_1$} On $\cT_{11}\cap\cA$, we have $\move{\neg f_1} = \move{\neg f_2}\neq\move{f^*}$. On swap $\move{f^*}$, the client can be served by $f^*$ and $f_1$. On swap $\move{\neg f_1,\neg f_2}$, the client can be served by $\eta_1$ and $\pi(f_1)$ by implication (ii) and (Tii) of amenability. We have $d(c,\eta_1)\leq d^* + \rho(d^* + d_1)$ and $d(c,\pi(f_1))\leq 2d_1 + d^*$ by \eqref{eqn:ubpartner}. Therefore,
\begin{align*}
  \WCchange_{\cT_{11}\cap\cA}(c)
  &
    \leq d^* + \b\, d_1 - d_1 - \b\, d_2 \tag{$\WCchange_{\move{f^*}}$}\\
  & ~~~ +  (d^* + \rho (d^* +d_1)) + \b(2d_1 + d^*) - d_1 - \b\, d_2
    \tag{$\WCchange_{\move{\neg f_1,\neg f_2}}$}\\
  & \leq \hoono{(2+\b + \rho)\, d^* - (2-3\b - \rho)\, d_1 - 2\b\, d_2}.
\end{align*}

On $\cT_{12}\cap\cA$, we first consider the case where all three swaps $\move{f^*},\move{\neg f_1},\move{\neg f_2}$ are different. On swap $\move{f^*}$, the client can be served by $f^*$. On swap $\move{\neg f_1}$, the client can be served by $f_2$ and $\eta_1$. On swap $\move{\neg f_2}$, the client can be served by $f_1$ and $\eta_1$. After both $\move{\neg f_1}$ and $\move{\neg f_2}$, $\eta_1$ is open by implication (ii) of amenability. Therefore,
\begin{align*}
  \WCchange_{\cT_{12}\cap\cA}(c)
  &
    \leq (1 + \a\b)\,d^* - d_1 - \b\, d_2 \tag{$\WCchange_{\move{f^*}}$}\\
  & ~~~ +  d_2  + \b(d^* + \rho(d^* + d_1)) - d_1 - \b\, d_2
    \tag{$\WCchange_{\move{\neg f_1}}$}\\
  & ~~~ +  d_1  + \b(d^* + \rho(d^* + d_1)) - d_1 - \b\, d_2
    \tag{$\WCchange_{\move{\neg f_2}}$}\\
  & \leq \hoono{(1 + \a\b + 2\b + 2\rho\b)\, d^* - (2-2\rho\b)\, d_1 + (1 - 3\b)\, d_2}. 
\end{align*}
Since our bound for $\WCchange_{\move{f^*}}(c)$ doesn't require either $f_1$ or $f_2$ to remain open after the swap, the above bound also holds when $\move{\neg f_1}$ and/or $\move{\neg f_2}$ coincides with $\move{f^*}$.

\paragraph{Tree swaps with $\tau(f^*) = \eta_2$} On $\cT_{21}\cap\cA$, $\move{f^*}$ is the only swap closing any facility in $\{f_1,f_2\}$ by the definition of $\cT_{21}$. In other words, both $\move{\neg f_1}$ and $\move{\neg f_2}$ coincide with $\move{f^*}$ as long as they exist. 
Therefore,
\begin{equation*}
  \WCchange_{\cT_{21}\cap\cA}(c)\leq \hoono{(1+\a\b)\, d^* - d_1 - \b\, d_2}.\tag{$\WCchange_{\move{f^*, \neg f_1,\neg f_2}}$}
\end{equation*}

On $\cT_{22}\cap\cA$, we have $\move{f^*} = \move{\neg f_1} \neq \move{\neg f_2}$ by implication (ii) of amenability. On swap $\move{f^*}$, the client can be served by $f^*$. On swap $\move{\neg f_2}$, the client can be served by $f_1$ and $\pi(f_2)$. Therefore,
\begin{align*}
  \WCchange_{\cT_{22}\cap\cA}(c) & \leq (1+\a\b)\, d^* - d_1 - \b\, d_2 \tag{$\WCchange_{\move{f^*, \neg f_1}}$}\\
  & ~~~ + d_1 + \b(2d_2+d^*) -d_1 - \b\, d_2 \tag{$\WCchange_{\move{\neg f_2}}$}\\
  & = \hoono{(1+\a\b+\b)\, d^* - d_1}. 
\end{align*}

Summarizing, we have

\begin{mybox}
\begin{align*}
\dSwapHa\leq  & (1+2\b+2\rho\b)\, d^* -(2-\b-2\rho\b)\, d_1 + (1-3\b)\, d_2 \span \\
& &  = (1.4+0.4\rho)\, d^* - (1.8-0.4\rho)\, d_1 + 0.4\, d_2\\
\dSwapHb\leq  & (1+\a\b+\b+\rho\b)\, d^* - (1-\rho\b)\, d_1 -2\b\, d_2
&  = (1.8+0.2\rho)\, d^* - (1-0.2\rho)\, d_1 - 0.4\, d_2\\
\WCchange_{\cT_{11}\cap\cA}(c) \leq  & (2+\b + \rho)\, d^* - (2-3\b - \rho)\, d_1 - 2\b\, d_2
& = (2.2 + \rho)\, d^* - (1.4 - \rho)\, d_1 - 0.4\, d_2\\
\WCchange_{\cT_{12}\cap\cA}(c) \leq  & (1 + \a\b + 2\b + 2\rho\b)\, d^* - (2-2\rho\b)\, d_1 + (1 - 3\b)\, d_2\span\\
& & = (2 + 0.4\rho)\, d^* - (2 - 0.4\rho)\, d_1 + 0.4\, d_2\\
\WCchange_{\cT_{21}\cap\cA}(c) \leq  & (1+\a\b)\, d^* - d_1 - \b\, d_2
& = 1.6\, d^* - d_1 - 0.2\, d_2\\
\WCchange_{\cT_{22}\cap\cA}(c)\leq  & (1+\a\b+\b)\, d^* - d_1
&  = 1.8\, d^* - d_1
\end{align*}
\end{mybox}
Now we combine these inequalities to get an upper bound for $\change_{\cA}(c)$. When $\rho(f^*)\leq \nicefrac 23$, we have $\Pr[\cS_1] = \Pr[\cT_1] = \nicefrac 12$ and $\Pr[\cS_2] = \Pr[\cT_2] = 0$. Therefore, 
\begin{equation*}
\change_{\cA}(c)\leq \nicefrac 12 \cdot \WCchange_{\cS_1\cap\cA}(c) + \nicefrac 12 \cdot \max\{ \WCchange_{\cT_{11}\cap\cA}(c), \WCchange_{\cT_{12}\cap\cA}(c) \} + \epsd.
\end{equation*} 
If the maximum is attained at $\WCchange_{\cT_{11}\cap\cA}(c)$, we have
\begin{align*}
\change_{\cA}(c) & \leq \nicefrac 12\cdot \WCchange_{\cS_1\cap\cA}(c) + \nicefrac 12 \cdot  \WCchange_{\cT_{11}\cap\cA}(c) + \epsd\\
& \leq (1.8 + 0.7\rho)d^* - (1.6 - 0.7\rho)d_1 + \epsd\\
& \leq (1.8 + 0.7 \times \nicefrac 23)d^* - (1.6 - 0.7 \times \nicefrac 23)d_1 + \epsd\\
& = \hoono{2.26667d^* - 1.13333d_1} + \epsd.
\end{align*}
If the maximum is attained at $\WCchange_{\cT_{12}\cap\cA}(c)$, we have
\begin{align*}
\change_{\cA}(c) & \leq \nicefrac 12\cdot \WCchange_{\cS_1\cap\cA}(c) + \nicefrac 12 \cdot  \WCchange_{\cT_{12}\cap\cA}(c) + \epsd\\
& \leq (1.7 + 0.4 \rho)d^* - (1.9 - 0.4 \rho)d_1 + 0.4d_2 + \epsd\\
& \leq (1.9 + 0.6 \rho)d^* - (1.3 - 0.6 \rho)d_1 + \epsd\tag{$d_2\leq \nicefrac 12 \cdot(d^* + \rho(d^* + d_1)) + \nicefrac 12\cdot\alpha d_1$}\\
& \leq (1.9 + 0.6 \times \nicefrac 23)d^* - (1.3 - 0.6 \times \nicefrac 23)d_1 + \epsd\\
& = \hoono{2.3\,d^* - 0.9\,d_1} + \epsd.
\end{align*}

When $\rho(f^*)>\nicefrac 23$, we apply \Cref{lm:typeB} to combine the inequalities. Note that our bound for $\WCchange_{\cT_{21}\cap\cA}(c)$ is smaller than our bound for $\WCchange_{\cT_{22}\cap\cA}(c)$, so we only need to consider cases where the maximum in \Cref{lm:typeB} is attained at either $\WCchange_{\cT_{11}\cap\cA}(c) + \WCchange_{\cT_{21}\cap\cA}(c)$ or $\WCchange_{\cT_{12}\cap\cA}(c) + \WCchange_{\cT_{22}\cap\cA}(c)$.

When $\nicefrac 23 < \rho(f^*)\leq \nicefrac 34$, we have $\Pr[\cS_1] = \nicefrac 12,\Pr[\cS_2] = 0,\Pr[\cT_1] = \Pr[\cT_2] = \nicefrac 14$. If the maximum in \Cref{lm:typeB} is attained at $\WCchange_{\cT_{11}\cap\cA}(c) + \WCchange_{\cT_{21}\cap\cA}(c)$, we have
\begin{align*}
\change_{\cA}(c) & \leq \nicefrac 12\cdot \WCchange_{\cS_1\cap\cA}(c) + \nicefrac 14 \cdot  (\WCchange_{\cT_{11}\cap\cA}(c) +   \WCchange_{\cT_{21}\cap\cA}(c)) + \epsd\\
& \leq (1.65 + 0.45\rho)\,d^* - (1.5 - 0.45\rho)\,d_1 + 0.05\, d_2 + \epsd\\
& \leq (1.65 + 0.45\rho)\,d^* - (1.35 - 0.45\rho)\,d_1 + \epsd\tag{$d_2\leq\alpha d_1$}\\
& \leq (1.65 + 0.45 \times \nicefrac 34)d^* - (1.35 - 0.45 \times \nicefrac 34)d_1 + \epsd\\
& = \hoono{1.9875d^* - 1.0125d_1} + \epsd.
\end{align*}

If the maximum in \Cref{lm:typeB} is attained at $\WCchange_{\cT_{12}\cap\cA}(c) + \WCchange_{\cT_{22}\cap\cA}(c)$, we have
\begin{align*}
\change_{\cA}(c) & \leq \nicefrac 12\cdot \WCchange_{\cS_1\cap\cA}(c) + \nicefrac 14 \cdot  (\WCchange_{\cT_{12}\cap\cA}(c) +   \WCchange_{\cT_{22}\cap\cA}(c)) + \epsd\\
& \leq (1.65 + 0.3\rho)\,d^* - (1.65 - 0.3\rho)\,d_1 + 0.3\, d_2 + \epsd\\
& \leq (1.95 + 0.6\rho)\,d^* - (1.65 - 0.6\rho)\,d_1 + \epsd\tag{$d_2\leq d^* + \rho(d^* + d_1)$}\\
& \leq (1.95 + 0.6 \times \nicefrac 34)d^* - (1.65 - 0.6 \times \nicefrac 34)d_1 + \epsd\\
& = \hoono{2.4d^* - 1.2d_1} + \epsd.
\end{align*}

When $\rho(f^*)> \nicefrac 34$, we have $\Pr[\cS_1] = \nicefrac 54 - \rho,\Pr[\cS_2] = \rho - \nicefrac 34,\Pr[\cT_1] = \Pr[\cT_2] = \nicefrac 14$. If the maximum in \Cref{lm:typeB} is attained at $\WCchange_{\cT_{11}\cap\cA}(c) + \WCchange_{\cT_{21}\cap\cA}(c)$, we have
\begin{align*}
\change_{\cA}(c) & \leq (\nicefrac 54 - \rho)\cdot \WCchange_{\cS_1\cap\cA}(c) + (\rho - \nicefrac 34)\cdot \WCchange_{\cS_2\cap\cA}(c) + \nicefrac 14 \cdot  (\WCchange_{\cT_{11}\cap\cA}(c) +   \WCchange_{\cT_{21}\cap\cA}(c)) + \epsd\\
& \leq (1.35 + \rho - 0.2\rho^2)\,d^* - (2.1 - 1.4\rho + 0.2\rho^2)\,d_1 + (0.65 - 0.8\rho)\, d_2 + \epsd.
\end{align*}
When $\rho < 0.8125 = \nicefrac{0.65}{0.8}$, we use $d_2\leq\alpha d_1$:
\begin{align*}
\change_{\cA}(c) & \leq (1.35 + \rho - 0.2\rho^2)\,d^* - (0.15 + \rho + 0.2\rho^2)\,d_1 + \epsd\\
& \leq (1.35 + 0.8125 - 0.2 \times 0.8125^2)d^* - (0.15 + \nicefrac 34 + 0.2 \times (\nicefrac 34)^2)d_1 + \epsd\\
& = \hoono{2.03047d^* - 1.0125d_1} + \epsd.
\end{align*}
When $\rho \geq 0.8125$, we use $d_2 \geq d_1$:
\begin{align*}
\change_{\cA}(c) & \leq (1.35 + \rho - 0.2\rho^2)\,d^* - (1.45 - 0.6\rho + 0.2\rho^2)\,d_1 + \epsd\\
& \leq (1.35 + 1 - 0.2)d^* - (1.45 - 0.6 + 0.2 )d_1 + \epsd\\
& = \hoono{2.15\,d^* - 1.05\,d_1} + \epsd.
\end{align*}

If the maximum in \Cref{lm:typeB} is attained at $\WCchange_{\cT_{12}\cap\cA}(c) + \WCchange_{\cT_{22}\cap\cA}(c)$, we have
\begin{align*}
\change_{\cA}(c) & \leq (\nicefrac 54 - \rho)\cdot \WCchange_{\cS_1\cap\cA}(c) + (\rho - \nicefrac 34)\cdot \WCchange_{\cS_2\cap\cA}(c) + \nicefrac 14 \cdot  (\WCchange_{\cT_{12}\cap\cA}(c) +   \WCchange_{\cT_{22}\cap\cA}(c)) + \epsd\\
& \leq (1.35 + 0.85\rho - 0.2\rho^2)\,d^* - (2.25 - 1.25\rho + 0.2\rho^2)\,d_1 + (0.9 - 0.8\rho)\, d_2 + \epsd\\
& \leq (2.25 + 0.95\rho - \rho^2)\,d^* - (2.25 - 2.15\rho + \rho^2)\,d_1 + \epsd\tag{$d_2\leq d^* + \rho(d^* + d_1)$}\\
& \leq (2.25 + 0.95 \times \nicefrac 34  - (\nicefrac 34)^2)d^* - (2.25 - 2.15  + 1)d_1 + \epsd\\
& = \hoono{2.4d^* - 1.1d_1} + \epsd.
\end{align*}

\subsection{Proof of \eqref{eqn:bndC}:  Clients of Type \xC}

In this section, we show that for any client $c$ with type \xC, we have
\[
\Gsum{c} \leq 2.2\, d^*(c) - 0.8888\, d_1(c).
\]
%
%
%
If the client $c$ satisfies $\rho(f^*) \leq \nicefrac{2}{3}$, we have the same bound as in the type \xA case in \Cref{sec:typeAlow}, where our analysis was independent of whether $f_2 = \eta_2$ or not. That is
\begin{equation*}
\change_\cA(c) \leq \hoono{2.1\, d^*(c) - 0.9\, d_1(c)} + \epsd.
\end{equation*}

We thus focus on clients with $\rho(f^*) > \nicefrac 23$. Compared to our analysis for other client types, our analysis for type \xC involves a larger neighborhood of the client. In particular, the optimal facility $g^*:=\pi(f_1)$ and the local facilities close to it play a crucial role in our analysis. This makes it important to consider finer-grained events. Recall that we used $\cS_1,\cS_2,\cT_1,\cT_2$ to denote simple/tree events restricted to $f^*$ pointing to $\eta_1$ or $\eta_2$. We now also define events $\cS_1',\cS_2',\cT_1',\cT_2'$ similarly, except that they depend on where $g^*$ points to, rather than $f^*$. We classify clients into subtypes according to the characteristics of the swap sets generated on these events:

\begin{restatable}[Subtypes within type \xC]{claim}{typeThreeSwapcases}
  \label{claim:type3Swapcases}
  For a client $c$ of type \xC, one of the following is true:
  \begin{enumerate}[(a)]
    \item  $f_1$ is \heavy.
    \item  $f_2$ is \heavy.
    \item A facility $h$ is open near $c$ after the simple swap closing $f_1$. Formally, a facility $h\neq f_2$ is open after swap $\move{\neg f_1}$ at distance $d(c,h)\leq 3d_1 + 2d^*$ on $\cS\cap\cA$.
	\item $g^* \neq f^*$, $\rho(g^*) > \nicefrac 34$, and for all $b = 1,2$, any swap set $\cP$ generated on $\cS_b'\cap\cA$, a facility $h\neq f_2$ is open after swap $\move{\neg f_1}$ at distance $d(c,h)\leq \left\{\begin{array}{ll}2d_1 + d^*,& \textup{if}~b = 1\\ 2d_1 + d^* + \nicefrac 43(d_1 + d^*),& \textup{if}~b=2\end{array}\right.$.
    \item For any swap set $\cP$ generated on $\cT_2\cap\cA$, $\move{f^*}$ closes both $f_1$ and $f_2$.
\item $g^*\neq f^*$, $\rho(g^*)>\nicefrac{2}{3}$, and there exists $b\in\{1,2\}$ such that for any swap set $\cP$ generated on $\cT_{b}'\cap\cA$, $\move{f^*}$ closes both $f_1$ and $f_2$.
  \end{enumerate}
\end{restatable}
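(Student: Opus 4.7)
The plan is a case analysis on whether $g^* = f^*$ and, when $g^* \neq f^*$, on the triple $(\eta_1(g^*), \eta_2(g^*), \rho(g^*))$. The workhorse tools from \Cref{clm:implication-amenability} are the amenability implications (ii), (ii'), (Tii), which keep the directed edges $f^* \to \tau(f^*)$, $g^* \to \tau(g^*)$, $f_1 \to g^*$ alive in the swap construction, plus (iii') which keeps $g$ out of the simple swap closing $f_1$. I will also repeatedly use the two distance bounds $d(c, g^*) \leq d_1 + d(f_1, g^*) \leq d_1 + d(f_1, f^*) \leq 2d_1 + d^*$ (using $g^* = \pi(f_1)$) and $d(g^*, \eta_1(g^*)) \leq d(g^*, f_1) \leq d_1 + d^*$. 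The key structural consequence, which I will call the \emph{chain lemma}, is: whenever $\tau(g^*) = f_2$, the chain $f_1 \to g^* \to f_2$ together with $f^* \to \tau(f^*) \in \{f_1, f_2\}$ places all four facilities $f^*, f_1, g^*, f_2$ in a single swap, so that swap closes both $f_1$ and $f_2$.

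First, if $g^* = f^*$, then (e) is immediate: on $\cT_2 \cap \cA$, (ii) gives $\move{f^*}$ closes $f_2$, and since $\pi(f_1) = g^* = f^*$, (Tii) forces $\move{\neg f_1}$ to open $f^*$, hence $\move{\neg f_1} = \move{f^*}$ closes both $f_1$ and $f_2$. Henceforth assume $g^* \neq f^*$, and observe $\eta_1(g^*) \in \{f_1, f_2, g\}$ by the definition of $g$. I split into three regimes. \textbf{Regime A:} if $f_2 \in \{\eta_1(g^*), \eta_2(g^*)\}$ and $\rho(g^*) > 2/3$, pick $b$ with $\eta_b(g^*) = f_2$ and apply the chain lemma on $\cT_b' \cap \cA$ to obtain (f). \textbf{Regime B:} if $f_2 \in \{\eta_1(g^*), \eta_2(g^*)\}$ and $\rho(g^*) \leq 2/3$, then $\tau(g^*) = \eta_1(g^*)$ deterministically; when $\eta_1(g^*) = f_2$ apply the chain lemma on $\cT_2 \cap \cA$ to obtain (e); when $\eta_1(g^*) = f_1$ we have $g^* \in \move{\neg f_1}$ on every simple swap and (c) holds with $h = g^*$; when $\eta_1(g^*) = g$, the bound $d(c, g) \leq 3d_1 + 2d^*$ combined with (iii') gives (c) with $h = g$. \textbf{Regime C:} if $f_2 \notin \{\eta_1(g^*), \eta_2(g^*)\}$ and $\eta_1(g^*) = g$, the same bound gives (c) via $g$; otherwise $\eta_1(g^*) = f_1$ and, since $\eta_2(g^*) \neq f_1, f_2$, necessarily $\eta_2(g^*) = g$. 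In this last subcase, for $\rho(g^*) \leq 3/4$ one has $\tau(g^*) = f_1$ on simple and (c) holds via $g^*$; for $\rho(g^*) > 3/4$, (d) holds: on $\cS_1' \cap \cA$ we have $g^* \in \move{\neg f_1}$ at distance $\leq 2d_1 + d^*$, and on $\cS_2' \cap \cA$ we have $\tau(g^*) = g$, with $g$ still open by (iii'), and $d(c, g) \leq (2d_1 + d^*) + (1/\rho(g^*))(d_1 + d^*) \leq 2d_1 + d^* + \tfrac 43(d_1 + d^*)$.

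The main technical obstacle is bookkeeping: ensuring every $(\eta_1(g^*), \eta_2(g^*), \rho(g^*))$ combination falls into exactly one of Regimes A, B, C (or the $g^* = f^*$ case), and that each amenability implication is invoked only in the regime where it applies. The most delicate corner is $\eta_1(g^*) = f_1, \eta_2(g^*) = f_2, \rho(g^*) > 3/4$: on $\cS_2' \cap \cS_2 \cap \cA$ neither $f^*$ nor $g^*$ is opened by $\move{\neg f_1}$, and $d(c, g)$ can be arbitrarily large, so both (c) and (d) fail; (f) is the only escape, and Regime A indeed covers it via the chain lemma with $b = 2$.
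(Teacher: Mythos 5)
Your decomposition into Regimes A, B, C (based on $(\eta_1(g^*),\eta_2(g^*),\rho(g^*))$) is cleaner-looking than the paper's, but it has a genuine gap: the ``chain lemma'' you invoke in Regime A and in Regime B's first sub-case silently assumes that the directed edge $f_1 \to g^*$ in the post-degree-reduction 1-forest $G_1$ goes to the \emph{original} copy of $g^*$. When $g^*$ is a heavy optimal facility, its children $\pi^{-1}(g^*)$ are partitioned into groups of size $\thd$, and only the first group (the $\thd$ closest) points to the original copy; any $f_1$ outside that group points to a new copy of $g^*$, whose out-edge goes to a random optimal surrogate, not to $\tau(g^*)$. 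In that situation the chain $f_1 \to g^* \to f_2$ is severed, and conditions (e)/(f) may simply be false, so your proof can derive a false conclusion. The offending configuration is exactly $\eta_1(g^*), \eta_2(g^*) \in \{f_2,g\}$ with $f_1$ far down the list of $\pi^{-1}(g^*)$ (e.g.\ $\eta_1(g^*)=f_2$, $\eta_2(g^*)=g$): this lands in your Regime A (or the $\eta_1(g^*)=f_2$ branch of B) but the chain need not exist. Your handling of the $g^*=f^*$ case escapes this problem only implicitly, because $f_1=\eta_1(f^*)$ forces $f_1$ into the first group.

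The paper preempts this by first testing whether $d(g,g^*) \le d(f_1,g^*)$. If so, then $d(c,g)\le d_1+2d(f_1,g^*)\le 3d_1+2d^*$ and implication (Siii') of amenability gives condition (c) directly; crucially, this branch \emph{covers} the new-copy scenario, because $f_1$ outside the first group forces at least one non-$\{f_1,f_2\}$ facility closer to $g^*$ than $f_1$, whence $d(g,g^*)<d(f_1,g^*)$. In the complementary branch $d(g,g^*)>d(f_1,g^*)$, at most one facility ($f_2$) can be closer to $g^*$ than $f_1$, so $f_1$ is certainly in the first $\thd$ and the chain is intact; only then is the (e)/(f) argument run. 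Your proof needs exactly this preliminary dichotomy (or some equivalent check that $f_1$ points to the original copy of $g^*$) before invoking the chain lemma. Without it, Regimes A and B are not sound.
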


We prove this claim in \Cref{sec:proof-typeC}. Below we present our bounds for each of these subtypes.

%

\subsubsection{When \texorpdfstring{$f_1$}{[f1]} is a heavy facility}
$f_1$ being heavy implies that the swap $\move{\neg f_1}$ doesn't exist. We thus focus on $\move{f^*}$ and $\move{\neg f_2}$.

\paragraph{Simple swaps with $\tau(f^*) = \eta_1$} By implication (Siv) of amenability, we have $\move{f^*}\neq \move{\neg f_2}$. On swap $\move{f^*}$, the client can be served by $f^*$, and on swap $\move{\neg f_2}$, the client can be served by $f_1$. Therefore,
\begin{align*}
\dSwapHa  \leq {} & (1+\a\b)d^* - d_1 - \b\, d_2 \tag{$\WCchange_{\move{f^*}}$}\\
& +  (1+\a\b)d_1 - d_1 -\b\, d_2 \tag{$\WCchange_{\move{\neg f_2}}$}\\
= {} & \hoono{(1+\a\b)d^*-(1-\a\b)d_1 -2\b d_2}.
\end{align*}

\paragraph{Tree swaps with $\tau(f^*) = \eta_1$}Let us first assume $\move{f^*}\neq \move{\neg f_2}$. On swap $\move{f^*}$, the client can be served by $f^*$, and on swap $\move{\neg f_2}$, the client can be served by $f_1$ and $\pi(f_2)$ by implication (Tii) of amenability. Note that $d(c,\pi(f_2))\leq 2d_2 + d^*$ by \eqref{eqn:pif2}. Therefore,
\begin{align*}
\dSwapTa \leq {} & (1+\a\b)d^* - d_1 - \b\, d_2 \tag{$\WCchange_{\move{f^*}}$}\\
& + d_1+ \b(2d_2+d^*)- d_1 -\b\, d_2 \tag{$\WCchange_{\move{\neg f_2}}$}\\
= {} & \hoono{(1+\a\b+\b)d^* -  d_1}.
\end{align*}

The inequality also holds when $\move{f^*} = \move{\neg f_2}$ since our bound for $\WCchange_{\move{f^*}}$ does not require $f_2$ to remain open after the swap.

\paragraph{Simple \& tree swaps with $\tau(f^*) = \eta_2$}We have $\move{f^*} = \move{\neg f_2}$ by implication (ii) of amenability. On that swap, the client can be served by $f^*$. Therefore,
\begin{align*}
\dSwapHb \leq {} & \hoono{(1+\a\b)d^* - d_1 -  \b\, d_2}, \tag{$\WCchange_{\move{f^*,\neg f_2}}$}\\
\dSwapTb \leq {} & \hoono{(1+\a\b)d^* - d_1 -  \b\, d_2}. \tag{$\WCchange_{\move{f^*,\neg f_2}}$}
\end{align*}

Summarizing, we have

\begin{mybox}
\begin{align*}
\dSwapHa\leq  & (1+\a\b)\, d^* - (1-\a\b) d_1 - 2 \b\, d_2
&  = 1.6\, d^* - 0.4 d_1 - 0.4\, d_2\\
\dSwapHb\leq  & (1+\a\b)\, d^* -  d_1 - \b\, d_2
&  = 1.6\, d^* -  d_1 - 0.2\, d_2\\
\dSwapTa\leq  & (1+\a\b+\b)\, d^* -d_1
&  = 1.8\, d^* - d_1\\
\dSwapTb\leq  & (1+\a\b)\, d^*-  d_1 -\b\, d_2
&  = 1.6\, d^* -  d_1 - 0.2\, d_2
\end{align*}
\end{mybox}
We now combine these inequalities to get an upper bound for $\change_\cA(c)$.

When $\nicefrac 23 < \rho(f^*) \leq \nicefrac 34$, we have $\Pr[\cS_1] = \nicefrac 12,\Pr[\cS_2] = 0,\Pr[\cT_1] = \Pr[\cT_2] = \nicefrac 14$. Therefore,
\begin{align*}
\change_\cA(c) & \leq \nicefrac 12\cdot \WCchange_{\cS_1\cap\cA}(c) + \nicefrac 14 \cdot \WCchange_{\cT_1\cap\cA}(c) + \nicefrac 14 \cdot \WCchange_{\cT_2\cap\cA}(c) + \epsd \\
& \leq 1.65\, d^* - 0.7\, d_1 - 0.25\, d_2 + \epsd\\
& \leq \hoono{1.65\, d^* - 0.95\, d_1} + \epsd.\tag{$d_2\geq d_1$}
\end{align*}

When $\rho(f^*) > \nicefrac 34$, we have $\Pr[\cS_1] = \nicefrac 54 - \rho, \Pr[\cS_2] = \rho - \nicefrac 34, \Pr[\cT_1] = \Pr[\cT_2] = \nicefrac 14$. Therefore,
\begin{align*}
\change_\cA(c) & \leq (\nicefrac 54 - \rho)\cdot \WCchange_{\cS_1\cap\cA}(c) + (\rho - \nicefrac 34)\cdot \WCchange_{\cS_1\cap\cA}(c) + \nicefrac 14 \cdot \WCchange_{\cT_1\cap\cA}(c) + \nicefrac 14 \cdot \WCchange_{\cT_2\cap\cA}(c) + \epsd \\
& \leq 1.65\, d^* - (0.6\rho + 0.25)\, d_1 - (0.4 - 0.2\rho)\, d_2 + \epsd\\
& \leq 1.65\, d^* - (0.4\rho + 0.65)\, d_1 + \epsd\tag{$d_2\geq d_1$}\\
& \leq 1.65\, d^* - (0.4 \times \nicefrac 34 + 0.65)\, d_1 + \epsd\\
& = \hoono{1.65\, d^* - 0.95\, d_1} + \epsd.
\end{align*}
%
%
%
%
%
%
%
%
%
\subsubsection{When \texorpdfstring{$f_2$}{[f2]} is a heavy facility}
$f_2$ being heavy implies that $\move{\neg f_2}$ does not exist. We thus focus on $\move{f^*}$ and $\move{\neg f_1}$.

\paragraph{Simple \& tree swaps with $\tau(f^*) = \eta_1$}We have $\move{f^*} = \move{\neg f_1}$ by implication (ii) of amenability. On that swap, the client can be served by $f^*$. Therefore,
\begin{align*}
\dSwapHa \leq {} & \hoono{(1+\a\b)d^* - d_1 -  \b\, d_2}, \tag{$\WCchange_{\move{f^*,\neg f_1}}$}\\
\dSwapTa \leq {} & \hoono{(1+\a\b)d^* - d_1 -  \b\, d_2}. \tag{$\WCchange_{\move{f^*,\neg f_1}}$}
\end{align*}

\paragraph{Simple swaps with $\tau(f^*) = \eta_2$} Implication (Siv) of amenability implies that $\move{f^*}\neq \move{\neg f_1}$. On swap $\move{f^*}$, the client can be served by $f^*$. On swap $\move{\neg f_1}$, the client can be served by $f_2$. Therefore,
\begin{align}
\dSwapHb \leq {} & (1+\a\b)d^* - d_1 -  \b\, d_2 \tag{$\WCchange_{\move{f^*}}$}\\
& + (1+\a\b)d_2 - d_1 -\b\, d_2 \tag{$\WCchange_{\move{\neg f_1}}$}\notag\\
= {} & \hoono{(1+\a\b)d^* - 2\, d_1 + (1+\a\b-2\b)d_2}.\label{eq:typeCsimple2}
\end{align}

\paragraph{Tree swaps with $\tau(f^*) = \eta_2$} We first assume that $\move{f^*} \neq \move{\neg f_1}$. On swap $\move{f^*}$, the client can be served by $f^*$. On swap $\move{\neg f_1}$, the client can be served by $f_2$ and $\pi(f_1)$, by implication (Tii) of amenability. Note that $d(c,\pi(f_1))\leq d_1 + d(f_1,\pi(f_1))\leq 2d_1 + d^*$. Therefore,
\begin{align*}
\dSwapTb \leq {} & (1+\a\b)d^* - d_1 - \b\, d_2 \tag{$\WCchange_{\move{f^*}}$}\\
& + d_2+ \b(2d_1+d^*)- d_1 -\b\, d_2 \tag{$\WCchange_{\move{\neg f_1}}$}\\
= {} & \hoono{(1+\a\b+\b)d^* - (2-2\b)\, d_1 + (1-2\b)d_2}.
\end{align*}
The above inequality also holds when $\move{f^*} = \move{\neg f_1}$ because our bound for $\WCchange_{\move{f^*}}$ does not require $f_1$ to remain open after the swap.
and $\WCchange_{\move{\neg f_1}}$ is  non-negative.

Summarizing, we have
\begin{mybox}
\begin{align*}
\dSwapHa\leq  & (1+\a\b)\, d^* -  d_1 - \b\, d_2
&  = 1.6\, d^* - d_1 - 0.2\, d_2\\
\dSwapHb\leq  & (1+\a\b)\, d^* - 2\, d_1 + (1+\a\b-2\b)\, d_2
&  = 1.6\, d^* - 2\, d_1 + 1.2\, d_2\\
\dSwapTa\leq  & (1+\a\b)\, d^* -d_1 - \b\, d_2
&  = 1.6\, d^* - d_1-0.2\, d_2\\
\dSwapTb\leq  & (1+\a\b+\b)\, d^*-(2-2\b)\, d_1 + (1-2\b)\, d_2
&  = 1.8\, d^* - 1.6\, d_1 + 0.6\, d_2
\end{align*}
\end{mybox}
We now combine these inequalities to get an upper bound for $\change_\cA(c)$.

When $\nicefrac 23 < \rho(f^*) \leq \nicefrac 34$, we have $\Pr[\cS_1] = \nicefrac 12,\Pr[\cS_2] = 0,\Pr[\cT_1] = \Pr[\cT_2] = \nicefrac 14$. Therefore,
\begin{align*}
\change_\cA(c) & \leq \nicefrac 12\cdot \WCchange_{\cS_1\cap\cA}(c) + \nicefrac 14 \cdot \WCchange_{\cT_1\cap\cA}(c) + \nicefrac 14 \cdot \WCchange_{\cT_2\cap\cA}(c) + \epsd \\
& \leq \hoono{1.65\, d^* - 1.15\, d_1} + \epsd.
\end{align*}

When $\rho(f^*) > \nicefrac 34$, we have $\Pr[\cS_1] = \nicefrac 54 - \rho,\Pr[\cS_2] = \rho - \nicefrac 34,\Pr[\cT_1] = \Pr[\cT_2] = \nicefrac 14$. Therefore,
\begin{align*}
\change_\cA(c) & \leq (\nicefrac 54 - \rho)\cdot \WCchange_{\cS_1\cap\cA}(c) + (\rho - \nicefrac 34)\cdot \WCchange_{\cS_2\cap\cA}(c) + \nicefrac 14 \cdot \WCchange_{\cT_1\cap\cA}(c) + \nicefrac 14 \cdot \WCchange_{\cT_2\cap\cA}(c) + \epsd \\
& \leq 1.65\, d^* - (\rho + 0.4)\, d_1 + (1.4\rho - 1.05)\, d_2 + \epsd\\
& \leq (2 + 1.4\rho - 1.05 / \rho)\, d^* - (\rho + 1.05/\rho - 1)\, d_1 + \epsd\tag{$\rho >\nicefrac 34$ and $d_2\leq d^* + \nicefrac 1\rho (d^* + d_1)$}\\
& \leq (2 + 1.4 - 1.05)\, d^* - (1 + 1.05 - 1)\, d_1 + \epsd\\
& = \hoono{2.35\, d^* - 1.05\, d_1} + \epsd.
\end{align*}
%
%
%
%
%
%
%
%
%

\subsubsection{\texorpdfstring{There exists $h$ such that $d(c,h)\leq 3d_1 + 2d^*$}{d(c,h) <= 3d1
    + 2d*]} in simple swaps}
\label{sec:typeCsimple}
\paragraph{Simple swaps with $\tau(f^*) = \eta_1$} By implications (ii) and (Siv) of amenablity, we know $\move{f^*} = \move{\neg f_1}\neq \move{\neg f_2}$. On swap $\move{f^*,\neg f_1}$, the client can be served by $f^*$. On swap $\move{\neg f_2}$, the client can be served by $f_1$. Therefore,
\begin{align*}
  \dSwapHa\leq {} & (1+\a\b)\, d^* - d_1 - \b\, d_2 \tag{$\WCchange_{\move{f^*, \neg f_1}}$}\\
  & +  (1+\a\b)\, d_1  - d_1 - \b\, d_2 \tag{$\WCchange_{\move{\neg f_2}}$}\\
  = {} & \hoono{(1+\a\b)\, d^* - (1- \a\b)\, d_1 - 2\b\, d_2}.
\end{align*}

\paragraph{Simple swaps with $\tau(f^*) = \eta_2$} By implications (ii) and (Siv) of amenablity, we know $\move{f^*} = \move{\neg f_2}\neq \move{\neg f_1}$. On swap $\move{f^*,\neg f_2}$, the client can be served by $f^*$. On swap $\move{\neg f_1}$, the client can be served by $f_2$ and $h$. Therefore,
\begin{align*}
  \dSwapHb\leq {} & (1+\a\b)\, d^* - d_1 - \b\, d_2 \tag{$\WCchange_{\move{f^*, \neg f_2}}$}\\
  & + d_2 + \b(3d_1 + 2d^*)-d_1 - \b\, d_2 \tag{$\WCchange_{\move{\neg f_1}}$}\\
  = {} & \hoono{(1+\a\b+2\b)\, d^* - (2-3\b)\, d_1 + (1-2\b)\, d_2}.
\end{align*}

\paragraph{Tree swaps with $\tau(f^*) = \eta_1$} By implication (ii) of amenability, we know $\move{f^*} = \move{\neg f_1}$. Let us first assume that $\move{\neg f_2}\neq \move{f^*,\neg f_1}$. On swap $\move{f^*,\neg f_1}$, the client can be served by $f^*$. On swap $\move{\neg f_2}$, the client can be served by $f_1$ and $\pi(f_2)$ by implication (Tii) of amenability. We have $d(c,\pi(f_2))\leq 2d_2 + d^*$ by \eqref{eqn:pif2}. Therefore,
\begin{align*}
  \dSwapTa\leq {} & (1+\a\b)\, d^* - d_1  - \b\, d_2 \tag{$\WCchange_{\move{f^*, \neg f_1}}$}\\
  & +  d_1 + \b(2d_2 + d^*) - d_1 - \b\, d_2 \tag{$\WCchange_{\move{\neg f_2}}$}\\
  = {} & \hoono{(1+\a\b+\b)\, d^* -d_1}.
\end{align*}
This inequality also holds when $\move{\neg f_2} = \move{f^*,\neg f_1}$, because our bound for $\WCchange_{\move{f^*,\neg f_1}}$ does not require $f_2$ to remain open after the swap
and $\WCchange_{\move{\neg f_2}}$ is non-negative.

\paragraph{Tree swaps with $\tau(f^*) = \eta_2$} By implication (ii) of amenability, we know $\move{f^*} = \move{\neg f_2}$. Again, let us first assume that $\move{\neg f_1}\neq \move{f^*,\neg f_2}$. On swap $\move{f^*,\neg f_2}$, the client can be served by $f^*$. On swap $\move{\neg f_1}$, the client can be served by $f_2$ and $\pi(f_1)$ by implication (ii) of amenability. We have $d(c,\pi(f_1))\leq 2d_1 + d^*$ by \eqref{eqn:ubpartner}. Therefore,
\begin{align*}
  \dSwapTb\leq {} & (1+\a\b)\, d^* - d_1 - \b\, d_2 \tag{$\WCchange_{\move{f^*, \neg f_2}}$}\\
  & + d_2 + \b(2d_1 + d^*) - d_1 - \b\, d_2 \tag{$\WCchange_{\move{\neg f_1}}$}\\
  = {} & \hoono{(1+\a\b+\b)\, d^* - (2-2\b)\, d_1 + (1-2\b)\, d_2}.
\end{align*}
This inequality also holds when $\move{\neg f_1} = \move{f^*,\neg f_2}$, because our bound for $\WCchange_{\move{f^*,\neg f_2}}$ does not require $f_1$ to remain open after the swap
and $\WCchange_{\move{\neg f_1}}$ is non-negative.

Summarizing, we have

\begin{mybox}
\begin{align*}
\dSwapHa\leq  & (1+\a\b)\, d^* - (1-\a\b)\, d_1 - 2\b\, d_2
& = 1.6\, d^* - 0.4\, d_1 - 0.4\, d_2\\
\dSwapHb\leq  & (1+\a\b+2\b)\, d^* - (2-3\b)\, d_1 + (1-2\b)\, d_2
&  = 2d^* -1.4\, d_1 + 0.6\, d_2\\
\dSwapTa\leq  & (1+\a\b+\b)\, d^* -d_1
&  = 1.8\, d^* - d_1\\
\dSwapTb\leq  & (1+\a\b+\b)\, d^* - (2-2\b)\, d_1 + (1-2\b)\, d_2
&  = 1.8\, d^* - 1.6\, d_1 + 0.6\, d_2
\end{align*}
\end{mybox}
We now combine these inequalities to get an upper bound for $\change_\cA(c)$.

When $\nicefrac 23 < \rho(f^*) \leq \nicefrac 34$, we have $\Pr[\cS_1] = \nicefrac 12,\Pr[\cS_2] = 0,\Pr[\cT_1] = \Pr[\cT_2] = \nicefrac 14$. Therefore,
\begin{align*}
\change_\cA(c) & \leq \nicefrac 12\cdot \WCchange_{\cS_1\cap\cA}(c) + \nicefrac 14 \cdot \WCchange_{\cT_1\cap\cA}(c) + \nicefrac 14 \cdot \WCchange_{\cT_2\cap\cA}(c) + \epsd \\
& \leq 1.7\, d^* - 0.85\, d_1 - 0.05\, d_2 + \epsd\\
& \leq \hoono{1.7\, d^* - 0.9\, d_1} + \epsd.\tag{$d_2\geq d_1$}
\end{align*}

When $\rho(f^*) > \nicefrac 34$, we have $\Pr[\cS_1] = \nicefrac 54 - \rho,\Pr[\cS_2] = \rho - \nicefrac 34, \Pr[\cT_1] = \Pr[\cT_2] = \nicefrac 14$. Therefore,
\begin{align*}
\change_\cA(c) & \leq (\nicefrac 54 - \rho )\cdot \WCchange_{\cS_1\cap\cA}(c) + (\rho - \nicefrac 34)\cdot \WCchange_{\cS_2\cap\cA}(c) + \nicefrac 14 \cdot \WCchange_{\cT_1\cap\cA}(c) + \nicefrac 14 \cdot \WCchange_{\cT_2\cap\cA}(c) + \epsd\\
& \leq (1.4+0.4\rho)\, d^* - (0.1+\rho)\, d_1 + (\rho-0.8)\, d_2 + \epsd.
\end{align*}
When $\rho \leq 0.8$, we use $d_2\geq d_1$:
\begin{align*}
\change_\cA(c) & \leq (1.4 + 0.4\rho)\, d^* - 0.9\, d_1 + \epsd\\
& \leq \hoono{1.72\, d^* - 0.9\, d_1} + \epsd.
\end{align*}
When $\rho > 0.8$, we use $d_2 \leq d^* + d(f^*,d_2)\leq d^* + \nicefrac 1\rho\cdot (d^* + d_1)$:
\begin{align*}
\change_\cA(c) & \leq (1.6 + 1.4\rho - 0.8/\rho)\, d^* - (\rho + 0.8/\rho - 0.9)\, d_1 + \epsd\\
& \leq (1.6 + 1.4 - 0.8)\, d^* - (2\sqrt{0.8} - 0.9)\, d_1 + \epsd\\
& \leq \hoono{2.2\, d^* - 0.88885\, d_1} + \epsd.
\end{align*}
%
%
%
%
%
%

\subsubsection{\texorpdfstring{$d(c,h)\leq 2d_1 + d^*$ or $d(c,h)\leq
    2d_1 + d^* + \nicefrac 43(d^* + d_1)$ in simple swaps}{[d(c,h) <=
    2d1 + d* or d(c,h) <= 2d1 + d* + 4/3(d* + d1) in simple swaps]}}
\label{sec:typeCsimple2}
We have the same bound for $\change_\cA(c)$ in this case as the previous case. Indeed, our previous bounds for $\WCchange_{\cS_1\cap\cA}(c),\WCchange_{\cT_1\cap\cA}(c)$ and $\WCchange_{\cT_2\cap\cA}(c)$ remain valid. We replace our bound for $\WCchange_{\cS_2\cap\cA}(c)$ by a bound for
\begin{equation}
\label{eq:typeCsimple-expectation}
\WCchange'_{\cS_2\cap\cA}(c):=\Pr[\cS_1'|\cS_2]\WCchange_{\cS_1'\cap\cS_2\cap\cA}(c) + \Pr[\cS_2'|\cS_2]\WCchange_{\cS_2'\cap\cS_2\cap\cA}(c).
\end{equation}
We show that we can upper-bound $\WCchange'_{\cS_2\cap\cA}(c)$ by the same expression as in \eqref{eq:typeCsimple2}. Our previous bound for $\WCchange_{\cS_2\cap\cA}(c)$ is linear in $d(c,h)$ with a non-negative coefficient: $\WCchange_{\cS_2\cap\cA}(c) \leq A\cdot d(c,h) + B$ with $A\geq 0$, so
\begin{align*}
\WCchange_{\cS_1'\cap\cS_2\cap\cA}(c) & \leq A\cdot (2d_1 + d^*) + B\\
\WCchange_{\cS_2'\cap\cS_2\cap\cA}(c) & \leq A\cdot (2d_1 + d^* + \nicefrac 43(d^* + d_1)) + B.
\end{align*}
Plugging them into \eqref{eq:typeCsimple-expectation}, we have
\begin{align*}
\WCchange'_{\cS_2\cap\cA}(c) & \leq A\cdot(\Pr[\cS_1'|\cS_2]\cdot (2d_1 + d^*) + \Pr[\cS_2'|\cS_2]\cdot (2d_1 + d^* + \nicefrac 43(d^* + d_1))) + B\\
& \leq A\cdot (\nicefrac 12\cdot (2d_1 + d^*) + \nicefrac 12 \cdot (2d_1 + d^* + \nicefrac 43(d^* + d_1))) + B\\
& \leq A\cdot (3d_1 + 2d^*) + B.
\end{align*}

\subsubsection{\texorpdfstring{$\move{f^*}$}{[move(f*)]} closes
  \texorpdfstring{$f_1$}{[f1]} and \texorpdfstring{$f_2$}{[f2]} on
  \texorpdfstring{$\cT_2\cap\cA$}{[event T2 intersect A]}}
If $\tau(f^*) = \eta_1$, we get the same bounds as before:
\begin{align*}
\WCchange_{\cS_1\cap\cA}(c) & \leq \hoono{(1+\a\b)\, d^* - (1- \a\b)\, d_1 - 2\b\, d_2},\\
\WCchange_{\cT_1\cap\cA}(c) & \leq \hoono{(1+\a\b+\b)\, d^* -d_1}.
\end{align*}
We continue to bound $\WCchange_{\cS_2\cap\cA}(c)$ and $\WCchange_{\cT_2\cap\cA}(c)$.

\paragraph{Simple swaps with $\tau(f^*) = \eta_2$} By implications (ii) and (Siv), we have $\move{f^*} = \move{\neg f_2} \neq \move{\neg f_1}$. On swap $\move{f^*,\neg f_2}$, the client can be served by $f^*$. On swap $\move{\neg f_1}$, the client can be served by $f_2$. Therefore,
\begin{align*}
  \dSwapHb\leq & (1+\a\b)\, d^* - d_1 - \b\, d_2 \tag{$\WCchange_{\move{f^*, \neg f_2}}$}\\
  & + (1+ \a\b)\, d_2-d_1 - \b\, d_2 \tag{$\WCchange_{\move{\neg f_1}}$}\\
  = {} & \hoono{(1+\a\b)\, d^* - 2\, d_1 + (1+\a\b-2\b)\, d_2}.
\end{align*}

\paragraph{Tree swaps with $\tau(f^*) = \eta_2$} On $\cT_2\cap\cA$, we know $\move{f^*}$ closes both $f_1$ and $f_2$. Therefore,
\begin{align*}
  \dSwapTb\leq \hoono{(1+\a\b)\, d^* - d_1 - \b\, d_2}. \tag{$\WCchange_{\move{f^*, \neg f_1,\neg f_2}}$}
\end{align*}

Summarizing, we have

\begin{mybox}
\begin{align*}
\dSwapHa\leq  & (1+\a\b)\, d^* - (1-\a\b)\, d_1 - 2\b\, d_2
&  = 1.6\, d^* - 0.4\, d_1 - 0.4\, d_2\\
\dSwapHb\leq  & (1+\a\b)\, d^* - 2\, d_1 + (1+\a\b-2\b)\, d_2
&  = 1.6\, d^* - 2\, d_1 + 1.2\, d_2\\
\dSwapTa\leq  & (1+\a\b+\b)\, d^* -d_1
&  = 1.8\, d^* - d_1\\
\dSwapTb\leq  & (1+\a\b)\, d^*-d_1 - \b\, d_2
&  = 1.6\, d^* - d_1 - 0.2\, d_2
\end{align*}
\end{mybox}
We now combine these inequalities to get an upper bound for $\change_\cA(c)$.

When $\nicefrac 23 < \rho(f^*) \leq \nicefrac 34$, we have $\Pr[\cS_1] = \nicefrac 12,\Pr[\cS_2] = 0,\Pr[\cT_1] = \Pr[\cT_2] = \nicefrac 14$. Therefore,
\begin{align*}
\change_\cA(c) & \leq \nicefrac 12\cdot \WCchange_{\cS_1\cap\cA}(c) + \nicefrac 14 \cdot \WCchange_{\cT_1\cap\cA}(c) + \nicefrac 14 \cdot \WCchange_{\cT_2\cap\cA}(c) + \epsd \\
& \leq 1.65\, d^* - 0.7\, d_1 - 0.25\, d_2 + \epsd\\
& \leq \hoono{1.65\, d^* - 0.95\, d_1} + \epsd.\tag{$d_2\geq d_1$}
\end{align*}

When $\rho(f^*) > \nicefrac 34$, we have $\Pr[\cS_1] = \nicefrac 54 - \rho,\Pr[\cS_2] = \rho - \nicefrac 34,\Pr[\cT_1] = \Pr[\cT_2] = \nicefrac 14$. Therefore,
\begin{align*}
\change_\cA(c) & \leq (\nicefrac 54 - \rho) \cdot \WCchange_{\cS_1\cap\cA}(c) + (\rho - \nicefrac 34) \cdot \WCchange_{\cS_2\cap\cA}(c) + \nicefrac 14 \cdot \WCchange_{\cT_1\cap\cA}(c) + \nicefrac 14 \cdot \WCchange_{\cT_2\cap\cA}(c) + \epsd \\
& \leq 1.65\, d^* - (1.6\rho - 0.5)\, d_1 + (1.6\rho - 1.45)\, d_2 + \epsd.
\end{align*}
When $\rho \leq 1.45/1.6$, we use $d_2\geq d_1$:
\begin{align*}
\change_\cA(c) & \leq \hoono{1.65\, d^* - 0.95\, d_1} + \epsd.
\end{align*}
When $\rho > 1.45/1.6$, we use $d_2\leq d^* + \nicefrac 1\rho\cdot(d^* + d_1)$:
\begin{align*}
\change_\cA(c) & \leq (1.8 + 1.6\rho - 1.45/\rho)\, d^* - (1.6\rho + 1.45/\rho - 2.1)\, d_1 + \epsd\\
& \leq (1.8 + 1.6 - 1.45)\, d^* - (2\sqrt{1.6\times 1.45} - 2.1)\, d_1 + \epsd\\
& \leq \hoono{1.95\, d^* - 0.94630\, d_1} + \epsd.
\end{align*}
%
%
%
%
%
%
%
%
\subsubsection{\texorpdfstring{$\move{f^*}$ closes $f_1$ and $f_2$ on
    $\cT_b'\cap\cA$ for some $b \in \{1,2\}$}{[move(f*) closes f1 and f2 on event Tb' cap A]}}
Bounds for simple swaps remain the same as before:
\begin{align*}
\WCchange_{\cS_1\cap\cA}(c) & \leq \hoono{(1+\a\b)\, d^* - (1- \a\b)\, d_1 - 2\b\, d_2},\\
\WCchange_{\cS_2\cap\cA}(c) & \leq \hoono{(1+\a\b)\, d^* - 2\, d_1 + (1+\a\b-2\b)\, d_2}.
\end{align*}

For tree swaps, we partition $\cT\cap\cA$ as the union of $\cT_1\cap\cT_{3-b}'\cap\cA$, $\cT_2\cap\cT_{3-b}'\cap\cA$ and $\cT_b'\cap\cA$. On the first two events, our bounds are the same as in \Cref{sec:typeCsimple}:
\begin{align*}
\WCchange_{\cT_1\cap\cT_{3-b}'\cap\cA}(c) & \leq \hoono{(1+\a\b+\b)\, d^* -d_1},\\
\WCchange_{\cT_2\cap\cT_{3-b}'\cap\cA}(c) & \leq \hoono{(1+\a\b+\b)\, d^* - (2-2\b)\, d_1 + (1-2\b)\, d_2}.
\end{align*}
On $\cT_b'\cap\cA$, we have $\move{f^*}$ closes both $f_1$ and $f_2$. Therefore,
\begin{align*}
  \WCchange_{\cT_b'\cap\cA}(c) \leq \hoono{(1+\a\b)\, d^* - d_1 - \b\, d_2}. \tag{$\WCchange_{\move{f^*, \neg f_1,\neg f_2}}$}
\end{align*}

Summarizing, we have

\begin{mybox}
\begin{align*}
\dSwapHa\leq  & (1+\a\b)\, d^* - (1-\a\b)\, d_1 - 2\b\, d_2
&  = 1.6\, d^* - 0.4\, d_1 - 0.4\, d_2\\
\dSwapHb\leq  & (1+\a\b)\, d^* - 2\, d_1 + (1+\a\b-2\b)\, d_2
&  = 1.6\, d^* - 2\, d_1 + 1.2\, d_2\\
\WCchange_{\cT_1\cap\cT_{3-b}'\cap\cA}(c) \leq  & (1+\a\b+\b)\, d^* -d_1
&  = 1.8\, d^* - d_1 \\
\WCchange_{\cT_2\cap\cT_{3-b}'\cap\cA}(c) \leq  & (1+\a\b+\b)\, d^* - (2-2\b)\, d_1 + (1-2\b)\, d_2
&  = 1.8\, d^* - 1.6\, d_1 + 0.6\, d_2\\
\WCchange_{\cT_b'\cap\cA}(c) \leq & (1+\a\b)\, d^* - d_1 - \b\, d_2
&  = 1.6\, d^* - d_1 - 0.2\, d_2
\end{align*}
\end{mybox}

We now combine these inequalities to get an upper bound for $\change_\cA(c)$.

When $\nicefrac 23 < \rho(f^*) \leq \nicefrac 34$, we have $\Pr[\cS_1] = \nicefrac 12,\Pr[\cS_2] = 0,\Pr[\cT_1\cap\cT_{3-b}'] = \Pr[\cT_2\cap\cT_{3-b}'] = \nicefrac 18, \Pr[\cT_b'] = \nicefrac 14$. Therefore,
\begin{align*}
\change_\cA(c) & \leq \nicefrac 12\cdot \WCchange_{\cS_1\cap\cA}(c) + \nicefrac 18 \cdot \WCchange_{\cT_1\cap\cT_{3-b}'\cap\cA}(c) + \nicefrac 18 \cdot \WCchange_{\cT_2\cap\cT_{3-b}'\cap\cA}(c) + \nicefrac 14\cdot \WCchange_{\cT_b'\cap\cA}(c) + \epsd \\
& \leq 1.65\, d^* - 0.775\, d_1 -0.175\, d_2 + \epsd\\
& \leq \hoono{1.65\, d^* - 0.95\, d_1} + \epsd. \tag{$d_2\geq d_1$}
\end{align*}

When $\rho(f^*) > \nicefrac 34$, we have $\Pr[\cS_1] = \nicefrac 54 - \rho, \Pr[\cS_2] = \rho - \nicefrac 34, \Pr[\cT_1\cap\cT_{3-b}'] = \Pr[\cT_2\cap\cT_{3-b}'] = \nicefrac 18, \Pr[\cT_b'] = \nicefrac 14$. Therefore,
\begin{align*}
\change_\cA(c) & \leq (\nicefrac 54 - \rho)\cdot \WCchange_{\cS_1\cap\cA}(c) + (\rho - \nicefrac 34)\cdot \WCchange_{\cS_2\cap\cA}(c) + \nicefrac 18 \cdot \WCchange_{\cT_1\cap\cT_{3-b}'\cap\cA}(c) + \nicefrac 18 \cdot \WCchange_{\cT_2\cap\cT_{3-b}'\cap\cA}(c) \\ & ~~~ + \nicefrac 14\cdot \WCchange_{\cT_b'\cap\cA}(c) + \epsd \\
& \leq 1.65\, d^* - (1.6\rho - 0.425)\, d_1 + (1.6\rho - 1.375)\, d_2+ \epsd.
\end{align*}
When $\rho \leq 1.375/1.6$, we use $d_2\geq d_1$:
\begin{align*}
\change_\cA(c) & \leq \hoono{1.65\, d^* - 0.95\, d_1} + \epsd.
\end{align*}
When $\rho > 1.375/1.6$, we use $d_2\leq d^* + \nicefrac 1\rho (d^* + d_1)$:
\begin{align*}
\change_\cA(c) & \leq (1.875 + 1.6\rho - 1.375/\rho)\, d^* - (1.6\rho + 1.375/\rho - 2.025)\, d_1 + \epsd\\
& \leq (1.875 + 1.6 - 1.375)\, d^* - (2\sqrt{1.6 \times 1.375} - 2.025)\, d_1 + \epsd\\
& \leq \hoono{2.1\, d^* - 0.94147\, d_1} + \epsd.
\end{align*}
%
%
%
%
%
%
%
%
\subsection{Proof of \eqref{eqn:bndD}:  Clients of Type \xD}

In this section, we show that for any client $c$ with type \xD, we have
\[
\Gsum{c} \leq 2.5203\, d^*(c) - 0.8888\, d_1(c).
\]

Similar to \Cref{claim:type3Swapcases} for type \xC clients, we also have the following claim classifying type \xD clients into subtypes. The only change is in item (e), where we replace $\cT_2$ by $\cT_1$ because the roles of $\eta_1$ and $\eta_2$ are now swapped.

\begin{restatable}[Type \xD subcases]{claim}{typeFourSwapcases}
  \label{claim:type4Swapcases}
  For a client $c$ of type \xD, one of the following is true:
  \begin{enumerate}[(a)]
    \item  $f_1$ is \heavy.
    \item  $f_2$ is \heavy.
    \item A facility $h$ is open near $c$ after the simple swap closing $f_1$. Formally, a facility $h\neq f_2$ is open after swap $\move{\neg f_1}$ at distance $d(c,h)\leq 3d_1 + 2d^*$ on $\cS\cap\cA$.
	\item $g^* \neq f^*$, $\rho(g^*) > \nicefrac 34$, and for all $b = 1,2$, any swap set $\cP$ generated on $\cS_b'\cap\cA$, a facility $h\neq f_2$ is open after swap $\move{\neg f_1}$ at distance $d(c,h)\leq \left\{\begin{array}{ll}2d_1 + d^*,& \textup{if}~b = 1\\ 2d_1 + d^* + \nicefrac 43(d_1 + d^*),& \textup{if}~b=2\end{array}\right.$.
    \item For any swap set $\cP$ generated on $\cT_1\cap\cA$, $\move{f^*}$ closes both $f_1$ and $f_2$;
\item $g^*\neq f^*$, $\rho(g^*)>\nicefrac 23$, and there exists $b\in\{1,2\}$ such that for any swap set $\cP$ generated on $\cT_{b}'\cap\cA$, $\move{f^*}$ closes both $f_1$ and $f_2$.
  \end{enumerate}
\end{restatable}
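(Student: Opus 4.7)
The plan is to mirror the proof of \Cref{claim:type3Swapcases} via a symmetry argument. Types \xC and \xD are distinguished only by how $\eta_1(f^*), \eta_2(f^*)$ are matched to $f_1, f_2$: for type \xC we have $(\eta_1, \eta_2) = (f_1, f_2)$, whereas for type \xD we have $(\eta_1, \eta_2) = (f_2, f_1)$. Under this relabeling, the event $\cT_1 = \{\tau(f^*) = \eta_1\}$ in type \xD carries exactly the same geometric content as $\cT_2 = \{\tau(f^*) = \eta_2\}$ in type \xC, namely ``$f^*$ points to $f_2$''. This is precisely why item (e) reads $\cT_1$ here and $\cT_2$ in \Cref{claim:type3Swapcases}; items (a)--(d) and (f) depend only on $f_1, f_2, g^*, g$, which are defined symmetrically in $f_1, f_2$ and so are insensitive to the swap of $\eta$-labels. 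Note that item (f) existentially quantifies $b \in \{1,2\}$ and is itself symmetric under exchange of $\cT_1'$ and $\cT_2'$.

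First I would fix a type \xD client $c$ and, arguing by contradiction, assume (a)--(d) and (f) all fail. The failure of (a) and (b) ensures that both $\move{\neg f_1}$ and $\move{\neg f_2}$ exist on $\cA$. The failure of (c) and (d) rules out the existence of any fallback local facility near $c$ after $\move{\neg f_1}$ in simple swaps---either directly, or indirectly through the auxiliary optimal facility $g^* = \pi(f_1)$. The failure of (f) rules out the analogous tree-swap fallback via $g^*$. Under these sparsity conditions, I would show, exactly as in the argument for \Cref{claim:type3Swapcases}, that the only structurally consistent possibility on $\cT_1 \cap \cA$ (where the original copy of $f^*$ points to $\eta_1 = f_2$ in $G_0$) is for $f_1$ to lie in the same connected component of $G_1$ as $f^*$, survive edge deletion together with $f^*$, and be co-grouped with it in the balancing step; then $\move{f^*}$ must close $f_1$, giving (e).

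The main obstacle will be verifying that every step of the type \xC argument transfers cleanly under the relabeling $\eta_1 \leftrightarrow \eta_2$. Three checks are needed: (i)~the thresholds $\nicefrac{2}{3}$ and $\nicefrac{3}{4}$ in the distribution of $\tau(f^*)$ depend only on $\rho(f^*) = d(f^*, \eta_1)/d(f^*, \eta_2)$, a quantity defined by distance ordering and therefore preserved under the $\eta$-label swap; (ii)~the definitions of \heavy facilities, local candidates, surrogates, and the maps $\pi, \tau$ all treat $f_1, f_2$ symmetrically, so nothing in the degree-reduction, edge-deletion, or balancing analyses breaks; and (iii)~the auxiliary facility $g$, defined as the local facility nearest to $g^*$ other than $f_1$ and $f_2$, is symmetric in $\{f_1, f_2\}$. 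Once these symmetries are confirmed, the word-for-word translation of the type \xC argument yields \Cref{claim:type4Swapcases}.
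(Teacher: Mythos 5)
Your proposal is correct and is essentially the paper's own approach: the paper states Claim~\ref{claim:type4Swapcases} without a separate proof, remarking only that ``the only change is in item~(e), where we replace $\cT_2$ by $\cT_1$ because the roles of $\eta_1$ and $\eta_2$ are now swapped,'' and you have identified exactly this symmetry. One small imprecision: $g^*=\pi(f_1)$ and $g$ are not ``symmetric in $f_1,f_2$'' (they single out $f_1$); the relevant observation is rather that $f_1,f_2,g^*,g$ are defined purely from distances and never reference the labels $\eta_1(f^*),\eta_2(f^*)$, so their definitions and hence items~(a)--(d) and~(f) are literally unchanged between the two types, while item~(e) changes because the event ``$\tau(f^*)$ points to $f_2$'' is $\cT_2$ for type~\xC but $\cT_1$ for type~\xD. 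Your checks~(i)--(iii) are the right ones to make, in particular that $\rho(f^*)$ and $\rho(g^*)$ are distance ratios independent of the $\eta$-labeling, and that the degree-reduction, edge-deletion, and balancing procedures treat $\eta_1,\eta_2$ interchangeably.
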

\subsubsection{When \texorpdfstring{$f_1$}{[f1]} is a heavy facility}
$f_1$ being heavy implies that $\move{\neg f_1}$ doesn't exist. We thus focus on $\move{f^*}$ and $\move{\neg f_2}$.

\paragraph{Simple \& tree swaps with $\tau(f^*) = \eta_1$} We have $\move{f^*} = \move{\neg f_2}$ by implication (ii) of amenability. Therefore,
\begin{align*}
\dSwapHa  \leq {} & \hoono{(1+\a\b)d^* - d_1 - \b\, d_2}, \tag{$\WCchange_{\move{f^*,\neg f_2}}$}\\
\dSwapTa  \leq {} & \hoono{(1+\a\b)d^* - d_1 - \b\, d_2}. \tag{$\WCchange_{\move{f^*,\neg f_2}}$}
\end{align*}

\paragraph{Simple swaps with $\tau(f^*) = \eta_2$} By implication (Siv) of amenability, we have $\move{f^*}\neq \move{\neg f_2}$. On swap $\move{f^*}$ the client can be served by $f^*$. On swap $\move{\neg f_2}$, the client can be served by $f_1$. Therefore,
\begin{align*}
\dSwapHb \leq {} & (1+\a\b)d^* - d_1 -  \b\, d_2 \tag{$\WCchange_{\move{f^*}}$}\\
& + (1+\a\b)d_1 - d_1 -\b\, d_2 \tag{$\WCchange_{\move{\neg f_2}}$}\\
= {} & \hoono{(1+\a\b)d^* - (1-\a\b)\, d_1 -2\b\, d_2}
\end{align*}

\paragraph{Tree swaps with $\tau(f^*) = \eta_2$}Let us first assume that $\move{f^*}\neq \move{\neg f_2}$. On swap $\move{f^*}$, the client can be served by $f^*$. On swap $\move{\neg f_2}$, the client can be served by $f_1$ and $\pi(f_2)$, by implication (Tii) of amenability. 
We have $d(c,\pi(f_2))\leq 
\leq 2d_2 + d^*$ by \eqref{eqn:pif2}.  Therefore,
\begin{align*}
\dSwapTb \leq {} & (1+\a\b)d^* - d_1 - \b\, d_2 \tag{$\WCchange_{\move{f^*}}$}\\
& + d_1+ \b(2d_2+d^*)- d_1 -\b\, d_2 \tag{$\WCchange_{\move{\neg f_2}}$}\\
= {} & \hoono{(1+\a\b+\b)d^* - d_1 }
\end{align*}
This inequality also holds when $\move{\neg f_2} = \move{f^*}$, because our bound for $\WCchange_{\move{f^*}}$ does not require $f_2$ to remain open after the swap and $\WCchange_{\move{\neg f_2}}$ is non-negative.

\begin{mybox}
\begin{align*}
\dSwapHa\leq  & (1+\a\b)\, d^* -d_1 -  \b\, d_2
&  = 1.6\, d^* -  d_1 - 0.2\, d_2\\
\dSwapHb\leq  & (1+\a\b)\, d^* - (1-\a\b)\, d_1 -2\b\, d_2
&  = 1.6\, d^* -  0.4\, d_1 - 0.4\, d_2\\
\dSwapTa\leq  & (1+\a\b)\, d^* -  d_1 - \b\, d_2
&  = 1.6\, d^* - d_1 -0.2\, d_2\\
\dSwapTb\leq  & (1+\a\b+\b)\, d^*-  d_1
&  = 1.8\, d^* -  d_1
\end{align*}
\end{mybox}

We now combine these inequalities to get an upper bound for $\change_\cA(c)$.

When $\rho(f^*)\leq \nicefrac 23$, we have $\Pr[\cS_1] = \Pr[\cT_1] = \nicefrac 12, \Pr[\cS_2] = \Pr[\cT_2] = 0$. Therefore,
\begin{align*}
\change_\cA(c) & \leq \nicefrac 12\cdot \WCchange_{\cS_1\cap\cA}(c) + \nicefrac 12\cdot \WCchange_{\cT_1\cap\cA}(c) + \epsd\\
& \leq\hoono{ 1.6\, d^* -1.2\, d_1} + \epsd. \tag{$d_1 \leq d_2$}
\end{align*}

When $\nicefrac 23 < \rho(f^*) \leq \nicefrac 34$, we have $\Pr[\cS_1] = \nicefrac 12, \Pr[\cS_2] = 0, \Pr[\cT_1] = \Pr[\cT_2] = \nicefrac 14$. Therefore,
\begin{align*}
\change_\cA(c) & \leq \nicefrac 12\cdot \WCchange_{\cS_1\cap\cA}(c) + \nicefrac 14\cdot \WCchange_{\cT_1\cap\cA}(c) + \nicefrac 14\cdot \WCchange_{\cT_2\cap\cA}(c) + \epsd\\
&\leq 1.65\, d^* - d_1 -0.15\, d_2 +\epsd\\
& \leq \hoono{1.65\, d^* - 1.15\, d_1} +\epsd \tag{$d_1 \leq d_2$}.
\end{align*}

When $\rho(f^*) > \nicefrac 34$, we have $\Pr[\cS_1] = \nicefrac 54 - \rho, \Pr[\cS_2] = \rho - \nicefrac 34, \Pr[\cT_1] = \Pr[\cT_2] = \nicefrac 14$. Therefore,
\begin{align*}
\change_\cA(c) & \leq (\nicefrac 54 - \rho)\cdot \WCchange_{\cS_1\cap\cA}(c)+ (\rho - \nicefrac 34)\cdot \WCchange_{\cS_2\cap\cA}(c) + \nicefrac 14\cdot \WCchange_{\cT_1\cap\cA}(c) + \nicefrac 14\cdot \WCchange_{\cT_2\cap\cA}(c) + \epsd\\
& \leq 1.65\, d^* - (1.45-0.6\rho)\, d_1 - (0.2\rho)\, d_2 + \epsd\\
& \leq 1.65\, d^* - (1.45-0.4\rho)\, d_1 + \epsd \tag{$d_1 \leq d_2$}\\
& \leq \hoono{1.65\, d^* - 1.05\, d_1} +\epsd
\end{align*}

\subsubsection{When \texorpdfstring{$f_2$}{[f2]} is a heavy facility}
$f_2$ being heavy implies that the swap $\move{\neg f_2}$ doesn't exist. We thus focus on $\move{f^*}$ and $\move{\neg f_1}$.

\paragraph{Simple swaps with $\tau(f^*) = \eta_1$} We have $\move{f^*} \neq \move{\neg f_1}$ by implication (Siv) of amenability. On swap $\move{f^*}$, the client can be served by $f^*$ and $f_1$. On swap $\move{\neg f_1}$, the client can be served by $f_2$. Therefore,
\begin{align*}
\dSwapHa  \leq {} & d^* + \b d_1 - d_1 - \b\, d_2 \tag{$\WCchange_{\move{f^*}}$}\\
& +  (1+\a\b)d_2 -  d_1 -\b\, d_2 \tag{$\WCchange_{\move{\neg f_1}}$}\\
= {} & \hoono{ d^*-(2-\b)\, d_1 + (1+\a\b-2\b) d_2}.
\end{align*}

We can also use $(1-\b)d^* + 2\b\, d_1$ to upper-bound $\WCchange_{\move{f^*}}$ (by \eqref{eqn:bestBalance}) and get
\begin{align*}
\dSwapHa  \leq {} & \hoono{ (1-\b)\, d^*-(2-2\b)\, d_1 + (1+\a\b-2\b) d_2}.
\end{align*}

\paragraph{Tree swaps with $\tau(f^*) = \eta_1$} Let us first assume that $\move{f^*}\neq \move{\neg f_1}$. On swap $\move{f^*}$, the client can be served by $f^*$ and $f_1$. On swap $\move{\neg f_1}$, the client can be served by $f_2$ and $\pi(f_1)$ by implication (Tii) of amenability. We have $d(c,\pi(f_1))\leq 2d_1 + d^*$ by \eqref{eqn:ubpartner}. Therefore,
\begin{align*}
\dSwapTa \leq {} & d^* +\b\,  d_1- d_1 - \b\, d_2 \tag{$\WCchange_{\move{f^*}}$}\\
& + d_2+ \b(2d_1+d^*)- d_1 -\b\, d_2 \tag{$\WCchange_{\move{\neg f_1}}$}\\
= {} & \hoono{(1+\b)d^* -  (2-3\b)d_1 + (1-2\b)\, d_2}.
\end{align*}

If $\move{\neg f_1} = \move{f^*}$, we still have the same bound:
\begin{align*}
\dSwapTa \leq {} & d^* +\b\,  d_2- d_1 - \b\, d_2 \tag{$\WCchange_{\move{f^*, \neg f_1}}$}\\
& +  (1-2\b)(d_2-d_1) + \b\, d^* + \b\, d_1 \tag{non-negative terms}\\
= {} & (1+\b)d^* -  (2-3\b)d_1 + (1-2\b)\, d_2.
\end{align*}

\paragraph{Simple \& tree swaps with $\tau(f^*) = \eta_2$} We have $\move{f^*} = \move{\neg f_1}$ by implication (ii) of amenability. Therefore,
\begin{align*}
\dSwapHb \leq {} & \hoono{(1+\a\b)d^*  - d_1 -  \b\, d_2}, \tag{$\WCchange_{\move{f^*, \neg f_1}}$}\\
\dSwapTb \leq {} & d^* + \b d_2- d_1 -  \b\, d_2 \tag{$\WCchange_{\move{f^*, \neg  f_1}}$}\\
= {} & \hoono{d^* - d_1}.
\end{align*}

Summarizing, we have
\begin{mybox}
\begin{align*}
\dSwapHa\leq  & d^* - (2-\b)\, d_1 + (1+\a\b-2\b)\, d_2
&  = d^* - 1.8\, d_1 + 1.2\, d_2\\
\dSwapHa\leq  & (1-\b)\, d^*-(2-2\b)\, d_1 + (1+\a\b-2\b) d_2
& = 0.8\, d^* - 1.6\, d_1 + 1.2\, d_2\\\\
\dSwapHb\leq  & (1+\a\b)d^* - d_1 -\b d_2
&  = 1.6\, d^*-d_1 -0.2\, d_2\\
\dSwapTa\leq  & (1+\b)d^* - (2-3\b)d_1 + (1-2\b)d_2
&  = 1.2\, d^* - 1.4\, d_1+0.6\, d_2\\
\dSwapTb\leq  & d^*-d_1
&  = d^*-d_1
\end{align*}
\end{mybox}

We now combine these inequalities to get an upper bound for $\change_\cA(c)$.

When $\rho(f^*)\leq \nicefrac 23$, we have $\Pr[\cS_1] = \Pr[\cT_1] = \nicefrac 12, \Pr[\cS_2] = \Pr[\cT_2] = 0$. In this case we use the second inequality for $\dSwapHa$. Therefore,
\begin{align*}
\change_\cA(c) & \leq \nicefrac 12\cdot \WCchange_{\cS_1\cap\cA}(c) + \nicefrac 12\cdot \WCchange_{\cT_1\cap\cA}(c) + \epsd\\
& \leq d^* - 1.5\, d_1 + 0.9\, d_2+ \epsd\\
& \leq (1.9+0.9\rho)\, d^* - (1.5-0.9\rho)\, d_1 + \epsd \tag{$d_2 \leq d^* + \rho(d^*+d_1)$}\\
& \leq (1.9+0.9\cdot\nicefrac 23)d^* - (1.5-0.9\cdot\nicefrac 23)\, d_1 + \epsd\\
& \leq \hoono{2.5\, d^* - 0.9\, d_1} + \epsd.
\end{align*}

When $\nicefrac 23 < \rho(f^*) \leq \nicefrac 34$, we have $\Pr[\cS_1] = \nicefrac 12, \Pr[\cS_2] = 0, \Pr[\cT_1] = \Pr[\cT_2] = \nicefrac 14$. In this case we use the first inequality for $\dSwapHa$.  Therefore,
\begin{align*}
\change_\cA(c) & \leq \nicefrac 12\cdot \WCchange_{\cS_1\cap\cA}(c) + \nicefrac 14\cdot \WCchange_{\cT_1\cap\cA}(c) + \nicefrac 14\cdot \WCchange_{\cT_2\cap\cA}(c) + \epsd\\
& \leq 1.05\, d^* -1.5\, d_1 +0.75\, d_2 +\epsd\\
& \leq (1.8+0.75\rho)\, d^* - (1.5-0.75\rho)\, d_1 +\epsd\tag{$d_2 \leq d^* + \rho(d^*+d_1)$} \\
& \leq (1.8+0.75\cdot\nicefrac{3}{4}) - (1.5-0.75\cdot\nicefrac{3}{4})\, d_1 +\epsd\\
& \leq \hoono{2.3625\, d^* - 0.9375\, d_1} +\epsd.
\end{align*}

When $\rho(f^*) > \nicefrac 34$, we have $\Pr[\cS_1] = \nicefrac 54 - \rho, \Pr[\cS_2] = \rho - \nicefrac 34, \Pr[\cT_1] = \Pr[\cT_2] = \nicefrac 14$. 
 In this case we use the first inequality for $\dSwapHa$. Therefore,
\begin{align*}
\change_\cA(c) & \leq (\nicefrac 54 - \rho)\cdot \WCchange_{\cS_1\cap\cA}(c)+ (\rho - \nicefrac 34)\cdot \WCchange_{\cS_2\cap\cA}(c) + \nicefrac 14\cdot \WCchange_{\cT_1\cap\cA}(c) + \nicefrac 14\cdot \WCchange_{\cT_2\cap\cA}(c) + \epsd\\
& \leq (0.6+0.6\rho)\, d^* - (2.1-0.8\rho)\, d_1 + (1.8-1.4\rho)\, d_2 +\epsd\\
& \leq (2.4+\rho-1.4\rho^2)\, d^* - (2.1-2.6\rho+1.4\rho^2)\, d_1 + \epsd \tag{$d_2\leq d^*+\rho(d^*+d_1)$}\\ 
& \leq (2.4+\nicefrac{3}{4}-1.4(\nicefrac{3}{4})^2)\, d^* - (2.1-2.6\cdot\nicefrac{13}{14}+1.4(\nicefrac{13}{14})^2)\, d_1 + \epsd\\
& \leq \hoono{2.3625\, d^* - 0.8928\, d_1} +\epsd.
\end{align*}


\subsubsection{\texorpdfstring{There exists a facility $h$ such that $d(c,h)\leq 3d_1 + 2d^*$}{[d(c,h) <= 3d1+2d*]} in simple swaps}
\label{sec:typeDsimple}

\paragraph{Simple swaps with $\tau(f^*) = \eta_1$} Implications (ii) and (Siv) of amenability imply $\move{f^*} = \move{\neg f_2} \neq \move{\neg f_1}$. 
On swap $\move{f^*,\neg f_2}$, the client can be served by $f^*$ and $f_1$. On swap $\move{\neg f_1}$, the client can be served by $f_2$ and $h$. Therefore,
\begin{align*}
\dSwapHa \leq {} &  d^*+ \b\, d_1 - d_1 -\b\, d_2 \tag{$\WCchange_{\move{f^*, \neg f_2}}$}\\
& + d_2 + \b(3d_1+2d^*) - d_1 - \b\, d_2 \tag{$\WCchange_{\move{\neg f_1}}$}\\
= {} & \hoono{(1+2\b)\, d^* -(2-4\b)\, d_1 + (1-2\b)\, d_2}.
\end{align*}

In $\WCchange_{\move{f^*, \neg f_2}}$, we can use $0.776(1+\a\b)d^* + 0.224(d^*+\b\, d_1)$ instead of $d^*+ \b\, d_1$. This gives
\begin{align*}
\dSwapHa \leq {} & \hoono{(1+2\b+0.776\a\b)\, d^* - (2-3.224\b)\, d_1 + (1-2\b)\, d_2}.
\end{align*}


\paragraph{Simple swaps with $\tau(f^*) = \eta_2$} Implications (ii) and (Siv) of amenability imply $\move{f^*} = \move{\neg f_1} \neq \move{\neg f_2}$. On swap $\move{f^*,\neg f_1}$, the client can be served by $f^*$. On swap $\move{\neg f_2}$, the client can be served by $f_1$. Therefore,
\begin{align*}
  \dSwapHb \leq {} & (1+\a\b)\, d^* - d_1 - \b\, d_2 \tag{$\WCchange_{\move{f^*, \neg f_1}}$}\\
  & + (1+\a\b)\, d_1  - d_1 - \b\, d_2 \tag{$\WCchange_{\move{\neg f_2}}$}\\
  = {} & \hoono{(1+\a\b)\, d^* - (1-\a\b)\, d_1 - 2\b\, d_2}.
\end{align*}

\paragraph{Tree swaps with $\tau(f^*) = \eta_1$} We have $\move{f^*} = \move{\neg f_2}$ by implication (ii) of amenability. Let us first assume that $\move{\neg f_1}\neq \move{f^*,\neg f_2}$. On swap $\move{f^*,\neg f_2}$, the client can be served by $f^*$. On swap $\move{\neg f_1}$, the client can be served by $f_2$ and $\pi(f_1)$ by implication (Tii) of amenability. 
Note that $d(c,\pi(f_1))\leq 2d_1 + d^*$ by \eqref{eqn:ubpartner}. Therefore,
\begin{align*}
  \dSwapTa \leq {} & (1+\a\b)\, d^* - d_1 -\b\, d_2 \tag{$\WCchange_{\move{f^*, \neg f_2}}$}\\
  & + d_2 + \b(2d_1+d^*) - d_1 - \b\, d_2 \tag{$\WCchange_{\move{\neg f_1}}$}\\
  = {} & \hoono{(1+\a\b+\b)\, d^* -(2-2\b)\, d_1 + (1-2\b)\, d_2}.
\end{align*}
This inequality also holds when $\move{\neg f_1} = \move{f^*,\neg f_2}$, because our bound for $\WCchange_{\move{f^*,\neg f_2}}$ does not require $f_1$ to remain open after the swap
and $\WCchange_{\move{\neg f_1}}$ is non-negative.

In $\WCchange_{\move{f^*, \neg f_2}}$, we can use $d^* + \b\, d_1$ instead of $(1+\a\b)\, d^*$. This gives
\begin{align*}
  \dSwapTa \leq {} &  \hoono{(1+\b)\,d^* - (2-3\b)\, d_1 + (1-2\b)\, d_2}.
\end{align*}

This bound also holds when $\move{\neg f_1} = \move{f^*, \neg f_2}$ because in this case we have
\begin{align*}
  \dSwapTa \leq {} & d^* + \b\, d_1 - d_1 -\b\, d_2 \tag{$\WCchange_{\move{f^*, \neg f_2, \neg f_1}}$}\\
  & + (1-\b)(d_2 - d_1) + \b d_1 +\b d^*\tag{non-negative terms}\\
  = {} & (1+\b)\, d^* -(2-3\b)\, d_1 +(1-2\b)\, d_2.
\end{align*}
%
%

\paragraph{Tree swaps with $\tau(f^*) = \eta_2$} We have $\move{f^*} = \move{\neg f_1}$ by implication (ii) of amenability. Let us first assume that $\move{\neg f_2}\neq \move{f^*,\neg f_1}$. On swap $\move{f^*,\neg f_1}$, the client can be served by $f^*$. On swap $\move{\neg f_2}$, the client can be served by $f_1$ and $\pi(f_2)$ by implication (Tii) of amenability. Note that $d(c,\pi(f_2))\leq d_2 + d(f_2,\pi(f_2))\leq d_2 + d(f_2,f^*) \leq d_2 + \rho d(f_1,f^*) \leq d_2 + \rho(d_1 + d^*)$. Therefore,
\begin{align*}
  \dSwapTb \leq {} & (1+\a\b)\, d^* - d_1 - \b\, d_2 \tag{$\WCchange_{\move{f^*, \neg f_1}}$}\\
  & + d_1 + \b(d_2 + \rho(d^* + d_1)) - d_1 - \b\, d_2 \tag{$\WCchange_{\move{\neg f_2}}$}\\
  = {} & \hoono{(1+\a\b+\rho\b)\, d^*-(1-\rho\b)\, d_1 - \b\, d_2}.
\end{align*}
This inequality also holds when $\move{\neg f_2} = \move{f^*,\neg f_1}$, because our bound for $\WCchange_{\move{f^*,\neg f_1}}$ does not require $f_2$ to remain open after the swap and $\WCchange_{\move{\neg f_2}}$ is non-negative.

Summarizing, we have

\begin{mybox}
\begin{align*}
\dSwapHa\leq  & (1+2\b)\, d^* - (2-4\b)\, d_1 + (1-2\b)\, d_2
&  = 1.4\, d^* - 1.2\, d_1 + 0.6\, d_2\\
\dSwapHa\leq  & (1+2\b+0.776\a\b)\, d^* - (2-3.224\b)\, d_1 + (1-2\b)\, d_2
&  = 1.8656\, d^* - 1.3552\, d_1 + 0.6\, d_2\\\\
\dSwapHb\leq  & (1+\a\b)\, d^* - (1-\a\b)\, d_1 - 2\b\, d_2
&  = 1.6\, d^* - 0.4\, d_1 - 0.4\, d_2\\\\
\dSwapTa\leq  & (1+\a\b+\b)\, d^* -(2-2\b)\, d_1 + (1-2\b)\, d_2
&  = 1.8\, d^* - 1.6\, d_1 + 0.6\, d_2\\
\dSwapTa\leq  & (1+\b)\, d^* -(2-3\b)\, d_1 + (1-2\b)\, d_2
&  = 1.2\, d^* - 1.4\, d_1 + 0.6\, d_2\\\\
\dSwapTb\leq  & (1+\a\b+\rho\b)\, d^* - (1-\rho\b)\, d_1 - \b\, d_2\quad\quad\quad\quad
  = (1.6+0.2\rho)\, d^* - (1-0.2\rho)\, d_1 - 0.2\, d_2\span
\end{align*}
\end{mybox}

We now combine these inequalities to get an upper bound for $\change_\cA(c)$.

When $\rho(f^*)\leq \nicefrac 23$, we have $\Pr[\cS_1] = \Pr[\cT_1] = \nicefrac 12, \Pr[\cS_2] = \Pr[\cT_2] = 0$. We use the first bound for $\dSwapHa$ and the second bound for $\dSwapTa$. Therefore,

\begin{align*}
\change_\cA(c) & \leq \nicefrac 12\cdot \WCchange_{\cS_1\cap\cA}(c) + \nicefrac 12\cdot \WCchange_{\cT_1\cap\cA}(c) + \epsd\\
& \leq 1.3\, d^* - 1.3\, d_1 + 0.6\, d_2 + \epsd\\
& \leq (1.9 + 0.6\rho)\, d^* - (1.3-0.6\rho)\, d_1 + \epsd \tag{$d_2\leq d^* + \rho(d^* + d_1)$}\\
& \leq (1.9 + 0.6\cdot \nicefrac 23)\, d^* - (1.3-0.6\cdot \nicefrac 23)\, d_1 + \epsd \\
& \leq \hoono{2.3\, d^* - 0.9\, d_1} + \epsd
\end{align*}

When $\nicefrac 23 < \rho(f^*) \leq \nicefrac 34$, we have $\Pr[\cS_1] = \nicefrac 12, \Pr[\cS_2] = 0, \Pr[\cT_1] = \Pr[\cT_2] = \nicefrac 14$.  We use the first bound for both $\dSwapHa$ and $\dSwapTa$. Therefore,
\begin{align*}
\change_\cA(c) & \leq \nicefrac 12\cdot \WCchange_{\cS_1\cap\cA}(c) + \nicefrac 14\cdot \WCchange_{\cT_1\cap\cA}(c) + \nicefrac 14\cdot \WCchange_{\cT_2\cap\cA}(c) + \epsd\\
& \leq (1.55+0.05\rho)\, d^* - (1.25-0.05\rho)\, d_1 + 0.4\, d_2 + \epsd\\
& \leq (1.95+0.45\rho)\, d^* - (1.25-0.45\rho)\, d_1 + \epsd\tag{$d_2\leq d^* + \rho(d^* + d_1)$}\\
& \leq (1.95+0.45\cdot \nicefrac 34)\, d^* - (1.35-0.45\cdot \nicefrac 34)\, d_1 + \epsd\\
& = \hoono{2.2875\, d^* -  0.9125\, d_1} + \epsd.
\end{align*}
 
When $\rho(f^*) > \nicefrac 34$, we have $\Pr[\cS_1] = \nicefrac 54 - \rho, \Pr[\cS_2] = \rho - \nicefrac 34, \Pr[\cT_1] = \Pr[\cT_2] = \nicefrac 14$.  We use the second bound for  $\dSwapHa$ and the first bound for  $\dSwapTa$. Therefore,
\begin{align*}
\change_\cA(c) & \leq (\nicefrac 54 - \rho)\cdot \WCchange_{\cS_1\cap\cA}(c)+ (\rho - \nicefrac 34)\cdot \WCchange_{\cS_2\cap\cA}(c) + \nicefrac 14\cdot \WCchange_{\cT_1\cap\cA}(c) + \nicefrac 14\cdot \WCchange_{\cT_2\cap\cA}(c) + \epsd\\
& \leq   (1.982 - 0.2156\rho)\, d^* -  (2.044-1.0052\rho)\, d_1+ (1.15 -\rho)\, d_2 +\epsd\\
& \leq   (3.132 - 0.0656\rho-\rho^2)\, d^* - (2.044-2.1552\rho+\rho^2)\, d_1 + \epsd\tag{$d_2\leq d^* + \rho(d^* + d_1)$}\\
& \leq   (3.132 - 0.0656\cdot\nicefrac{3}{4}-\nicefrac{3}{4}^2)\, d^* - (2.044-2.1552+1^2)\, d_1 + \epsd\\
& \leq \hoono{2.5203\, d^* - 0.8888\, d_1} + \epsd.
\end{align*}
%
%

%
%
%
\subsubsection{\texorpdfstring{$d(c,h)\leq 2d_1 + d^*$ or $d(c,h)\leq
    2d_1 + d^* + \nicefrac 43(d^* + d_1)$ in simple swaps}{d(c,h) <=
    d1 + d* or d(c,h) <= 2d1 + d* + 4/3(d* + d1)
    in simple swaps}}

Similarly to \Cref{sec:typeCsimple2}, our bound for $\change_\cA(c)$ in the previous case remains valid in this case.

\subsubsection{\texorpdfstring{$\move{f^*}$ closes $f_1$ and $f_2$ on
    $\cT_1\cap\cA$}{[move(f*) closes f1 and f2 on
    event T1 cap A]}}

If $\tau(f^*) = \eta_2$, we get the same bounds as before:
\begin{align*}
\WCchange_{\cS_2\cap\cA}(c) & \leq \hoono{(1+\a\b)\, d^* - (1-\a\b)\, d_1 - 2\b\, d_2},\\
\WCchange_{\cT_2\cap\cA}(c) & \leq \hoono{(1+\a\b+\rho\b)\, d^*-(1-\rho\b)\, d_1 - \b\, d_2}.
\end{align*}
We proceed to bound $\WCchange_{\cS_1\cap\cA}(c)$ and $\WCchange_{\cT_1\cap\cA}(c)$.

\paragraph{Simple swaps with $\tau(f^*) = \eta_1$} Implications (ii) and (Siv) of amenability implies $\move{f^*} = \move{\neg f_2} \neq \move{\neg f_1}$. On swap $\move{f^*,\neg f_2}$, the client can be served by $f^*$ and $f_1$. On swap $\move{\neg f_1}$, the client can be served by $f_2$. Therefore,
\begin{align*}
  \dSwapHa \leq {} & d^* + \b\, d_1 - d_1 -\b\, d_2 \tag{$\WCchange_{\move{f^*, \neg f_2}}$}\\
  & + (1+\a\b)\, d_2 - d_1 - \b\, d_2 \tag{$\WCchange_{\move{\neg f_1}}$}\\
  = {} & \hoono{d^* -(2-\b)\, d_1 + (1+\a\b-2\b)\, d_2}.
\end{align*}

\paragraph{Tree swaps with $\tau(f^*) = \eta_1$} On $\cT_1\cap\cA$, we assumed that $\move{f^*}$ closes $f_1$ and $f_2$. Therefore,
\begin{equation*}
  \dSwapTa \leq \hoono{(1+\a\b)\, d^* - d_1 - \b\, d_2}. \tag{$\WCchange_{\move{f^*,\neg f_1,\neg f_2}}$}
\end{equation*}

Summarizing, we have

\begin{mybox}
\begin{align*}
\dSwapHa\leq  & d^* - (2-\b)\, d_1 + (1+\a\b-2\b)\, d_2
& \quad = d^* - 1.8\, d_1 + 1.2\, d_2\\
\dSwapHb\leq  & (1+\a\b)\, d^* - (1-\a\b)\, d_1 - 2\b\, d_2
& \quad = 1.6\, d^* - 0.4\, d_1 - 0.4\, d_2\\
\dSwapTa\leq  & (1+\a\b)\, d^*-d_1 -\b\, d_2
& \quad = 1.6\, d^* - d_1 - 0.2\, d_2\\
\dSwapTb\leq  & (1+\a\b+\rho\b)\, d^* - (1-\rho\b)\, d_1 - \b\, d_2
& \quad = (1.6+0.2\rho)\, d^* - (1-0.2\rho)\, d_1 - 0.2\, d_2
\end{align*}
\end{mybox}
We now combine these inequalities to get an upper bound for $\change_\cA(c)$.

When $\rho(f^*)\leq \nicefrac 23$, we have $\Pr[\cS_1] = \Pr[\cT_1] = \nicefrac 12, \Pr[\cS_2] = \Pr[\cT_2] = 0$. Therefore,
\begin{align*}
\change_\cA(c) & \leq \nicefrac 12\cdot \WCchange_{\cS_1\cap\cA}(c) + \nicefrac 12\cdot \WCchange_{\cT_1\cap\cA}(c) + \epsd\\
& \leq 1.3\, d^* - 1.4\, d_1 + 0.5\, d_2 + \epsd\\
& \leq (1.8 + 0.5\rho)\, d^* - (1.4 - 0.5\rho)\, d_1 + \epsd\tag{$d_2\leq d^* + \rho(d^* + d_1)$}\\
& \leq (1.8 + 0.5 \times\nicefrac 23)\, d^* - (1.4 - 0.5 \times \nicefrac 23)\, d_1 + \epsd\\
& \leq \hoono{2.13334\, d^* - 1.06666\, d_1} + \epsd.
\end{align*}

When $\nicefrac 23 < \rho(f^*) \leq \nicefrac 34$, we have $\Pr[\cS_1] = \nicefrac 12, \Pr[\cS_2] = 0, \Pr[\cT_1] = \Pr[\cT_2] = \nicefrac 14$. Therefore,
\begin{align*}
\change_\cA(c) & \leq \nicefrac 12\cdot \WCchange_{\cS_1\cap\cA}(c) + \nicefrac 14\cdot \WCchange_{\cT_1\cap\cA}(c) + \nicefrac 14\cdot \WCchange_{\cT_2\cap\cA}(c) + \epsd\\
& \leq (1.3 + 0.05\rho)\, d^* - (1.4 - 0.05\rho)\, d_1 + 0.5\, d_2 + \epsd\\
& \leq (1.8 + 0.55\rho)\, d^* - (1.4 - 0.55\rho)\, d_1 + \epsd \tag{$d_2\leq d^* + \rho(d^* + d_1)$}\\
& \leq (1.8 + 0.55 \times\nicefrac 34)\, d^* - (1.4 - 0.55\times\nicefrac 34)\, d_1 + \epsd\\
& = \hoono{2.2125\, d^* - 0.9875\, d_1} + \epsd.
\end{align*}

When $\rho(f^*) > \nicefrac 34$, we have $\Pr[\cS_1] = \nicefrac 54 - \rho, \Pr[\cS_2] = \rho - \nicefrac 34, \Pr[\cT_1] = \Pr[\cT_2] = \nicefrac 14$. Therefore,
\begin{align*}
\change_\cA(c) & \leq (\nicefrac 54 - \rho)\cdot \WCchange_{\cS_1\cap\cA}(c)+ (\rho - \nicefrac 34)\cdot \WCchange_{\cS_2\cap\cA}(c) + \nicefrac 14\cdot \WCchange_{\cT_1\cap\cA}(c) + \nicefrac 14\cdot \WCchange_{\cT_2\cap\cA}(c) + \epsd\\
& \leq (0.85 + 0.65\rho)\, d^* - (2.45 - 1.45\rho)\, d_1 + (1.7 - 1.6\rho)\, d_2 + \epsd\\
& \leq (2.55 + 0.75\rho - 1.6\rho^2)\, d^* - (2.45 - 3.15\rho + 1.6\rho^2)\, d_1 + \epsd\tag{$d_2\leq d^* + \rho(d^* + d_1)$}\\
& \leq (2.55 + 0.75 \times \nicefrac 34 - 1.6 \times (\nicefrac 34)^2)\, d^* - (2.45 - 3.15\times\nicefrac {3.15}{3.2} + 1.6 \times (\nicefrac{3.15}{3.2})^2)\, d_1\\ & ~~~ + \epsd\\
& \leq \hoono{2.2125\, d^* - 0.89960\, d_1} + \epsd.
\end{align*}
%
%
%
%
%
%
%
\subsubsection{\texorpdfstring{$\move{f^*}$ closes $f_1$ and $f_2$ on
    $\cT_b'\cap\cA$}{[move(f*) closes f1 and f2 on event (Tb' cap A)]}} 
Bounds for simple swaps remain the same as before:
\begin{align*}
\WCchange_{\cS_1\cap\cA}(c) & \leq \hoono{d^* -(2-\b)\, d_1 + (1+\a\b-2\b)\, d_2},\\
\WCchange_{\cS_2\cap\cA}(c) & \leq \hoono{(1+\a\b)\, d^* - (1-\a\b)\, d_1 - 2\b\, d_2}.
\end{align*}

For tree swaps, we partition $\cT\cap\cA$ as the union of $\cT_1\cap\cT_{3-b}'\cap\cA$, $\cT_2\cap\cT_{3-b}'\cap\cA$ and $\cT_b'\cap\cA$. On the first two events, our bounds are the same as in \Cref{sec:typeDsimple}:
\begin{align*}
\WCchange_{\cT_1\cap\cT_{3-b}'\cap\cA}(c) & \leq \hoono{(1+\a\b+\b)\, d^* -(2-2\b)\, d_1 + (1-2\b)\, d_2},\\
\WCchange_{\cT_2\cap\cT_{3-b}'\cap\cA}(c) & \leq \hoono{(1+\a\b+\rho\b)\, d^*-(1-\rho\b)\, d_1 - \b\, d_2}.
\end{align*}

On $\cT_b'\cap\cA$, we assumed that $\move{f^*}$ closes $f_1$ and $f_2$. Therefore,
\begin{equation*}
  \WCchange_{\cT_b'\cap\cA} \leq \hoono{(1+\a\b)\, d^* - d_1 - \b\, d_2}. \tag{$\WCchange_{\move{f^*,\neg f_1,\neg f_2}}$}
\end{equation*}

Summarizing, we have
\begin{mybox}
\begin{align*}
\dSwapHa\leq  & d^* - (2-\b)\, d_1 + (1+\a\b-2\b)\, d_2
&  = d^* - 1.8\, d_1 + 1.2\, d_2\\
\dSwapHb\leq  & (1+\a\b)\, d^* - (1-\a\b)\, d_1 - 2\b\, d_2
&  = 1.6\, d^* - 0.4\, d_1 - 0.4\, d_2\\
\WCchange_{\cT_1\cap\cT_{3-b}'\cap\cA}(c) \leq &  (1+\a\b+\b)\, d^* -(2-2\b)\, d_1 + (1-2\b)\, d_2 \span\\
& &  = 1.8\, d^* - 1.6\, d_1 + 0.6\, d_2\\
\WCchange_{\cT_2\cap\cT_{3-b}'\cap\cA}(c) \leq & (1+\a\b+\rho\b)\, d^*-(1-\rho\b)\, d_1 - \b\, d_2
&  = (1.6 + 0.2\rho)\, d^* - (1 - 0.2\rho)\, d_1 - 0.2\, d_2\\
\WCchange_{\cT_b'\cap\cA} (c) \leq & (1+\a\b)\, d^* - d_1 - \b\, d_2
&  = 1.6\, d^* - d_1 - 0.2\, d_2
\end{align*}
\end{mybox}
We now combine these inequalities to get an upper bound for $\change_\cA(c)$.

When $\rho(f^*)\leq \nicefrac 23$, we have $\Pr[\cS_1] = \nicefrac 12, \Pr[\cT_1\cap\cT_{3-b}'] = \Pr[\cT_b'] = \nicefrac 14, \Pr[\cS_2] = \Pr[\cT_2\cap\cT_{3-b}'] = 0$. Therefore,
\begin{align*}
\change_\cA(c) & \leq \nicefrac 12\cdot \WCchange_{\cS_1\cap\cA}(c) + \nicefrac 14\cdot \WCchange_{\cT_1\cap\cT_{3-b}'\cap\cA}(c) + \nicefrac 14\cdot \WCchange_{\cT_b'\cap\cA}(c) + \epsd\\
& \leq 1.35\, d^* - 1.55\, d_1 + 0.7\, d_2 + \epsd\\
& \leq (2.05 + 0.7\rho)\, d^* - (1.55 - 0.7\rho)\, d_1 + \epsd\tag{$d_2\leq d^* + \rho(d^* + d_1)$}\\
& \leq (2.05 + 0.7\times\nicefrac 23)\, d^* - (1.55 - 0.7 \times \nicefrac 23)\, d_1 + \epsd\\
& \leq \hoono{2.51667\, d^* - 1.08333\, d_1} + \epsd.
\end{align*}

When $\nicefrac 23 < \rho \leq \nicefrac 34$, we have $\Pr[\cS_1] = \nicefrac 12, \Pr[\cS_2] = 0, \Pr[\cT_1\cap\cT_{3-b}'] = \Pr[\cT_2\cap\cT_{3-b}'] = \nicefrac 18, \Pr[\cT_b'] = \nicefrac 14$. Therefore,
\begin{align*}
\change_\cA(c) & \leq \nicefrac 12\cdot \WCchange_{\cS_1\cap\cA}(c) + \nicefrac 18\cdot \WCchange_{\cT_2\cap\cT_{3-b}'\cap\cA}(c)+ \nicefrac 18\cdot \WCchange_{\cT_2\cap\cT_{3-b}'\cap\cA}(c) + \nicefrac 14\cdot \WCchange_{\cT_b'\cap\cA}(c) + \epsd\\
& \leq (1.325 + 0.025\rho)\, d^* -(1.475 - 0.025\rho)\, d_1 + 0.6\, d_2 + \epsd\\
& \leq (1.925 + 0.625\rho)\, d^* - (1.475 - 0.625\rho)\, d_1 + \epsd\tag{$d_2\leq d^* + \rho(d^* + d_1)$}\\
& \leq (1.925 + 0.625 \times \nicefrac 34)\, d^* - (1.475 - 0.625\times\nicefrac 34)\, d_1 + \epsd\\
& = \hoono{2.39375\, d^* - 1.00625\, d_1} + \epsd.
\end{align*}

When $\rho > \nicefrac 34$, we have $\Pr[\cS_1] = \nicefrac 54 - \rho, \Pr[\cS_2] = \rho - \nicefrac 34, \Pr[\cT_1\cap\cT_{3-b}'] = \Pr[\cT_2\cap\cT_{3-b}'] = \nicefrac 18, \Pr[\cT_b'] = \nicefrac 14$. Therefore,
\begin{align*}
\change_\cA(c) & \leq (\nicefrac 54 - \rho)\cdot \WCchange_{\cS_1\cap\cA}(c)+ (\rho - \nicefrac 34)\cdot \WCchange_{\cS_2\cap\cA}(c) + \nicefrac 18\cdot \WCchange_{\cT_2\cap\cT_{3-b}'\cap\cA}(c)+ \nicefrac 18\cdot \WCchange_{\cT_2\cap\cT_{3-b}'\cap\cA}(c) \\& ~~~ + \nicefrac 14\cdot \WCchange_{\cT_b'\cap\cA}(c) + \epsd\\
& \leq (0.875 + 0.625\rho)\, d^* - (2.525 - 1.425\rho)\, d_1 + (1.8 - 1.6\rho)\, d_2 + \epsd\\
& \leq (2.675 + 0.825\rho - 1.6\rho^2)\, d^* - (2.525 - 3.225\rho + 1.6\rho^2)\, d_1 + \epsd\tag{$d_2\leq d^* + \rho(d^* + d_1)$}\\
& \leq (2.675 + 0.825\times \nicefrac 34 - 1.6\times(\nicefrac 34)^2)\, d^* - (2.525 - 3.225 + 1.6)\, d_1 + \epsd\\
& = \hoono{2.39375\, d^* - 0.9\, d_1} + \epsd.
\end{align*}
%
%
%
%
%
%


\section{Omitted Proofs}
\label{sec:omitted-proofs}

\subsection{Proof of \Cref{clm:light-simple}: There are enough local candidates}
\label{sec:proof-local-candidate}
\lightclients*

\begin{proof}
  Let $F_h$ be the set of \heavy local facilities,
  $F_p \sse F \setminus F_h$ be the set of local facilities pointed to by at
  least one optimal facility with no \heavy local neighbor, and $F_c$ be the
  remaining local facilities, which are exactly the local candidates. $|F_h| + |F_p| + |F_c| = $
  the number of local facilities, which in turn is at least the number of
  optimal facilities. There are at least $(\thd+2)|F_h|/2$
  many optimal facilities having a \heavy local neighbor because 1) a \heavy local facility is a neighbor of at least $\thd+2$ optimal facilities, and 2) each optimal facility has at most $2$ local neighbors. Finally, each local facility in $F_p$ is pointed to by an optimal facility with no \heavy local neighbor, so the total
  number of optimal facilities is at least $(\thd+2)|F_h|/2 + |F_p|$. In other
  words, $|F_c| \geq \frac{\thd}{2} |F_h|$.

\end{proof}

\subsection{Proof of \Cref{claim:balancedGroup}: Balancing Procedure}
\label{sec:proof-balancedGroup}
\balancedGroup*

\begin{proof}[Proof of \Cref{claim:balancedGroup}]
  Recall $|G| = |R|+ \extras$, where $\extras\geq \frac{16x^5\theta^2(\theta+1)}{\eps}$
  suffices.
  For each integer $s \in \{-x, \ldots, x\}$ let $D_s$ be the sets $S$
  with discrepancy $|S \cap G| - |S \cap R|$. Each set in $D_0$ can
  be output immediately. If for some $i, j$ we have $|D_i| \geq j/\eps$ and
  $|D_{-j}| \geq i/\eps$, and there is no edge in $H$, then we
  can choose some $j$ sets uniformly at random from $D_i$, 
  and $i$ sets from $D_{-j}$, and merge these together.

  However, since there are forbidden sets (a set $S_1$ and $S_2$ are forbidden
  if there is an edge between them in $H$), we need one more
  ingredient.  We claim that if some $D_i, D_{-j}$ have
  $\geq 8 x^2 \theta$ sets, then we can find $j$ sets from $D_i$ and
  $i$ sets from $D_{-j}$ that are not forbidden for each
  other. Indeed, pick a random collection of $j$ sets from $D_i$ and
  $i$ sets from $D_{-j}$. The probability that any one set has an edge
  to any of the other $i+j-1$ sets is
  $\leq \frac{(i+j-1)\theta}{8x^2 \theta} < \frac1{4x}$. Hence, a
  union bound over all the $i+j$ sets says that with probability at
  least a half, this collection does not have any edges of $H$ within
  it, and hence we can merge this collection together.
  
  However, above procedure does not ensure two sets are combined with
  probability at most $\eps$. To do so, if we find some pair
  $D_i, D_{-j}$ with $\geq \frac{8x^2\theta}{\eps}$ sets, then we can randomly
  partition each of $D_i$ and $D_{-j}$ into $1/\eps$ equal-sized subgroups
  with $8x^2\theta$ sets each. Now we can merge some $j$ sets from any
  subgroup from $D_i$ with some $i$ sets from a randomly chosen
  subgroup of $D_{-j}$ to form a set with equal number of greens and
  reds, exactly as above. Henceforth, we assume that for each 
  $D_i, D_{-j}$, at least one has fewer than $\frac{8x^2 \theta}{\eps}$ sets.

  Finally, since the greens outnumber the reds by $\extras$, we know there
  exists a value $j > 0$ such that
  $|D_j| \geq \extras/x = {16x^4\theta^2(\theta+1)}/\eps$.  Thus, we know each
  $D_s$ with $s<0$ has at most $\frac{8x^2\theta}{\eps}$ sets each. We randomly
  divide $D_j$ into $\frac{16x^3\theta^2}{\eps}$ parts of of size $x(\theta+1)$
  sets each.  Note any two sets $S_a$ and $S_b$ fall in the same part
  with probability at most
  $\frac{\eps}{16 x^3 \theta^2} \leq \eps$. From each part
  pick $x$ sets that have no edge in $H$ between themselves and call them a
  positive group; this can be done because the maximum degree of $H$
  is at most $\theta$. Each such positive group has at least $x$ extra
  green points. On the other hand, there are at most $x \cdot \frac{8x^2
  \theta}{\eps} = \frac{8x^3\theta}{\eps}$
  negative sets, i.e., in $\{D_{i}\}_{i<0}$.  Each negative set has
  edges to at most $\theta$ sets, so there are at most $\frac{8x^3\theta^2}{\eps}$
  sets with an edge to some negative set. Since there are
  $\frac{16x^3\theta^2}{\eps}$ positive groups,  there are at least
  $\frac{8x^3\theta^2}{\eps}$ positive groups with no edge to any negative set, so
  we can merge each negative set with a randomly-chosen such positive
  group.  This ensures that each new set has more green points than
  red, and two sets are combined with probability at most
  $\frac{\eps}{8x^3\theta^2} \leq \eps$. The newly-created sets
  have of size at most $O(x^2)$. Finally, each remaining set can form
  a group by itself, because they have more green points.
\end{proof}

\subsection{Proof of \Cref{clm:crude}: Crude Upper Bound of Potential Change}
\label{sec:defiant}

\Crude*
\begin{proof}
Since every local facility is closed by at most 3 swaps in $\cP$, there are at most 6 swaps in $\cP$ that closes any facility in $\{f_1,f_2\}$. Thus, it suffices to show that $\WCchange_{(P,Q)}(c) \leq O(d^* + d_1)$ for these 6 swaps $(P,Q)$. 

If $f^*$ has a heavy local neighbor $h$, the client can be served by $h$ at distance $\leq d^* + \nicefrac 32(d^* + d_1)$. We assume henceforth that $f^*$ has no heavy local neighbor, which means $\tau(f^*)$ is not \heavy and never closed as a local surrogate.

When $\cP$ is a simple swap set, the client can be served by either $f^*$ (at distance $\leq d^*$) or
 $\tau(f^*)$ (at distance $\leq d^* + \nicefrac 43(d^* + d_1)$).
When $\cP$ is a tree swap set, we show that one of the following facilities must be open after every swap in $\cP$:
\begin{align*}
f^*             ~ & \textup{at distance} ~ \leq d^*,\\
\tau(f^*)       ~ & \textup{at distance} ~ \leq d^* + \nicefrac 32 (d^* + d_1),\\
\pi(\tau(f^*))  ~ & \textup{at distance} ~ \leq d^* + 2\cdot \nicefrac 32 (d^* + d_1).
\end{align*}
It suffices to show that any swap closing $\tau(f^*)$ must open either $f^*$ or $\pi(\tau(f^*))$. 
If $\tau(f^*)$ is closed as an optimal surrogate, $\pi(\tau(f^*))$ must be open because edges on short cycles are not deleted in the edge deletion step (\Cref{cor:edge-deletion}). We thus focus on the swap closing the original copy of $\tau(f^*)$ henceforth.

Consider the 1-forest $G_1$ before edge deletion. The edges in $G_1$ from $f^*$ to $\tau(f^*)$ and from $\tau(f^*)$ to $\pi(\tau(f^*))$ cannot both be deleted in the edge deletion step, because we always choose $\thh$ as an even number and $G_1$ is bipartite (when self-loops are ignored). Therefore, either $f^*$ or $\pi(\tau(f^*))$ must be in the same swap with  $\tau(f^*)$, as desired.
\end{proof}

\subsection{Proof of \Cref{lm:typeA}: Combining Type \xA Inequalities}
\label{subsec:typeAaveraging}
\typeA*
We first prove \Cref{lm:typeA} assuming the following lemma, which we prove later.
\begin{lemma}
\label{lm:averaging}
For a close client of type \xA with $\rho(f^*) > \nicefrac 23$, we have
\begin{equation*}
\Pr[\cT_{21}] \leq \Pr[\cT_{11}] + O(\varepsilon).
\end{equation*}
\end{lemma}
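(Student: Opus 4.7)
The plan is to establish \Cref{lm:averaging} by a direct coupling argument: we couple the randomness underlying the two events $\cT_1$ and $\cT_2$ by fixing everything in the random process (the choice of tree versus simple swaps, the values $\tau(g^*)$ for $g^*\in F^*\setminus\{f^*\}$, the random threshold $\thh$, the heavy-optimal child partitioning, the choices of local/optimal surrogates, the random cycle roots $r$, and the balancing randomness) except for the single random choice $\tau(f^*)$. Under $\cT_2$ we have $\tau(f^*)=\eta_2$, and under $\cT_1$ we have $\tau(f^*)=\eta_1=f_1$ (using here that $c$ is a type \xA client). Since for $\rho(f^*)>\nicefrac 23$ the two values of $\tau(f^*)$ are equiprobable, it suffices to exhibit an injection from outcomes in $\cT_{21}\cap\cA$ to outcomes in $\cT_{11}$, modulo an exceptional set of measure $O(\eps)$.

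Fix an outcome in $\cT_2\cap\cT_{21}\cap\cA$. By definition of $\cT_{21}$ there is a tree-swap $T\in\cP$ containing both $f_1$ and $f_2$ but not the original copy of $f^*$. The amenable event guarantees that $f_1$, $f_2$, and $\tau(f^*)=\eta_2$ are not chosen as local or optimal surrogates, that the out-edges of the original copies of $f^*,f_1,f_2$ are not deleted, and (via \Cref{clm:tree-degree-reduction}) that each of these facilities is present in $G_1$ as its original copy. Now flip $\tau(f^*)$ to $\eta_1$: the sole structural change to $G_0$ is that the out-edge of the original copy of $f^*$ becomes $f^*\to f_1$ rather than $f^*\to \eta_2$. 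Heaviness is determined by the local-neighbor counts, which are independent of $\tau$, so the set of heavy local facilities is unchanged; in particular, $f_1,f_2,f^*$ remain non-heavy.

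Under the coupling, the new $f^*\to f_1$ edge attaches $f^*$ (and whatever subtree hung off its original copy) to the 1-tree of $G_1$ that previously contained $f_1$ and $f_2$. In the edge-deletion step we lose an additional $O(1/\thh)=O(\eps)$ probability to account for the chance that this newly introduced out-edge is cut. Conditioned on no cut and on amenability, the three vertices $f^*, f_1, f_2$ lie in a single connected component of $G_2$, and the balancing step (\Cref{claim:balancedGroup}) only merges components, so they end up in a common tree-swap. By construction that swap is $\move{f^*}$; by the definition of $\cT_{11}$ this means $\move{f^*}$ is the only swap closing any facility in $\{f_1,f_2\}$, hence the coupled outcome lies in $\cT_{11}$.

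The main obstacle is controlling the small ways in which the degree-reduction step can differ between $\cT_1$ and $\cT_2$: because the in-degree of $f_1$ in $G_0$ shifts by one when we flip $\tau(f^*)$, a heavy optimal facility might be partitioned into $\lceil s/\thd\rceil$ groups differently, and $f_1$'s position within those groups (hence the associated optimal-surrogate choice) could change. I would absorb this into the $O(\eps)$ error by observing that amenability already forbids $f_1$ from being used as a surrogate and that the single-edge perturbation affects at most $O(1)$ groups adjacent to $f_1$, each of which is reshuffled only with probability $O(1/\thd)=O(\eps)$. Combining all contributions ($O(\eps)$ from the defiant events, $O(1/\thh)$ from edge deletion, $O(1/\thd)$ from degree-reduction reshuffling, and the balancing-merger probability $O(\eps)$ from \Cref{claim:balancedGroup}(iv)) yields the desired bound $\Pr[\cT_{21}]\leq \Pr[\cT_{11}]+O(\eps)$.
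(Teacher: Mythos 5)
Your coupling idea---fix all randomness except $\tau(f^*)$, which is equiprobably $\eta_1$ or $\eta_2$ when $\rho(f^*)>\nicefrac 23$, and map $\cT_{21}$-outcomes to $\cT_{11}$-outcomes---is the right instinct and is in fact close to the spirit of the paper's proof, which conditions on all randomness except $\tau(f^*)$ and compares conditional probabilities. For the sub-case where $f_1$ and $f_2$ lie in a common 1-tree of $G_1$ that does \emph{not} contain the original copy of $f^*$ (case 4 in the paper's proof), your argument is essentially correct: $f^*$ is not on that cycle, so $\thh$ and $r$ are independent of $\tau(f^*)$, the flip merely re-attaches $f^*$'s subtree via the uncut edge $f^*\to f_1$, and the outcome lands in $\cT_{11}$. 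Cases 1--3 (where $f_1$ and $f_2$ cannot both be severed from $f^*$ yet joined to each other) are trivially handled because $\cT_{21}\setminus\cD'$ is empty.

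The gap is in the remaining case (case 5 in the paper), where $f^*$, $f_1$, and $f_2$ all lie in the same tree rooted at $f^*$ once the out-edge of $f^*$ is removed. Here flipping $\tau(f^*)$ from $\eta_2$ to $\eta_1=f_1$ does not ``attach $f^*$ to the 1-tree that previously contained $f_1,f_2$''---they were already together---but instead \emph{changes the cycle of that 1-tree}: the new edge $f^*\to f_1$ closes a different back-edge than $f^*\to\eta_2$ did, so the cycle length $\ell$, the set of cycle vertices from which $r$ is drawn, and the entire $\thh$-periodic cut pattern emanating from $r$ are all different. There is no measure-preserving identification of ``the same $r$'' across the two sides, so your exceptional set of ``$O(1/\thh)$ for the new edge being cut'' badly undercounts what can go wrong: the whole edge-deletion outcome is rerandomized, not perturbed by one edge. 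The paper avoids a coupling entirely in this case and instead proves two-sided survival-probability estimates (\Cref{lm:survive-upper} and \Cref{lm:survive-lower}) that show both $\Pr[\cE_1\mid\cT_1,\cE]$ and $\Pr[\cE_0\mid\cT_2,\cE]$ equal $\max\{(\thh-s)/\thh,0\}$ up to $(1\pm O(\eps))$ factors; those factors in turn require excluding cycle lengths $\ell\in(\thh,\lceil 1/\eps\rceil\thh)$ via the enlarged defiant event $\cD'$, a step your proposal also omits. Separately, your worry that degree reduction depends on $\tau(f^*)$ is a non-issue: heavy optimal facilities and their child partitions are determined by $\pi$, not $\tau$, so that step is identical on both sides; the real trouble is the cycle-dependent edge deletion.
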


\begin{proof}[Proof of \Cref{lm:typeA}]
Define $p_{ij} := \Pr[\cT_{ij}]$. Note that $\rho(f^*)>\nicefrac 23$ implies that $p_{11} + p_{12} = p_{21} + p_{22} = \nicefrac 14$. Define $p_\Delta := p_{11} - p_{21} = p_{22} - p_{12}$. \Cref{lm:averaging} implies $p_\Delta\geq -O(\varepsilon)$. Define $\WCchange_{\max}:= \max\{\WCchange_{\cT_{11}\cap\cA} + \WCchange_{\cT_{21}\cap\cA}, \WCchange_{\cT_{11}\cap\cA} + \WCchange_{\cT_{22}\cap\cA}, \WCchange_{\cT_{12}\cap\cA} + \WCchange_{\cT_{22}\cap\cA}\}$. \Cref{clm:crude} implies $\WCchange_{\max}\leq O(d^* + d_1)$.
\Cref{lm:typeA} is proved by the following chain of inequalities:
\begin{align*}
\change_{\cT\cap\cA}(c) & \leq \Pr[\cT_{11}\cap\cA]\WCchange_{\cT_{11}\cap\cA}(c) + \Pr[\cT_{12}\cap\cA]\WCchange_{\cT_{12}\cap\cA}(c)\\ 
& ~~~ + \Pr[\cT_{21}\cap\cA]\WCchange_{\cT_{21}\cap\cA}(c) + \Pr[\cT_{22}\cap\cA]\WCchange_{\cT_{22}\cap\cA}(c)\\
& \leq p_{11}\WCchange_{\cT_{11}\cap\cA}(c) + p_{12}\WCchange_{\cT_{12}\cap\cA}(c)\\ 
& ~~~ + p_{21}\WCchange_{\cT_{21}\cap\cA}(c) + p_{22}\WCchange_{\cT_{22}\cap\cA}(c)\\ & ~~~ + \epsd\tag{\Cref{clm:crude2} and $\WCchange_\cE(c)\geq \WClb$}\\
& = p_{21}\WCchange_{\cT_{11}\cap\cA}(c) + p_\Delta\WCchange_{\cT_{11}\cap\cA}(c) + p_{12}\WCchange_{\cT_{12}\cap\cA}(c)\\ 
& ~~~ + p_{21}\WCchange_{\cT_{21}\cap\cA}(c) + p_\Delta \WCchange_{\cT_{22}\cap\cA}(c) + p_{12}\WCchange_{\cT_{22}\cap\cA}(c)\\ & ~~~ + \epsd\\
& \leq p_{21}\WCchange_{\max} + p_\Delta(\WCchange_{\cT_{11}\cap\cA}(c) + \WCchange_{\cT_{22}\cap\cA}(c)) + p_{12}\WCchange_{\max} + \epsd\\
& \leq p_{21}\WCchange_{\max} + p_\Delta\WCchange_{\max} + p_{12}\WCchange_{\max} + \epsd \tag{$p_\Delta \geq -O(\varepsilon)$, $\WCchange_\cE(c)\geq \WClb$ and $\WCchange_{\max}\leq O(d^* + d_1)$}\\
& = (p_{21} + p_\Delta + p_{12})\WCchange_{\max} + \epsd\\
& = \nicefrac 14\cdot \WCchange_{\max} + \epsd. 
\end{align*}
\end{proof}

We now turn to proving \Cref{lm:averaging}. Before doing so, we need some deeper understandings of the edge deletion procedure, which we establish in \Cref{sec:survive}. The proof of \Cref{lm:averaging} is presented in \Cref{sec:averaging}.

\subsubsection{Probability of Surviving Edge Deletion}
\label{sec:survive}
Let $T$ be a 1-tree in the 1-forest $G_1$ before the edge deletion procedure. The edge deletion procedure splits $T$ into several connected components by deleting some edges from $T$. In this section, we prove upper and lower bounds on the probabilities that paths in $T$ remain connected after edge deletion.

Let $\ell>0$ denote the cycle length of $T$. Condition on the height threshold $\thh$ being fixed. We prove the following two lemmas:

\begin{lemma}[Upper bound]
\label{lm:survive-upper}
Suppose $p$ is a directed simple path in $T$ of length $s$. If $\ell\geq\thh$, then the probability that no edge in $p$ is deleted is at most $\max\{\frac{\thh - s}{\thh},0\}(1+\thh/\ell)$. If $\ell \leq \thh$, and we further assume that $p$ doesn't contain any cycle edge, then the probability is exactly $\max\{\frac{\thh - s}{\thh},0\}$.
\end{lemma}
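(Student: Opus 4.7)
The plan is to translate the event ``no edge of $p$ is deleted'' into a condition on the single quantity $D_0 \bmod \thh$, where $D_i$ denotes the length of the unique directed simple path from $v_i$ to $r$ in the rooted tree (with the convention $D_i = 0$ when $v_i = r$). By \Cref{claim:edge-deletion}, the edge $(v_i, v_{i+1})$ is deleted precisely when $D_i \equiv 0 \pmod \thh$.

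First I would establish two structural facts. If $v_i \neq r$ for every $i \in \{0, \ldots, s-1\}$, then uniqueness of out-edges forces the path from $v_i$ to $r$ to traverse $v_{i+1}, \ldots, v_s$ before continuing to $r$, so $D_i = D_0 - i$ for every such $i$. If instead $v_i = r$ for some $i < s$, then the path from $v_0$ to $r$ has length exactly $i$, so $D_0 = i < s$. Combining these, for $s < \thh$ we obtain the clean equivalence
\[ \text{``no edge of $p$ is deleted''} \iff D_0 \bmod \thh \in \{s, s+1, \ldots, \thh - 1\}, \]
and for $s \geq \thh$ the event is empty because any $s$ consecutive integers include a multiple of $\thh$, matching the $\max\{\cdot,0\}$ in the stated bound.

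Next I would analyze the distribution of $D_0$ as $r$ is chosen uniformly over the (possibly dummy-augmented) cycle. Let $c_0$ be the first cycle vertex encountered from $v_0$ by following out-edges, let $h$ be the number of such steps, and let $j$ be the cycle distance from $c_0$ to $r$; then $D_0 = h + j$, and $j$ is uniform on $\{0, 1, \ldots, \ell' - 1\}$ where $\ell' = \thh$ after dummy insertion (when $\ell < \thh$) and $\ell' = \ell$ otherwise. In the case $\ell \leq \thh$ with $p$ containing no cycle edge (so $p$ is unaffected by dummy insertion), $j$ is uniform on $\{0, 1, \ldots, \thh - 1\}$, making $D_0 \bmod \thh$ uniform over all $\thh$ residues; exactly $\thh - s$ of them are ``good'', yielding probability exactly $(\thh - s)/\thh$. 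In the case $\ell \geq \thh$, $D_0$ takes $\ell$ consecutive integer values and each residue class mod $\thh$ is hit at most $\lceil \ell/\thh \rceil \leq (\ell + \thh)/\thh$ times; summing over the $\thh - s$ good residues and dividing by $\ell$ yields the bound $(\thh - s)/\thh \cdot (1 + \thh/\ell)$.

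The main obstacle is verifying the structural fact $D_i = D_0 - i$ when $p$ contains cycle edges (permitted in the $\ell \geq \thh$ case): even when the path from $v_i$ to $r$ wraps around the cycle, having out-degree one forces it to traverse $v_{i+1}, \ldots, v_s$ first before continuing to $r$. A boundary case worth noting is $v_s = r$, which gives $D_0 = s$; since $s$ itself lies in $\{s, \ldots, \thh - 1\}$ when $s < \thh$, this case is correctly counted by the equivalence above.
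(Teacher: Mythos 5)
Your proof is correct and follows essentially the same approach as the paper's: both translate the survival of $p$ into a residue condition modulo $\thh$ on the distance from an endpoint of $p$ to the root $r$, then count how many of the $\ell$ uniform choices of $r$ land in a good residue class. The paper measures from the opposite endpoint of $p$ and proves only the forward implication (which already suffices for the upper bound), but your structural fact $D_i = D_0 - i$, the residue counting, and the final bound match the paper's argument.
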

\begin{proof}
If $s\geq \thh$, the lemma is trivial because any path after edge deletion has length at most $\thh - 1$. We assume $s<\thh$ henceforth.

Suppose vertices on $p$ are $v_0\leftarrow v_1\leftarrow\cdots\leftarrow v_s$. We first consider the case where $\ell\geq \thh$. We prove that as long as the (unique) simple path $p^*$ from $v_0$ to $r$ has length equal to $-1,-2,\cdots,-s$ modulo $\thh$, some edge on path $p$ is deleted. Indeed, suppose $p^*$ has length $-i$ modulo $\thh$. If $p^*$ doesn't contain any vertex in $\{v_1,\cdots,v_s\}$, then the edge out of $v_i$ is deleted by \Cref{claim:edge-deletion}. Otherwise, $r$ must be one of $v_1,v_2,\cdots,v_s$, in which case the edge out of $r$ is deleted.

Suppose $\ell = u\thh + w$ for $u,w\in\mathbb Z$ where $0\leq w < \thh$. There are at most $(\thh - s)(u+1)$ choices of $r$ such that $p^*$ has length not in $\{-1,\cdots,-s\}$ modulo $\thh$. Therefore, when $\ell\geq\thh$, the probability that no edge in $p$ is deleted is at most $(\thh - s)(u + 1)/\ell = \frac{\thh - s}{\thh}\cdot (\frac{u\thh}{\ell} + \frac{\thh}{\ell})\leq \frac{\thh - s}{\thh}\cdot (1 + \thh/\ell)$.

When $\ell\leq t_h$ and $p$ doesn't contain a cycle edge, an edge on the path $p$ is deleted if and only if $p^*$ has length $-1,-2,\cdots,-s$ modulo $\thh$ by \Cref{claim:edge-deletion}. Since the cycle length is exactly $\thh$ after dummy vertices are inserted on it, the probability that no edge on $p$ is deleted is exactly $\frac{\thh - s}{\thh}$.
\end{proof}

\begin{lemma}[Lower bound]
\label{lm:survive-lower}
Let $v_1,v_2,v^*$ be vertices in $T$ and $p_1,p_2$ be directed simple paths in $T$ from $v_1$ and $v_2$ to $v^*$, respectively. Suppose both $p_1$ and $p_2$ have lengths no greater than $s$. If $\ell\geq \thh$, then the probability that no edge on either path $p_1,p_2$ is deleted is at least $\max\{\frac{\thh - s}{\thh},0\}(1-2\thh/\ell)$. If $\ell \leq \thh$, and we further assume that $v_1$ is on the cycle, then the probability is at least $\max\{\frac{\thh - s}{\thh},0\}$.
\end{lemma}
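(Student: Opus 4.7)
The plan is to split on whether $\ell \le \thh$ or $\ell \ge \thh$, assuming $s < \thh$ throughout (otherwise the bound is trivially zero). For $\ell \le \thh$: since $v_1$ lies on the cycle and $G_1$-out-edges from a cycle vertex remain on the cycle, $p_1$ is entirely a cycle path and $v^*$ must also be on the cycle. By the convention preceding \Cref{cor:edge-deletion}, cycle edges are never deleted, so every edge of $p_1$ is safe. The edges of $p_2$ split into a cycle portion (also safe) and a (possibly empty) off-cycle portion from $v_2$ to the first cycle vertex on $p_2$; this off-cycle portion is a single directed path of length at most $s_2 \le s$ containing no cycle edge, so \Cref{lm:survive-upper} applies and gives probability at least $(\thh - s)/\thh$ that none of its edges is deleted.

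For $\ell \ge \thh$, the key observation is that because both $p_1$ and $p_2$ end at $v^*$, the source vertex of the $i$-th edge along either path sits at the same tree-position $q^* + i$ relative to $v^*$; consequently the set of bad edge-source levels contributed by $p_1 \cup p_2$ is $\{q^* + 1, \ldots, q^* + M\}$ with $M := \max(s_1, s_2) \le s$, which coincides with the bad-level set of a single hypothetical path of length $M$ starting at $v^*$. I would parametrize $r$ via the integer $q^* := d_r(v^*)$, which ranges uniformly over $\ell$ consecutive integers ($\{c, \ldots, c+\ell-1\}$ if $v^*$ lies at graph-distance $c$ from its cycle-entry, or $\{0, \ldots, \ell-1\}$ if $v^*$ is on the cycle). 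For every source vertex $v$ at graph-distance $i \le M$ from $v^*$, $d_r(v) = q^* + i$ whenever $q^* + i < \ell$; this is automatic when $v$ is off-cycle, and for on-cycle $v$ it is exactly the condition that wrap-around does not occur. Restricting attention to the $q^*$-values whose top-$M$ successors remain within the range discards at most $M$ values and forces $d_r(v) = q^* + i$ exactly, so on this restricted range no edge is deleted whenever $q^* \bmod \thh \in \{0, 1, \ldots, \thh - M - 1\}$ (that is, the residue avoids $\{-1, \ldots, -M\} \bmod \thh$), giving $\thh - M$ good residues. Counting in an interval of length $\ell - M$ gives at least $(\thh - M)\lfloor(\ell - M)/\thh\rfloor \ge (\thh - M)(\lfloor\ell/\thh\rfloor - 1)$ good values; dividing by $\ell$ and using $\lfloor\ell/\thh\rfloor \ge \ell/\thh - 1$ yields the claimed $(\thh - s)/\thh \cdot (1 - 2\thh/\ell)$.

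The main subtlety is the cycle wrap-around for on-cycle source vertices, where the tree-distance becomes $(q^* + i) \bmod \ell$ rather than the integer $q^* + i$. A direct union bound over bad $q^*$-values -- the approach that worked cleanly for the single-path upper bound -- gives a guarantee that becomes too weak when $s$ is close to $\thh$, because of extra bad values contributed by the wrap and the fact that the bad sets of $p_1$ and $p_2$ no longer share a clean nested structure. The restriction trick above, which discards only $M$ values of $q^*$ at the top of its range, sidesteps this entirely; the $(1 - 2\thh/\ell)$ slack in the statement absorbs exactly the two $\thh/\ell$ losses that arise -- one from the rounding $\lfloor\ell/\thh\rfloor \ge \ell/\thh - 1$, and one from the restriction loss $\lfloor(\ell - M)/\thh\rfloor \ge \lfloor\ell/\thh\rfloor - 1$ (using $M < \thh$).
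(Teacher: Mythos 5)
Your proof is correct and follows essentially the paper's route: for $\ell \le \thh$ both arguments reduce to the off-cycle portion of $p_2$ and invoke the exact count from the second part of \Cref{lm:survive-upper}, and for $\ell \ge \thh$ both discard a handful of choices of $r$ (you trim the top $M$ values of $q^* = d_r(v^*)$; the paper excludes the $\le s$ cycle vertices within directed distance $s$ of $v^*$) so that wrap-around is ruled out and $d_r(v) = q^* + i$ holds for every relevant source vertex, after which both count the $\thh - M$ (resp.\ $\thh - s$) good residue classes of $q^*$ modulo $\thh$. The bookkeeping differs mildly (floors and $\lfloor \ell/\thh\rfloor - 1$ in yours, versus writing $\ell - s = u\thh + w$ in the paper's), but the underlying argument and the final $\frac{\thh - s}{\thh}\bigl(1 - 2\thh/\ell\bigr)$ bound are the same.
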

\begin{proof}
Again, the lemma is trivial if $s\geq \thh$. Assume $s<\thh$ henceforth.

Let us first consider the case where $\ell\geq\thh$. Consider the vertices on the cycle that are different from $v^*$ but have paths to $v^*$ with length at most $s$. There are at most $s$ such vertices, and they form a contiguous part of the cycle. If $r$ is not among these vertices, then the simple path $p^*$ from $v^*$ to $r$ contains no vertex on $p_1$ or $p_2$ except $v^*$ itself. If we further assume that $p^*$ has length not in $-1,-2,\cdots,-s$ modulo $\thh$, then by \Cref{claim:edge-deletion} no edge on either path $p_1,p_2$ is deleted. Therefore, assuming $\ell - s = u\thh + w$ for $u,w\in\mathbb Z$ where $0\leq w < \thh$, the probability that no edge on either path is deleted is at least $u(\thh - s)/\ell = \frac{\thh - s}\thh\cdot \frac{u\thh}{\ell} = \frac{\thh - s}\thh\cdot (1 - \frac{s + w}\ell)\geq \frac{\thh - s}\thh\cdot (1 - 2\thh/\ell)$.

When $\ell \leq\thh$ and $v_1$ is on the cycle, every edge on $p_1$ must be on the cycle. Since no edge on the cycle is deleted by our convention, the probability that no edge on either path is deleted is lower bounded by the probability that no edge on the shortest path $p'$ from $v_2$ to the cycle is deleted. $p'$ is a part of $p_2$, so $p'$ has length at most $s$. By the second part of the previous lemma, the probability that no edge on $p'$ is deleted is at least $\max\{\frac{\thh - s}{\thh},0\}$.
\end{proof}

\subsubsection{Proof of \Cref{lm:averaging}}
\label{sec:averaging}

We are now ready to prove \Cref{lm:averaging}. Define $\cD'$ as the union of the defiant event $\cD$ (\Cref{def:amenable-defiant-swaps}) and the following events:
\begin{OneLiners}
\item[(i)] $\cP$ is a tree swap set, and, before edge deletion, the cycle in the 1-tree containing the original copy of $f^*$ has length $\ell$ in the range $(\thh, \lceil \nicefrac 1\varepsilon\rceil\cdot\thh)$;
\item[(ii)] $\cP$ is a tree swap set, and two connected components each containing a facility in $\{f_1,f_2\}$ are combined in the balancing procedure.
\end{OneLiners}
Event (i) happens with probability $O(\varepsilon)$ because our height threshold $\thh$ is chosen uniformly at random from $2\lceil \nicefrac 1 \varepsilon \rceil, 2\lceil \nicefrac 1 \varepsilon \rceil^2, \cdots, 2\lceil \nicefrac 1 \varepsilon \rceil^{\lceil \nicefrac 1 \varepsilon \rceil}$. Event (ii) happens with probability $O(\varepsilon)$ as well due to \Cref{clm:tree-degree-reduction,claim:balancedGroup}. By a union bound with \Cref{clm:crude2}, we have
\begin{claim}
\label{claim:extended-defiant}
The event $\cD'$ happens with probability $O(\varepsilon)$.
\end{claim}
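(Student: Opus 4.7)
The plan is to bound the probability of each of the three constituent events of $\cD'$ by $O(\varepsilon)$ separately and then finish with a union bound. For the original defiant event $\cD$, I would simply invoke Claim \ref{clm:crude2}, which has already been established. So the work reduces to handling the two new tree-swap events (i) and (ii).

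For event (i), the key observation is that the height threshold $\thh$ is drawn uniformly from the geometric sequence $2\lceil 1/\varepsilon\rceil, 2\lceil 1/\varepsilon\rceil^2, \ldots, 2\lceil 1/\varepsilon\rceil^{\lceil 1/\varepsilon\rceil}$, whose successive terms differ by a factor of $\lceil 1/\varepsilon\rceil$. The cycle length $\ell$ of the $1$-tree containing the original copy of $f^*$ is a random quantity determined before $\thh$ is drawn, but conditioning on any realization of $\ell$, the interval $(\thh, \lceil 1/\varepsilon\rceil \thh)$ has the property that at most one of the candidate values of $\thh$ can satisfy $\ell \in (\thh, \lceil 1/\varepsilon\rceil \thh)$: if both $\thh$ and the next value $\lceil 1/\varepsilon\rceil \thh$ qualified, then in particular $\ell > \lceil 1/\varepsilon\rceil \thh$ would contradict $\ell < \lceil 1/\varepsilon\rceil \thh$. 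Since $\thh$ takes $\lceil 1/\varepsilon\rceil$ equally-likely values, this probability is at most $1/\lceil 1/\varepsilon\rceil = O(\varepsilon)$.

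For event (ii), I would apply item (iv) of Claim \ref{claim:balancedGroup}: any two distinct connected components $S_i, S_j$ of $G_2$ are merged by the balancing procedure with probability at most $\varepsilon$. Since by Claim \ref{clm:tree-degree-reduction} each of $f_1, f_2$ appears in at most three connected components (original copy, local surrogate copy, and optimal surrogate copy), the number of pairs $(S_i, S_j)$ that could jointly contain one facility from $\{f_1\}$ and one from $\{f_2\}$, or both copies of one facility in $\{f_1, f_2\}$, is bounded by an absolute constant. A union bound over this constant number of pairs then gives $O(\varepsilon)$.

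Combining the three bounds yields $\Pr[\cD'] \leq \Pr[\cD] + \Pr[\text{event (i)}] + \Pr[\text{event (ii)}] \leq O(\varepsilon)$. I do not expect any real obstacle here: the only slightly delicate point is the argument for event~(i), where one has to be careful that the geometric spacing of the candidate thresholds really does leave at most one admissible choice, independent of $\ell$; the rest is a direct appeal to previously-established claims.
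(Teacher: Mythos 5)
Your proposal is correct and follows essentially the same route as the paper: union bound over $\cD$ (via Claim~\ref{clm:crude2}), event~(i) (via the geometric spacing of the candidate values of $\thh$), and event~(ii) (via Claim~\ref{clm:tree-degree-reduction}(ii) plus Claim~\ref{claim:balancedGroup}(iv)). The paper's proof is just a one-line justification of each piece; you have spelled out the argument for event~(i) --- in particular the observation that the intervals $(\thh, \lceil 1/\varepsilon\rceil\thh)$ are disjoint as $\thh$ ranges over its candidate values, so for any fixed realization of $\ell$ (which is indeed determined before $\thh$ is drawn) at most one choice of $\thh$ can be bad --- and the constant-pair union bound for event~(ii), both of which are exactly what the paper leaves implicit.
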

\begin{proof}[Proof of \Cref{lm:averaging}]
If either $f_1$ or $f_2$ is heavy, then $\cT_{21}$ never happens. Indeed, $\cT_{21}$ assumes the existence of a swap closing both $f_1$ and $f_2$, but heavy local facilities are never closed. Hence, we assume neither $f_1$ nor $f_2$ is heavy.

By \Cref{claim:extended-defiant} and the union bound, it suffices to prove $\Pr[\cT_{21}\backslash\cD']\leq (1 + O(\varepsilon))\Pr[\cT_{11}\cup \cD']$. By law of total probability, it suffices to prove 
\begin{equation}
\label{eq:conditional}
\Pr[\cT_{21}\backslash\cD'|\cE_i]\leq (1 + O(\varepsilon))\Pr[\cT_{11}\cup \cD'|\cE_i]
\end{equation}
for a partition $\cE_1,\cE_2,\cdots,\cE_t$ of the entire probability space.

If $\cE_i = \cS$, then both sides of (\ref{eq:conditional}) become zero. Let us condition on the tree event $\cT$ henceforth. Conditioned on $\cT$, the probabilities of  $\tau(f^*) = \eta_1$ and $\tau(f^*) = \eta_2$ are both $\nicefrac 12$ since $\rho(f^*) > \nicefrac 23$. Note that the set of heavy local/optimal facilities doesn't depend on the random function $\tau$. Therefore, if we condition on the $\tau$'s of all optimal facilities except $f^*$, the out-edges of the original copies of all facilities in $G_1$ except $f^*$ are determined, where $G_1$ is the 1-forest after degree reduction but before edge deletion. Let $G_1^*$ be $G_1$ with the out-edge of the original copy of $f^*$ removed. If we ignore the identity of the local and optimal surrogates, everything else in $G_1^*$ is determined.
Moreover, the conditioning we did is independent of $\tau(f^*)$, so the conditional probabilities of $\tau(f^*) = \eta_1$ and $\tau(f^*) = \eta_2$ are both still $\nicefrac 12$.

Note that $f^*$ may be a heavy optimal facility, in which case $f^*$ has new copies in $G_1$. We use $f^*$ to refer to only the original copy. $\tau(f^*)$ may also be a heavy local facility when $\tau(f^*) = \eta_2$ (note that we assumed $\eta_1 = f_1$ is not heavy), in which case $f^*$ points to itself in $G_1$. If either $f_1$ or $f_2$ is chosen as a surrogate, then $\cT_{21}\backslash \cD'$ cannot happen because $\cD'$ happens. We thus assume $f_1$ and $f_2$ only appear as their original copies in $G_1$. Since $f^*$ is the only vertex in $G_1^*$ that doesn't have an out-edge, $f^*$ is the root of a tree, and all other connected components of $G_1^*$ are 1-trees. 

We divide our proof into five cases depending on the structure of $G_1^*$:
\begin{OneLiners}
\item[1.] $f^*,f_1,f_2$ are all in the different connected components;
\item[2.] $f^*,f_1$ are in the same tree, different from $f_2$;
\item[3.] $f^*,f_2$ are in the same tree, different from $f_1$;
\item[4.] $f_1,f_2$ are in the same 1-tree (denoted by $T$), different from $f^*$;
\item[5.] all three are in the same tree (denoted by $T^*$).
\end{OneLiners}

Let $\cE_1$ denote the event that $f_1$ and $f_2$ are in the same connected component in $G_2$, where $G_2$ is the graph after the edge deletion procedure. Since $\cD'$ includes the case where the edge from $f^*$ to $\eta_1 = f_1$ is deleted in the edge deletion step, we have $\cE_1\cap\cT_1\subseteq \cT_{11}\cup\cD'$. Let $\cE_0$ denote the event that $f_1$ and $f_2$ are in the same connected component in $G_2$ but different from $f^*$. Since subtracting $\cD'$ rules out the possibility of $f_1$ and $f_2$ being combined in the balancing step, we have $\cT_{21}\backslash\cD'\subseteq \cE_0\subseteq \cE_1$.

In case 1, $\cT_{21}\backslash\cD'$ never happens because $\cE_1$ never happens. Indeed, $f_1,f_2$ must be in different connected components in $G_1$ and thus must be in different connected components in $G_2$.

In cases 2\&3, $\cT_{21}\backslash\cD'$ never happens either because $\cE_0$ never happens. Indeed, the only way $f_1$ can connect to $f_2$ (by an undirected path in $G_1$) is through $f^*$, and in the edge deletion procedure, there is no way to put $f_1$, $f_2$ in the same connected component of $G_2$ without also putting $f^*$ in it.

In case 4, $f^*$ is not on the cycle part of $T$, so the height threshold $\thh$ and the choice of $r\in T$ in the edge deletion step are both independent of $\tau(f^*)$. Once conditioned on $\thh,r$, whether or not $f_1$ and $f_2$ are in the same connected component in $G_2$ is determined. We assume that $f_1$ and $f_2$ are in the same connected component of $G_2$ because otherwise $\cT_{21}\backslash\cD'$ never happens. If $\tau(f^*) = \eta_1 (= f_1)$, then we know $\cT_{11}\cup\cD'$ must happen, because $\cE_1\cap\cT_1$ happens. Moreover, $\cT_{21}\backslash \cD'$ happens only when $\tau(f^*) = \eta_2$ simply because $\cT_{21}\subseteq \cT_2$. Therefore, if we let $\cE$ be the event summarizing all the conditioning we did so far, we have
\begin{align*}
\Pr[\cT_{11}\cup\cD'|\cE]= & \Pr[\tau(f^*) = \eta_1|\cE] = \nicefrac 12,\\
\Pr[\cT_{21}\backslash\cD'|\cE]\leq & \Pr[\tau(f^*) = \eta_2|\cE] = \nicefrac 12,
\end{align*}
and thus (\ref{eq:conditional}) holds for $\cE_i = \cE$.

Case 5 is a little tricky since the cycle structure of $T$, the 1-tree in $G_1$ containing all of $f^*,f_1,f_2$, may depend on where $f^*$ points to. Condition on the height threshold $\thh$ being fixed, and let $\cE$ be the event summarizing all the conditioning we did so far. Let $f_a$ be the least common ancestor of $f_1$ and $f_2$ in $T^*$, and let $s$ denote the path length from $f_i$ to $f_a$ maximized over $i = 1,2$. 

Conditioned on $\tau(f^*) = \eta_1$, or equivalently $\cT_1$, the probability of $\cT_{11}\cup\cD'$ is 1 if the cycle length $\ell$ of $T$ is in the range $(\thh, \lceil \nicefrac 1\varepsilon \rceil\cdot \thh)$, and if $\ell$ is not in the range, the conditional probability of $\cT_{11}\cup\cD'$ is at least the conditional probability of $\cE_1$, which is at least $\max\{\frac{\thh - s}{\thh},0\}(1- O(\varepsilon))$ by \Cref{lm:survive-lower} (Observe that $f_1 = \eta_1$ is on the cycle of $T$ because $f^*$ points to it on event $\cT_1$). Therefore,
\begin{align}
\label{eq:averaging1}
 \Pr[\cT_{11}\cup\cD'|\cE]& \geq \Pr[\tau(f^*) = \eta_1|\cE]\cdot \max\Big\{\frac{\thh-s}{\thh}, 0\Big\}(1 - O(\varepsilon))\notag\\
 & = \nicefrac 12\cdot\max\Big\{\frac{\thh-s}{\thh}, 0\Big\}(1 - O(\varepsilon)).
\end{align}
On the other hand, $\cT_{21}\backslash\cD'$ happens only when $\tau(f^*) = \eta_2$. Condition on $\tau(f^*) = \eta_2$. If the cycle length $\ell$ is in the range $(\thh, \lceil \nicefrac 1\varepsilon \rceil \cdot \thh)$, then $\cT_{21}\backslash\cD'$ never happens. If $\ell\leq \thh$, and $f_a$ is on the cycle, then $\cT_{21}\backslash\cD'$ never happens either because $\cE_0$ never happens. Indeed, the only possible undirected path in $T$ connecting $f_1$ with $f_2$ without passing through $f^*$ intersects the cycle, so $f_1,f_2$ have to connect to the cycle after edge deletion to make $\cE_0$ happen, but the cycle contains $f^*$ and remains connected after edge deletion (because $\ell\leq\thh$). Therefore, we assume either $\ell\geq \lceil\nicefrac 1\varepsilon\rceil\cdot\thh$, or $\ell\leq\thh$ and $f_a$ is not on the cycle. In this case, the conditional probability of $\cT_{21}\backslash\cD'$ is at most the conditional probability of $\cE_0$, which is at most $\max\Big\{\frac{\thh-s}{\thh}, 0\Big\}(1 + O(\varepsilon))$ by \Cref{lm:survive-upper}. Therefore,
\begin{align}
\label{eq:averaging2}
\Pr[\cT_{21}\backslash\cD'|\cE] & \leq \Pr[\tau(f^*) = \eta_2|\cE]\cdot \max\Big\{\frac{\thh-s}{\thh}, 0\Big\}(1 + O(\varepsilon))\notag\\
& = \nicefrac 12\cdot \max\Big\{\frac{\thh-s}{\thh}, 0\Big\}(1 + O(\varepsilon)).
\end{align}
Combining (\ref{eq:averaging1}) and (\ref{eq:averaging2}), we know (\ref{eq:conditional}) holds for $\cE_i = \cE$.
\end{proof}

\subsection{Proof of \Cref{claim:type3Swapcases}: Subtypes within Type \xC}
\label{sec:proof-typeC}
\typeThreeSwapcases*
\begin{proof}
Recall that $g^*$ is $\pi(f_1)$ and $\cS'_b$ is the event that $\cP$ is a simple swap and
$g^*$ points to $\eta_b(g^*)$. Similarly $\cT'_b$ is the event that $\cP$ is a tree swap and 
$g^*$ points to $\eta_b(g^*)$.

If either $f_1$ or $f_2$ is \heavy, then condition (a) or (b) holds. We assume neither $f_1$ nor $f_2$ is \heavy henceforth. In other words, the swaps $\move{\neg f_1}$ and $\move{\neg f_2}$ both exist.

Let $g$ be the closest local facility to $g^*$ that is different from $f_1$ and $f_2$. 
Intuitively, we show that either a client is close to $g$ or there is a tree that contains all
$f_1$, $f_2$, and $f^*$.

If $d(g,g^*) \leq d(f_1,g^*)$, then we have $d(c,g) \leq d_1 + d(f_1,g^*) + d(g,g^*) \leq d_1 + 2d(f_1,g^*)\leq 3d_1 + 2d^*$.
Furthermore, 
when we generate tree swaps, $f_1$ points to $g^* = \pi(f_1)$ in the 1-forest $G_1$ after degree reduction.
If $f_1$ points to a new copy of $g^*$, we know that $f_1$ is not among the $\thd$ closest local facilities to $g^*$ in $\pi^{-1}(g^*)$. 
Therefore, we know $d(g,g^*)\leq d(f_1,g^*)$.
Note that $g$ and $f_1$ are not closed in the same simple swap by implication (Siii') of amenability, so condition (c) holds in this case.

We can now assume that $f_1$ points to the original copy of $g^*$ and $d(g,g^*) > d(f_1,g^*)$. If $g^* = f^*$, 
we know condition (e) holds, because both edges $f_1\rightarrow f^*, f^*\rightarrow f_2$ remain after the edge deletion step by amenability. We assume $g^*\neq f^*$ henceforth.

If $\rho(g^*) \leq \nicefrac 23$, we know $\tau(g^*) = \eta_1(g^*)$ deterministically. Moreover, $d(g,g^*) > d(f_1,g^*)$ implies that $\tau(g^*)$ is either $f_1$ or $f_2$. If $\eta_1(g^*) = f_1$, then $\move{\neg f_1}$ must open $g^*$ by implication (ii') of amenability, so condition (c) holds in this case since $d(c,g^*)\leq d_1 + d(f_1,g^*) \leq 2d_1 + d^*$. Otherwise, $\eta_1(g^*) = f_2$, and then condition (e) holds, because the edges $f_1\rightarrow g^*, g^*\rightarrow f_2, f^*\rightarrow f_2$ all survive edge deletion by amenability, so $f_1,f_2,f^*$ must all be in the same swap.

It remains to consider the case where $\rho(g^*) > \nicefrac 23$. If $f_2 = \eta_b(g^*)\in\{\eta_1(g^*),\eta_2(g^*)\}$, then condition (f) holds because the edges $f_1\rightarrow g^*,g^*\rightarrow f_2,f^*\rightarrow\tau(f^*)\in\{f_1,f_2\}$ all survive edge deletion on $\cT_b'\cap\cA$ (see the left graph in \Cref{fig:typeCsubcase}). Otherwise, $f_2\notin\{\eta_1(g^*),\eta_2(g^*)\}$, and in this case we know $\eta_1(g^*) = f_1$ and $\eta_2(g^*) = g$ because $d(g,g^*) > d(f_1,g^*)$. We show that condition (c) or (d) holds, depending on whether $\rho(g^*) \leq \nicefrac 34$. Indeed, on $\cS_1'\cap\cA$, we know $\move{\neg f_1}$ opens $g^*$ at distance $\leq 2d_1 + d^*$ by implication (ii') of amenability, and on $\cS_2'\cap\cA$, we know either $g^*$ or $g$ is open after swap $\move{\neg f_1}$, again by implication (ii') of amenability, and $d(c,g)\leq 2d_1 + d^* + \nicefrac 1 {\rho(g^*)} \cdot (d^* + d_1)$ (see the right graph in \Cref{fig:typeCsubcase}).
\end{proof}

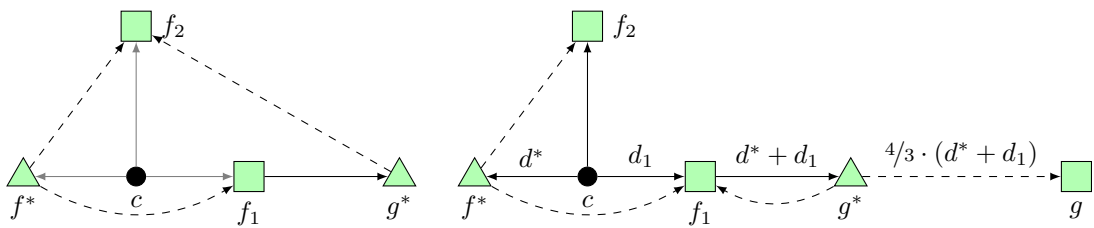
\begin{figure}[H]
\center
\begin{tikzpicture}[scale=1,
		opt/.style={shape=regular polygon,regular polygon sides = 3,draw=black,minimum size=0.5cm,inner sep = 0pt},
		local/.style={shape=rectangle, draw=black,minimum size= 0.4cm},
		client/.style={shape=circle, draw=black,minimum size=0.25cm,inner sep = 0pt,fill=black}]
		
	\pgfmathtruncatemacro{\ZERO}{6}
	\def \GREEN {green!30}
	
    \node[opt,label={[]south:{$f^*$}},fill = \GREEN]     (f*\ZERO) at (\ZERO-3,0) {};
    \node[client,label=south:$c$]   (c\ZERO)  at (\ZERO-1.5,0) {};
    \node[local,label=south:{$f_1$},fill = \GREEN] (f1\ZERO) at (\ZERO+0,0) {};
    \node[opt,label=south:$g^*$,fill = \GREEN]     (g*\ZERO) at (\ZERO+2,0) {};
    \node[local,label=south:{$g$},fill = \GREEN]   (g\ZERO)  at (\ZERO+5,0) {};
    	
    \node[local,label={[]east:{$f_2$}},fill = \GREEN]       (f2\ZERO) at (\ZERO-1.5,2) {} ;

	\path [->,>=latex] (c\ZERO) edge[above] node {$d^*$}(f*\ZERO);
	\path [->,>=latex] (c\ZERO) edge[above] node {$d_1$}(f1\ZERO);
	\path [->,>=latex] (c\ZERO) edge[right] node {}(f2\ZERO);
	\path [->,>=latex] (f1\ZERO) edge[above] node  {$ d^*+d_1$} (g*\ZERO);
	\path [->,>=latex] (g*\ZERO) edge[above, in = -30, out = -150, dashed] node  {} (f1\ZERO);
	\path [->,>=latex] (g*\ZERO) edge[above, dashed] node  {$ \nicefrac 43\cdot (d^*+d_1)$} (g\ZERO);
	\path [->,>=latex] (f*\ZERO) edge[dashed] (f2\ZERO);
	\path [->,>=latex,bend right] (f*\ZERO) edge[dashed] (f1\ZERO);


	\pgfmathtruncatemacro{\ZERO}{0}

    \node[opt,label=south:$f^*$,fill=\GREEN]     (f*\ZERO) at (\ZERO-3,0) {};
    \node[client,label=south:$c$]   (c\ZERO)  at (\ZERO-1.5,0) {};
    \node[local,label=south:{$f_1$},fill=\GREEN] (f1\ZERO) at (\ZERO+0,0) {};
    \node[opt,label=south:$g^*$,fill=\GREEN]     (g*\ZERO) at (\ZERO+2,0) {};
    \node[local,label=east:{$f_2$},fill=\GREEN]       (f2\ZERO) at (\ZERO-1.5,2) {} ;

	\path [->,>=latex,gray] (c\ZERO) edge[above] node {}(f*\ZERO);
	\path [->,>=latex,gray] (c\ZERO) edge[above] node {}(f1\ZERO);
	\path [->,>=latex,gray] (c\ZERO) edge[right] node {}(f2\ZERO);
	\path [->,>=latex] (f1\ZERO) edge[above] node  {} (g*\ZERO);
	\path [->,>=latex] (g*\ZERO) edge[above,dashed] node  {} (f2\ZERO);
	\path [->,>=latex,dashed] (f*\ZERO) edge (f2\ZERO);
	\path [->,>=latex,bend right,dashed] (f*\ZERO) edge (f1\ZERO);	
	
\end{tikzpicture}
\caption{In the figure, dashed edges represent the random function $\tau$. In the left graph, whenever $g^*$ points to $f_2$, $f_1,f_2,f^*$ are all in the same swap, so condition (f) holds. In the right graph, condition (d) holds.}
\label{fig:typeCsubcase}
\end{figure}


\subsection*{Acknowledgments}

We thank Amit Kumar, Ola Svensson, and Justin Ward for fruitful
discussions. Special thanks to Guru Guruganesh, with whom we obtained
some early results on this problem. LH is supported by NSF Award IIS-1908774 and a
VMware fellowship. This work was conducted in part while LH was an
undergraduate at Tsinghua University visiting CMU and TTI-Chicago. 

Supported in part by NSF awards CCF-1907820, CCF1955785, and CCF-2006953.

This work was [partially] funded by the grant ANR-19-CE48-0016 from the French National Research Agency (ANR).
{\footnotesize
\bibliographystyle{alpha}
\bibliography{hoon,references}
}

\end{document}